\def\bx{{\bf x}}
\def\real{{\mathbb R}}
\def\b1{{\boldsymbol 1}}
\def\Var{\mathrm{Var}}
\def\sign{{\mathrm{sign}}}
\def\diag{\mathrm{diag}}
\def\bP{\bold P}
\def\bX{\boldsymbol{X}}
\def\bE{\mathbb{E}}
\def\Id{{\rm {\mathbf Id}}}
\def\cov{\rm{cov}}
\def\var{\rm{var}}
\def\mO{\mathcal{O}}
\def\bomega{\bar{\omega}}
\newtheorem{theorem}{Theorem}[section]
\newtheorem{lemma}[theorem]{Lemma}
\newtheorem{proposition}[theorem]{Proposition}
\newtheorem{assumption}[theorem]{Assumption}
\def\mA{\mathcal{A}}
\def\1{\mathbbm{1}}
\numberwithin{equation}{section}
\title{$\ell_1$-norm constrained multi-block sparse canonical correlation analysis via proximal gradient descent}
\author{Leying Guan\\ leying.guan@yale.edu}
\date{}                                           
\begin{document}
\maketitle
\begin{abstract}
Multi-block CCA constructs linear relationships explaining coherent variations across multiple blocks of data. We view the multi-block CCA problem as finding leading generalized eigenvectors and propose to solve it via a proximal gradient descent algorithm with $\ell_1$ constraint for high dimensional data. In particular, we use a decaying sequence of constraints over proximal iterations, and show that the resulting estimate is rate-optimal under suitable assumptions. Although several previous works have demonstrated such optimality for the $\ell_0$ constrained problem using iterative approaches, the same level of theoretical understanding for the $\ell_1$ constrained formulation is still lacking. We also describe an easy-to-implement deflation procedure to estimate multiple eigenvectors sequentially. We compare our proposals to several existing methods whose implementations are available on R CRAN, and the proposed methods show competitive performances in both simulations and a real data example.
\end{abstract}

\section{Introduction}
Multi-block canonical correlation analysis (mCCA) generalizes canonical correlation analysis (CCA) to $D$ data blocks for $D> 2$ \citep{kettenring1971canonical}.  There are different types of generalizations \citep{kettenring1971canonical, nielsen2002multiset}, we consider the case where we construct the leading direction to maximize the total of cross-block covariance relative to the total of within block variance, also referred to as the sum of covariance formulation (SUMCOR).

Variants of mCCA have been applied to various applications for dimension reduction and exploration with high dimensional data, including joint blind source separation \citep{li2009joint}, multi-omics data integration \citep{subramanian2020multi,rodosthenous2020integrating},  neuroimaging \citep{sui2012review} and others. Researchers seek leading mCCA directions that capture coherent variations across blocks in these applications. Due to the high-dimensional nature of the datasets, proper regularization of the mCCA loadings $\beta$ is needed. In the two-block setting, researchers have developed different sparse CCA methods and provided theoretical analyses for these procedures \citep{witten2009penalized, hardoon2011sparse,  chen2013sparse, suo2017sparse, gao2017sparse}. Different mCCA type procedures have been proposed for the same goal. Theoretical aspects are less studied for these methods. In a recent independent work \citep{gao2021sparse}, the authors cast the mCCA problem as a generalized eigenvalue problem with $\ell_0$ penalty and investigated its statistical convergence.

We similarly  view mCCA problem as a special case of the generalized eigenvalue problem and study the SUMCOR generalization with a constraint on the total single-block variance. For block $d$, we let $X_{[d]}$ be its feature vector, and let $\beta_{[d]}$ its feature loadings, we are interested finding $\beta_{[d]}$ such that
\begin{equation}
\label{eq:formulation1}
\max_{\beta}\sum_{d_1=1}^D\sum_{d_2=1}^D \cov(X_{[d_1]}\beta_{[d_1]},X_{[d_2]}\beta_{[d_2]}), \quad s.t. \; \sum_{d=1}^D\var(X_{[d]}\beta_{[d]})=1.
\end{equation}
Compared with another  popular constraint where  people require $\var(X_{[d]}\beta_{[d]})=1$ for all $d=1,\ldots, D$, (\ref{eq:formulation1}) impose less restriction on the per-block contribution and estimate it adaptively from the data, which makes the procedure more robust to the existence of uninformative blocks.

Let $\hat\Sigma$ be the empirical covariance matrix using the aggregated features $X$ from all blocks, and $\hat\Lambda$ be a block diagonal matrix with the $d^{th}$ block being the empirical covariance matrix for block $d$. In high dimensions, a regularized empirical estimate can be acquired (up to a scaling factor) by maximizing the empirical Rayleigh quotient under norm constraints:
\begin{equation}
\label{eq:formulation2}
\max_{\beta} \frac{\beta^\top\hat\Sigma\beta}{\beta^\top \hat\Lambda \beta},\quad s.t.\; \|\beta\|_2=1, \;\|\beta\|_{q}\leq L,
\end{equation}
where $\|.\|_q$ is the $\ell_q$ norm of $\beta$. Such formulations have been used for general sparse generalized eigenvalue problem. Several previous work have  considered it at $q = 0$ and proposed iterative updating rules  constraint and studied their convergence \citep{sriperumbudur2011majorization, tan2018sparse, cai2020inverse}.   For example, in \cite{tan2018sparse}, the authors proposed rifle for estimating leading eigenvector with a $\ell_0$-norm constraint by a gradient descent procedure, and proved that the procedure is rate optimal under suitable assumptions.  In \cite{cai2020inverse} and \cite{gao2021sparse}, the authors considered different truncated procedures to solve the $\ell_0$-norm constrained\slash penalized problem with multiple directions by estimating the space spanned by them  jointly.   Researchers have also investigated the performance of the such a problem with $\ell_1$ constraint \citep{gaynanova2017penalized, jung2019penalized},  however,  its statistical properties  are not as thoroughly studied.  In practice, $\ell_1$ constraints are perhaps still more frequently used by researchers who are working on multi-omics studies and considering the mCCA-type dimension reductions \citep{witten2009extensions, tenenhaus2011regularized, meng2014multivariate, tenenhaus2017regularized, kanatsoulis2018structured}.

In this paper, we consider solving the non-convex problem (\ref{eq:formulation2}) at $q = 1$ via proximal gradient descent, where we can find the optimal solution conveniently at each proximal iteration. More specifically, we propose a novel updating scheme and solve a sequence of proximal problems with decaying  $\ell_1$ constraints. We summarize three main contributions of this paper as below:
\begin{enumerate}
\item  We propose solving the $\ell_1$-constrained Rayleigh quotient problem with a novel iterative procedure using proximal gradient descent with properly designed decaying bounds. 
\item We show that the proposed procedure produces a sequence of mCCA direction estimates containing rate-optimal solutions under similar assumptions for the optimality of $\ell_0$ constrained/penalized procedures. We can identify a rate-optimal estimation from this sequence by considering a penalized objective.
\end{enumerate}

We compare our proposal to several available R packages for solving mCCA problems. Our proposal, especially the $\ell_1$-norm bounded procedure, demonstrates competitive performance in our empirical studies. We observe the estimation from our proposal to outperform that from the $\ell_0$ constrained formulation when the problem becomes less sparse or the signal-to-noise ratio decreases (see Section \ref{sec:sim} for detailed comparisons). We then apply different methods to the TCGA cancer data set, and our proposals again outperform the competing methods for extracting coherent cross-block information with its leading directions. We provide an R package {\it msCCA} on R CRAN for using the proposed procedures.

The article is organized as follows. In Section \ref{sec:method}, we give details of our method and the main algorithm. We provide theoretical guarantees of our proposals in section \ref{sec:convergence} and extend our proposals to the estimation of multiple mCCA directions through a sequential deflation procedure in Section  \ref{sec:deflation}. Finally, we compare different methods for estimating mCCA directions using simulated data in Section \ref{sec:sim} and apply these methods to TCGA data in Section \ref{sec:tcga}.

\section{Multi-block Sparse CCA  via proximal gradient descent}
\label{sec:method}

\subsection{Multi-block Sparse Canonical Analysis}
\label{subsec:msCCA}
Let $X\in \real^p$ be a concatenation of variables from $D$ different blocks with $p=\sum_{d=1}^D p_d$ and $p_{d}$ being the number of features from the $d^{th}$ block. Let $[d] =\{j: p_1+\ldots+p_{d-1} +1\leq j \leq p_1+\ldots+p_{d-1} +p_d\}$ be the subset of index for features from block $d$ and $X_{[d]}\in \real^{p_d}$ be features from block $d$. We are interested in finding the aggregated direction $\beta\in \real^p$, such that $Z = X\beta$ captures the largest amount of common variation shared across different blocks:
\begin{equation}
\label{eq:mCCA1}
\max_{\beta} \; \var(Z),\; \; s.t.\; \sum_{d=1}^D\var(X_{[d]}\beta_{[d]})=1.
\end{equation}
When $D=2$, the solution of (\ref{eq:mCCA1}) is the concatenated leading canonical direction.  (\ref{eq:mCCA1}) is one type of SUMCOR generalization, and another widely used generalization is to let $\Var(X_{[d]}\beta_{[d]})=1$ for each $d = 1,\ldots, D$:
\begin{equation}
\label{eq:mCCA2}
\max_{\beta} \; \var(Z),\; \; s.t.\; \var(X_{[d]}\beta_{[d]})=1,\;\mbox{for all }d=1,\ldots, D.
\end{equation}
In the case where the source of variation is not shared by all blocks, (\ref{eq:mCCA2}) can be more influenced by uninformative blocks compared to (\ref{eq:mCCA1}), because (\ref{eq:mCCA2}) requires different blocks to have the same variation after projecting on the estimated direction.

We call $\beta$ the aggregated mCCA direction.   Let $\bX\in \real^{n\times p}$ be our observed data for $n$ samples, and $\bX_{[d]}\in \real^{n\times p_d}$ be the observations for block $d$. Suppose all columns in $\bX$ are demeaned for convenience. The direction $\beta$ can be estimated empirically, e.g., 
\begin{equation}
\label{eq:mCCA3}
\max_{\beta} \; \beta^\top \hat\Sigma\beta,\; \; s.t.\; \sum_{d=1}^D \beta_{[d]}^\top\hat\Lambda_{[d]}\beta_{[d]}=1,
\end{equation}
where $\hat\Sigma = \frac{\bX^\top \bX}{n}$ is the empirical covariance matrix for $X$ and $\hat\Lambda_{[d]} \coloneqq \hat\Sigma_{[d][d]}$ is the empirical covariance for $X_{[d]}$.  Equivalently, we can find the mCCA direction maximizing the Rayleigh quotient:
\begin{equation}
\label{eq:mCCA4}
\max_{\beta} f(\beta)\coloneqq \frac{\beta^\top\hat\Sigma\beta}{\beta^\top \hat\Lambda \beta},\quad s.t.\; \|\beta\|_2=1.
\end{equation}
When  $p$ is large,  proper forms of regularization are often needed for generalizable solutions.  For example, we can introduce sparsity regularization on $\beta$ to better estimate the mCCA direction, resulting in the multi-block sparse canonical correlation analysis (msCCA).  Here, we consider estimating (\ref{eq:mCCA4}) with $\ell_1$ constraint. For instance, let $L_*$ be the $\ell_1$ norm of $\xi_1$, the corresponding population mCCA solution for (\ref{eq:mCCA4}), and we may want to consider (\ref{eq:mCCA5}) below.
\begin{equation}
\label{eq:mCCA5}
\max_{\beta} f(\beta),\quad s.t.\; \|\beta\|_2=1, \;\|\beta\|_{1}\leq L_*.
\end{equation}
$L_*$ is unknown. This is not a problem, since we will iteratively update our estimate of the leading mCCA direction via a proximal gradient descent and with a suitably designed decaying sequence of $ \ell_1$  bounds $\{L_1,\ldots, L_{\infty}\}$ for some large $L_1$ and small $L_{\infty}$. The intuition is that when this sequence is refined enough and contains values close enough to $L_* = \|\xi_1\|_1$ whose corresponding estimates could be of high quality. We will give more details about choices of $\{L_1,\ldots, L_{\infty}\}$ in Section \ref{sec:convergence}.

\subsection{msCCA via proximal gradient descent}
\label{subsec:proximal}
We estimate mCCA directions via proximal gradient descent.  At iteration $(t+1)$, let $\beta_t$ be the our current estimate, and $g(\beta_{t})$ be the gradient of $f(\beta)$ evaluated at $\beta_{t}$, which takes the form
\begin{equation}
\label{eq:gradient1}
g(\beta_t)=\frac{2}{\beta_t^\top\hat\Lambda \beta_t }\left(\hat\Sigma\beta_t - f(\beta_t)  \hat\Lambda\beta_t\right).
\end{equation}
Let $\eta$ be some user-specified small step-size, we consider the following proximal problem for updating our estimate:
\begin{equation}
\label{eq:proximal1}
\min_{\|\beta\|_2=1,\|\beta\|_1 \leq L_{t+1}}\left\{\frac{f(\beta_t)}{\eta \beta_t^\top\hat\Lambda \beta_t}\|\beta-\beta_t\|_2^2-g(\beta_t)^\top \beta\right\}.
\end{equation}
Let $\beta_{t+1}$ the solution to (\ref{eq:proximal1}) at iteration $(t+1)$. In (\ref{eq:proximal1}), the coefficient in front of the quadratic term $\|\beta-\beta_t\|_2^2$ is designed to avoid overshooting along the gradient direction.

Set $\theta= \beta_t+\frac{\eta}{f(\beta_t)}(\hat\Sigma - f(\beta_t)\hat\Lambda)\beta_t$ as the proximal target, which is the minimizer of (\ref{eq:proximal1}) without the norm constraint on $\beta$. $\beta_{t+1}$ is then the projection of the proximal target $\theta$ onto the space $\{\beta:\|\beta\|_2=1,\; \|\beta\|_1\leq L_{t+1}\}$.  Algorithm \ref{alg:proximal} provides details of  our proposal for estimating the leading mCCA direction.
\begin{algorithm}
\label{alg:proximal}
\caption{msCCA via proximal gradient descent}
\KwData{data matrix $\bX$, initial estimate $\beta_{0}$, proximal descent step size $\eta$,  and a decaying sequence of norm constraints $\sqrt{p}\geq L_1\geq L_2\geq \ldots \geq L_{\infty}=L\geq 1$.}

\KwResult{$\{(\hat r, \hat\beta, Z)\}$: estimated mCCA coefficient, direction, and projection onto the CCA direction for different blocks.}

Set $t = 0$.

\While{Not converge}{

Calculate the proximal response:
\begin{equation}
\label{eq:theta}
\theta = \beta_{t}+\frac{\eta}{f(\beta_t)}\left(\hat \Sigma -f(\beta_t) \hat \Lambda\right)\beta_t
\end{equation}

Update the aggregated CCA directions:
\begin{equation}
\label{eq:update}
\beta_{t+1} =\arg\min_{\|\beta\|_2^2=1, \|\beta\|_1\leq L_{t+1}}\|\beta-\theta\|_2^2
\end{equation}

$t = t+1$;
}
Let $t^*$ be the selected iteration number, e.g., selected via cross-validation, and  $\hat\beta$ represent our estimate $\beta_{t^*}$  from iteration $t^*$.  We set $\hat r = \frac{\hat\beta^\top \hat\Sigma \hat\beta}{\hat\beta^\top \hat\Lambda\hat\beta}$, and $Z\in \real^{n\times D}$ with $Z_d = \frac{1}{\sqrt{n}}\bX_{[d]} \hat\beta_{[d]}$ for $d = 1,\ldots, D$.
\end{algorithm}
We can find  $\beta_{t+1}$ in (\ref{eq:update}) with a convenient numerical subroutine.  Let $|\theta|_{(1)}\geq |\theta|_{(2)}\geq \ldots \geq |\theta|_{(p)}$ be the ordered version of $\{|\theta_j|\}$. We consider two cases:
\begin{itemize}
\item  When $|\theta|_{(1)} > |\theta|_{(\lceil L^2_{t+1}\rceil )}$: As a result of the first part of  Proposition \ref{prop:proximal_l1_update0}, we can solve (\ref{eq:update})  by soft-thresholding $\theta_{t+1}$ at a proper $c$ and rescaling, with $c$ found numerically via binary search as a result of Proposition \ref{prop:proximal_l1_update}.
\item  When $|\theta|_{(1)} = |\theta|_{(\lceil L^2_{t+1}\rceil )}$: we can not use the aforementioned soft-thresholding strategy directly. However,  we can take $\beta_{t+1}$ as any unit vector $\beta$ satisfying the second part of Proposition \ref{prop:proximal_l1_update0}, and it will be an optimal solution to (\ref{eq:update}).
\end{itemize} 

\begin{proposition}
\label{prop:proximal_l1_update0}
(1) When $|\theta|_{(1)} > |\theta|_{(\lceil L^2_{t+1}\rceil )}$, $\beta_{t+1} = \frac{\tilde\beta}{\|\tilde\beta\|_2}$ is the solution to (\ref{eq:update}) where $\tilde\beta = \sign(\theta)\cdot [|\theta| - c]_{+}$ is the soft-thresholded $\theta$  at $c$, and $c$ is the smallest non-negative value such that $\frac{\|\tilde\beta\|_1}{\|\tilde\beta\|_2}\leq L_{t+1}$. (2) When $|\theta|_{(1)} = |\theta|_{(\lceil L^2_{t+1}\rceil )}$, set $\beta_{t+1}=\sign(\theta)\cdot \tilde\beta$  for any non-negative unit  vector $\tilde\beta\geq 0$ with $\|\tilde\beta\|_1 = L_{t+1}$ and $\tilde\beta_j = 0$ if $|\theta_j|< |\theta|_{(\lceil L^2_{t+1}\rceil )}$. Then, $\beta_{t+1}$ is an optimal solution to (\ref{eq:update}).
\end{proposition}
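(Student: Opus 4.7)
The plan is to recast the problem in a convenient form and apply KKT. Since $\|\beta\|_2=1$ on the feasible set, $\|\beta-\theta\|_2^2 = 1+\|\theta\|_2^2-2\beta^\top\theta$, so (\ref{eq:update}) is equivalent to maximizing $\beta^\top\theta$ subject to $\|\beta\|_2=1$ and $\|\beta\|_1\leq L_{t+1}$. A sign-flip argument --- replacing $\beta_j$ by $-\beta_j$ strictly increases $\beta^\top\theta$ whenever $\beta_j\theta_j<0$ while preserving both norms --- shows $\beta_j=\sign(\theta_j)u_j$ at any optimum for some $u\geq 0$, reducing the problem to $\max u^\top |\theta|$ subject to $u\geq 0$, $\|u\|_2=1$, $\|u\|_1\leq L_{t+1}$. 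Since $u^\top|\theta|>0$ is attainable whenever $\theta\neq 0$, relaxing $\|u\|_2=1$ to $\|u\|_2\leq 1$ keeps the optimal value and yields a \emph{convex} program (linear objective over a convex compact set), so KKT is both necessary and sufficient for optimality.

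For Case 1, the KKT conditions with multipliers $\mu\geq 0$ (for $\|u\|_2\leq 1$), $c\geq 0$ (for $\|u\|_1\leq L_{t+1}$), and $\nu_j\geq 0$ (for $u_j\geq 0$) give $u_j = (|\theta|_j-c)_+/\mu$ with $\mu=\|(|\theta|-c)_+\|_2$ forced by $\|u\|_2=1$. It remains to exhibit $c\geq 0$ making $\|u\|_1\leq L_{t+1}$ (with equality whenever $c>0$, by complementary slackness). Define $h(c)=\|(|\theta|-c)_+\|_1/\|(|\theta|-c)_+\|_2$ on $[0,|\theta|_{(1)})$. Grouping coordinates of $|\theta|$ by their distinct values and tracking how many survive soft-thresholding in each interval between successive distinct values, one checks that $h$ is continuous and weakly decreasing, with $h(0)=\|\theta\|_1/\|\theta\|_2$ and $\lim_{c\uparrow |\theta|_{(1)}}h(c)=\sqrt{m}$, where $m$ is the multiplicity of the maximum in $|\theta|$. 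The hypothesis $|\theta|_{(1)}>|\theta|_{(\lceil L_{t+1}^2\rceil)}$ forces $m<\lceil L_{t+1}^2\rceil$, hence $\sqrt{m}\leq L_{t+1}$; by the intermediate value theorem there is a smallest $c^*\geq 0$ with $h(c^*)\leq L_{t+1}$. Taking $\tilde\beta=\sign(\theta)(|\theta|-c^*)_+$ and $\beta_{t+1}=\tilde\beta/\|\tilde\beta\|_2$ then satisfies every KKT condition, establishing optimality.

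For Case 2, H\"older's inequality supplies the universal upper bound
\begin{equation*}
\beta^\top\theta \leq \|\beta\|_1\|\theta\|_\infty \leq L_{t+1}|\theta|_{(1)}
\end{equation*}
valid for every feasible $\beta$. The proposed construction has support contained in $S=\{j:|\theta_j|=|\theta|_{(1)}\}$, of cardinality $m\geq\lceil L_{t+1}^2\rceil\geq L_{t+1}^2$; a non-negative unit vector $\tilde\beta$ supported on $S$ with $\|\tilde\beta\|_1=L_{t+1}$ exists whenever $1\leq L_{t+1}\leq\sqrt{m}$ (interpolating between a single coordinate equal to $1$ and the equal-weight vector $m^{-1/2}\mathbf{1}_S$). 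For such $\beta_{t+1}=\sign(\theta)\tilde\beta$, direct calculation yields $\beta_{t+1}^\top\theta = \tilde\beta^\top|\theta|=|\theta|_{(1)}\|\tilde\beta\|_1=L_{t+1}|\theta|_{(1)}$, matching the upper bound and hence optimal. The main obstacle is the continuity and monotonicity analysis of $h(c)$ through the tie values of $|\theta|$ in Case 1, which requires careful bookkeeping of the active coordinate set as $c$ sweeps across the distinct values of $|\theta|$; once that is in hand, the intermediate value theorem closes the argument.
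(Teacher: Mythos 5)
Your proof is correct, and it takes a noticeably different route from the paper's. The paper attacks the original non-convex problem (with the equality constraint $\|\beta\|_2^2=1$) head-on: it forms the Lagrangian with a free multiplier $\nu_1\in\real$ for the equality constraint and $\nu_2\geq 0$ for the $\ell_1$ constraint, then exhibits explicit multiplier values $\nu_1^*=\|[|\theta|-c]_+\|_2-1$, $\nu_2^*=2c$ for which the unconstrained Lagrangian minimizer is primal feasible with complementary slackness, which is a sufficient optimality certificate by the usual weak-duality inequality chain even though the problem is non-convex. You instead first replace the objective by $\beta^\top\theta$, apply a sign-flip argument to reduce to a non-negative variable $u$, and then relax $\|u\|_2=1$ to $\|u\|_2\leq 1$ to land on a genuinely convex program where KKT is sufficient by convexity rather than by a duality-gap certificate. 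Both arguments are sound; yours makes the convex structure explicit and, in Case~2, you also fill a small gap the paper leaves implicit by showing via interpolation that a non-negative unit vector supported on $S$ with $\|\tilde\beta\|_1=L_{t+1}$ actually exists when $1\leq L_{t+1}\leq\sqrt{m}$.

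Two small remarks on rigor. First, the sentence ``Since $u^\top|\theta|>0$ is attainable whenever $\theta\neq 0$, relaxing $\|u\|_2=1$ to $\|u\|_2\leq 1$ keeps the optimal value'' is not, by itself, a complete justification: a point with $\|u\|_2<1$ and $\|u\|_1=L_{t+1}$ cannot simply be rescaled onto the sphere without violating the $\ell_1$ constraint. The gap is harmless here, since what you actually use is that the explicit point you build lies on $\|u\|_2=1$, satisfies KKT for the relaxed convex program, is therefore optimal for that program, and hence optimal a fortiori for the original problem whose feasible set is a subset; but stating it that way would be cleaner than asserting equality of optimal values up front. Second, the existence and monotonicity of $h(c)$ that you appeal to (``continuous and weakly decreasing'') is exactly the content of the paper's Proposition~\ref{prop:proximal_l1_update}; you re-derive it in words, which is fine, but citing it directly would shorten the argument. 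Your observation that $m<\lceil L_{t+1}^2\rceil$ implies $\sqrt{m}\leq L_{t+1}$ (in fact $\sqrt{m}<L_{t+1}$) is a slightly cleaner version of the paper's bound $\zeta(|\theta|_{(\lceil L_{t+1}^2\rceil)})\leq\sqrt{\lceil L_{t+1}^2\rceil-1}\leq L_{t+1}$; both achieve the same end.
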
 
\begin{proposition}
\label{prop:proximal_l1_update}
The quantity $ \frac{\|[|\theta|-c]_+\|_1}{\|[|\theta|-c]_+\|_2}$ is continuous and non-increasing  in $c$ for all  $0\leq c< |\theta|_{(1)}$.
\end{proposition}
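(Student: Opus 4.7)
The plan is to set $h(c) \coloneqq \frac{\|[|\theta|-c]_+\|_1}{\|[|\theta|-c]_+\|_2}$ and show each property in turn, exploiting the piecewise-smooth structure of the map.

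First I would dispatch continuity. The map $c \mapsto [|\theta|-c]_+$ is coordinate-wise continuous (indeed Lipschitz) in $c$, so the numerator $\|[|\theta|-c]_+\|_1$ and the denominator $\|[|\theta|-c]_+\|_2$ are continuous. For $c < |\theta|_{(1)}$ at least one coordinate of $[|\theta|-c]_+$ is strictly positive, so the denominator never vanishes on $[0,|\theta|_{(1)})$, which yields continuity of the ratio $h$.

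Next I would establish monotonicity piecewise. Partition $[0,|\theta|_{(1)})$ using the distinct values among $\{|\theta|_{(1)},\ldots,|\theta|_{(p)}\}$ into finitely many half-open subintervals on the interior of which the active set $S(c) \coloneqq \{j : |\theta_j| > c\}$ is constant of cardinality $s = |S(c)|$. On such an interval, letting $A(c) = \sum_{j\in S}(|\theta_j|-c)$ and $B(c)=\sum_{j\in S}(|\theta_j|-c)^2$, we have $A'(c) = -s$ and $B'(c) = -2A(c)$, so
\begin{equation*}
\frac{d}{dc}\, h(c)^2 \;=\; \frac{d}{dc}\frac{A^2}{B} \;=\; \frac{2A(A^2 - sB)}{B^2}.
\end{equation*}
By Cauchy--Schwarz applied to the vectors $(|\theta_j|-c)_{j\in S}$ and $(1)_{j\in S}$ we get $A^2 \le sB$, while $A>0$ because $S$ is nonempty on $[0,|\theta|_{(1)})$. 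Hence $\frac{d}{dc}h(c)^2 \le 0$ on each open subinterval, so $h^2$ (and thus $h \ge 0$) is non-increasing on each piece.

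Finally I would glue the pieces. Since $h$ is continuous on $[0,|\theta|_{(1)})$ and non-increasing on each of the finitely many subintervals in the partition, it is non-increasing globally: for any $0 \le c_1 < c_2 < |\theta|_{(1)}$, choosing the breakpoints in between and chaining the per-interval inequalities with continuity at the junctions yields $h(c_1) \ge h(c_2)$. The only step with any subtlety is the Cauchy--Schwarz derivative computation, and even that is routine; the main care is simply not to differentiate across the breakpoints where $S(c)$ jumps, which is handled by working one interval at a time and using continuity to stitch.
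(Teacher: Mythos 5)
Your proof is correct and takes essentially the same approach as the paper: both compute the derivative of the ratio (you square first, the paper does not) and use the Cauchy–Schwarz inequality $\bigl(\sum_{j\in S}(|\theta_j|-c)\bigr)^2 \le |S|\sum_{j\in S}(|\theta_j|-c)^2$ to show the derivative is nonpositive. The only cosmetic difference is that the paper invokes a subgradient at the breakpoints $|\theta_j|=c$ while you avoid those points by differentiating on open subintervals and stitching with continuity; both handle the kinks adequately.
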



\section{Statistical Convergence}
\label{sec:convergence}
In this section, we study the theoretical guarantee of Algorithm \ref{alg:proximal} in the regime where $n, p\rightarrow \infty$. We define $\xi_j$ as the $j^{th}$ population mCCA direction, normalized to have norm $\|\xi_j\|_2=1$ for $j = 1,\ldots, p$:
\[
\xi_j = \arg\max_{\|\beta\|_2^2=1}\frac{\beta^\top\Sigma\beta}{\beta^\top\Lambda\beta},\; s.t.\; \xi_j^\top \Lambda \xi_{j'} = 0, \;\mbox{for all } j' < j.
\]
For such a problem, it is known that the optimal rate  for mean squared error in estimating $\xi_1$ is $\frac{s\ln p}{n}$, assuming $s$-sparsity of $\xi_1$ and some regularity conditions \citep{cai2013sparse, gao2015minimax, tan2018sparse}. We show  that we can find such a rate-optimal $\hat\beta$ from Algorithm \ref{alg:proximal} given a reasonable  sequence $\{L_t\}$, an informative initial guess $\beta_0$ and under Assumptions \ref{ass:regularity}-\ref{ass:sparsity}.

Let $\rho_j = \frac{\xi_j^\top\Lambda \xi_j}{\xi_j^\top\Sigma\xi_j}$ be the population mCCA correlation coefficient for $\xi_j$ and $\delta_t = 1-|\beta^\top_t\xi_1|$ to measure the discrepancy  between our estimate at iteration $t$ from Algorithm \ref{alg:proximal} and the leading population mCCA direction $\xi_1$. Features are standardized to have mean 0 and variance 1. We also let $\lambda_H^{\max}$, $\lambda_H^{\min}$ represent the largest and smallest eigenvalues for some matrix $H$.

\begin{assumption}
\label{ass:regularity}
The aggregated data $X$ is  multivariate Gaussian with $X\sim \mathcal{N}(0, \Sigma)$ and $\Sigma_{jj}=1$ for all $j=1,\ldots, p$. The covariance for each single block  has bounded smallest and largest eigenvalues: Defining $\Lambda_{[d]}\coloneqq \Lambda_{[d][d]}= \Sigma_{[d][d]}$, then $\frac{1}{M}\leq \min_{d}\lambda_{\Lambda_{[d]}}^{\min}\leq \max_{d}\lambda_{\Lambda_{[d]}}^{\max} \leq M$ for some  constant $M$.
\end{assumption}

\begin{assumption}
\label{ass:eigen_gap}
The leading mCCA correlation coefficient $\rho_1$ is bounded away from 1 with $\rho_1 \geq 1+C$, and the gap between $\rho_1$ and $\rho_2$ is sufficiently large with $\rho_1 - \rho_2 \geq \gamma \rho_1$ for some constants $C,\; \gamma>0$.
\end{assumption}

\begin{assumption}
\label{ass:sparsity}
The leading mCCA direction is sparse: let $s = \|\xi_1\|_0$, we have $\frac{s\ln p}{n}\rightarrow 0$.
\end{assumption}

Theorem \ref{thm:thm1} gives the statistical guarantee  on the estimation errors over iterations using the proposed procedure with some large $L_0$ and small $L_{\infty}\leq \|\xi_1\|_1$, e.g., $L_{\infty} = 1$.
\begin{theorem}
\label{thm:thm1}
Let $1>c_0>0$ be any small constant,  $0<\eta \leq \frac{1-c_0}{2M[M+3]}$ be a user-specified constant step size.  Let $c_{B_1}, \;c_{B2}$ be any positive constants and $\{L_0,\ldots, L_{\infty}\}$ be any decaying sequence with $L_0 = (1+c_{B_1})\sqrt{s}$, and $L_{t} = L_{\infty}+(1-c_{B_2}\eta\sqrt{\frac{s\ln p}{n}})^{t}(L_{0}-L_{\infty})$. Set $\nu = \frac{(1-c_0)\gamma^2}{M}$ and $B_t = L_t -\|\xi_1\|_1$. Define 
\[
T^*= \max\{t: B_t\geq  \frac{4c_{B_2}(1+c_{B_1})}{\nu}\sqrt{\frac{s^2\ln p}{n}}\}.
\]
Under Assumptions \ref{ass:regularity}-\ref{ass:sparsity},  there exists some sufficiently small and large  constants $\psi_1$ and $\psi_2$  such that  if the initial guess satisfies $\|\beta_0\|_1\leq L_0$ and $\delta_0 \leq \psi_1$,  we can upper bound the estimation error $\delta_t$ for all $t\leq T^*$ with probability approaching 1 as $n\rightarrow\infty$:
\begin{equation}
\label{eq:estimation_main1}
\delta_t\leq \psi_2\left(\delta_0(1-\eta \nu)^t+\frac{s \ln p}{n}+\sqrt{\frac{s\ln p}{n}}\left(\frac{B_t}{\sqrt{s}}+\frac{B_t^2}{s}\right)\right),\; \mbox{for all\;} t\leq T^*.
\end{equation}
\end{theorem}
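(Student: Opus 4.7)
The plan is to prove Theorem \ref{thm:thm1} by induction on $t$, verifying at each step that the invariant (\ref{eq:estimation_main1}) is preserved for $t\leq T^*$. The base case $t=0$ is immediate from the initialization hypothesis $\delta_0\leq \psi_1$, and the $\ell_1$-bound $\|\beta_t\|_1\leq L_t$ is enforced by construction of the proximal update (\ref{eq:update}). The local nature of the analysis is what forces us to require $\delta_0$ small: the Rayleigh quotient has $\xi_2,\ldots,\xi_p$ as additional stationary points, so we must start in the basin of attraction of $\xi_1$.

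For the inductive step, I would decompose the proximal target (\ref{eq:theta}) into a population part and a stochastic remainder. Setting $f^*(\beta)=\beta^\top\Sigma\beta/\beta^\top\Lambda\beta$ and $\theta_t^*=\beta_t+\frac{\eta}{f^*(\beta_t)}(\Sigma-f^*(\beta_t)\Lambda)\beta_t$, write $\theta_t=\theta_t^*+\epsilon_t$, where $\epsilon_t$ absorbs the deviations $\hat\Sigma-\Sigma$, $\hat\Lambda-\Lambda$, and $f(\beta_t)-f^*(\beta_t)$. Expanding $\beta_t$ in the generalized-eigenvector basis $\{\xi_j\}$ using $\Sigma\xi_j=\rho_j\Lambda\xi_j$ and $\xi_j^\top\Lambda\xi_{j'}=0$ for $j\neq j'$, the map $\beta_t\mapsto\theta_t^*$ acts as a power-iteration-like shift whose contraction toward $\xi_1$ is controlled by the spectral gap $\rho_1-\rho_2\geq\gamma\rho_1$ of Assumption \ref{ass:eigen_gap}. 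With $\eta\leq (1-c_0)/[2M(M+3)]$ and the conditioning from Assumption \ref{ass:regularity}, I would then show that the component of $\theta_t^*$ orthogonal to $\xi_1$ shrinks by a factor $(1-\eta\nu)$ with $\nu=(1-c_0)\gamma^2/M$, which yields the geometric decay $\delta_0(1-\eta\nu)^t$ in the bound.

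For the stochastic term $\epsilon_t$, I would invoke standard sub-Gaussian concentration under Assumption \ref{ass:regularity} to obtain, on an event of probability $1-o(1)$, the entrywise bounds $\|\hat\Sigma-\Sigma\|_\infty\lesssim\sqrt{\ln p/n}$ and similarly for $\hat\Lambda-\Lambda$. Together with the inductive bound $\|\beta_t\|_1\leq L_t=O(\sqrt{s})$, this gives $\|(\hat\Sigma-\Sigma)\beta_t\|_\infty=O(\sqrt{s\ln p/n})$. Then Proposition \ref{prop:proximal_l1_update0} characterizes $\beta_{t+1}$ as a soft-thresholded, renormalized version of $\theta_t$ with $\lceil L_{t+1}^2\rceil=O(s)$ active coordinates, so the $\ell_\infty$-level noise is boosted by at most $\sqrt{s}$ in $\ell_2$ on the effective support, producing the $s\ln p/n$ noise floor.

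The main technical obstacle, and the source of the $\sqrt{s\ln p/n}(B_t/\sqrt{s}+B_t^2/s)$ terms, is the interaction between the projection and the slack $B_{t+1}=L_{t+1}-\|\xi_1\|_1$. When $L_{t+1}$ is close to $\|\xi_1\|_1$, the thresholding level $c$ of Proposition \ref{prop:proximal_l1_update0} is strictly positive even on the support of $\xi_1$, so the projection shrinks the signal and introduces a bias. I would control this bias by a first-order expansion in $B_{t+1}$, exploiting the continuity and monotonicity of $c\mapsto\|[|\theta|-c]_+\|_1/\|[|\theta|-c]_+\|_2$ from Proposition \ref{prop:proximal_l1_update} together with the fact that $\theta_t$ is close to $\xi_1$ on its support by the inductive hypothesis; the linear term in $B_t$ accounts for the coordinate-wise shrinkage, while the quadratic term $B_t^2/s$ arises from the $\ell_2$ renormalization. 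The definition $T^*=\max\{t:B_t\geq (4c_{B_2}(1+c_{B_1})/\nu)\sqrt{s^2\ln p/n}\}$ is precisely what keeps the deterministic bias above the statistical noise so that the two error sources combine additively rather than interacting pathologically, and the specific decay rate $(1-c_{B_2}\eta\sqrt{s\ln p/n})$ chosen for $L_t$ matches this bias schedule to the per-iteration contraction. Unrolling the recursion for $t\leq T^*$ and taking a union bound over the $O(T^*)$ stochastic events then produces (\ref{eq:estimation_main1}), with the remaining bookkeeping devoted to choosing $\psi_1,\psi_2$ uniformly so that the higher-order terms from the Rayleigh-shift expansion do not swamp the leading behavior.
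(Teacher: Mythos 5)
Your decomposition is genuinely different from the paper's. The paper never writes $\theta_t = \theta_t^\ast + \epsilon_t$ and analyzes the population map as a power iteration; instead it tracks the Rayleigh-quotient gap $\hat\rho_1 - \hat r_t$ and chains three lemmas: Lemma~\ref{lem:coef_bound} converts between $\hat\rho_1 - \hat r_t$ and $\delta_t$, Lemma~\ref{lem:proximal} lower-bounds the proximal improvement by exhibiting a \emph{feasible interpolant} $\tilde\beta = (1-\epsilon)\beta_t + \epsilon_1\xi_1$ inside the $\ell_1$-ball of radius $L_{t+1}$, and Lemma~\ref{lem:obj_lowerbound} transfers proximal improvement to Rayleigh-quotient improvement. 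Because the update is a projection (the minimizer over the constraint set), exhibiting any feasible point with good progress suffices; the paper never has to analyze the projection map itself. Your route would require a direct quantitative control of the soft-thresholding level $c$ and its effect on $\langle\beta_{t+1},\xi_1\rangle$, which is substantially harder to make rigorous than the comparison argument.

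More importantly, there is a conceptual gap in how you attribute the $\sqrt{s\ln p/n}\,(B_t/\sqrt{s} + B_t^2/s)$ terms. You claim these arise as a \emph{bias} from over-shrinkage when $L_{t+1}$ is close to $\|\xi_1\|_1$, so that ``the thresholding level $c$ \ldots is strictly positive even on the support of $\xi_1$.'' That mechanism has the wrong monotonicity: your bias would grow as $B_{t+1}\to 0$, whereas the bound~\eqref{eq:estimation_main1} is \emph{increasing} in $B_t$ and vanishes as $B_t \to 0$. In the paper these terms are a \emph{variance} penalty: $\omega_t := \|\beta_t\|_1 - \|\xi_1\|_1 \le B_t$ enters the empirical-process bounds of Lemmas~\ref{lem:basic_bound1}--\ref{lem:basic_bound2} through $\|h(\beta_t)\|_1 \le 2\sqrt{2s\delta_t} + \omega_t$ (Proposition~\ref{prop:prop4}); a looser $\ell_1$ budget lets $\beta_t$ wander over a larger effective support, amplifying the stochastic error. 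The definition of $T^\ast$ is not a bias-versus-noise balance as you suggest but a \emph{feasibility} guarantee: with $B_t \ge \frac{4c_{B_2}(1+c_{B_1})}{\nu}\sqrt{s^2\ln p/n}$ the interpolant $\tilde\beta$ is guaranteed to satisfy $\|\tilde\beta\|_1 \le L_{t+1}$ (see display~\eqref{eq:proximal_eq13} in the paper). Because your proposed mechanism does not reproduce the correct sign and source of the $B_t$ terms, unrolling the recursion along your lines would not yield~\eqref{eq:estimation_main1}. You would also need a substitute for Lemma~\ref{lem:obj_lowerbound}, since a contraction of $\theta_t^\ast$ toward $\xi_1$ does not, by itself, control the contraction of the normalized projected iterate $\beta_{t+1}$ toward $\xi_1$ without an argument linking the proximal objective to the Rayleigh quotient.
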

As a direct application of Theorem \ref{thm:thm1}, $\beta_{T^*}$ is rate-optimal.
\begin{lemma}
\label{lem:mainLemma}
Consider the same set-up as in Theorem \ref{thm:thm1}. Then, let $\psi_2$ be a sufficiently large constant, we have
\[
\lim_{n\rightarrow\infty} \bP(\delta_{T^*}\leq \psi_2\frac{s\ln p}{n})=1.
\]
\end{lemma}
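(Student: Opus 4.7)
The plan is to apply Theorem \ref{thm:thm1} at $t = T^*$ and argue that each of the three terms on the right-hand side of (\ref{eq:estimation_main1}) is $O(s\ln p/n)$.

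First I bound $B_{T^*}$. Set $q = c_{B_2}\eta\sqrt{s\ln p/n}$ and $\alpha = \|\xi_1\|_1 - L_\infty$; we have $0\leq\alpha\leq\sqrt{s}-1$ from $L_\infty\geq 1$ and the Cauchy--Schwarz bound $\|\xi_1\|_1\leq\sqrt{s}\,\|\xi_1\|_2=\sqrt{s}$. The recursion $L_{t+1} = (1-q)L_t + qL_\infty$ gives $B_{t+1} = (1-q)B_t - q\alpha$. Since by definition of $T^*$ we have $B_{T^*+1} < \tau := \frac{4c_{B_2}(1+c_{B_1})}{\nu} s\sqrt{\ln p/n}$, one step back yields $B_{T^*} \leq (\tau + q\alpha)/(1-q) = O(s\sqrt{\ln p/n})$, using $q\alpha \leq c_{B_2}\eta\,s\sqrt{\ln p/n}$ and $q = o(1)$. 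Substituting into the third term of (\ref{eq:estimation_main1}),
\[
\sqrt{s\ln p/n}\,\Big(\tfrac{B_{T^*}}{\sqrt{s}}+\tfrac{B_{T^*}^2}{s}\Big) = O(s\ln p/n) + O\big((s\ln p/n)^{3/2}\big) = O(s\ln p/n).
\]

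Next I lower-bound $T^*$ to handle the first (bias) term. Solving the recursion, $B_{T^*+1} = (1-q)^{T^*+1}(L_0 - L_\infty) - \alpha$, so $B_{T^*+1}<\tau$ rearranges to $(1-q)^{T^*+1} < (\tau+\alpha)/(L_0 - L_\infty)$. A short calculus exercise shows that the right-hand side, maximized over feasible $L_\infty\in[1,\|\xi_1\|_1]$ and $\|\xi_1\|_1\in[1,\sqrt{s}]$, is attained at $L_\infty=1$, $\|\xi_1\|_1=\sqrt{s}$, and converges to $1/(1+c_{B_1})<1$ as $s\to\infty$. Combined with $|\log(1-q)|\leq 2q$ for small $q$, this gives $T^* \geq c'/q = \Omega(1/\sqrt{s\ln p/n})$ for a positive constant $c'$. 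Consequently,
\[
\delta_0(1-\eta\nu)^{T^*} \leq \exp(-\eta\nu\,T^*) \leq \exp(-C/\sqrt{s\ln p/n}),
\]
which decays faster than any polynomial in $s\ln p/n$, and is in particular $o(s\ln p/n)$.

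Summing the three contributions in (\ref{eq:estimation_main1}) gives $\delta_{T^*} \leq C''\cdot s\ln p/n$ with probability tending to $1$, and any $\psi_2\geq C''$ establishes the claim. The main subtle step is pinning down a uniform gap strictly below $1$ for the ratio $(\tau+\alpha)/(L_0-L_\infty)$; this is ensured by the slack $L_0 = (1+c_{B_1})\sqrt{s}$ chosen in Theorem \ref{thm:thm1}, and the rest of the computation is routine.
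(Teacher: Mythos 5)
Your proof is correct and follows the paper's strategy exactly: lower-bound $T^*$ so that $\delta_0(1-\eta\nu)^{T^*}=o(s\ln p/n)$, upper-bound $B_{T^*}$ by $O(s\sqrt{\ln p/n})$ using the definition of $T^*$ and the step size of the $L_t$ sequence, and then substitute both into the bound of Theorem~\ref{thm:thm1}. Your derivation of the $T^*$ lower bound (exact closed form for $B_{T^*+1}$, a uniform gap below $1$ supplied by the slack $c_{B_1}$, then a logarithmic comparison) is a somewhat cleaner alternative to the paper's Bernoulli-inequality calculation, but the two are substantively equivalent.
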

We have shown that the existence of an optimal estimate in $\{\beta_t, 1\leq t\leq T^*\}$. Since $T^*$ is  unknown, can we identify an rate-optimal estimate in the produced sequence?  It is straightforward to achieve when we have an independent validation set since we can evaluate $\beta_t$ fairly with the validation set and pick a best one. Theoretically, one can also identify such a rate optimal solution by considering the penalized objective (\ref{eq:penalized_obj}):
\begin{equation}
\label{eq:penalized_obj}
\underline{f}_{\tau}(\beta)=f(\beta)-\tau \rho_1\sqrt{\frac{\ln p}{n}}(\|\beta\|_1+\frac{c_2\|\beta\|_1^2}{ L_0}),
\end{equation}
for any positive constant $c_2$ and a sufficiently large constant $\tau$. The estimate $\beta_{t}$ is guaranteed to be rate optimal with high probability if $\underline{f}_{\tau}(\beta_t)$ is maximized at iteration $t$.
\begin{theorem}
\label{thm:thm2}
Consider the same set-up as in Theorem \ref{thm:thm1}.  Let $t^* =\arg\max_{t} \underline{f}_{\tau}(\beta_t)$ be the iteration achieving the largest penalized objective. Then, when $\tau$ is a sufficiently large constant,  $\lim_{n\rightarrow\infty}\bP(\delta_{t^*}\leq \psi_2\frac{s\ln p}{n}) = 1$ for a sufficiently large constant $\psi_2$.
\end{theorem}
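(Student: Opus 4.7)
The plan is to exploit the benchmark from Lemma~\ref{lem:mainLemma}: at iteration $T^*$, $\beta_{T^*}$ is already rate-optimal, with $\delta_{T^*} \leq \psi_2 s\ln p/n$ with probability tending to one, and by construction $\|\beta_{T^*}\|_1 \leq L_{T^*} = \|\xi_1\|_1 + B_{T^*}$, which is $O(\sqrt{s})$ under Assumption~\ref{ass:sparsity}. Since $t^*$ maximizes the penalized objective, the basic inequality $\underline{f}_\tau(\beta_{t^*}) \geq \underline{f}_\tau(\beta_{T^*})$ rearranges to
\begin{equation*}
f(\beta_{T^*}) - f(\beta_{t^*}) \;\leq\; \tau\rho_1 \sqrt{\tfrac{\ln p}{n}}\Big[(\|\beta_{T^*}\|_1 - \|\beta_{t^*}\|_1) + \tfrac{c_2}{L_0}(\|\beta_{T^*}\|_1^2 - \|\beta_{t^*}\|_1^2)\Big].
\end{equation*}
The remaining task is to convert the left-hand side into a bound of order $\delta_{t^*}$.

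For this conversion I would recycle two pieces from the proof of Theorem~\ref{thm:thm1}. First, a local eigenvalue gap: $c_{\mathrm{low}} \delta_\beta \leq f_*(\xi_1) - f_*(\beta) \leq c_{\mathrm{up}} \delta_\beta$ for any unit $\beta$ with $\delta_\beta = 1 - |\beta^\top \xi_1|$ small, where $f_*(\beta) = \beta^\top \Sigma \beta / \beta^\top \Lambda \beta$ is the population Rayleigh quotient and the constants depend on $\rho_1 - \rho_2$, $\rho_1$ and $M$. Second, a uniform sample-population comparison. The crude bound $|f(\beta) - f_*(\beta)| \lesssim \sqrt{\ln p/n}\,\|\beta\|_1^2$ coming from $\|\hat\Sigma - \Sigma\|_\infty, \|\hat\Lambda - \Lambda\|_\infty = O_P(\sqrt{\ln p/n})$ is not tight enough, so I would use the sharpened version obtained by Taylor-expanding $f - f_*$ about $\xi_1$: writing $\Delta = \beta - \xi_1$, the first-order term reduces to $\nabla(f - f_*)(\xi_1)^\top \Delta$, which by $\ell_\infty/\ell_1$ duality together with $\|(\hat\Sigma - \Sigma)\xi_1\|_\infty = O_P(\sqrt{\ln p/n})$ is of order $\sqrt{\ln p/n}\,\|\Delta\|_1$, while the second-order remainder contributes $O_P(\sqrt{\ln p/n}\,\|\Delta\|_1^2)$.

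Substituting these into the master inequality, and using that an iterate close to $\xi_1$ remains approximately $s$-sparse so $\|\Delta\|_1 \lesssim \sqrt{s\,\delta_\beta}$, yields an inequality of the form
\begin{equation*}
c_{\mathrm{low}}\,\delta_{t^*} \;\leq\; c_{\mathrm{up}}\,\delta_{T^*} \;+\; \sqrt{\tfrac{\ln p}{n}}\,\Psi\!\left(\|\beta_{T^*}\|_1,\, \|\beta_{t^*}\|_1;\, \tau\right),
\end{equation*}
where $\Psi$ bundles the residual concentration contribution and the penalty difference. I would then split into two cases. If $\|\beta_{t^*}\|_1$ stays comparable to $\|\beta_{T^*}\|_1$, the concentration part is dominated by $\sqrt{\ln p/n}\,\sqrt{s\,\delta_{t^*}}$, producing a self-bounding inequality in $\delta_{t^*}$ that solves to $\delta_{t^*} = O(s\ln p/n)$. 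If instead $\|\beta_{t^*}\|_1$ exceeds a constant multiple of $\|\beta_{T^*}\|_1$, the quadratic part of the penalty, $\tau\rho_1 c_2\sqrt{\ln p/n}\,(\|\beta_{T^*}\|_1^2 - \|\beta_{t^*}\|_1^2)/L_0$, is strongly negative; choosing $\tau$ a sufficiently large constant makes it outweigh the concentration error and contradict the positive left-hand side, ruling out this case.

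The main obstacle is calibrating the concentration precisely enough to match the target rate. A naive application of $|f(\beta) - f_*(\beta)| \lesssim \sqrt{\ln p/n}\,\|\beta\|_1^2$ to both iterates delivers only $\delta_{t^*} = O(s\sqrt{\ln p/n})$, off by a factor of $\sqrt{n/\ln p}$. The sharper expansion around $\xi_1$ is essential: precisely because $\xi_1$ is a population critical point on the unit sphere, the leading noise contribution scales with $\|\beta - \xi_1\|_1$ rather than $\|\beta\|_1^2$, and this is what allows a dimension-free constant $\tau$ to suffice. The penalty shape $\|\beta\|_1 + c_2\|\beta\|_1^2/L_0$ is correspondingly engineered so that both linear and quadratic scales of the concentration error are controlled simultaneously without letting $\tau$ grow with $L_0 \sim \sqrt{s}$. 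The remaining bookkeeping---union-bounding the concentration event over the finitely many iterations actually visited before $L_t$ stabilizes---is routine.
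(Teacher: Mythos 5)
Your overall plan---anchor to the rate-optimal benchmark at $T^*$, write the basic inequality from $\underline{f}_\tau(\beta_{t^*}) \geq \underline{f}_\tau(\beta_{T^*})$, and convert the objective gap into a bound on $\delta_{t^*}$ via an eigenvalue-gap inequality combined with a sample-population Taylor expansion of the Rayleigh quotient about $\xi_1$---is the same strategy the paper follows; its Lemma~\ref{lem:coef_bound} is precisely the packaged ``eigenvalue gap plus expansion about $\xi_1$'' inequality you are re-deriving, and your observation that the noise contributions scale with $\|\beta - \xi_1\|_1$ because $\xi_1$ is a population critical point is exactly the mechanism that lemma exploits.

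The genuine gap is your second-order remainder bound $O_P(\sqrt{\ln p/n}\,\|\Delta\|_1^2)$, which is too crude to close the argument with a dimension-free $\tau$. Expanding $\|\Delta\|_1^2 \lesssim s\delta_{t^*} + \omega_{t^*}^2$ via Proposition~\ref{prop:prop4}, the term $\sqrt{\ln p/n}\cdot s\delta_{t^*}$ can only be absorbed by $c_{\mathrm{low}}\delta_{t^*}$ in your self-bounding inequality if $s\sqrt{\ln p/n} = O(1)$, i.e.\ $s^2\ln p/n = O(1)$, which is strictly stronger than Assumption~\ref{ass:sparsity}; and the term $\sqrt{\ln p/n}\cdot\omega_{t^*}^2$ can only be dominated by the quadratic penalty $\tau\rho_1 c_2\sqrt{\ln p/n}\,\omega_{t^*}^2/L_0 \asymp \tau\sqrt{\ln p/n}\,\omega_{t^*}^2/\sqrt{s}$ if $\tau\gtrsim\sqrt{s}$, contradicting your claim that a constant $\tau$ suffices. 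The paper avoids both problems by not using the plain $\ell_\infty$-to-$\ell_1^2$ bound on the noise quadratic form but instead the restricted-operator-norm bound of Lemma~\ref{lem:basic_bound1}~(\ref{basic1main1}): after blocking $\Delta$ into groups of size $k's$ one gets $|\Delta^\top(\hat H - H)\Delta| \lesssim \sqrt{k's\ln p/n}\bigl(\|\Delta\|_2^2 + \|\Delta\|_1^2/(k's)\bigr)$. The $\ell_2$ piece is $o(\delta_{t^*})$ for any fixed $k'$ under Assumption~\ref{ass:sparsity}, and the $\ell_1$ piece carries the extra factor $1/\sqrt{k'}$, which is then killed by choosing the free constant $k'$ (from the event $\mA(k')$) large \emph{after} fixing $\tau$, $c_2$, $c_{B_1}$---this is the step your write-up is missing, and without it the chain of inequalities does not close. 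A secondary issue is the case split ``comparable $\|\beta_{t^*}\|_1$ vs.\ exceeds a constant multiple'': in the ``comparable'' branch $\omega_{t^*}$ can still be of order $\sqrt{s}$, so you already need the linear penalty there to cancel the $\sqrt{\ln p/n}\,\omega_{t^*}$ first-order error, rather than postponing the penalty entirely to the second branch. The paper's split on $\mathrm{sign}(\omega_{t^*})$ is cleaner: when $\omega_{t^*}\le 0$, Proposition~\ref{prop:prop4} already gives $|\omega_{t^*}|\lesssim\sqrt{s\delta_{t^*}}$ with no penalty needed, and when $\omega_{t^*}>0$ the penalty kicks in directly.
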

From Theorem \ref{thm:thm2}, we can pick the $\beta_t$ with the largest penalized objective $\underline{f}_{\tau}(\beta_t)$, it is rate-optimal with high probability as $n$ becomes large. In practice, however, we still recommend using cross-validation for selecting iteration number $t$ because it is unclear what is a good value for $\tau$ or $c_2$ when we care about finite sample performance.

\subsection{Initialization with theoretical guarantee}
\label{subsec:init}
Like other iterative updating methods for the generalized eigenvalue problem, msCCA requires an initial guess. One can initialize the estimate by solving  some relaxed convex problems. The attractive aspect of such methods is that they provide initial guess with  statistical guarantees as $n$ becomes large and under suitable assumptions.  Let $\|.\|_{*}$ and $\|.\|_{op}$ be the nuclear norm and the operator norm respectively. We consider the type of formulation used in in \cite{tan2018sparse} and \cite{gao2021sparse} for a rank 1 model, where the authors initialize the problem using the largest eigenvalue of $\hat{P}$, and $\hat{P}$ is the solution to the problem below ($S_{+}^{p\times p}$ denote the space of all $p\times p$ symmetric and semi-positive definite matrix):
\begin{equation}
\label{eq:convex1}
\min_{P\in S_+^{p\times p}} -tr(\hat\Sigma P)+\tau \sum_{i,j}|P_{ij}|, \; s.t. \; \|\hat\Lambda^{\frac{1}{2}}P\hat\Lambda^{\frac{1}{2}} \|_{*}=1, \;\|\hat\Lambda^{\frac{1}{2}}P\hat\Lambda^{\frac{1}{2}}\|_{op}\leq 1,
\end{equation}

 Let $\hat P$ be the solution from (\ref{eq:convex1}) and let $\hat\beta$ be its leading eigenvector. We can initialize our estimate as $\beta_0=\frac{\tilde \beta}{\|\tilde\beta\|_2}$ where $\tilde\beta$ is the truncated version of $\hat\beta$ that keeps only  entries in $\hat\beta$ with $ks$ largest magnitude values, for any integer $k \geq  1$. Then, $\beta_0$ satisfies requirements in Theorem \ref{thm:thm1} as an initial guess with high probability for large $n$, which can be shown using  Lemma 12 from \cite{yuan2013truncated} and arguments for Theorem 4.3 from  \cite{gao2021sparse}. 
\begin{lemma}
\label{lem:init}
Suppose that Assumptions \ref{ass:regularity} -\ref{ass:sparsity} hold, and further, $\sqrt{\frac{s^2\ln p}{n}}\rightarrow 0$ as $n\rightarrow\infty$. When $\tau \geq CM \rho_1\sqrt{\frac{\ln p}{n}}$ for a sufficiently large universal constant $C$,  we have
\[
1-|\xi_1^\top\beta_0|\leq \psi_1,\; \|\beta_0\|_1\leq (1+c_{B_1})\sqrt{s}
\]
happens with probability approaching 1 as $n\rightarrow \infty$ for any positive constants  $\psi_1$ and $c_{B_1}$.
\end{lemma}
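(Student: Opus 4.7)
The plan is to combine a convex-relaxation concentration result from \cite{gao2021sparse} with a spectral perturbation bound and the hard-truncation lemma of \cite{yuan2013truncated}, as the lemma statement itself suggests. First, I would identify the population analogue of the convex program (\ref{eq:convex1}), whose minimizer is a rank-one matrix $P^* \propto \xi_1\xi_1^\top$ with leading eigenvector $\xi_1$. Theorem 4.3 of \cite{gao2021sparse} gives, under Assumptions \ref{ass:regularity}-\ref{ass:sparsity} and with $\tau \geq CM\rho_1\sqrt{\ln p/n}$ for a sufficiently large universal $C$, a Frobenius-norm bound of the form $\|\hat P - P^*\|_F = O_p(\sqrt{s^2\ln p/n})$; the additional hypothesis $\sqrt{s^2 \ln p/n}\to 0$ makes this $o_p(1)$. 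I would verify that the SUMCOR Rayleigh quotient built from $\hat\Sigma$ and $\hat\Lambda$ enjoys the same entrywise and restricted operator-norm sub-Gaussian deviation inequalities that drive their argument, which is immediate from the Gaussian assumption in Assumption \ref{ass:regularity}.

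Next, because $P^*$ is rank-one and Assumption \ref{ass:eigen_gap} yields a non-vanishing spectral gap in $P^*$ (translated from $\rho_1-\rho_2 \geq \gamma\rho_1$), the Davis-Kahan sin-theta theorem converts the Frobenius-norm bound on $\hat P - P^*$ into an $\ell_2$ bound on the leading eigenvector: $\min_{\sigma \in \{\pm 1\}}\|\sigma\hat\beta - \xi_1\|_2 = O_p(\sqrt{s^2\ln p/n})$. Fixing the sign so that $\hat\beta^\top \xi_1 \geq 0$, I invoke Lemma 12 of \cite{yuan2013truncated}: because $\xi_1$ is $s$-sparse, keeping the $ks$ top-magnitude entries of $\hat\beta$ and renormalizing to unit length produces $\beta_0$ with $\|\beta_0 - \xi_1\|_2 \leq C_k \|\hat\beta - \xi_1\|_2$ for a constant $C_k$ depending only on $k\geq 1$. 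Hence $\|\beta_0 - \xi_1\|_2 = o_p(1)$ as well.

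Finally, I would derive the two stated bounds directly. The inner-product claim follows from $1 - |\xi_1^\top \beta_0| \leq \tfrac{1}{2} \|\beta_0 - \xi_1\|_2^2 = o_p(1)$, which is eventually below any fixed $\psi_1 > 0$. For the $\ell_1$ claim, $\beta_0$ has support size at most $ks$ and $\xi_1$ has support size $s$, so $\beta_0 - \xi_1$ has support size at most $(k+1)s$; Cauchy-Schwarz on this support yields
\[
\|\beta_0\|_1 \leq \|\xi_1\|_1 + \|\beta_0 - \xi_1\|_1 \leq \sqrt{s}\,\|\xi_1\|_2 + \sqrt{(k+1)s}\,\|\beta_0 - \xi_1\|_2 = \sqrt{s}\bigl(1 + o_p(1)\bigr),
\]
which is eventually below $(1+c_{B_1})\sqrt{s}$ for any $c_{B_1} > 0$. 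The genuinely hard ingredient is the SDP-relaxation rate in Step 1, but that is imported from \cite{gao2021sparse}; the remaining bookkeeping -- translating $\rho_1 - \rho_2 \geq \gamma\rho_1$ into a quantitative spectral gap for $P^*$ and checking that the truncation constant $k$ does not inflate the $\ell_1$ bound -- is routine because $\|\beta_0 - \xi_1\|_2$ already vanishes and $k$ is a fixed integer.
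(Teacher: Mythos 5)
Your proof follows essentially the same route as the paper's: import the Frobenius-norm rate for $\hat P$ from \cite{gao2021sparse} Theorem 4.3 (after checking the needed concentration events, which the paper packages as $\mA_4\cap\mA_5$), apply Davis--Kahan to get the $\ell_2$ eigenvector rate, invoke Lemma 12 of \cite{yuan2013truncated} for the truncation step, and then deduce both conclusions from the vanishing $\ell_2$ error via $1-|\xi_1^\top\beta_0|\leq\tfrac12\|\beta_0-\xi_1\|_2^2$ and a Cauchy--Schwarz argument on the $(k+1)s$-sparse difference. The argument is correct; the only detail you gloss over is that the deviation events need to be re-derived (not just cited) to accommodate the possibly unbounded $\rho_1$, which the paper handles in Proposition \ref{prop:prop6}.
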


\section{Estimation of multiple CCA directions}
\label{sec:deflation}
We can estimate multiple directions sequentially.   To motivate our procedure, we first consider the low dimensional setting where we adopt the empirical mCCA estimations without sparsity constraint. Let $\hat\beta^\ell$ be the estimated $\ell^{th}$ direction for $\ell = 1,\ldots, p$.  Given the first $k$ mCCA estimated directions, we estimate the $(k+1)^{th}$ direction under the constraint that its associated projection $Z_{k+1} = X\hat\beta$ is uncorrelated with precedent projections $Z_{\ell} = X\hat\beta_{\ell}$ for $\ell < k+1$:
\begin{equation}
\label{eq:deflate1}
\max \beta^\top \hat\Sigma\beta, \; s.t.\;\beta^\top\hat\Lambda\beta = 1,\;  \beta^\top \hat\Sigma \hat\beta^{\ell} = 0, \;\mbox{for all}\; \ell < k+1.
\end{equation}
We may  drop  these orthogonality constraints and consider the following problem:
\begin{equation}
\label{eq:deflate2}
\max \beta^\top \frac{\tilde X_{k+1}^\top \tilde X_{k+1}}{n}\beta, \; s.t.\; \beta^\top \hat\Lambda\beta = 1.
\end{equation}
Here, $\tilde X_{k+1} $ is some deflated version of $X$, and is constructed sequentially as below:
\begin{equation}
\tilde X_{\ell+1} = (\Id - \frac{\tilde Z_\ell \tilde Z_\ell^\top}{\|\tilde Z_\ell\|_2^2})\tilde X_\ell, \;\mbox{for all }1\leq \ell \leq k,
\end{equation}
with $\tilde X_\ell$ the deflated data matrix for component $\ell$ and $\tilde Z_\ell= \tilde X_\ell \hat\beta_\ell$. $\tilde X_1$ is the same as the original data matrix $X$.  The two formulations (\ref{eq:deflate1}) and (\ref{eq:deflate2}) are equivalent.
\begin{proposition}
\label{prop:deflate1}
Let $\hat\beta_1,\ldots, \hat\beta_p$ be the eigenvectors to the original problem with eigenvalues $\hat\rho_1\geq \hat\rho_2 \geq \ldots \geq \hat\rho_p$.  Then, at step $(k+1)$, the deflated problem has generalized leading eigenvector and eigenvalue pair $(\tilde \beta_1, \tilde \rho_1)$ with $\tilde\beta_1 = \hat\beta_{k+1}$, and $\tilde\rho_1 =\hat\rho_{k+1}$. 
\end{proposition}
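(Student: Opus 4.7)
The plan is to verify directly that the Rayleigh quotient $\beta^\top \tilde\Sigma_{k+1}\beta/\beta^\top \hat\Lambda\beta$ for $\tilde\Sigma_{k+1} = \tilde X_{k+1}^\top \tilde X_{k+1}/n$ is maximized at $\beta = \hat\beta_{k+1}$ with maximal value $\hat\rho_{k+1}$. The guiding geometric observation is that after $k$ rounds of deflation one has $\tilde X_{k+1} = (\Id - Q_k) X$, where $Q_k$ is the orthogonal projection in $\real^n$ onto $\mathrm{span}(X\hat\beta_1, \ldots, X\hat\beta_k)$. In the low-dimensional regime of this section, $\hat\Lambda$ is positive definite, so I may take $\{\hat\beta_j\}_{j=1}^p$ to be a $\hat\Lambda$-orthonormal basis of $\real^p$ satisfying $\hat\beta_j^\top \hat\Sigma \hat\beta_{j'} = \hat\rho_j \delta_{jj'}$; a direct consequence is that the score vectors $X\hat\beta_j$ are pairwise orthogonal in $\real^n$, since $(X\hat\beta_j)^\top(X\hat\beta_{j'}) = n\hat\beta_j^\top\hat\Sigma\hat\beta_{j'}$.

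The central step is an induction on $\ell$ establishing the identity $\tilde X_\ell \hat\beta_j = X\hat\beta_j$ for every $j \geq \ell$. The inductive step hinges on $\tilde Z_\ell^\top \tilde X_\ell \hat\beta_j = \hat\beta_\ell^\top \tilde X_\ell^\top \tilde X_\ell \hat\beta_j$, which by the hypothesis at level $\ell$ applied to both indices equals $n \hat\beta_\ell^\top \hat\Sigma \hat\beta_j = 0$ for $j > \ell$, so the rank-one deflation at step $\ell$ leaves $\hat\beta_j$ untouched. Taking $j = \ell$ in the same identity gives $\tilde Z_\ell = X\hat\beta_\ell$ for each $\ell \leq k$, so the $\tilde Z_\ell$'s are mutually orthogonal in $\real^n$ and span exactly $\mathrm{span}(X\hat\beta_1, \ldots, X\hat\beta_k)$. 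Because the projectors $\Id - P_{\tilde Z_\ell}$ commute on this orthogonal family, their product telescopes to $\Id - Q_k$, yielding the claimed identity $\tilde X_{k+1} = (\Id - Q_k) X$, and in particular $\tilde\Sigma_{k+1} = X^\top(\Id - Q_k)X/n$.

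To finish, I would expand any feasible $\beta$ as $\beta = \sum_j c_j \hat\beta_j$, so that $\beta^\top \hat\Lambda\beta = \sum_j c_j^2 = 1$. Since $X\hat\beta_j \in \mathrm{range}(Q_k)$ for $j \leq k$ while $X\hat\beta_j \perp \mathrm{range}(Q_k)$ for $j > k$, applying $\Id - Q_k$ annihilates the first $k$ terms of $X\beta = \sum_j c_j X\hat\beta_j$; combined with the $\real^n$-orthogonality of the surviving $X\hat\beta_j$'s this collapses the quadratic form to $\beta^\top \tilde\Sigma_{k+1}\beta = \sum_{j > k} c_j^2 \hat\rho_j$. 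The ordering $\hat\rho_{k+1} \geq \hat\rho_{k+2} \geq \ldots$ then forces the maximum $\hat\rho_{k+1}$, attained at $c_{k+1}^2 = 1$, i.e. $\beta = \hat\beta_{k+1}$. The only mild obstacle is the book-keeping in the induction: the identity $\tilde X_\ell \hat\beta_j = X\hat\beta_j$ has to be propagated simultaneously for $j=\ell$ (to certify $\tilde Z_\ell = X\hat\beta_\ell$) and for all $j>\ell$ (to ensure the new deflation does not disturb future eigenvectors), with both needed to close the next inductive layer; everything else is routine linear algebra.
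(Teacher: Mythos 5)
Your proof is correct, and it takes a genuinely different (though related) route from the paper's. The paper's argument works entirely at the covariance level: after invoking Proposition \ref{prop:deflate2} to identify $\tilde X_{k+1}^\top\tilde X_{k+1}/n$ with the Schur-complement deflation $\tilde\Sigma_{k+1}$, it runs an induction that maintains three invariants — the equivalence of (\ref{eq:deflate1}) and (\ref{eq:deflate2}), $\tilde\Sigma_k\hat\beta_j = 0$ for $j<k$, and $\tilde\Sigma_k\hat\beta_j = \hat\Sigma\hat\beta_j$ for $j\geq k$ — from which the generalized eigendecomposition of $\tilde\Sigma_{k+1}$ is read off directly. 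You instead stay in sample space: your induction proves the vector identity $\tilde X_\ell\hat\beta_j = X\hat\beta_j$ for $j\geq\ell$, which in particular gives $\tilde Z_\ell = X\hat\beta_\ell$, the mutual orthogonality of the score vectors (via $\hat\beta_\ell^\top\hat\Sigma\hat\beta_j = \hat\rho_\ell\1_{\ell=j}$), and hence the geometric picture $\tilde X_{k+1} = (\Id - Q_k)X$ with $Q_k$ an orthogonal projector onto $\mathrm{span}(X\hat\beta_1,\dots,X\hat\beta_k)$. Your framing is a bit stronger — the paper's invariants (b) and (c) are both consequences of it, since $\tilde\Sigma_{k+1}\hat\beta_j = X^\top(\Id-Q_k)X\hat\beta_j/n$ vanishes for $j\leq k$ and equals $\hat\Sigma\hat\beta_j$ for $j>k$ — and it makes the commuting-projector/telescoping structure explicit, which the paper leaves implicit. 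The paper's approach keeps the argument local to the covariance recursion (\ref{eq:deflate3}), which is convenient if one wants to reason about $\tilde\Sigma_k$ without reference to a data matrix; your approach buys a one-line Rayleigh-quotient computation in the $\hat\beta_j$-basis at the end. Both are valid; the only book-keeping caveat (which you flag) is that the induction must carry the identity for all $j\geq\ell$ simultaneously, and the final maximization over $c$ relies on $\hat\rho_{k+1}\geq 0$ so that putting no mass on $j\leq k$ is actually optimal — harmless here since $\hat\rho_j\geq 0$ for Gram-type operators, but worth stating.
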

This leads to the deflation procedure in Algorithm \ref{alg:deflation} where we sequentially estimate multiple mCCA directions at line 6 and perform deflation at line 4.

\begin{algorithm}[H]
\caption{subsequent direction estimation via deflation}
\label{alg:deflation}

Initialize $\tilde\bX = \bX$ be as the adjusted feature matrix.

\For{$k = 1,\ldots, K$}{

\If{$k > 1$}{
	$\tilde\bX  = (\Id - \frac{\tilde{Z}_{k-1} \tilde{Z}_{k-1}^\top}{\|\tilde{Z}_{k-1}\|_2^2})\tilde\bX$
}

Apply Algorithm \ref{alg:proximal} with $\hat\Sigma$ replaced by $\frac{\tilde \bX^\top \tilde\bX}{n}$, and obtain the $k^{th}$ estimated mCCA direction $\hat\beta_k$.

Construct $\tilde Z_{k}$ be the aggregated mCCA score after adjusting for previous directions and 
\[
\tilde Z_k = \tilde X_{k}\hat\beta_{k}.
\]
}
\end{algorithm}

The construction of $\tilde X$ and (\ref{eq:deflate2}) is equivalent to the Schur complement deflation procedure proposed in \cite{mackey2008deflation} for the sparse PCA problem.  The Schur complement deflates the covariance $\Sigma$ as below. Let $\tilde\Sigma_{k+1}$ be the deflated covariance for estimating the $(k+1)^{th}$ component. The Schur complement deflation constructs $\tilde\Sigma_{k+1}$ sequentially based on (\ref{eq:deflate3}):
\begin{equation}
\label{eq:deflate3}
\tilde\Sigma_{\ell+1} = \tilde\Sigma_\ell - \frac{\tilde\Sigma_\ell\hat\beta_\ell\hat\beta_\ell^\top \tilde\Sigma_\ell}{\hat\beta_\ell^\top\tilde\Sigma_\ell\hat\beta_\ell}, \;\mbox{for all }1\leq \ell \leq k,
\end{equation}
where $\tilde\Sigma_\ell$ is the deflated covariance matrix for $\ell^{th}$ eigenvector.  The reason why we deflates $X$ instead of $\Sigma$ is because the former reduces the computation from $\mathcal{O}(p^2)$ to $\mathcal{O}(np)$ when $p \gg n$ and does not require saving a large covariance matrix.
\begin{proposition}
\label{prop:deflate2}
For any $\hat\beta_1,\ldots, \hat\beta_{k+1}$, we have $\frac{\tilde X_{k+1}^\top\tilde X_{k+1}}{n} = \tilde \Sigma_{k+1}$, with $\tilde X_{k+1}$ formed based on (\ref{eq:deflate2}) and $\tilde\Sigma_{k+1}$ based on (\ref{eq:deflate3}).  
\end{proposition}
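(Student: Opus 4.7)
The plan is a straightforward induction on the component index $\ell$. For the base case, the excerpt sets $\tilde{X}_1 = \bX$ and (implicitly) $\tilde{\Sigma}_1 = \hat{\Sigma}$, so $\tilde{X}_1^\top \tilde{X}_1/n = \hat{\Sigma} = \tilde{\Sigma}_1$ is immediate. The content is entirely in the inductive step.

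Assuming $\tilde{X}_\ell^\top \tilde{X}_\ell/n = \tilde{\Sigma}_\ell$, I would rewrite the data deflation as $\tilde{X}_{\ell+1} = P_\ell \tilde{X}_\ell$, where
\[
P_\ell \;=\; \Id - \frac{\tilde{Z}_\ell \tilde{Z}_\ell^\top}{\|\tilde{Z}_\ell\|_2^2}
\]
is the orthogonal projector onto the complement of $\tilde{Z}_\ell$ in $\real^n$. Because $P_\ell$ is symmetric and idempotent, $P_\ell^\top P_\ell = P_\ell$, and hence $\tilde{X}_{\ell+1}^\top \tilde{X}_{\ell+1} = \tilde{X}_\ell^\top P_\ell \tilde{X}_\ell$. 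Substituting $\tilde{Z}_\ell = \tilde{X}_\ell \hat{\beta}_\ell$ inside $P_\ell$ and expanding yields
\[
\tilde{X}_{\ell+1}^\top \tilde{X}_{\ell+1} \;=\; \tilde{X}_\ell^\top \tilde{X}_\ell \;-\; \frac{(\tilde{X}_\ell^\top \tilde{X}_\ell)\hat{\beta}_\ell \hat{\beta}_\ell^\top (\tilde{X}_\ell^\top \tilde{X}_\ell)}{\hat{\beta}_\ell^\top (\tilde{X}_\ell^\top \tilde{X}_\ell) \hat{\beta}_\ell}.
\]

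Now I would divide by $n$ and track the powers of $n$ carefully: the numerator of the second term carries $1/n^2$ and the denominator carries $1/n$, leaving a net $1/n$ exactly as needed to turn each $\tilde{X}_\ell^\top \tilde{X}_\ell$ into $\tilde{\Sigma}_\ell$ via the inductive hypothesis. This produces
\[
\frac{\tilde{X}_{\ell+1}^\top \tilde{X}_{\ell+1}}{n} \;=\; \tilde{\Sigma}_\ell \;-\; \frac{\tilde{\Sigma}_\ell \hat{\beta}_\ell \hat{\beta}_\ell^\top \tilde{\Sigma}_\ell}{\hat{\beta}_\ell^\top \tilde{\Sigma}_\ell \hat{\beta}_\ell} \;=\; \tilde{\Sigma}_{\ell+1},
\]
which is precisely the Schur-complement recursion \eqref{eq:deflate3}, closing the induction at $\ell+1$ and hence giving the claim at $\ell+1 = k+1$.

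There is essentially no obstacle here; the statement is an algebraic identity. The only items demanding care are the symmetry and idempotency of $P_\ell$ (which let $\tilde{X}_\ell^\top P_\ell^\top P_\ell \tilde{X}_\ell$ collapse to $\tilde{X}_\ell^\top P_\ell \tilde{X}_\ell$) and the bookkeeping of the $1/n$ factors when translating the cross-product form into the empirical covariance form.
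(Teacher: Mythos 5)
Your proof is correct and follows essentially the same inductive argument as the paper; the only cosmetic difference is that you invoke the idempotency of the projector $P_\ell$ to collapse $P_\ell^\top P_\ell = P_\ell$ in one step, whereas the paper expands the product $(\Id - \tilde Z_\ell \tilde Z_\ell^\top/\|\tilde Z_\ell\|_2^2)^2$ term by term and then combines the $-2$ and $+1$ contributions. Both reduce to the same algebraic identity, so there is nothing to add.
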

In high dimensions and with an additional constraint on the $\ell_1$ norm, (\ref{eq:deflate1}) and (\ref{eq:deflate2}) are no longer equivalent to each other, and we are not likely to end up with exact orthogonal projections. Despite that, \cite{mackey2008deflation} compared different deflation procedures for sparse PCA and concluded that methods like Schur complement deflation preserve some desirable properties compared to the naive Hotelling deflation. The Schur complement deflation does guarantee that  (1) $\tilde\Sigma_{k+1}$ is semi-positive definite, (2) $\hat\beta_{\ell}^\top\tilde\Sigma_{k+1} = 0$ for all $\ell < k+1$. 
\section{Simulation studies}
\label{sec:sim}
%
%
%
%
%
%
%
%

In this section, we compare five different methods for multi-block CCA estimations in simulations:
\begin{itemize}
\item msCCA1: the proposed msCCA estimation with $\ell_1$ constraint, combined with the proposed deflation procedure  for estimating multiple directions.
\item rifle(seq) or rifle:  rifle \citep{tan2018sparse} combined with the proposed deflation procedure  for estimating multiple directions.
\item pma: multi-block generalization of penalized matrix analysis \citep{witten2009penalized}.
\item rgcca: regularized generalized CCA \citep{tenenhaus2011regularized}.
\item  sgcca: sparse generalized CCA \citep{tenenhaus2014variable}.
\end{itemize}
In Section \ref{subsec:init}, we show that an initial guess from solving a relaxed convex problem (after thresholding) is a sufficiently good initializer in an asymptotic sense. Both rifle and msCCA1 offer rate optimal estimations with such initializers. However, we do not use it here in our numerical experiments because  it is computationally expensive. Instead, we initialize both msCCA1 and rifle with a non-sparse mCCA estimation using a subset of selected features that exhibit high across block correlations. More details can be found in Appendix \ref{app:algorithm}.

We consider the simulation setup where we have $D = 4$ blocks with the single block dimension fixed at $p_d = 500$ for $d=1,\ldots, D$,  and $K = 3$ mCCA components with $\rho_j > 1$. For all estimation methods, we estimate only two mCCA directions. We consider two scenarios described below.

\noindent\textbf{Scenario A: mCCA direction estimation with non-informative blocks.} In this scenario, only the first two blocks are correlated and contribute to the population leading mCCA directions. That is: for $d\neq d'$,
\[
\Sigma_{[d][d']} = 0,\; \mbox{if }  d\notin\{1,2\}\mbox{ or }d'\notin\{1,2\}.
\]
For $d,d'\in \{1,2\}$, we construct their covariance as $\Sigma_{[d][d']} = \Lambda_{[d]}U_{[d]} \Gamma U_{[d']}^{\top}\Lambda_{[d']}$, where $U_{[d]}\in \real^{p_d\times K}$ satisfies $U_{[d]}^\top\Lambda_{[d]} U_{[d]} = \Id$ for $d=1,2$, and  $\Gamma\in \real^{K\times K}$ is the diagonal matrix with diagonal entries $\tilde\rho_k\in (0 ,1)$. We fix $\tilde\rho_k = 0.9 - (k-1)/5$ for $k=1,\ldots,K$. Under this set-up, the leading mCCA direction $\xi_k$ is going to be proportional to $U_k$ with $\rho_k = \tilde\rho_k+1$ for $k=1,\ldots, K$. 

\noindent\textbf{Scenario B: mCCA direction estimation without non-informative blocks.} In this scenario, all four blocks are correlated and contribute to the leading mCCA directions.   We generate $U_{[d]}$ the same way as in scenario A for all $d=1,\ldots, 4$, and let $\Sigma_{[d][d']} = \Lambda_{[d]}U_{[d]} \Gamma U_{[d']}^{\top}\Lambda_{[d']}$ for all $d\neq d'$, and $\Gamma \in \real^{K\times K}$ is a diagonal matrix with diagonal elements $\tilde\rho_{k} = 0.9 - (k-1)/5$, $k=1,\ldots, K$. Under this set-up, the leading mCCA direction $\xi_k$ is also proportional to $U_k$ with $\rho_k = 3\tilde\rho_k+1$ for $k=1,\ldots, K$. 

\noindent Inside both scenarios,  different types of $\Lambda_{[d]}$ and sparsity levels in $U_{[d]}$ are considered:
\begin{itemize}
\item We consider three different types of $\Lambda_{[d]}$: (1) identity matrix, (2) spiked covariance matrix $\Lambda_{[d]} = \sum_{k=1}^3 \lambda_k u_{k}u_{k}^T+\Id$ for $\lambda_{k} = 5$ and $k=1, 2, 3$.  (3) Toeplitz with $\Lambda_{[d]}(i,j) = 0.3^{|i-j|}$ for entry $(i,j)$ in the $d^{th}$ block $\Lambda_{[d]}$.  All $\Lambda_{[d]}$ are normalized to make the diagonal entries be 1.

\item  We set the sample size $n\in \{300, 1000\}$, and the underlying per-block sparsity $s\in \{1, 5, 15\}$. The non-zero entries in $U_{[d]}$ randomly generated from $\mathcal{N}(0,1)$ and then normalized with respect to $\Lambda_{[d]}$ to make $U_{[d]}^\top \Lambda_{[d]} U_{[d]} = \Id$. Since $\xi_k$ corresponds to columns in $U$ for $k=1,\ldots, K$, sparsity in $U$ also indicates sparsity in the leading mCCA directions.
\end{itemize}
We evaluate the estimation quality looking at two aspects:
\begin{itemize}
\item  The achieved multi-block canonical correlations using an independent test data with 2000 samples: Since different estimated mCCA directions are not necessarily orthogonal to each other with respect to test observations, we deflate the second direction estimated from different methods as described in section  \ref{sec:deflation}. The larger the achieved deflated  multi-block canonical correlation is, the better the estimation approach is. 
\item Accuracy of the aggregated projection: Another evaluation we can look at is the accuracy of the aggregated projection. Let $Z_{\ell} = X\xi_{\ell}$ and $\hat Z_{\ell} = X\hat\beta$, we measure the quality by the remaining variance of $Z_{\ell}$ after regressing out the estimated $\hat Z$. The smaller the residual variance is, the better the estimation approach is.
\end{itemize}
Table \ref{tab:sim1}-\ref{tab:sim3} show achieved mCCA correlation using different methods for the first two leading directions, with  different within-block covariance structures for scenarios A and B.  Table \ref{tab:sim4}-\ref{tab:sim5} show the remaining variance of the true aggregated projections after regressing out the estimated ones. All simulation results are averaged over 20 random repetitions. For each entry in the tables, it gives the mean correlations or residual variances, with their standard deviations given in the parenthesis. The top two procedures are colored black, with the best procedure in bold for different simulation settings. Other procedures are in gray.

For both scenarios, msCCA1 and rifle are much better than pma, rgcca and sgcca in our simulations.  Scenario A is harder to estimate than scenario B, and the other three methods can extract very little useful information even when $n = 1000$.  Both msCCA1 and rifle have deteriorated performance  as we (1) increase the the number of non-zero features,  (2)decrease the sample size, or (3) decrease the signal-to-noise ratio, e.g., comparing the estimations of the first direction, and the second direction and comparing scenario A to scenario B. Compared to rifle,  msCCA1  has comparable performance to rifle when the problem is easier, but is on average better when the number of non-zero entries in mCCA directions $\xi_k$ increases and when the signal is weaker.

\begin{table}
\centering
\caption{Achieved mCCA correlations after deflation with identity within-block covariance.}
\label{tab:sim1}
\begin{adjustbox}{width=.75\textwidth}
\begin{tabular}{|ll | lllll|}
  \hline
identity & direction1 & msCCA1 & rifle(seq) & pma & sgcca & rgcca \\   \hline
  A & (300,1) & \textbf{1.87(0.008)} & 1.74(0.073) & \color{gray}{1.05(0.024)} & \color{gray}{1.05(0.024)} & \color{gray}{1.01(0.007)} \\ 
& (300,5) & \textbf{1.34(0.076)} & 1.18(0.051) & \color{gray}{1.04(0.023)} & \color{gray}{1.05(0.025)} & \color{gray}{1(0.006)} \\ 
    (n, s)  & (300,15) & \textbf{1.06(0.039)} & 1.03(0.017) & \color{gray}{1(0.006)} & \color{gray}{1.01(0.004)} & \color{gray}{1(0.005)} \\ 
   & (1000,1) & \textbf{1.89(0.002)} & \textbf{1.89(0.01)} & \color{gray}{1.22(0.04)} & \color{gray}{1.16(0.038)} & \color{gray}{1(0.007)} \\ 
   & (1000,5) & \textbf{1.85(0.004)} & 1.75(0.014) & \color{gray}{1.09(0.032)} & \color{gray}{1.13(0.033)} & \color{gray}{1.01(0.005)} \\ 
   & (1000,15) & \textbf{1.74(0.015)} & 1.38(0.035) & \color{gray}{1.06(0.029)} & \color{gray}{1.1(0.029)} & \color{gray}{1(0.007)} \\   \hline
  B & (300,1) & \textbf{3.7(0.002)} & \textbf{3.7(0.002)} & \color{gray}{2.21(0.221)} & \color{gray}{2.26(0.204)} & \color{gray}{1.16(0.02)} \\ 
   & (300,5) & \textbf{3.67(0.003)} & 3.62(0.004) & \color{gray}{2.26(0.25)} & \color{gray}{2.32(0.22)} & \color{gray}{1.2(0.03)} \\ 
   (n, s)   & (300,15) & \textbf{3.6(0.008)} & \textbf{3.6(0.008)} & \color{gray}{1.64(0.189)} & \color{gray}{2.21(0.191)} & \color{gray}{1.26(0.034)} \\ 
   & (1000,1) & \textbf{3.7(0.003)} & 3.69(0.003) & \color{gray}{3.03(0.15)} & \color{gray}{2.85(0.202)} & \color{gray}{2.4(0.009)} \\ 
   & (1000,5) & \textbf{3.69(0.003)} & 3.65(0.005) & \color{gray}{3.2(0.089)} & \color{gray}{3.17(0.137)} & \color{gray}{2.38(0.012)} \\ 
   & (1000,15) & \textbf{3.68(0.003)} & 3.65(0.004) & \color{gray}{2.98(0.126)} & \color{gray}{3.17(0.087)} & \color{gray}{2.4(0.019)} \\  \hline 
  identity & direction2 &  &  &  &  &  \\   \hline
  A & (300,1) & \textbf{1.37(0.077)} & 1.21(0.072) & \color{gray}{1.02(0.009)} & \color{gray}{1.06(0.025)} & \color{gray}{1(0.006)} \\ 
   & (300,5) & \textbf{1.11(0.051)} & 1.04(0.032) & \color{gray}{1.01(0.007)} & \color{gray}{1.01(0.009)} & \color{gray}{1(0.006)} \\ 
  (n, s)    & (300,15) & \textbf{1.02(0.022)} & \color{gray}{1(0.006)} & 1.01(0.006) & \color{gray}{1(0.006)} & 1.01(0.005) \\ 
   & (1000,1) & 1.69(0.004) & \textbf{1.7(0.014)} & \color{gray}{1.01(0.012)} & \color{gray}{1.15(0.034)} & \color{gray}{1.02(0.005)} \\ 
   & (1000,5) & \textbf{1.53(0.054)} & 1.38(0.044) & \color{gray}{1.05(0.025)} & \color{gray}{1.09(0.03)} & \color{gray}{1.01(0.008)} \\ 
   & (1000,15) & \textbf{1.17(0.046)} & 1.07(0.028) & \color{gray}{1.03(0.018)} & \color{gray}{1.04(0.024)} & \color{gray}{1.01(0.006)} \\   \hline
  B & (300,1) & \textbf{3.1(0.005)} & \textbf{3.1(0.005)} & \color{gray}{1.54(0.174)} & \color{gray}{1.75(0.186)} & \color{gray}{1.12(0.02)} \\ 
   & (300,5) & \textbf{3.01(0.006)} & 2.84(0.012) & \color{gray}{1.32(0.123)} & \color{gray}{1.7(0.192)} & \color{gray}{1.09(0.015)} \\ 
  (n, s)    & (300,15) & \textbf{2.28(0.148)} & 1.94(0.163) & \color{gray}{1.44(0.145)} & \color{gray}{1.47(0.165)} & \color{gray}{1.09(0.015)} \\ 
   & (1000,1) & \textbf{3.1(0.004)} & 3.07(0.007) & \color{gray}{2(0.202)} & \color{gray}{2.62(0.13)} & \color{gray}{1.8(0.01)} \\ 
   & (1000,5) & \textbf{3.07(0.007)} & 2.96(0.013) & \color{gray}{2.48(0.173)} & \color{gray}{2.36(0.178)} & \color{gray}{1.78(0.019)} \\ 
   & (1000,15) & \textbf{3.03(0.006)} & \color{gray}{2.88(0.008)} & \color{gray}{2.28(0.23)} & 2.98(0.083) & \color{gray}{1.79(0.014)} \\ 
   \hline
\end{tabular}
\end{adjustbox}
\end{table}

\begin{table}
\centering
\caption{Achieved mCCA correlations after deflation with toplitz within-block covariance.}
\label{tab:sim2}
\begin{adjustbox}{width=.75\textwidth}
\begin{tabular}{| ll | lllll |}
  \hline
toplitz & direction1 & msCCA1 & rifle(seq) & pma & sgcca & rgcca \\   \hline
  A & (300,1) & \textbf{1.86(0.018)} & 1.83(0.046) & \color{gray}{1.07(0.029)} & \color{gray}{1.02(0.018)} & \color{gray}{1(0.007)} \\ 
   & (300,5) & \textbf{1.47(0.073)} & 1.23(0.06) & \color{gray}{1.04(0.025)} & \color{gray}{1.04(0.018)} & \color{gray}{1.01(0.007)} \\ 
  (n, s)    & (300,15) & \textbf{1.05(0.02)} & 1.01(0.01) & \color{gray}{1(0.009)} & 1.01(0.007) & 1.01(0.008) \\ 
   & (1000,1) & \textbf{1.89(0.009)} & 1.88(0.014) & \color{gray}{1.16(0.038)} & \color{gray}{1.17(0.04)} & \color{gray}{1.02(0.01)} \\ 
   & (1000,5) & \textbf{1.85(0.005)} & 1.71(0.026) & \color{gray}{1.16(0.046)} & \color{gray}{1.13(0.037)} & \color{gray}{1.02(0.007)} \\ 
   & (1000,15) & \textbf{1.71(0.039)} & 1.39(0.045) & \color{gray}{1.04(0.023)} & \color{gray}{1.05(0.024)} & \color{gray}{1.02(0.006)} \\   \hline
  B & (300,1) & \textbf{3.7(0.002)} & \textbf{3.7(0.002)} & \color{gray}{2.4(0.193)} & \color{gray}{2.23(0.173)} & \color{gray}{1.34(0.027)} \\ 
   & (300,5) & \textbf{3.68(0.002)} & 3.62(0.005) & \color{gray}{2.24(0.192)} & \color{gray}{2.15(0.178)} & \color{gray}{1.33(0.028)} \\ 
  (n, s)    & (300,15) & \textbf{3.6(0.005)} & \textbf{3.6(0.008)} & \color{gray}{2.2(0.17)} & \color{gray}{2.32(0.15)} & \color{gray}{1.4(0.032)} \\ 
   & (1000,1) & \textbf{3.7(0.003)} & 3.69(0.003) & \color{gray}{2.92(0.122)} & \color{gray}{2.75(0.178)} & \color{gray}{2.39(0.01)} \\ 
   & (1000,5) & \textbf{3.69(0.002)} & 3.66(0.003) & \color{gray}{3.2(0.063)} & \color{gray}{2.91(0.17)} & \color{gray}{2.38(0.011)} \\ 
   & (1000,15) & \textbf{3.68(0.003)} & 3.65(0.003) & \color{gray}{2.86(0.101)} & \color{gray}{2.93(0.099)} & \color{gray}{2.41(0.017)} \\   \hline
  toplitz & direction2 &  &  &  &  &  \\   \hline
  A & (300,1) & \textbf{1.48(0.065)} & 1.36(0.075) & \color{gray}{1.05(0.026)} & \color{gray}{1.08(0.03)} & \color{gray}{1.01(0.006)} \\ 
   & (300,5) & \textbf{1.13(0.046)} & 1.09(0.03) & \color{gray}{1.01(0.012)} & \color{gray}{1.04(0.019)} & \color{gray}{1.02(0.007)} \\ 
  (n, s)    & (300,15) & 1.02(0.019) & \textbf{1.03(0.011)} & \color{gray}{0.99(0.008)} & \color{gray}{1(0.006)} & \color{gray}{1(0.007)} \\ 
   & (1000,1) & \textbf{1.7(0.011)} & 1.67(0.041) & \color{gray}{1.09(0.036)} & \color{gray}{1.17(0.039)} & \color{gray}{1.02(0.006)} \\ 
   & (1000,5) & \textbf{1.55(0.045)} & 1.33(0.054) & \color{gray}{1.11(0.04)} & \color{gray}{1.11(0.039)} & \color{gray}{1(0.006)} \\ 
   & (1000,15) & \textbf{1.18(0.054)} & \color{gray}{1.05(0.021)} & \color{gray}{1.05(0.024)} & 1.11(0.032) & \color{gray}{1.01(0.007)} \\   \hline
  B & (300,1) & \textbf{3.09(0.006)} & \textbf{3.09(0.007)} & \color{gray}{1.88(0.218)} & \color{gray}{1.86(0.188)} & \color{gray}{1.1(0.015)} \\ 
   & (300,5) & \textbf{3.01(0.009)} & 2.84(0.014) & \color{gray}{1.52(0.156)} & \color{gray}{1.81(0.199)} & \color{gray}{1.1(0.014)} \\ 
  (n, s)    & (300,15) & \textbf{2.49(0.144)} & 2.14(0.171) & \color{gray}{1.26(0.112)} & \color{gray}{1.6(0.187)} & \color{gray}{1.11(0.016)} \\ 
   & (1000,1) & \textbf{3.1(0.005)} & \textbf{3.1(0.005)} & \color{gray}{2.5(0.17)} & \color{gray}{2.63(0.188)} & \color{gray}{1.84(0.014)} \\ 
   & (1000,5) & \textbf{3.08(0.005)} & 2.99(0.012) & \color{gray}{1.9(0.199)} & \color{gray}{2.57(0.169)} & \color{gray}{1.83(0.016)} \\ 
   & (1000,15) & \textbf{3.03(0.006)} & 2.93(0.007) & \color{gray}{2.28(0.19)} & \color{gray}{2.69(0.189)} & \color{gray}{1.83(0.022)} \\ 
   \hline
\end{tabular}
\end{adjustbox}
\end{table}

\begin{table}
\centering
\caption{Achieved mCCA correlations after deflation with spiked within-block covariance.}
\label{tab:sim3}
\begin{adjustbox}{width=.75\textwidth}
\begin{tabular}{| ll | lllll |}
  \hline
spiked & direction1 & msCCA1 & rifle(seq) & pma & sgcca & rgcca \\   \hline
  A & (300,1) & \textbf{1.87(0.015)} & 1.73(0.053) & \color{gray}{1.3(0.025)} & \color{gray}{1.34(0.032)} & \color{gray}{1.17(0.019)} \\ 
   & (300,5) & \textbf{1.58(0.051)} & 1.33(0.042) & \color{gray}{1.21(0.022)} & \color{gray}{1.2(0.031)} & \color{gray}{1.16(0.02)} \\ 
  (n, s)    & (300,15) & \textbf{1.29(0.045)} & 1.2(0.032) & \color{gray}{1.16(0.02)} & \color{gray}{1.15(0.021)} & 1.2(0.018) \\ 
   & (1000,1) & \textbf{1.89(0.003)} & 1.87(0.011) & \color{gray}{1.37(0.021)} & \color{gray}{1.4(0.02)} & \color{gray}{1.25(0.02)} \\ 
   & (1000,5) & \textbf{1.85(0.003)} & 1.71(0.017) & \color{gray}{1.29(0.024)} & \color{gray}{1.29(0.032)} & \color{gray}{1.25(0.026)} \\ 
   & (1000,15) & \textbf{1.69(0.024)} & 1.42(0.023) & \color{gray}{1.21(0.022)} & \color{gray}{1.21(0.023)} & \color{gray}{1.24(0.02)} \\  \hline 
  B & (300,1) & \textbf{3.7(0.003)} & \textbf{3.7(0.003)} & \color{gray}{2.77(0.07)} & \color{gray}{3.31(0.109)} & \color{gray}{1.9(0.041)} \\ 
   & (300,5) & \textbf{3.67(0.002)} & 3.63(0.004) & \color{gray}{2.64(0.062)} & \color{gray}{2.87(0.054)} & \color{gray}{1.87(0.057)} \\ 
   (n, s)   & (300,15) & \textbf{3.52(0.046)} & 3.5(0.056) & \color{gray}{2.24(0.065)} & \color{gray}{2.31(0.058)} & \color{gray}{1.97(0.05)} \\ 
   & (1000,1) & \textbf{3.69(0.002)} & \textbf{3.69(0.002)} & \color{gray}{3.07(0.027)} & \color{gray}{3.61(0.041)} & \color{gray}{2.12(0.048)} \\ 
   & (1000,5) & \textbf{3.69(0.003)} & 3.66(0.003) & \color{gray}{2.78(0.069)} & \color{gray}{2.93(0.074)} & \color{gray}{2.04(0.05)} \\ 
   & (1000,15) & 3.63(0.032) & \textbf{3.65(0.003)} & \color{gray}{2.36(0.058)} & \color{gray}{2.42(0.045)} & \color{gray}{2.02(0.038)} \\   \hline
  spiked & direction2 &  &  &  &  &  \\   \hline
  A & (300,1) & \textbf{1.61(0.043)} & 1.33(0.065) & \color{gray}{1.1(0.033)} & \color{gray}{1.11(0.033)} & \color{gray}{1.07(0.011)} \\ 
   & (300,5) & \textbf{1.22(0.057)} & 1.1(0.035) & \color{gray}{1.08(0.02)} & 1.1(0.031) & 1.1(0.022) \\ 
  (n, s)    & (300,15) & 1.08(0.029) & \color{gray}{1.03(0.009)} & \textbf{1.1(0.014)} & 1.08(0.017) & \color{gray}{1.06(0.015)} \\ 
   & (1000,1) & \textbf{1.67(0.012)} & 1.52(0.051) & \color{gray}{1.13(0.038)} & \color{gray}{1.3(0.03)} & \color{gray}{1.1(0.015)} \\ 
   & (1000,5) & \textbf{1.53(0.036)} & 1.27(0.044) & \color{gray}{1.1(0.021)} & \color{gray}{1.11(0.024)} & \color{gray}{1.1(0.02)} \\ 
   & (1000,15) & \textbf{1.38(0.057)} & 1.17(0.025) & \color{gray}{1.08(0.012)} & \color{gray}{1.08(0.013)} & \color{gray}{1.08(0.013)} \\   \hline
  B & (300,1) & 3.08(0.007) & \color{gray}{2.99(0.014)} & \color{gray}{2.38(0.139)} & \textbf{3.09(0.131)} & \color{gray}{1.41(0.03)} \\ 
   & (300,5) & \textbf{2.9(0.05)} & 2.76(0.057) & \color{gray}{1.81(0.119)} & \color{gray}{2.21(0.11)} & \color{gray}{1.5(0.033)} \\ 
   (n, s)   & (300,15) & \textbf{2.73(0.044)} & 2.54(0.069) & \color{gray}{1.49(0.074)} & \color{gray}{1.78(0.073)} & \color{gray}{1.52(0.04)} \\ 
   & (1000,1) & \textbf{3.09(0.005)} & 3.04(0.007) & \color{gray}{2.59(0.071)} & \color{gray}{2.86(0.083)} & \color{gray}{1.59(0.024)} \\ 
   & (1000,5) & \textbf{3.06(0.006)} & 2.99(0.008) & \color{gray}{2.01(0.162)} & \color{gray}{2.48(0.087)} & \color{gray}{1.6(0.023)} \\ 
   & (1000,15) & \textbf{3.01(0.03)} & 2.92(0.008) & \color{gray}{1.44(0.07)} & \color{gray}{1.87(0.038)} & \color{gray}{1.54(0.034)} \\ 
   \hline
\end{tabular}
\end{adjustbox}
\end{table}

\begin{table}
\centering
\caption{Achieved aggregated projection residuals with identity within-block covariance.}
\label{tab:sim4}
\begin{adjustbox}{width=.95\textwidth}
\begin{tabular}{| ll | lllll |}
  \hline
identity & projection1 & msCCA1 & rifle(seq) & pma & sgcca & rgcca \\   \hline
  A & (300,1) & \textbf{1.6e-02(5.0e-03)} & 1.2e-01(4.8e-02) & \color{gray}{9.6e-01(1.9e-02)} & \color{gray}{9.3e-01(3.6e-02)} & \color{gray}{9.9e-01(2.1e-03)} \\ 
   & (300,5) & \textbf{5.2e-01(9.8e-02)} & 6.7e-01(7.2e-02) & \color{gray}{9.3e-01(3.7e-02)} & \color{gray}{9.2e-01(4.0e-02)} & \color{gray}{9.9e-01(2.0e-03)} \\ 
   & (300,15) & \textbf{8.6e-01(5.9e-02)} & 9.4e-01(3.0e-02) & \color{gray}{9.9e-01(4.8e-03)} & \color{gray}{9.9e-01(2.8e-03)} & \color{gray}{9.9e-01(2.3e-03)} \\ 
   & (1000,1) & 3.5e-03(7.1e-04) & \textbf{9.2e-06(1.8e-06)} & \color{gray}{6.8e-01(6.6e-02)} & \color{gray}{6.4e-01(5.5e-02)} & \color{gray}{9.8e-01(3.7e-03)} \\ 
   & (1000,5) & \textbf{3.3e-02(2.5e-03)} & 1.0e-01(9.6e-03) & \color{gray}{8.7e-01(5.1e-02)} & \color{gray}{7.9e-01(5.7e-02)} & \color{gray}{9.7e-01(4.2e-03)} \\ 
   & (1000,15) & \textbf{9.9e-02(1.2e-02)} & 4.1e-01(3.5e-02) & \color{gray}{8.7e-01(4.9e-02)} & \color{gray}{8.0e-01(5.2e-02)} & \color{gray}{9.8e-01(5.0e-03)} \\   \hline
  B & (300,1) & 7.1e-05(3.7e-05) & \textbf{9.5e-06(5.4e-07)} & \color{gray}{5.2e-01(9.8e-02)} & \color{gray}{4.2e-01(9.8e-02)} & \color{gray}{7.6e-01(2.0e-02)} \\ 
   & (300,5) & \textbf{2.7e-03(1.5e-04)} & 7.9e-03(3.3e-04) & \color{gray}{4.4e-01(9.6e-02)} & \color{gray}{4.1e-01(1.0e-01)} & \color{gray}{7.6e-01(2.3e-02)} \\ 
   & (300,15) & \textbf{1.0e-02(6.7e-04)} & 1.1e-02(7.8e-04) & \color{gray}{5.4e-01(9.0e-02)} & \color{gray}{2.6e-01(7.2e-02)} & \color{gray}{7.0e-01(2.5e-02)} \\ 
   & (1000,1) & \textbf{6.0e-06(4.2e-06)} & 4.0e-04(4.8e-05) & \color{gray}{2.4e-01(8.8e-02)} & \color{gray}{1.4e-01(6.6e-02)} & \color{gray}{2.0e-01(2.2e-03)} \\ 
   & (1000,5) & \textbf{6.6e-04(4.0e-05)} & 4.5e-03(4.5e-04) & \color{gray}{2.3e-01(8.9e-02)} & \color{gray}{8.0e-02(4.8e-02)} & \color{gray}{2.0e-01(3.5e-03)} \\ 
   & (1000,15) & \textbf{1.9e-03(5.4e-05)} & 5.0e-03(2.0e-04) & \color{gray}{1.5e-01(6.6e-02)} & \color{gray}{7.8e-02(4.8e-02)} & \color{gray}{2.0e-01(4.0e-03)} \\   \hline
  identity & projection2 &  &  &  &  &  \\   \hline
  A & (300,1) & \textbf{4.6e-01(1.1e-01)} & 6.1e-01(1.1e-01) & \color{gray}{9.4e-01(3.7e-02)} & \color{gray}{9.2e-01(4.1e-02)} & \color{gray}{9.9e-01(1.2e-03)} \\ 
   & (300,5) & \textbf{9.0e-01(5.3e-02)} & 9.8e-01(1.3e-02) & \color{gray}{9.9e-01(7.7e-03)} & \color{gray}{1.0e+00(7.5e-04)} & \color{gray}{1.0e+00(1.0e-03)} \\ 
   & (300,15) & \color{gray}{1.0e+00(5.9e-04)} & \color{gray}{1.0e+00(9.4e-04)} & \color{gray}{1.0e+00(9.9e-04)} & \textbf{9.9e-01(3.2e-03)} & \textbf{9.9e-01(1.5e-03)} \\ 
   & (1000,1) & \textbf{6.9e-03(1.8e-03)} & 5.0e-02(5.0e-02) & \color{gray}{9.8e-01(2.0e-02)} & \color{gray}{8.8e-01(4.4e-02)} & \color{gray}{9.9e-01(2.2e-03)} \\ 
   & (1000,5) & \textbf{2.5e-01(8.6e-02)} & 3.9e-01(7.4e-02) & \color{gray}{9.2e-01(3.7e-02)} & \color{gray}{8.8e-01(4.4e-02)} & \color{gray}{9.8e-01(3.3e-03)} \\ 
   & (1000,15) & \textbf{7.4e-01(7.6e-02)} & 8.8e-01(4.3e-02) & \color{gray}{9.9e-01(2.4e-03)} & \color{gray}{9.9e-01(3.3e-03)} & \color{gray}{9.9e-01(3.2e-03)} \\   \hline
  B & (300,1) & 7.8e-04(3.0e-04) & \textbf{3.0e-05(3.6e-06)} & \color{gray}{7.4e-01(8.7e-02)} & \color{gray}{6.9e-01(9.4e-02)} & \color{gray}{9.0e-01(1.8e-02)} \\ 
   & (300,5) & \textbf{1.3e-02(7.1e-04)} & 4.2e-02(1.9e-03) & \color{gray}{8.3e-01(7.1e-02)} & \color{gray}{6.7e-01(9.6e-02)} & \color{gray}{9.4e-01(1.3e-02)} \\ 
   & (300,15) & \textbf{3.5e-01(9.5e-02)} & 4.5e-01(8.8e-02) & \color{gray}{8.3e-01(6.5e-02)} & \color{gray}{8.7e-01(6.5e-02)} & \color{gray}{9.5e-01(9.2e-03)} \\ 
   & (1000,1) & \textbf{6.4e-05(2.2e-05)} & 3.7e-03(1.1e-03) & \color{gray}{5.8e-01(1.1e-01)} & \color{gray}{4.9e-01(1.0e-01)} & \color{gray}{3.5e-01(4.0e-03)} \\ 
   & (1000,5) & \textbf{3.6e-03(1.8e-04)} & 2.2e-02(2.0e-03) & \color{gray}{3.1e-01(9.5e-02)} & \color{gray}{5.7e-01(1.1e-01)} & \color{gray}{3.8e-01(9.9e-03)} \\ 
   & (1000,15) & \textbf{1.0e-02(3.5e-04)} & 3.5e-02(1.3e-03) & \color{gray}{4.3e-01(1.1e-01)} & \color{gray}{1.4e-01(6.6e-02)} & \color{gray}{3.6e-01(8.6e-03)} \\ 
   \hline
\end{tabular}
\end{adjustbox}
\end{table}

\begin{table}
\centering
\caption{Achieved aggregated projection residuals with  toplitz within-block covariance.}
\label{tab:sim5}
\begin{adjustbox}{width=.95\textwidth}
\begin{tabular}{| ll | lllll |}
  \hline
toplitz & projection1 & msCCA1 & rifle(seq) & pma & sgcca & rgcca \\   \hline
  A & (300,1) & \textbf{2.3e-02(1.0e-02)} & 3.2e-02(2.3e-02) & \color{gray}{8.8e-01(5.3e-02)} & \color{gray}{9.0e-01(4.5e-02)} & \color{gray}{9.8e-01(3.7e-03)} \\ 
   & (300,5) & \textbf{4.5e-01(9.3e-02)} & 6.4e-01(6.9e-02) & \color{gray}{9.5e-01(3.3e-02)} & \color{gray}{9.4e-01(3.1e-02)} & \color{gray}{9.9e-01(2.2e-03)} \\ 
   & (300,15) & \textbf{9.0e-01(3.6e-02)} & 9.2e-01(1.9e-02) & \color{gray}{9.9e-01(2.5e-03)} & \color{gray}{9.9e-01(7.9e-03)} & \color{gray}{9.9e-01(3.1e-03)} \\ 
   & (1000,1) & 3.0e-03(7.0e-04) & \textbf{1.9e-05(1.1e-06)} & \color{gray}{7.1e-01(6.0e-02)} & \color{gray}{6.8e-01(6.4e-02)} & \color{gray}{9.5e-01(8.0e-03)} \\ 
   & (1000,5) & \textbf{3.0e-02(2.4e-03)} & 1.5e-01(4.6e-02) & \color{gray}{6.2e-01(6.4e-02)} & \color{gray}{6.9e-01(6.4e-02)} & \color{gray}{9.6e-01(6.4e-03)} \\ 
   & (1000,15) & \textbf{1.4e-01(4.6e-02)} & 4.4e-01(5.4e-02) & \color{gray}{8.6e-01(4.6e-02)} & \color{gray}{7.4e-01(5.1e-02)} & \color{gray}{9.6e-01(6.4e-03)} \\   \hline
  B & (300,1) & 7.1e-05(3.1e-05) & \textbf{9.9e-06(5.8e-07)} & \color{gray}{4.2e-01(1.0e-01)} & \color{gray}{4.7e-01(9.9e-02)} & \color{gray}{6.9e-01(2.1e-02)} \\ 
   & (300,5) & \textbf{2.4e-03(1.6e-04)} & 7.9e-03(3.6e-04) & \color{gray}{4.0e-01(8.7e-02)} & \color{gray}{3.4e-01(8.8e-02)} & \color{gray}{6.8e-01(2.6e-02)} \\ 
   & (300,15) & \textbf{9.7e-03(3.2e-04)} & 1.0e-02(8.6e-04) & \color{gray}{5.3e-01(9.2e-02)} & \color{gray}{3.0e-01(8.1e-02)} & \color{gray}{6.1e-01(2.2e-02)} \\ 
   & (1000,1) & \textbf{5.2e-05(2.3e-05)} & 2.9e-04(2.0e-05) & \color{gray}{2.8e-01(9.5e-02)} & \color{gray}{9.8e-02(4.8e-02)} & \color{gray}{2.1e-01(1.8e-03)} \\ 
   & (1000,5) & \textbf{7.8e-04(4.5e-05)} & 4.1e-03(2.5e-04) & \color{gray}{2.4e-01(8.8e-02)} & \color{gray}{1.8e-01(7.9e-02)} & \color{gray}{2.1e-01(3.0e-03)} \\ 
   & (1000,15) & \textbf{2.0e-03(8.6e-05)} & 4.9e-03(1.3e-04) & \color{gray}{2.5e-01(8.6e-02)} & \color{gray}{1.3e-01(6.6e-02)} & \color{gray}{2.0e-01(3.2e-03)} \\   \hline
  toplitz & projection2 &  &  &  &  &  \\   \hline
  A & (300,1) & \textbf{3.3e-01(1.0e-01)} & 4.4e-01(1.0e-01) & \color{gray}{9.7e-01(2.3e-02)} & \color{gray}{9.5e-01(3.2e-02)} & \color{gray}{9.9e-01(2.0e-03)} \\ 
   & (300,5) & \textbf{7.3e-01(8.5e-02)} & 8.5e-01(6.5e-02) & \color{gray}{9.5e-01(3.3e-02)} & \color{gray}{9.7e-01(2.6e-02)} & \color{gray}{9.9e-01(2.0e-03)} \\ 
   (n, s)   & (300,15) & \textbf{9.5e-01(2.8e-02)} & 9.7e-01(1.5e-02) & \color{gray}{9.9e-01(2.0e-03)} & \color{gray}{1.0e+00(1.3e-03)} & \color{gray}{9.9e-01(1.6e-03)} \\ 
   & (1000,1) & \textbf{8.4e-03(2.1e-03)} & 1.0e-01(6.9e-02) & \color{gray}{8.7e-01(4.7e-02)} & \color{gray}{8.0e-01(5.8e-02)} & \color{gray}{9.8e-01(4.4e-03)} \\ 
   & (1000,5) & \textbf{2.1e-01(7.6e-02)} & 4.5e-01(8.3e-02) & \color{gray}{9.5e-01(2.7e-02)} & \color{gray}{9.3e-01(3.7e-02)} & \color{gray}{9.9e-01(1.6e-03)} \\ 
   & (1000,15) & \textbf{6.9e-01(8.7e-02)} & 8.7e-01(3.7e-02) & \color{gray}{9.8e-01(8.0e-03)} & \color{gray}{9.7e-01(1.2e-02)} & \color{gray}{9.7e-01(4.3e-03)} \\   \hline
  B & (300,1) & 5.5e-04(2.3e-04) & \textbf{8.7e-05(3.6e-05)} & \color{gray}{5.5e-01(1.0e-01)} & \color{gray}{5.6e-01(1.0e-01)} & \color{gray}{9.0e-01(1.6e-02)} \\ 
   & (300,5) & \textbf{1.3e-02(8.9e-04)} & 4.5e-02(2.7e-03) & \color{gray}{7.0e-01(8.5e-02)} & \color{gray}{7.5e-01(8.8e-02)} & \color{gray}{8.9e-01(1.4e-02)} \\ 
    (n, s)  & (300,15) & \textbf{2.0e-01(7.7e-02)} & 3.4e-01(8.8e-02) & \color{gray}{6.7e-01(8.8e-02)} & \color{gray}{7.1e-01(8.8e-02)} & \color{gray}{8.9e-01(1.8e-02)} \\ 
   & (1000,1) & \textbf{1.4e-04(7.7e-05)} & 7.3e-04(8.5e-05) & \color{gray}{3.0e-01(9.3e-02)} & \color{gray}{4.8e-01(1.1e-01)} & \color{gray}{3.4e-01(7.5e-03)} \\ 
   & (1000,5) & \textbf{3.2e-03(1.8e-04)} & 1.8e-02(2.4e-03) & \color{gray}{4.8e-01(1.1e-01)} & \color{gray}{3.8e-01(1.0e-01)} & \color{gray}{3.4e-01(6.7e-03)} \\ 
   & (1000,15) & \textbf{9.8e-03(3.1e-04)} & 2.8e-02(9.3e-04) & \color{gray}{3.2e-01(9.0e-02)} & \color{gray}{3.4e-01(9.9e-02)} & \color{gray}{3.5e-01(1.1e-02)} \\
   \hline
\end{tabular}
\end{adjustbox}
\end{table}
\begin{table}
\centering
\caption{Achieved aggregated projection residuals with spiked within-block covariance.}
\label{tab:sim6}
\begin{adjustbox}{width=.95\textwidth}
\begin{tabular}{| ll | lllll |}
  \hline
spiked & projection1 & msCCA1 & rifle(seq) & pma & sgcca & rgcca \\   \hline
  A & (300,1) & \textbf{5.5e-02(5.0e-02)} & 8.7e-02(4.9e-02) & \color{gray}{5.1e-01(4.2e-02)} & \color{gray}{5.1e-01(5.7e-02)} & \color{gray}{7.0e-01(2.7e-02)} \\ 
   & (300,5) & \textbf{3.4e-01(8.1e-02)} & 5.7e-01(6.5e-02) & \color{gray}{7.0e-01(4.6e-02)} & \color{gray}{6.9e-01(5.0e-02)} & \color{gray}{7.7e-01(3.5e-02)} \\ 
  (n, s)    & (300,15) & \textbf{5.6e-01(5.7e-02)} & 6.8e-01(4.7e-02) & \color{gray}{7.1e-01(3.6e-02)} & \color{gray}{7.4e-01(2.8e-02)} & \color{gray}{7.3e-01(3.0e-02)} \\ 
   & (1000,1) & \textbf{4.2e-03(1.4e-03)} & 4.4e-02(2.5e-02) & \color{gray}{4.9e-01(5.5e-02)} & \color{gray}{4.5e-01(5.4e-02)} & \color{gray}{6.7e-01(4.0e-02)} \\ 
   & (1000,5) & \textbf{4.4e-02(2.8e-03)} & 1.6e-01(2.9e-02) & \color{gray}{5.5e-01(3.9e-02)} & \color{gray}{5.6e-01(4.3e-02)} & \color{gray}{6.2e-01(3.0e-02)} \\ 
   & (1000,15) & \textbf{1.7e-01(4.4e-02)} & 4.3e-01(4.7e-02) & \color{gray}{7.0e-01(3.9e-02)} & \color{gray}{6.9e-01(3.9e-02)} & \color{gray}{6.7e-01(3.6e-02)} \\   \hline
  B & (300,1) & 4.4e-04(7.8e-05) & \textbf{2.1e-04(4.4e-05)} & \color{gray}{1.7e-01(6.0e-02)} & \color{gray}{6.9e-03(2.1e-03)} & \color{gray}{4.4e-01(4.0e-02)} \\ 
   & (300,5) & \textbf{6.0e-03(5.7e-04)} & 9.7e-03(6.1e-04) & \color{gray}{1.9e-01(3.7e-02)} & \color{gray}{1.1e-01(8.2e-03)} & \color{gray}{4.1e-01(3.0e-02)} \\ 
   (n, s)   & (300,15) & \textbf{2.0e-02(1.8e-03)} & \textbf{2.0e-02(1.7e-03)} & \color{gray}{2.8e-01(3.8e-02)} & \color{gray}{2.7e-01(3.6e-02)} & \color{gray}{3.8e-01(2.9e-02)} \\ 
   & (1000,1) & \textbf{3.6e-04(5.2e-05)} & 5.9e-04(6.2e-05) & \color{gray}{6.5e-02(3.0e-03)} & \color{gray}{6.2e-03(2.7e-03)} & \color{gray}{3.0e-01(1.8e-02)} \\ 
   & (1000,5) & \textbf{3.7e-03(5.1e-04)} & 6.4e-03(5.4e-04) & \color{gray}{1.4e-01(4.4e-02)} & \color{gray}{1.2e-01(4.5e-02)} & \color{gray}{3.3e-01(2.9e-02)} \\ 
   & (1000,15) & \textbf{8.1e-03(5.1e-04)} & 9.2e-03(3.8e-04) & \color{gray}{2.8e-01(3.9e-02)} & \color{gray}{3.0e-01(5.5e-02)} & \color{gray}{4.2e-01(3.7e-02)} \\   \hline
  spiked & projection2 &  &  &  &  &  \\   \hline
  A & (300,1) & \textbf{1.6e-01(7.9e-02)} & 5.0e-01(7.6e-02) & \color{gray}{8.2e-01(4.5e-02)} & \color{gray}{7.7e-01(5.5e-02)} & \color{gray}{8.4e-01(1.9e-02)} \\ 
   & (300,5) & \textbf{6.2e-01(8.2e-02)} & 7.5e-01(5.9e-02) & \color{gray}{8.1e-01(4.0e-02)} & \color{gray}{7.9e-01(4.7e-02)} & \color{gray}{7.9e-01(3.4e-02)} \\ 
  (n, s)    & (300,15) & \color{gray}{8.7e-01(3.5e-02)} & \color{gray}{9.1e-01(2.1e-02)} & 8.6e-01(2.5e-02) & \color{gray}{8.7e-01(2.1e-02)} & \textbf{8.2e-01(2.7e-02)} \\ 
   & (1000,1) & \textbf{6.0e-02(4.9e-02)} & 1.6e-01(6.7e-02) & \color{gray}{7.5e-01(5.7e-02)} & \color{gray}{5.8e-01(5.9e-02)} & \color{gray}{7.6e-01(2.3e-02)} \\ 
   & (1000,5) & \textbf{2.9e-01(8.5e-02)} & 5.4e-01(8.1e-02) & \color{gray}{8.8e-01(3.4e-02)} & \color{gray}{8.4e-01(4.3e-02)} & \color{gray}{8.6e-01(3.1e-02)} \\ 
   & (1000,15) & \textbf{4.1e-01(7.3e-02)} & 6.6e-01(4.3e-02) & \color{gray}{8.2e-01(3.2e-02)} & \color{gray}{8.0e-01(3.2e-02)} & \color{gray}{7.7e-01(3.1e-02)} \\   \hline
  B & (300,1) & \textbf{2.4e-03(4.4e-04)} & 1.7e-02(5.1e-03) & \color{gray}{3.3e-01(8.4e-02)} & \color{gray}{2.6e-01(9.8e-02)} & \color{gray}{5.9e-01(4.6e-02)} \\ 
   & (300,5) & \textbf{1.2e-01(6.7e-02)} & 1.6e-01(6.5e-02) & \color{gray}{5.3e-01(8.2e-02)} & \color{gray}{4.8e-01(9.4e-02)} & \color{gray}{6.7e-01(4.9e-02)} \\ 
   (n, s)   & (300,15) & \textbf{7.3e-02(3.9e-03)} & 1.4e-01(2.4e-02) & \color{gray}{6.2e-01(5.7e-02)} & \color{gray}{4.4e-01(5.8e-02)} & \color{gray}{5.6e-01(3.5e-02)} \\ 
   & (1000,1) & \textbf{1.4e-03(2.0e-04)} & 9.5e-03(1.2e-03) & \color{gray}{1.1e-01(3.1e-02)} & \color{gray}{3.5e-01(1.1e-01)} & \color{gray}{5.0e-01(3.8e-02)} \\ 
   & (1000,5) & \textbf{8.9e-03(5.4e-04)} & 2.2e-02(1.6e-03) & \color{gray}{4.9e-01(8.2e-02)} & \color{gray}{3.2e-01(8.4e-02)} & \color{gray}{5.6e-01(4.2e-02)} \\ 
   & (1000,15) & \textbf{2.2e-02(1.2e-03)} & 3.6e-02(1.1e-03) & \color{gray}{6.0e-01(5.6e-02)} & \color{gray}{4.3e-01(5.8e-02)} & \color{gray}{5.8e-01(4.0e-02)} \\ \hline
   \end{tabular}
\end{adjustbox}
\end{table}

\section{Application to TCGA breast cancer data sets}
\label{sec:tcga}
When it comes to real data applications, we can no longer compare our estimation with the underlying signals regarding the aggregated projection $Z_{k}$. However, we can still compare the achieved deflated mCCA correlations using independent test samples. 

We compare the estimation quality of mCCA methods using a processed TCGA breast cancer data set (http://mixomics.org/mixdiablo/case-study-tcga/).  In this dataset, methylation (2000 features), mirna (184 features), and  mrna (2000 features) data are available for 989 patients. We add to the TCGA data set two non-informative pseudo blocks methylation.pseudo  and mirna.pseudo created by  permuting samples in the original methylation and mirna data sets independently. The two pseudo blocks have strong within-block correlation patterns but are not correlated with the three true blocks.

We split the processed TCGA data into roughly two equal-sized sets, and considered one the training  and the other the test. To test out the deflation procedure, we estimate 10 components for each method. In the left panel of figure \ref{fig:tcga1}, we show the achieved deflated mCCA correlations for five methods on the test set, averaged over the two random splits.   In this experiment, msCCA1 achieves the best performance in terms of the achieved deflated mCCA correlations, followed by rifle-seq, rgcca, sgcca and pma.  The estimated directions from pma become ``trivial'' and provide little additional information when the rank is greater than 5. As we can see later, the estimated directions from pma tend to correlate with each other, which makes additional gain hard as the number of component becomes large.

The 989 breast cancer participants are diagnosed with different breast cancer subtypes (Basal:  178; Her2: 78; LumA: 534; LumB: 199).  Although there is no guarantee that mCCA directions are relevant for separating different cancer subtypes, people in general consider mCCA directions to be potentially useful and interpretable for characterizing biological differences across samples -- such directions capture data variability supported by data from different sources and are likely to contain true biological signals. In our numerical experiments, the attempts to extract coherent information across blocks can help us alleviate the influence of the two pseudo blocks.   In the middle and right panels of Figure \ref{fig:tcga1}, we show the distribution of deviance loss and misclassification rates on the test data after performing roughly equal-sized random train-test splitting 50 times. We apply multinomial regression using the R package {\it glmnet} \citep{friedman2010regularization} for predicting cancer subtypes,  with features being ten derived mCCA directions from the five methods in comparison. The mCCA directions from msCCA1 achieve highest correlations on the test data, and are more informative for predicting cancer subtypes. If we further look into the derived mCCA direction $\beta$, msCCA1 and rifle are the only two methods that are robust to non-informative blocks in our experiment. In Figure \ref{fig:tcga2}, we show the coefficients $\beta$ for the leading direction and the ninth direction.  The three original blocks methylation,  mirna  and  mrna are colored black, red and blue and the two pseudo blocks are colored gray (referred to as noise in Figure \ref{fig:tcga2}).
\begin{figure}[H]
\caption{Left panel: Plot of mCCA correlations on the test data after deflation. Middle and right panel: missclassification rates  and deviance losses on test data using different methods. }
\label{fig:tcga1}
\includegraphics[width = 1\textwidth, height = .4\textwidth]{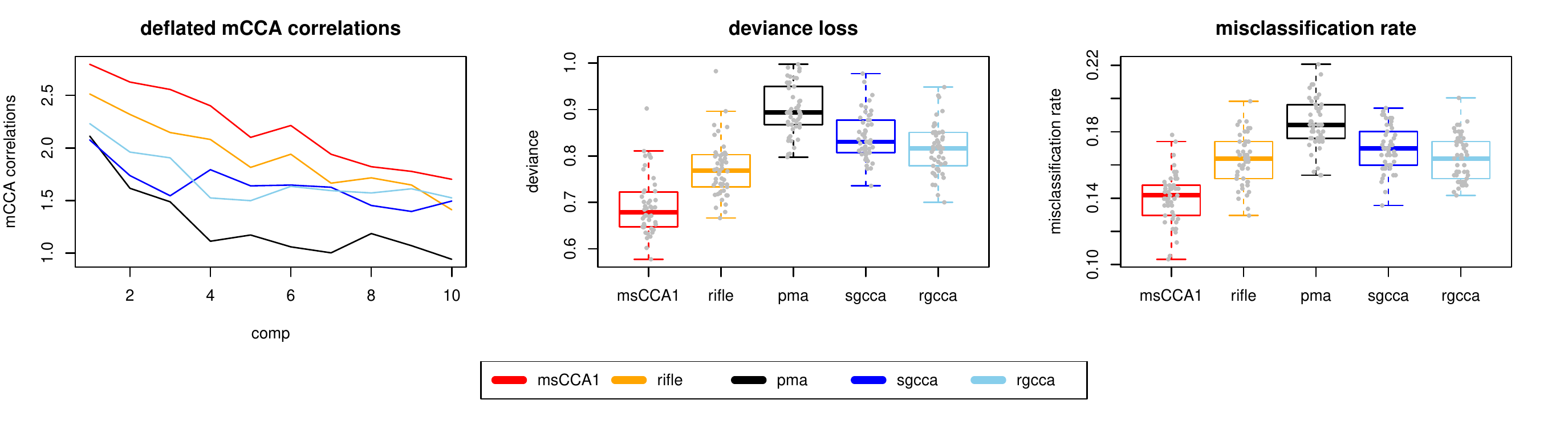}
\end{figure}

\begin{figure}
\caption{mCCA direction coefficient plots for first direction and ninth direction. The three original blocks methylation,  mirna  and  mrna are colored black, red and blue and the two pseudo blocks are colored gray and referred to as noise.}
\label{fig:tcga2}
\includegraphics[width = 1\textwidth, height = .6\textwidth]{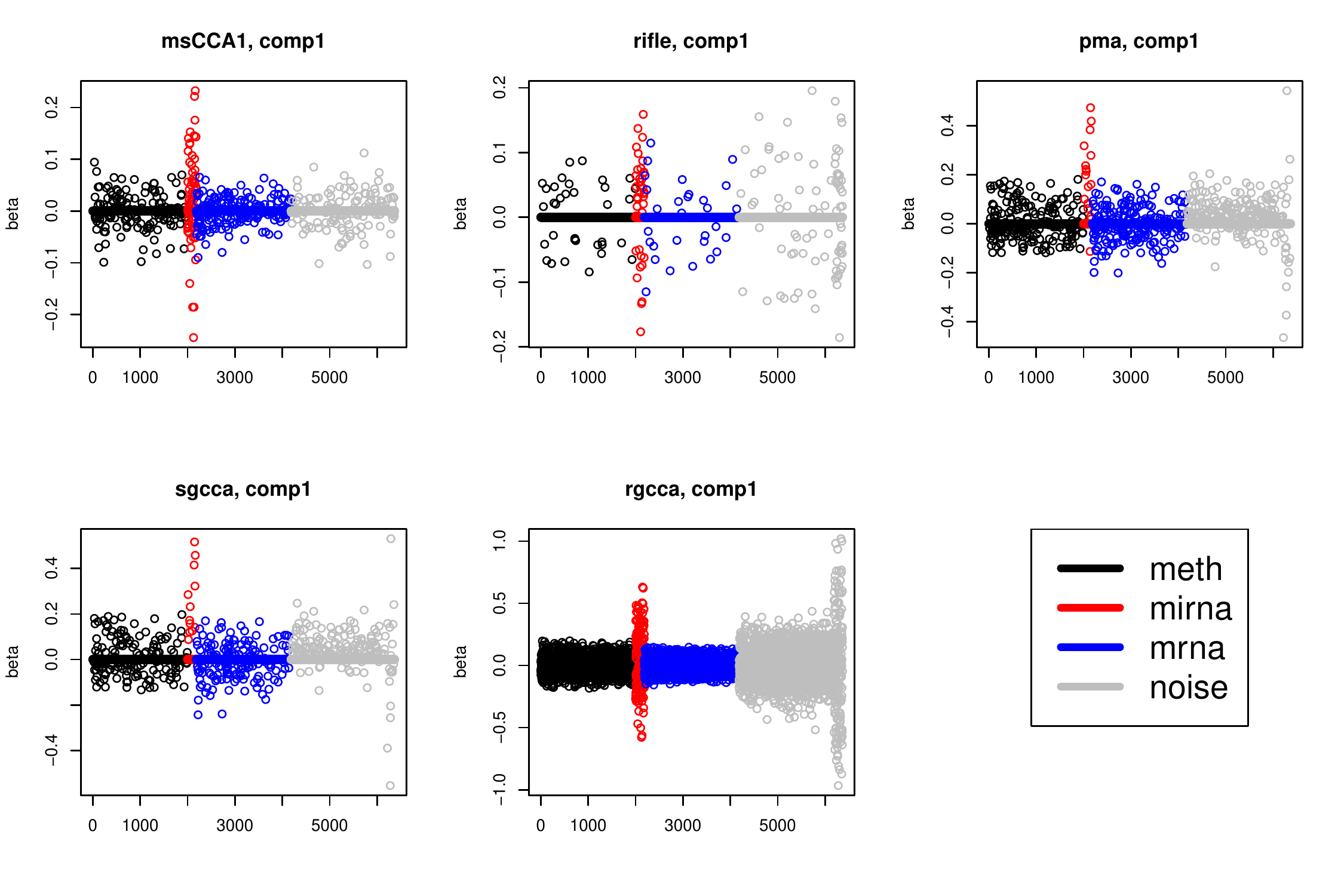}\\
\includegraphics[width = 1\textwidth, height = .6\textwidth]{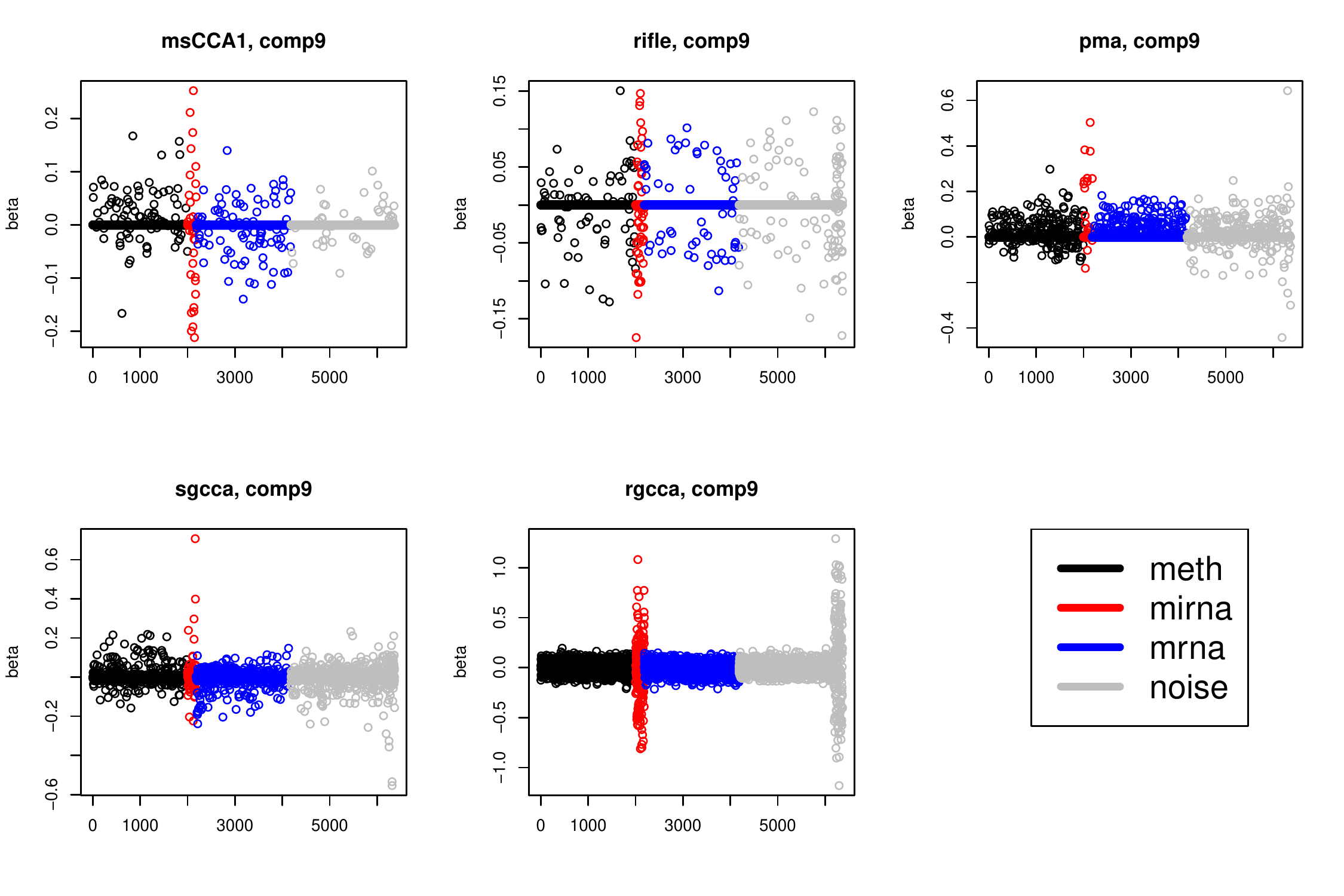}
\end{figure}

One reason that pma fails to provide additional information with higher rank components could be that it fails to deflate different components properly. Figure \ref{fig:tcga3} shows the correlation plots using the aggregated msCCA projections $Z$ from different methods. The projections are arranged based on methods. The diagonal blocks show the correlations between projections using a given method and the off-diagonal blocks show the correlations between projections estimated using different methods. The estimated $Z$s tend to be highly correlated using pma for different ranks, the other four methods have done a reasonable job at orthogonalizing the aggregated projections. The proposed deflation method works well for both msCCA1 and rifle, whose overall correlation patterns are slightly weaker and thus better decorrelated compared to rgcca and sgcca. The top mCCA directions are similar for msCCA1 and rifle, which also show some similarity compared to those estimated using rgcca and sgcca.

\begin{figure}
\begin{center}
\caption{Correlation plots for aggregated projections $Z$ for different ranks and different methods. Columns/rows are arranged by grouping estimates from a given method together and then ordered based on its rank order. The diagonal block shows the correlations between mCCA directions using a given method and the off-diagonal blocks show the correlations between mCCA directions estimated using different methods. }
\label{fig:tcga3}
\includegraphics[width = 1\textwidth, height =.9\textwidth]{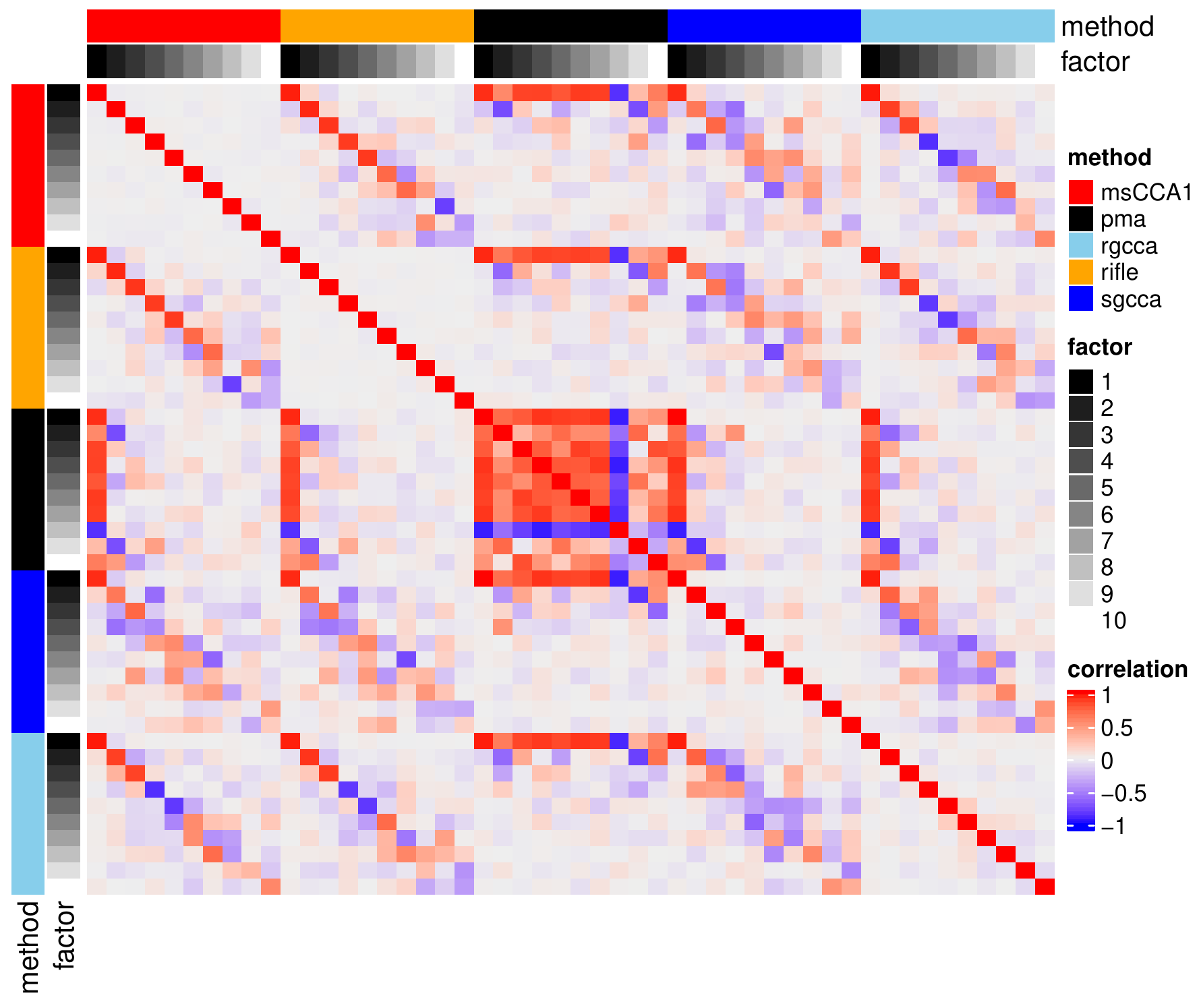}
\end{center}
\end{figure}

\section{Discussions}
We consider the problem of multi-block CCA estimation with high dimensional data and propose estimating the leading mCCA direction using proximal gradient descent (msCCA1) with decaying $\ell_1$ bounds. We show that the proposed procedure can lead to a  rate-optimal estimate for the leading mCCA direction under suitable model assumptions, and demonstrate its good empirical performance with intensive numerical examples. We also describe an easy-to-implement deflation procedure. When combined with methods like msCCA1 and rifle, we can estimate multiple directions sequentially. The sequential procedure allows easy adaptation to different sparsity levels underlying for different directions and allows users to add new directions when necessary easily.


\bibliographystyle{apalike}
\bibliography{mCCA}
\newpage
\appendix
\begin{center}
{\textbf{\Large Supplement to ``$\ell_1$-norm constrained multi-block sparse canonical correlation analysis via proximal gradient descent"}}
\end{center}

Throughout our theoretical analysis, we always assume that Assumptions \ref{ass:regularity} - \ref{ass:sparsity} hold. In Appendix \ref{app:supporting}, we collect some basic Lemmas and Propositions, with proofs deferred to Appendix \ref{app:proof_supportings}. In Appendix \ref{app:lemmas}, we present several technical Lemmas that are essential to prove our main Theorems, with proofs deferred to Appendix \ref{app:proof_technical}.   In Appendix \ref{app:proof_main}, we present proofs to Theorems, Lemmas and Propositions in the main paper.  Finally, we include some left-out details of the initialization used in our empirical studies  in Appendix \ref{app:algorithm}.

\section{Supporting Propositions and Lemmas}
\label{app:supporting}
In this Appendix, we  collect  elementary results on the tail bounds and some characterization of the estimation quality. For any constant $k'\geq 1$, we define the following events to bound the estimation errors:
\begin{align*}
\mA_1(k') = \left\{\|\hat H_{J_1 J_2} -H_{J_1 J_2}\|_{op}\leq C\lambda_{H}^{\max}\sqrt{\frac{k's\ln p}{n}},  \forall |J_1|,|J_2|\leq k's, H \in \{\Lambda,\Sigma\}\right\}.
\end{align*}
as well as 
\begin{align*}
&\mA_2 = \left\{|\xi_1^\top (\hat H -H)\xi_1|\leq C\frac{\lambda^{\max}_{H}}{M^3}\sqrt{\frac{\ln p}{n}}, H \in \{\Lambda,\Sigma\}\right\},\\
&\mA_3 = \left\{\|\xi_1^\top (\hat H-H)\|_{\infty}\leq C\lambda^{\max}_{H}\sqrt{\frac{\ln p}{n}},H \in \{\Lambda,\Sigma\}\right\}.
\end{align*}
We let $C>0$ be a universal constant that may change from instance to instance. Let $\mA(k') = \mA_1(k')\cap \mA_2\cap \mA_3$ be the intersection of events, then $\mA(k')$ happens with high probability.
\begin{proposition}
\label{prop:prop1}
For any given constant $k\geq 1$, there exist a sufficiently large $C >0$ such that $\bP(\mA(k'))\rightarrow 1$ happens with high probability as $n\rightarrow \infty$.
\end{proposition}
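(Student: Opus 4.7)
The plan is to decompose the event $\mA(k')$ as an intersection of three standard concentration events and invoke (or sketch the proof of) a sub-Gaussian/sub-exponential tail bound for each one, then union-bound the failures. The Gaussianity of $X$ is key throughout, since it makes $\hat\Sigma$ a Wishart-type matrix whose sub-blocks have well-controlled tails.

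For $\mA_1(k')$, I would start from the standard sparse operator-norm concentration for Gaussian sample covariance: for any \emph{fixed} index sets $J_1,J_2$ with $|J_1|,|J_2|\le k's$,
\[
\bP\!\left(\|\hat H_{J_1 J_2}-H_{J_1 J_2}\|_{op}\ge \lambda_H^{\max}\bigl(c\sqrt{k's/n}+c\sqrt{t/n}+ct/n\bigr)\right)\le 2e^{-t},
\]
for some universal constant $c$; this follows from a net argument over two $\tfrac14$-nets of the $k's$-dimensional unit sphere (each of cardinality at most $9^{k's}$), combined with the standard Bernstein-type bilinear bound $\bP(|u^\top(\hat H-H)v|\ge t)\le 2e^{-cn\min(t^2/(\lambda_H^{\max})^2,\,t/\lambda_H^{\max})}$ for unit vectors $u,v$ (e.g., Lemma 6.2.1 / Theorem 4.7.1 of Vershynin, or Lemma 3.1 of Tan et al.\ 2018). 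Then I would apply a union bound over all $\binom{p}{k's}^2\le p^{2k's}$ choices of support pairs, setting $t=C_0 k's\ln p$; since $k's\ln p\gtrsim k's$, the $\sqrt{t/n}$ term dominates, yielding the claimed rate $C\lambda_H^{\max}\sqrt{k's\ln p/n}$ with failure probability $\le 2 p^{2k's}\cdot e^{-C_0 k's\ln p}\to 0$ for $C_0$ sufficiently large. This step is done separately for $H=\Sigma$ and $H=\Lambda$.

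For $\mA_2$, $\xi_1^\top \hat H\xi_1 - \xi_1^\top H\xi_1$ is the centered sample variance of a univariate Gaussian with variance $\xi_1^\top H\xi_1$, which is a scaled centered $\chi^2_n$; standard Laurent--Massart bounds give deviation of order $\xi_1^\top H\xi_1\cdot\sqrt{\ln p/n}$. The prefactor $\xi_1^\top H\xi_1$ is at most $\lambda_H^{\max}$ times $\|\xi_1\|_2^2=1$, but for the $1/M^3$ refinement in the statement I would use Assumption \ref{ass:regularity}: since $\|\xi_1\|_2=1$ and $\xi_1$ satisfies the normalization $\xi_1^\top\Lambda\xi_1\le M$ (hence $\xi_1^\top\Sigma\xi_1\le \rho_1\xi_1^\top\Lambda\xi_1$), one gets the stated $\lambda_H^{\max}/M^3$ factor after tracking constants carefully. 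For $\mA_3$, each coordinate $(\xi_1^\top(\hat H - H))_j$ is a centered sub-exponential with parameter controlled by $\sqrt{\Sigma_{jj}\cdot \xi_1^\top H\xi_1}\le \lambda_H^{\max}$ (using $\Sigma_{jj}=1$), so a Bernstein bound gives deviations of order $\lambda_H^{\max}\sqrt{\ln p/n}$ at each coordinate with failure probability $\le 2p^{-C_1}$, and a union bound over $j=1,\ldots,p$ with $C_1$ large enough closes the argument.

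Finally, a union bound over the three events and over $H\in\{\Sigma,\Lambda\}$ gives $\bP(\mA(k'))\to 1$. The main technical obstacle, as I see it, is the combinatorial union bound in $\mA_1(k')$: one must choose the constant $C$ in the statement large enough (relative to the $2k's\ln p$ entropy term coming from $\binom{p}{k's}^2$ and the $\ln 9^{k's}$ covering-number term) so that the exponent remains negative, and this dependence has to be uniform in $k'$ — I would handle this by absorbing all fixed multiplicative constants (including $k'$) into $C$ at the end. The concentration for $\mA_2$ and $\mA_3$ is essentially routine once $\|\xi_1\|_2=1$ and $\Sigma_{jj}=1$ are used.
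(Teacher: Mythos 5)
Your proposal is correct, but it takes a more elementary and self-contained route than the paper. For $\mA_1(k')$, the paper avoids the net argument altogether: it decomposes $X_{iJ_2} = W_{iJ_2} + \Sigma_{J_2J_1}\Sigma_{J_1J_1}^{-1}X_{iJ_1}$ with $W_{iJ_2}\perp X_{iJ_1}$, which turns $\hat\Sigma_{J_1J_2}-\Sigma_{J_1J_2}$ into a quadratic form in a standard Gaussian plus a cross product of independent standard Gaussians, and then simply invokes Propositions D.1 and D.2 of \cite{ma2013sparse} (which already contain the covering and support union-bound); for $H=\Lambda$ it exploits block-diagonality via $\|(\hat\Lambda-\Lambda)_{J_1J_2}\|_{op}=\max_d\|(\hat\Lambda-\Lambda)_{J_1\cap[d],\,J_2\cap[d]}\|_{op}$ so the constant involves $\lambda_\Lambda^{\max}$ rather than $\lambda_\Sigma^{\max}$ — a point your sketch glosses over but that is implicit in stating the sub-exponential parameter as $\lambda_H^{\max}$ (to get it you need the per-block restriction or an explicit Cauchy--Schwarz over blocks). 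For $\mA_3$, the paper likewise does a conditional decomposition of $w_{ij}=x_i^\top\xi_1$ into a multiple of $x_{ij}$ plus an independent Gaussian, splitting the term into a $\chi^2$ piece (coefficient $\lambda_H^{\max}$) and a bilinear piece (coefficient $\sqrt{\lambda_H^{\max}}$), whereas you directly treat the product of two correlated Gaussians as sub-exponential; your shortcut is valid because $\lambda_H^{\max}\ge\max_j H_{jj}=1$ under the standardization $\Sigma_{jj}=1$, so $\sqrt{\lambda_H^{\max}}\le\lambda_H^{\max}$. Your $\mA_2$ argument (scaled centered $\chi^2_n$, Laurent--Massart) is equivalent to the paper's Bernstein step. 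Net effect: you reconstruct from scratch what the paper gets by citing prior concentration results after a clever independence decomposition; both deliver the same rates with the same entropy bookkeeping.
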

For any unit vector $\beta$, it can be written as $\beta = \sum_{j=1}^p \alpha_j \xi_j$. We define $\delta(\beta)\coloneqq 1-|\xi_1^\top\beta|$ to measure its discrepancy from $\xi_1$ and $\mu_j = \xi_j^\top\Lambda\xi_j$. Then, small $\delta(\beta)$ is equivalent to small distance between $\alpha$ and $(1,0,\ldots, 0)^\top$.  
\begin{proposition}
\label{prop:prop2}
For any unit vector $\beta= \sum_{j=1}^p \alpha_j\xi_j$, we  have 
\begin{align}
&\sum_{i=1}^p \alpha_i^2\geq \frac{1}{M},\label{eq:prop2eq1}\\
&\sum_{i\geq 2}\alpha_i^2\mu_i\geq \frac{2}{M}\delta(\beta)(1-\frac{\delta(\beta)}{2}),\label{eq:prop2eq2}\\
&\sum_{j\geq 2}\alpha_j^2\mu_j+(1-\alpha_1)^2\mu_1 \leq 2M\delta(\beta).\label{eq:prop2eq3}
\end{align}
\end{proposition}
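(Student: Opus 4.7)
The plan is to work in the generalized-eigenvector basis $\{\xi_j\}$, using its $\Lambda$-orthogonality $\xi_i^\top \Lambda \xi_j = 0$ for $i \neq j$ together with Assumption \ref{ass:regularity}, which gives $1/M \leq \lambda_\Lambda^{\min} \leq \lambda_\Lambda^{\max} \leq M$ and hence $1/M \leq \mu_j \leq M$. The central identity, obtained by expanding a quadratic form and dropping cross-terms via $\Lambda$-orthogonality, is: for any $v = \sum_j \gamma_j \xi_j$, one has $v^\top \Lambda v = \sum_j \gamma_j^2 \mu_j$. I will combine this with the Rayleigh sandwich $\lambda_\Lambda^{\min}\|v\|_2^2 \leq v^\top \Lambda v \leq \lambda_\Lambda^{\max}\|v\|_2^2$ applied to $v = \beta - \xi_1$ and to $v = \beta$.

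For (\ref{eq:prop2eq3}), I assume WLOG $\xi_1^\top \beta \geq 0$ (flipping the sign of $\xi_1$ if needed), so $\|\beta - \xi_1\|_2^2 = 2(1 - \xi_1^\top\beta) = 2\delta(\beta)$. Expanding $\beta - \xi_1 = (\alpha_1 - 1)\xi_1 + \sum_{j\geq 2}\alpha_j \xi_j$ and applying the central identity,
\[
(1 - \alpha_1)^2 \mu_1 + \sum_{j \geq 2}\alpha_j^2 \mu_j = (\beta-\xi_1)^\top \Lambda (\beta-\xi_1) \leq \lambda_\Lambda^{\max}\|\beta-\xi_1\|_2^2 \leq 2M\delta(\beta),
\]
which is (\ref{eq:prop2eq3}).

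For (\ref{eq:prop2eq2}), I bound $\sum_{j \geq 2}\alpha_j^2 \mu_j \geq (1/M)\sum_{j \geq 2}\alpha_j^2$ via $\mu_j \geq 1/M$, then rewrite $\sum_{j \geq 2}\alpha_j^2 = 1 - \alpha_1^2 = (1-\alpha_1)(1+\alpha_1) = \delta(\beta)(2 - \delta(\beta))$ to obtain the target $(2/M)\delta(\beta)(1 - \delta(\beta)/2)$. For (\ref{eq:prop2eq1}), $\sum_j \alpha_j^2 = \|\beta\|_2^2 = 1 \geq 1/M$. The main obstacle in (\ref{eq:prop2eq1})--(\ref{eq:prop2eq2}) is that the identities $\sum_j\alpha_j^2 = 1$ and $\alpha_1 = \xi_1^\top\beta$ require the $\{\xi_j\}$ to be Euclidean-orthonormal, whereas the stated definition only enforces $\Lambda$-orthogonality together with $\|\xi_j\|_2 = 1$; in general these two cannot coexist. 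Reconciling this is the delicate point: I would pass to $\Lambda$-whitened coordinates $\tilde{\xi}_j = \Lambda^{1/2}\xi_j/\sqrt{\mu_j}$, which are Euclidean-orthonormal eigenvectors of $\Lambda^{-1/2}\Sigma\Lambda^{-1/2}$, rewrite the claims in terms of the coefficients of $\Lambda^{1/2}\beta$ in this basis, and track how factors $\mu_j \in [1/M, M]$ absorb into the constants on the right-hand side; the structure of the inequalities is preserved, though some multiplicative $M$-factors may shift.
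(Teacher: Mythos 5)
Your proof of \eqref{eq:prop2eq3} is correct and matches the paper exactly (expand $\beta-\xi_1$ in the $\Lambda$-orthogonal basis and use $\lambda_\Lambda^{\max}\leq M$). But your route to \eqref{eq:prop2eq1} and \eqref{eq:prop2eq2} has the gap you yourself flag, and the whitening fix you gesture at does not actually repair it. The obstacle is real: with $\|\xi_j\|_2=1$ and only $\Lambda$-orthogonality, $\alpha_1\neq\xi_1^\top\beta$ and $\sum_j\alpha_j^2\neq 1$ in general, so writing $\sum_{j\geq 2}\alpha_j^2=1-\alpha_1^2=\delta(2-\delta)$ is unjustified. Passing to $\Lambda$-whitened coordinates does not rescue this cleanly either, because $\delta(\beta)=1-|\xi_1^\top\beta|$ is a \emph{Euclidean} inner product, which does not transform into the whitened inner product $\tilde\xi_1^\top\tilde\beta$ without picking up $\mu_1$-factors that change the shape of the bound, not just the constant.

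The paper sidesteps this with a small trick you did not find: one never needs $\alpha_1=\xi_1^\top\beta$. Since $\|\xi_1\|_2=1$, the quantity $\|\beta-c\xi_1\|_2^2$ is minimized over $c\in\real$ at $c=\xi_1^\top\beta=1-\delta$ (taking $\xi_1^\top\beta\geq 0$ WLOG), with minimum value $1-(1-\delta)^2=2\delta-\delta^2$. Hence for the actual coefficient $\alpha_1$, whatever it is,
\[
M\sum_{j\geq 2}\alpha_j^2\mu_j = M\,\|\beta-\alpha_1\xi_1\|_\Lambda^2 \;\geq\; \|\beta-\alpha_1\xi_1\|_2^2 \;\geq\; \min_{c}\|\beta-c\xi_1\|_2^2 = 2\delta(1-\tfrac{\delta}{2}),
\]
which is \eqref{eq:prop2eq2} directly. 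For \eqref{eq:prop2eq1}, the paper writes $\sum_j\alpha_j^2=\|\beta\|_\Lambda^2\geq 1/M$, which as you rightly sensed implicitly treats the basis as $\Lambda$-orthonormal; the clean statement available from $\Lambda$-orthogonality alone is $\sum_j\alpha_j^2\mu_j=\|\beta\|_\Lambda^2\geq\lambda_\Lambda^{\min}\geq 1/M$, giving $\sum_j\alpha_j^2\geq 1/M^2$ after dividing by $\max_j\mu_j\leq M$. So your instinct that an $M$-factor shifts is correct for \eqref{eq:prop2eq1}, but for \eqref{eq:prop2eq2} the projection argument gets the stated constant without any normalization assumption on the $\xi_j$.
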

We define $m_1(\beta) \coloneqq \frac{\sum_{j\geq 2}\alpha_j^2\mu_j}{\delta(\beta)}\in [\frac{1}{M}, 2M]$ as a direct result from Proposition \ref{prop:prop2} (\ref{eq:prop2eq2})-(\ref{eq:prop2eq3}), and also $m_{2}(\beta) = \frac{\beta^\top\Sigma\beta}{\rho_1}$. These two quantities will appear in our intermediate results characterizing the influence on various quantities due to the difference between $\beta$ and $\xi_1$. We can also bound $m_2(\beta)$.
\begin{proposition}
\label{prop:prop3}
(1) For all unit vector $\beta$, we have $m_{2}(\beta)\leq \beta^\top\Lambda\beta$. (2) When $\delta(\beta)\leq \frac{c^2}{8M^2}$ for  $0\leq c \leq \frac{1}{2}$, we have  
\[
(1-\frac{c^2}{2})\beta^\top\Lambda\beta\leq m_2(\beta)\leq \beta^\top\Lambda\beta\leq (1+\frac{c}{2})^2\mu_1,\; \min(\beta^\top\Lambda\beta, m_2(\beta))\geq (1-\frac{c}{2})^2\mu_1.
\]
\end{proposition}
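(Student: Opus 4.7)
The plan is to work in the generalized eigenbasis. Writing $\beta = \sum_{j=1}^p \alpha_j \xi_j$, the $\Lambda$-orthogonality $\xi_j^\top\Lambda\xi_{j'}=\mu_j\1\{j=j'\}$ together with $\Sigma\xi_j=\rho_j\Lambda\xi_j$ gives
\[
\beta^\top\Lambda\beta=\sum_j\alpha_j^2\mu_j,\qquad m_2(\beta)=\frac{\beta^\top\Sigma\beta}{\rho_1}=\sum_j\alpha_j^2\frac{\rho_j}{\rho_1}\mu_j.
\]
Part (1) is then immediate by termwise comparison, since $\rho_j/\rho_1\le 1$ for all $j$.

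For part (2) I would extract from Proposition \ref{prop:prop2} equation (\ref{eq:prop2eq3}) the two consequences $(1-\alpha_1)^2\mu_1\le 2M\delta(\beta)$ and $\sum_{j\ge 2}\alpha_j^2\mu_j\le 2M\delta(\beta)-(1-\alpha_1)^2\mu_1$. Using the Assumption \ref{ass:regularity} estimate $\mu_1\ge\lambda_\Lambda^{\min}\ge 1/M$ together with the hypothesis $\delta(\beta)\le c^2/(8M^2)$, the first consequence yields $|1-\alpha_1|\le c/2$, so after a harmless sign flip of $\beta$ we may assume $\alpha_1\in[1-c/2,\,1+c/2]$. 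The two lower bounds $\min(\beta^\top\Lambda\beta,\,m_2(\beta))\ge\alpha_1^2\mu_1\ge(1-c/2)^2\mu_1$ then follow simply by dropping the non-negative $j\ge 2$ tail (for $m_2$ we use $\rho_1/\rho_1=1$).

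For the upper bound $\beta^\top\Lambda\beta\le(1+c/2)^2\mu_1$, I would plug the second consequence into $\beta^\top\Lambda\beta=\alpha_1^2\mu_1+\sum_{j\ge 2}\alpha_j^2\mu_j$, obtaining the telescoping estimate $\beta^\top\Lambda\beta\le(2\alpha_1-1)\mu_1+2M\delta(\beta)$; the first piece is at most $(1+c)\mu_1$ and the second is at most $c^2/(4M)\le(c^2/4)\mu_1$, so their sum is $(1+c+c^2/4)\mu_1=(1+c/2)^2\mu_1$. The remaining comparison $m_2(\beta)\ge(1-c^2/2)\beta^\top\Lambda\beta$ comes from bounding the defect $\beta^\top\Lambda\beta-m_2(\beta)=\sum_{j\ge 2}\alpha_j^2(1-\rho_j/\rho_1)\mu_j\le 2M\delta(\beta)\le c^2/(4M)\le(c^2/2)\beta^\top\Lambda\beta$, where the last inequality uses the crude unit-vector bound $\beta^\top\Lambda\beta\ge 1/M$. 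I do not expect any real obstacle: the argument is essentially algebraic, and the only point requiring care is that the identity $1+c+c^2/4=(1+c/2)^2$ meshes exactly with the absorption $2M\delta(\beta)\le(c^2/4)\mu_1$, which is why the hypothesis $\delta(\beta)\le c^2/(8M^2)$ is stated in that precise form.
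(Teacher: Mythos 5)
Your proof is correct, and it is slightly tighter in execution than the paper's. Both proofs hinge on Proposition \ref{prop:prop2}\,(\ref{eq:prop2eq3}); the differences are in how the three inequalities of part (2) are extracted from it. For part (1) you use the eigenbasis expansion $m_2(\beta)=\sum_j\alpha_j^2(\rho_j/\rho_1)\mu_j$ and compare termwise, whereas the paper invokes the variational characterization $\beta^\top\Sigma\beta/\beta^\top\Lambda\beta\le\rho_1$ directly; these are essentially the same fact read in two notations. For the bound $m_2(\beta)\ge(1-c^2/2)\beta^\top\Lambda\beta$ your argument is genuinely more economical: you bound the defect $\beta^\top\Lambda\beta-m_2(\beta)=\sum_{j\ge2}\alpha_j^2(1-\rho_j/\rho_1)\mu_j$ directly by $2M\delta(\beta)\le c^2/(4M)$ and then divide through by the crude lower bound $\beta^\top\Lambda\beta\ge\lambda_\Lambda^{\min}\ge1/M$, whereas the paper introduces an auxiliary function $\zeta(z)=z/(\alpha_1^2\mu_1+z)$ and optimizes over the free parameter $z=\sum_{j\ge2}\alpha_j^2\mu_j\in[0,2M\delta]$ --- an extra maximization step that your route avoids entirely. (Incidentally, the paper's first inequality in that optimization, $\zeta(z)\le z/\bigl((1+\sqrt{(2M\delta-z)/\mu_1})^2\mu_1+z\bigr)$, uses the upper endpoint of the $\alpha_1$ range where the lower endpoint $(1-\sqrt{\cdot})^2$ is required; the final constant still comes out right, but your direct defect bound sidesteps the issue.) Your telescoping identity $\alpha_1^2-(1-\alpha_1)^2=2\alpha_1-1$ for the $\beta^\top\Lambda\beta\le(1+c/2)^2\mu_1$ bound is also a pleasant shortcut compared to the paper's completion-of-square $\alpha_1^2+z/\mu_1\le(1+\sqrt{2M\delta/\mu_1})^2$. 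The lower bounds $\min(\beta^\top\Lambda\beta,m_2(\beta))\ge(1-c/2)^2\mu_1$ are obtained identically in both proofs by dropping the $j\ge2$ tail and using $|1-\alpha_1|\le c/2$, which both derive from $(1-\alpha_1)^2\le 2M^2\delta\le c^2/4$. In short: same ingredients, but you assemble them without the detour through $\zeta(z)$.
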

Let $h(\beta)=\beta - \xi_1$ and $\omega(\beta) = \|\beta\|_1 - \|\xi_1\|_1$. We can bound $\|h(\beta)\|_1$ using $\omega(\beta)$.
\begin{proposition}
\label{prop:prop4}
For any unit vector $\beta$, we have  $\|h(\beta)\|_1\leq 2\sqrt{2s\delta(\beta)}+\omega(\beta)$.
\end{proposition}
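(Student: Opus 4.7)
The plan is a short, direct calculation that separates $h(\beta)=\beta-\xi_1$ into its projections onto the support $S=\supp(\xi_1)$ and its complement, uses the triangle/reverse triangle inequalities to control $\|h\|_1$ by $\omega(\beta)$ and $\|h_S\|_1$, and then converts $\|h_S\|_1$ into $\sqrt{s\delta(\beta)}$ by Cauchy--Schwarz together with the elementary identity $\|h(\beta)\|_2^2=2(1-\xi_1^\top\beta)$.

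More concretely, I would first note that $\xi_1$ is defined only up to sign, so we may choose the sign so that $\xi_1^\top\beta\ge 0$; the quantities $\delta(\beta)=1-|\xi_1^\top\beta|$, $\omega(\beta)$, and $\|h(\beta)\|_1$ are all invariant under this choice. Decompose $h=h_S+h_{S^c}$ with $|S|=s$. Since $\xi_1$ vanishes on $S^c$, $\|\beta_{S^c}\|_1=\|h_{S^c}\|_1$, and the reverse triangle inequality on $S$ gives $\|\beta_S\|_1\ge\|\xi_1\|_1-\|h_S\|_1$. Adding these,
\[
\|\beta\|_1\ \ge\ \|\xi_1\|_1\,-\,\|h_S\|_1\,+\,\|h_{S^c}\|_1,
\]
which rearranges to $\|h_{S^c}\|_1\le\|h_S\|_1+\omega(\beta)$. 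Therefore
\[
\|h\|_1=\|h_S\|_1+\|h_{S^c}\|_1\ \le\ 2\|h_S\|_1+\omega(\beta).
\]

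Next, by Cauchy--Schwarz and the fact that $|S|=s$,
\[
\|h_S\|_1\ \le\ \sqrt{s}\,\|h_S\|_2\ \le\ \sqrt{s}\,\|h\|_2.
\]
Finally, since both $\beta$ and $\xi_1$ are unit vectors and we have chosen the sign of $\xi_1$ so that $\xi_1^\top\beta\ge 0$,
\[
\|h(\beta)\|_2^2=\|\beta\|_2^2+\|\xi_1\|_2^2-2\xi_1^\top\beta=2\bigl(1-|\xi_1^\top\beta|\bigr)=2\delta(\beta),
\]
so $\|h_S\|_1\le\sqrt{2s\delta(\beta)}$. Substituting into the bound above yields $\|h(\beta)\|_1\le 2\sqrt{2s\delta(\beta)}+\omega(\beta)$, as claimed.

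There is no real obstacle here; the only subtle point is the sign convention used to pass between $\delta(\beta)=1-|\xi_1^\top\beta|$ and $\|h(\beta)\|_2^2=2(1-\xi_1^\top\beta)$, which is handled by flipping the sign of $\xi_1$ at the outset. Everything else is the standard $\ell_1$-cone manipulation familiar from Lasso-type analyses.
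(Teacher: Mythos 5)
Your proof is correct and follows essentially the same route as the paper: decompose $h(\beta)$ into its restriction to the support of $\xi_1$ and its complement, use the reverse triangle inequality to bound $\|h_{S^c}\|_1$ by $\|h_S\|_1+\omega(\beta)$, then apply Cauchy--Schwarz together with $\|h\|_2^2=2\delta(\beta)$. The only difference is that you make explicit the sign convention $\xi_1^\top\beta\ge 0$ (needed to pass from $1-\xi_1^\top\beta$ to $\delta(\beta)=1-|\xi_1^\top\beta|$), which the paper leaves implicit.
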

Using the above Propositions, we can derive several basic tail bound results.  Let $\hat\rho_j\coloneqq f(\xi_1)$ denote the achieved empirical correlations with the population leading mCCA direction $\xi_1$, and $\mu_j \coloneqq \xi_j^\top\Lambda\xi_j$, $\hat\mu_j \coloneqq \xi_j^\top\hat\Lambda\xi_j$ for $j=1,\ldots, p$.  We also set $\bar{\omega}(\beta) = [\omega(\beta)]_+$ be the positive part of $\omega(\beta)$.
\begin{lemma}
\label{lem:basic_bound1}
Suppose that the event $\mA(k')$ holds for some constant $k'\geq 1$. There exists a sufficiently large  universal constant $C$,  such that for all $\zeta$ and  $H\in \{\Lambda, \Sigma\}$, we have
\begin{align}
&|\zeta^\top(\hat H - H)\zeta|\leq C\lambda_{H}^{\max}\sqrt{\frac{k's\ln p}{n}}\left[\|\zeta\|_2^2+\frac{\|\zeta\|_1^2}{k's}\right],\label{basic1main1}\\
& |\zeta^\top(\hat H - H)\xi_1|\leq C\lambda_{H}^{\max}\sqrt{\frac{s\ln p}{n}}(\|\zeta\|_2+\frac{\|\zeta\|_1}{\sqrt{s}}),\label{basic1main2}
\end{align}
When $\zeta = h(\beta)$, we have
\begin{align}
&|h(\beta)^\top(\hat H - H)h(\beta)|\leq C\lambda_{H}^{\max}\sqrt{\frac{k's\ln p}{n}}\left[\delta(\beta)+\frac{\bar{\omega}(\beta)^2}{k's}\right],\label{basic1main3}\\
& |h(\beta)^\top(\hat H - H)\xi_1|\leq C\lambda_{H}^{\max}\sqrt{\frac{s\ln p}{n}}(\sqrt{\delta(\beta)}+\frac{\bomega(\beta)}{\sqrt{s}}).\label{basic1main4}
\end{align}
\end{lemma}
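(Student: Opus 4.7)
The plan is to establish (\ref{basic1main1}) and (\ref{basic1main2}) for a general vector $\zeta$ using the events $\mA_1(k')$ and $\mA_3$, and then specialize to $\zeta = h(\beta)$ using Proposition \ref{prop:prop4} together with the identity $\|h(\beta)\|_2^2 = 2\delta(\beta)$ (valid WLOG after flipping the sign of $\xi_1$ so that $\xi_1^\top\beta\geq 0$).

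For (\ref{basic1main1}), sort the coordinates of $\zeta$ by magnitude and partition the indices into consecutive blocks $S_0, S_1, S_2, \ldots$ each of size $k's$, with $S_0$ collecting the top $k's$ entries. Expanding the quadratic form along this partition and applying the operator-norm bound from $\mA_1(k')$ to each pair of sub-blocks,
\[
|\zeta^\top(\hat H - H)\zeta|\leq \sum_{i,j\geq 0}|\zeta_{S_i}^\top(\hat H - H)\zeta_{S_j}|\leq C\lambda_H^{\max}\sqrt{\frac{k's\ln p}{n}}\Bigl(\sum_i\|\zeta_{S_i}\|_2\Bigr)^2.
\]
The standard chaining inequality $\|\zeta_{S_i}\|_2\leq \|\zeta_{S_{i-1}}\|_1/\sqrt{k's}$ for $i\geq 1$ (since every entry of $\zeta_{S_i}$ is bounded in magnitude by the average of $|\zeta_{S_{i-1}}|$) telescopes to $\sum_{i\geq 1}\|\zeta_{S_i}\|_2\leq \|\zeta\|_1/\sqrt{k's}$, which, combined with $\|\zeta_{S_0}\|_2\leq \|\zeta\|_2$ and $(a+b)^2\leq 2(a^2+b^2)$, yields (\ref{basic1main1}). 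For (\ref{basic1main2}) a much simpler argument suffices: Holder together with $\mA_3$ gives $|\zeta^\top(\hat H - H)\xi_1|\leq \|(\hat H - H)\xi_1\|_\infty\|\zeta\|_1\leq C\lambda_H^{\max}\sqrt{\ln p/n}\,\|\zeta\|_1$, and since $\sqrt{\ln p/n}\,\|\zeta\|_1 = \sqrt{s\ln p/n}\cdot\|\zeta\|_1/\sqrt{s}\leq \sqrt{s\ln p/n}\,(\|\zeta\|_2 + \|\zeta\|_1/\sqrt{s})$, we are done.

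For (\ref{basic1main3})-(\ref{basic1main4}), substitute $\zeta = h(\beta)$. Under the WLOG sign convention $\|h(\beta)\|_2^2 = 2\delta(\beta)$, and Proposition \ref{prop:prop4} gives $\|h(\beta)\|_1\leq 2\sqrt{2s\delta(\beta)} + \omega(\beta)$; when $\omega(\beta)<0$ the tighter bound $\|h(\beta)\|_1\leq 2\sqrt{2s\delta(\beta)}$ still holds, so in all cases $\|h(\beta)\|_1^2\leq C(s\delta(\beta) + \bomega(\beta)^2)$ and $\|h(\beta)\|_1/\sqrt{s}\leq C(\sqrt{\delta(\beta)} + \bomega(\beta)/\sqrt{s})$. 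Plugging these estimates into (\ref{basic1main1})-(\ref{basic1main2}) and absorbing numerical constants produces (\ref{basic1main3})-(\ref{basic1main4}).

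The main technical point will be verifying the chaining step cleanly, namely that $\sum_{i\geq 1}\|\zeta_{S_i}\|_2\leq \|\zeta\|_1/\sqrt{k's}$ holds uniformly in $\zeta$ and that the operator-norm bound from $\mA_1(k')$ applies with the same prefactor to every sub-block pair $(S_i, S_j)$ (each being of size at most $k's$). Everything else reduces to Holder's inequality, Proposition \ref{prop:prop4}, and absorbing numerical constants into $C$.
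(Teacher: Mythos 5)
Your proof matches the paper's argument essentially step for step: the block-partition-into-size-$(k's)$ chunks with the chaining bound $\|\zeta_{S_i}\|_2\leq\|\zeta_{S_{i-1}}\|_1/\sqrt{k's}$ is exactly the decomposition the paper uses for (\ref{basic1main1}), the H\"older bound against $\|(\hat H-H)\xi_1\|_\infty$ via $\mA_3$ is the paper's argument for (\ref{basic1main2}), and the specialization to $h(\beta)$ via Proposition \ref{prop:prop4} together with $\|h(\beta)\|_2^2=2\delta(\beta)$ is also identical. The only thing you add is the explicit remark that one should fix the sign of $\xi_1$ so that $\xi_1^\top\beta\geq 0$ (so that $\|h(\beta)\|_2^2 = 2-2\xi_1^\top\beta = 2\delta(\beta)$ rather than $2+2|\xi_1^\top\beta|$); the paper leaves this convention implicit, so making it explicit is a small but genuine improvement in rigor.
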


\begin{lemma}
\label{lem:basic_bound2}
Suppose that the event $\mA(k')$ holds for some constant $k'\geq 1$. Define:
\begin{align}
&F_{\Lambda}(\beta) = \alpha_1^2\hat\mu_1+\sum_{j\geq 2}\alpha_j^2\mu_j,\label{basic2main01}\\
&F_{\Sigma}(\beta) = \alpha_1^2\hat\mu_1\hat\rho_1+\sum_{j\geq 2}\alpha_j^2\mu_j\rho_j\label{basic2main02}.
\end{align}
There exist a sufficiently large universal $C$, such that for  all unit vector $\beta$ and $H\in \{\Lambda, \Sigma\}$,  we have
\begin{align}
&|\xi^\top_1 \hat{H}(\beta-\alpha_1\xi_1)|\leq C\lambda^{\max}_H\sqrt{\frac{\ln p}{n}}(\sqrt{\delta(\beta)}+\frac{\bomega(\beta)}{\sqrt{s}}),\label{basic2main1}\\
&|\beta^\top\hat H\beta - F_H|\leq C\lambda^{\max}_H\sqrt{\frac{s\ln p}{n}}\left(\sqrt{\delta(\beta)}+\frac{\bomega(\beta)}{\sqrt{s}}+\sqrt{k'}\delta(\beta)+\frac{\bomega(\beta)^2}{\sqrt{k'}s}\right).\label{basic2main2}
\end{align}
\end{lemma}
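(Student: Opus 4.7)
The plan is to exploit the $\Lambda$- and $\Sigma$-orthogonality of the generalized eigenvectors together with the concentration events $\mA_1, \mA_2, \mA_3$. Writing $v := \beta - \alpha_1 \xi_1 = \sum_{j\ge 2}\alpha_j \xi_j$, the relation $\Sigma \xi_j = \rho_j \Lambda \xi_j$ and $\xi_1^\top \Lambda \xi_j = 0$ for $j\ge 2$ yield $\xi_1^\top H v = 0$ for $H \in \{\Lambda, \Sigma\}$, so $\xi_1^\top \hat H v = \xi_1^\top (\hat H - H) v$. I further decompose $v = h(\beta) + (1-\alpha_1)\xi_1$, which is the common starting point for both bounds.

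For (\ref{basic2main1}), I bound $(1-\alpha_1)\xi_1^\top(\hat H-H)\xi_1$ by combining $\mA_2$ with the consequence $|1-\alpha_1|\le \sqrt{2M^2\delta(\beta)}$ of Proposition~\ref{prop:prop2}(3), giving $O(\lambda_H^{\max}\sqrt{\delta(\beta)\ln p/n})$. For the remaining piece $\xi_1^\top(\hat H-H)h(\beta)$, I split $h(\beta) = h(\beta)_S + h(\beta)_{S^c}$ with $S = \supp(\xi_1)$: on $S^c$ one has $h(\beta)_{S^c} = \beta_{S^c}$ and the cone inequality behind Proposition~\ref{prop:prop4} gives $\|\beta_{S^c}\|_1 \le \bomega(\beta) + \sqrt{2s\delta(\beta)}$, which pairs with $\|\xi_1^\top(\hat H-H)\|_\infty \le C\lambda_H^{\max}\sqrt{\ln p/n}$ from $\mA_3$ to recover the $\bomega(\beta)/\sqrt s$ dependence; on $S$, $h(\beta)_S$ is $s$-sparse with $\|h(\beta)_S\|_2 \le \sqrt{2\delta(\beta)}$, and a restricted bilinear bound on $\xi_{1,S}^\top (\hat H - H)_{SS} h(\beta)_S$ using $\mA_1(1)$ (together with $\mA_3$) supplies the remaining $\sqrt{\delta(\beta)}$ contribution.

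For (\ref{basic2main2}), I expand $\beta^\top \hat H \beta = \alpha_1^2 \xi_1^\top \hat H \xi_1 + 2\alpha_1 \xi_1^\top \hat H v + v^\top \hat H v$. The first term equals $\alpha_1^2 \hat\mu_1$ when $H = \Lambda$ and $\alpha_1^2 \hat\mu_1 \hat\rho_1$ when $H = \Sigma$, matching the leading piece of $F_H(\beta)$. By $\Lambda$- and $\Sigma$-orthogonality, $v^\top H v = \sum_{j\ge 2}\alpha_j^2 \mu_j$ (resp.\ $\sum_{j\ge 2}\alpha_j^2 \mu_j \rho_j$), matching the remainder of $F_H(\beta)$. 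The stochastic error therefore reduces to $2\alpha_1 \xi_1^\top \hat H v + v^\top (\hat H - H) v$. The first term is controlled by (\ref{basic2main1}) together with $|\alpha_1| = O(1)$. The second is bounded by (\ref{basic1main1}) applied with $\zeta = v$, using $\|v\|_2 \le \sqrt{2M^2\delta(\beta)}$ (Proposition~\ref{prop:prop2}(3)) and $\|v\|_1 \le \|h(\beta)\|_1 + |1-\alpha_1|\|\xi_1\|_1 \le C(\sqrt{s\delta(\beta)} + \bomega(\beta))$ (Proposition~\ref{prop:prop4} and $|1-\alpha_1| \lesssim \sqrt{\delta(\beta)}$). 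Substituting and reorganizing produces the four summands $\sqrt{\delta(\beta)}$, $\bomega(\beta)/\sqrt s$, $\sqrt{k'}\delta(\beta)$, and $\bomega(\beta)^2/(\sqrt{k'} s)$.

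The main obstacle is securing the factor $\sqrt{\ln p/n}$ (rather than $\sqrt{s \ln p/n}$) in (\ref{basic2main1}). A direct $\ell_1$-$\ell_\infty$ bound on all of $\xi_1^\top(\hat H-H)h(\beta)$ via Proposition~\ref{prop:prop4} would inflate the $\sqrt{\delta(\beta)}$ piece by an extra $\sqrt s$; the support-splitting above is designed precisely to avoid this by applying $\mA_3$ only to the $S^c$-part, where the cone bound on $\bomega(\beta)$ absorbs the $\ell_1$-norm sharply, and handling the $S$-part through an $\ell_2$-type inequality that uses $\|h(\beta)_S\|_2 \lesssim \sqrt{\delta(\beta)}$ at most once.
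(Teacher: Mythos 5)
Your decomposition for (\ref{basic2main2}) is essentially the paper's: you expand $\beta^\top\hat H\beta$ around $\alpha_1\xi_1$, cancel the deterministic part against $F_H(\beta)$, and control the stochastic remainder via (\ref{basic2main1}) and (\ref{basic1main1}) with $\zeta=v$. The paper rearranges the remainder as $(1-\alpha_1^2)\xi_1^\top(\hat H-H)\xi_1+2\xi_1^\top(\hat H-H)h+h^\top(\hat H-H)h$ and bounds $|1-\alpha_1^2|$ via Proposition~\ref{prop:prop2}~(\ref{eq:prop2eq3}), whereas you keep $2\alpha_1\xi_1^\top\hat H v+v^\top(\hat H-H)v$ and bound $\|v\|_1,\|v\|_2$ directly; these are interchangeable bookkeeping and both give (\ref{basic2main2}).

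For (\ref{basic2main1}), however, your support-splitting does \emph{not} achieve the $\sqrt{\ln p/n}$ factor you target, and the obstruction you flag is real. Concretely: on $S$, event $\mA_1(1)$ gives $\|(\hat H-H)_{SS}\|_{op}\le C\lambda_H^{\max}\sqrt{s\ln p/n}$, so the bilinear bound on $\xi_{1,S}^\top(\hat H-H)_{SS}h_S$ is at best $C\lambda_H^{\max}\sqrt{s\ln p/n}\,\|h_S\|_2\lesssim \lambda_H^{\max}\sqrt{s\ln p/n}\,\sqrt{\delta(\beta)}$ — the factor $\sqrt{s}$ is already there. On $S^c$, the cone inequality gives $\|h_{S^c}\|_1\le \bomega(\beta)+\sqrt{2s\delta(\beta)}$; the $\bomega$ part pairs with $\mA_3$ to give $\lambda_H^{\max}\sqrt{\ln p/n}\,\bomega(\beta)=\lambda_H^{\max}\sqrt{s\ln p/n}\,\bomega(\beta)/\sqrt{s}$, which is fine, but the $\sqrt{2s\delta(\beta)}$ overhang again contributes $\lambda_H^{\max}\sqrt{s\ln p/n}\,\sqrt{\delta(\beta)}$. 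Either way the $\sqrt{\delta(\beta)}$ piece is prefaced by $\sqrt{s\ln p/n}$, not $\sqrt{\ln p/n}$. Your remark that ``the support-splitting is designed precisely to avoid'' the extra $\sqrt{s}$ is thus not borne out by the computation.

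The deeper point is that the displayed constant in (\ref{basic2main1}) appears to be a typographical error in the paper: the proof in the appendix combines the $(1-\alpha_1)\xi_1^\top(\hat H-H)\xi_1$ term (order $\sqrt{\delta\ln p/n}$ via $\mA_2$ and Proposition~\ref{prop:prop2}) with (\ref{basic1main4}), which explicitly carries $\sqrt{s\ln p/n}$, so the combined bound must carry $\sqrt{s\ln p/n}$; the paper's own downstream invocation, e.g.\ the bound $|I_2|\le \hat r_t\mO\bigl(\sqrt{s\ln p/n}\,(\sqrt{\delta_t}+\bomega_t/\sqrt{s})\bigr)$ in the proof of Lemma~\ref{lem:proximal}, uses the $\sqrt{s\ln p/n}$ version. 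You should therefore not spend effort trying to extract $\sqrt{\ln p/n}$; instead, note the discrepancy and proceed with the corrected factor $\sqrt{s\ln p/n}$, which is what the subsequent lemmas actually require.
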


\begin{lemma}
\label{lem:basic_bound3}
Suppose that  $\mA(1)$  holds, $|\beta|\leq (1+c_{B_1})\sqrt{s}$, and $\frac{\ln p}{n}\leq c^2$ for a constant $ 0<c < \frac{1}{2}$. There exists a sufficiently large universal constant $C$, such that for all such unit vector $\beta$, we have
\begin{align}
&|\hat\rho_1 - \rho_1|\leq \frac{C}{M}\rho_1\sqrt{\frac{\ln p}{n}}, \label{basic3main1}\\
&|\beta^\top(\hat H-H)\beta|\leq C(1+c_{B_1})^2\lambda_{H}^{\max}\sqrt{\frac{s\ln p}{n}},\;H\in\{\Lambda, \Sigma\}.\label{basic3main2}
\end{align}
\end{lemma}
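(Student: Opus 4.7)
Both bounds are to be established under the high-probability event $\mA(1)$. Part (2) is a direct one-line corollary of Lemma~\ref{lem:basic_bound1}. Part (1) requires a standard algebraic decomposition of a ratio of quadratic forms together with one slightly nonobvious observation: the largest eigenvalue of $\Sigma$ is controlled by $M\rho_1$, which turns the $\lambda_{\Sigma}^{\max}/M^3$ factor inside event $\mA_2$ into the desired $\rho_1/M$ factor.

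\textbf{Part (2).} Apply Lemma~\ref{lem:basic_bound1}, equation~\eqref{basic1main1}, with $\zeta = \beta$ and $k' = 1$. Under the hypotheses $\|\beta\|_2=1$ and $\|\beta\|_1\leq (1+c_{B_1})\sqrt{s}$, the bracket factor satisfies
\[
\|\beta\|_2^2 + \frac{\|\beta\|_1^2}{s}\;\leq\; 1 + (1+c_{B_1})^2\;\leq\; C(1+c_{B_1})^2,
\]
so \eqref{basic3main2} follows after absorbing the universal constant.

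\textbf{Part (1).} Write $a = \xi_1^\top\Sigma\xi_1$, $b = \xi_1^\top\Lambda\xi_1$, and their hatted versions, so $\rho_1 = a/b$ and $\hat\rho_1 = \hat a/\hat b$. The identity
\[
\hat\rho_1 - \rho_1 \;=\; \frac{\hat a - a}{\hat b} \;-\; \rho_1\cdot\frac{\hat b - b}{\hat b}
\]
reduces matters to controlling each piece. Event $\mA_2$ with $H=\Lambda$ and Assumption~\ref{ass:regularity} ($\lambda_{\Lambda}^{\max}\leq M$) give $|\hat b - b|\leq C M^{-2}\sqrt{\ln p/n}$; combined with $b\geq \lambda_{\Lambda}^{\min}\geq 1/M$ and the smallness hypothesis $\sqrt{\ln p/n}\leq c<1/2$, this implies $\hat b \geq 1/(2M)$, so $\rho_1|\hat b-b|/\hat b\leq (C\rho_1/M)\sqrt{\ln p/n}$. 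For the first term, $\mA_2$ with $H=\Sigma$ yields $|\hat a - a|\leq C\lambda_{\Sigma}^{\max}M^{-3}\sqrt{\ln p/n}$, so
\[
\frac{|\hat a - a|}{\hat b}\;\leq\; \frac{C\,\lambda_{\Sigma}^{\max}}{M^{2}}\sqrt{\frac{\ln p}{n}}.
\]
The key step is to convert $\lambda_{\Sigma}^{\max}$ to $M\rho_1$: since $\xi_1$ maximizes the generalized Rayleigh quotient, $\Sigma \preceq \rho_1\Lambda$, and Assumption~\ref{ass:regularity} gives $\Lambda\preceq M\cdot\mathrm{Id}$ (it is block-diagonal with each block's largest eigenvalue at most $M$), whence $\lambda_{\Sigma}^{\max}\leq M\rho_1$. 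Substituting turns the first term into $(C\rho_1/M)\sqrt{\ln p/n}$, and summing the two pieces finishes the proof.

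\textbf{Anticipated difficulty.} Neither part involves new concentration work beyond what $\mA(1)$ already provides. The only step one might not see immediately is the eigenvalue comparison $\lambda_{\Sigma}^{\max}\leq M\rho_1$ via the generalized-eigenvalue inequality $\Sigma\preceq \rho_1\Lambda$; everything else is bookkeeping on the decomposition of $\hat\rho_1-\rho_1$ and absorption of constants into $C$.
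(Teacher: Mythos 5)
Your proof is correct and follows essentially the same route as the paper's. Part (2) is identical: a direct substitution into Lemma~\ref{lem:basic_bound1}\,(\ref{basic1main1}) with $k'=1$, $\|\beta\|_2=1$, $\|\beta\|_1\le(1+c_{B_1})\sqrt{s}$. For part (1), the paper uses the same single-denominator decomposition of $\hat\rho_1-\rho_1$ (your $\frac{\hat a - a}{\hat b}-\rho_1\frac{\hat b - b}{\hat b}$ is literally the paper's $\frac{a|\hat\mu_1-\mu_1|}{\mu_1\hat\mu_1}+\frac{|\xi_1^\top\hat\Sigma\xi_1-\xi_1^\top\Sigma\xi_1|\mu_1}{\mu_1\hat\mu_1}$ after using $a/b=\rho_1$), and it invokes the very same eigenvalue comparison $\lambda_\Lambda^{\max}\le M$, $\lambda_\Sigma^{\max}\le M\rho_1$ — the paper states this fact without proof, whereas you supply the short justification $\Sigma\preceq\rho_1\Lambda$ (Rayleigh-quotient maximality) and $\Lambda\preceq M\,\mathrm{Id}$ (block-diagonal with bounded per-block spectrum), which is a nice addition but not a different argument.
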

Proofs of Propositions and Lemmas in this Section are deferred to Appendix \ref{app:proof_supportings}
\section{Technical Lemmas}
\label{app:lemmas}
We provide several technical Lemmas that will be used for proving  results in the main paper. For the convenience of notations,  we use the abbreviations $\hat r_t\coloneqq f(\beta_t)$, $\delta_t\coloneqq \delta(\beta_t)$, $m_{1t} = m_1(\beta_t)$, $m_{2t} = m_{2}(\beta_t)$, $\omega_t = \omega(\beta_t)$ and $\bomega_t = \bomega(\beta_t)$. We write $\beta_t$ as a linear combination of $\{\xi_j\}$ and $\beta_t = \sum_{j=1}^p \alpha_{jt}\xi_j$.   For any constant $0<c < \frac{1}{2}$ and $k > 2$, we define 
\begin{equation}
\label{eq:delta_bound1}
\delta^{upper}_t(c,k) = \min\{\frac{c^2}{8M^2}, \frac{cB_t}{\|\xi_1\|_1},\frac{c\gamma\eta}{M}(\frac{k-1}{k})^2(1-c)^2 \}, \;\mbox{for all }t\geq 0
\end{equation}
Since $B_0 = c_{B_1}\sqrt{s}$, when $t = 0$,
\[
\delta^{upper}_0(c,k) \geq \min\{\frac{c^2}{8M^2}, cc_{B_1},\frac{c\gamma\eta}{M}(\frac{k-1}{k})^2(1-c)^2 \}
\]
 is lower bounded by a positive constant.   When $\eta$, $c_{B2}$ and $\nu$ are in the assumed range in Theorem \ref{thm:thm1}, we can find constants $0<c \leq\frac{1}{2}, \;k \geq 2$ such that
\begin{align}
\label{condI}
\left\{\begin{array}{ll}& (\frac{k-1}{k})^2(1-c)^4\geq \frac{1}{3},\\
&  \frac{\gamma^2}{M}(\frac{k-1}{k})^5(1-c)^{8}\geq \nu,\\
&\frac{1}{2(1+c)^2M(M+3)}\geq \eta.
\end{array}\right.
\end{align}
We fix $(c, k)$ as positive constants to make (\ref{condI}) hold and achieves the smallest $\frac{k}{c}$. Note that the constants $(c, k)$ depend only on $(M, \gamma, \eta, c_0)$ (other choices of $(k,c )$ is allowed, we fix them as described to remove extra dependence on $(c, k)$ for constants appearing in the proofs). We also let $\iota>0$ be some positive constant that can depend on the true model parameters $(\gamma, M, \eta, c_0, c_{B_1}, c_{B_2})$, and whose values will vary in different Lemmas and proofs.

To make our statements and proofs more friendly to read, we define the following notation $\mO(.)$, and use $\mO(b_n)$ to represent $\psi b_n$ for some positive constant $\psi$ that may depend only on $(M, \gamma, \eta, c_0, c_{B_1}, c_{B_2})$. For example, if we state that $b_n'\leq \mO(b_n)$, it means that there exists a sufficiently large $(M, \gamma, \eta, c_0, c_{B_1}, c_{B_2})$-dependent constant $\psi$ such that  $b_n'\leq  \psi b_n$.

Below, Lemma \ref{lem:coef_bound} links $(\hat\rho_1 - \hat r_t)$ to  $\delta_t$. The former is the gap between optimal objective and objective achieved at iteration $t$, and the later is  the ``distance" between $\beta_t$ and $\xi_1$. Lemma (\ref{lem:proximal}) characterizes the lower bound of the proximal objective improvement at iteration $t\leq T^*$, and Lemma (\ref{lem:obj_lowerbound}) lower bounds the improvement of the actual objective using the  improvement of the proximal objective.  When combining these three Lemmas, we will be able to analysis the upper bounds of $\delta_t$ and $(\hat\rho_1 - \hat r_t)$ over the iterations $1\leq t\leq T^*$. The exact proofs are given after the Lemma statements.
\begin{lemma}
\label{lem:coef_bound}
Suppose that  $\mA(1)\cap \mA(k')$ holds for some $k' \geq 1$. There exists a sufficiently small constant $\iota$ such that when  $\sqrt{\frac{k's\ln p}{n}}\leq \iota$, for all $\beta_t$, we have,
\begin{align}
&[\hat\rho_1 - \hat r_t]_+\leq \rho_1 \left[\frac{k+1}{k}\frac{ m_{1t}\delta_t}{\beta_t^\top\Lambda\beta_t}+\mO\left(\frac{ks\ln p}{n}+\sqrt{\frac{s\ln p}{n}}\left(\frac{\bomega_t}{\sqrt{s}}+\frac{\bomega_t^2}{\sqrt{k'}s}\right)\right)\right],\label{eq:coef_main1}\\
&\hat\rho_1 - \hat r_t\geq \rho_1\left[ \frac{k-1}{k}\frac{\gamma m_{1t}\delta_t}{\beta_t^\top\Lambda\beta_t}-\mO\left(\frac{ks\ln p}{n}+\sqrt{\frac{s\ln p}{n}}\left(\frac{\bomega_t}{\sqrt{s}}+\frac{\bomega_t^2}{\sqrt{k'}s}\right)\right)\right].\label{eq:coef_main2}
\end{align}
For all $\beta_t$ with $\delta_t\leq \frac{c^2}{8M}$, 
\begin{align}
&\frac{[\hat\rho_1 - \hat r_t]_+}{\hat r_t}\leq  \frac{k+1}{k}\frac{ m_{1t}\delta_t}{m_{2t}}+\mO\left(\frac{ks\ln p}{n}+\sqrt{\frac{s\ln p}{n}}\left(\frac{\bomega_t}{\sqrt{s}}+\frac{\bomega_t^2}{\sqrt{k'}s}\right)\right),\label{eq:coef_main3}\\
&\frac{\hat\rho_1 - \hat r_t}{\hat r_t}\geq  \frac{k-1}{k}\frac{\gamma m_{1t}\delta_t}{m_{2t}}-\mO\left(\frac{ks\ln p}{n}+\sqrt{\frac{s\ln p}{n}}\left(\frac{\bomega_t}{\sqrt{s}}+\frac{\bomega_t^2}{\sqrt{k'}s}\right)\right).\label{eq:coef_main4}
\end{align}
\end{lemma}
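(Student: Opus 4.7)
The plan is to expand $\hat\rho_1 - \hat r_t$ in the eigenbasis of the population pencil $(\Sigma,\Lambda)$, isolate a principal term proportional to $m_{1t}\delta_t$, and then dispose of the residuals using Lemmas \ref{lem:basic_bound2} and \ref{lem:basic_bound3}. Starting from
$$\hat\rho_1 - \hat r_t = \frac{\hat\rho_1\beta_t^\top\hat\Lambda\beta_t - \beta_t^\top\hat\Sigma\beta_t}{\beta_t^\top\hat\Lambda\beta_t},$$
I would split the numerator as $\bigl[\hat\rho_1 F_\Lambda(\beta_t) - F_\Sigma(\beta_t)\bigr] + \hat\rho_1\bigl[\beta_t^\top\hat\Lambda\beta_t - F_\Lambda(\beta_t)\bigr] - \bigl[\beta_t^\top\hat\Sigma\beta_t - F_\Sigma(\beta_t)\bigr]$. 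Writing $\beta_t = \sum_j\alpha_{jt}\xi_j$ and exploiting both $\Lambda$- and $\Sigma$-orthogonality of the $\xi_j$'s (the latter follows from $\Sigma\xi_j = \rho_j\Lambda\xi_j$), the first bracket collapses to $\sum_{j\geq 2}\alpha_{jt}^2\mu_j(\hat\rho_1 - \rho_j)$, which equals $m_{1t}\delta_t$ times a convex combination of $\{\hat\rho_1 - \rho_j\}_{j\geq 2}$. Since $\rho_j\geq 0$ ($\Sigma$ is PSD, $\Lambda$ is PD) and Assumption \ref{ass:eigen_gap} gives $\rho_1-\rho_j\geq\gamma\rho_1$ for $j\geq 2$, this combination lies in $[\gamma\rho_1-|\hat\rho_1-\rho_1|,\,\rho_1+|\hat\rho_1-\rho_1|]$. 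Lemma \ref{lem:basic_bound3}, applicable because $\|\beta_t\|_1\leq L_t\leq L_0 = (1+c_{B_1})\sqrt s$, supplies $|\hat\rho_1-\rho_1|=\mO(\rho_1\sqrt{\ln p/n})$, an $\mO(\iota)$ multiplicative perturbation of $\rho_1$.

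The two remaining bracketed pieces are bounded by Lemma \ref{lem:basic_bound2}, each of order $\rho_1\sqrt{s\ln p/n}\bigl(\sqrt{\delta_t}+\bomega_t/\sqrt s+\sqrt{k'}\delta_t+\bomega_t^2/(\sqrt{k'}s)\bigr)$, and I would absorb them as follows. The term $\sqrt{k'}\delta_t\cdot\sqrt{s\ln p/n}=\delta_t\cdot\sqrt{k's\ln p/n}=\mO(\iota)\delta_t$ is folded into $\tfrac{1}{k}\cdot m_{1t}\delta_t/\beta_t^\top\Lambda\beta_t$ using $m_{1t}\geq 1/M$ and $\beta_t^\top\Lambda\beta_t\leq M$. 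The term $\sqrt{\delta_t\cdot s\ln p/n}$ is split via weighted AM--GM as $\tfrac{\delta_t}{k} + \tfrac{k}{4}\tfrac{s\ln p}{n}$, producing the $\tfrac{ks\ln p}{n}$ piece of the $\mO$ residual and consuming another $\tfrac{1}{k}$ of the $\delta_t$ budget; the $\bomega_t/\sqrt s$ and $\bomega_t^2/(\sqrt{k'}s)$ terms are left in the residual untouched. Finally, Lemma \ref{lem:basic_bound3} yields $\beta_t^\top\hat\Lambda\beta_t = \beta_t^\top\Lambda\beta_t\bigl(1+\mO(\sqrt{s\ln p/n})\bigr)$, injecting one more $1+o(1)$ multiplicative factor. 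Gathering all multiplicative corrections into the prefactors $\tfrac{k+1}{k}$ and $\tfrac{k-1}{k}$ (of the upper and lower bounds respectively) delivers \eqref{eq:coef_main1} and \eqref{eq:coef_main2}.

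For \eqref{eq:coef_main3}--\eqref{eq:coef_main4}, the extra hypothesis $\delta_t\leq c^2/(8M^2)$ activates Proposition \ref{prop:prop3}, which places both $\beta_t^\top\Lambda\beta_t$ and $m_{2t}$ in $[(1-c/2)^2\mu_1,(1+c/2)^2\mu_1]$. Combined with Lemma \ref{lem:basic_bound2}, this gives $\hat r_t = \rho_1\,m_{2t}/\beta_t^\top\Lambda\beta_t\cdot(1+o(1))$; dividing \eqref{eq:coef_main1}--\eqref{eq:coef_main2} by $\hat r_t$ then swaps $\rho_1/\beta_t^\top\Lambda\beta_t$ for $1/m_{2t}$ at the cost of one more $1+o(1)$ factor, which is again absorbed into $\tfrac{k\pm 1}{k}$, leaving the $\mO(\cdot)$ residual unchanged.

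The main obstacle is the accounting. Every empirical-to-population swap ($\hat\rho_1\rightsquigarrow\rho_1$, $\beta_t^\top\hat H\beta_t\rightsquigarrow F_H(\beta_t)$, $\beta_t^\top\hat\Lambda\beta_t\rightsquigarrow \beta_t^\top\Lambda\beta_t$, and the passage from $\Lambda$- to $\Sigma$-normalization in parts (3)--(4)) injects a multiplicative or additive perturbation that must either fit inside the slack $\bigl(\tfrac{k\pm 1}{k}-1\bigr)\cdot m_{1t}\delta_t/\beta_t^\top\Lambda\beta_t$ afforded by the prefactor or be driven into the $\mO$ residual via AM--GM. The fixed choice of $(c,k)$ from \eqref{condI} together with a sufficiently small $\iota$ is exactly calibrated to close this bookkeeping with the stated constants.
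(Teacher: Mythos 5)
Your proposal follows essentially the same route as the paper's proof: the same decomposition of $\hat\rho_1\beta_t^\top\hat\Lambda\beta_t-\beta_t^\top\hat\Sigma\beta_t$ into $\hat\rho_1 F_\Lambda - F_\Sigma$ plus empirical-deviation residuals, the same sandwich of $\sum_{j\ge 2}\alpha_{jt}^2\mu_j(\hat\rho_1-\rho_j)$ by the eigen-gap, the same uses of Lemmas \ref{lem:basic_bound2}--\ref{lem:basic_bound3} and Proposition \ref{prop:prop3}, and the same weighted AM--GM to trade $\sqrt{\delta_t\,s\ln p/n}$ for a $\delta_t$-budget piece plus a $ks\ln p/n$ residual. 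The only procedural variation is that the paper obtains \eqref{eq:coef_main3}--\eqref{eq:coef_main4} directly from the $H=\Sigma$ instance of a uniform bound (using $\beta_t^\top\Sigma\beta_t = m_{2t}\rho_1$ exactly) rather than dividing \eqref{eq:coef_main1}--\eqref{eq:coef_main2} by $\hat r_t$ as you do, which saves one multiplicative swap; incidentally, you correctly invoke the threshold $\delta_t\le c^2/(8M^2)$ that Proposition \ref{prop:prop3} requires and that the paper's own proof uses, even though the lemma statement reads $c^2/(8M)$ (apparently a typo in the statement).
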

\begin{lemma}
\label{lem:proximal}
Set $\Delta_{t,t+1} = \frac{1}{2}(\|\theta-\beta_t\|_2^2-\|\theta-\beta_{t+1}\|_2^2)$ to measure the proximal objective improvement at iteration $t$ where $\theta$ is defined as (\ref{eq:theta}).  Suppose that  $\mA(1)\cap \mA(k')$ and inequalities (\ref{eq:coef_main1}) - (\ref{eq:coef_main4}) hold for some $k' \geq 1$.  There exists a sufficiently small constant $\iota$ such that when  $\sqrt{\frac{k's\ln p}{n}}\leq \iota$,  for all $\beta_t$ with $\delta_t\leq \delta_t^{upper}(c,k)$ and $t\leq T^*$, we have 
\[
 \Delta_{t,t+1} \geq \frac{\eta^2\gamma^2m_{1t}^2}{4}(\frac{k-1}{k})^4(1-c)^6\delta_t-\eta^2\mO\left(\frac{ks\ln p}{n}+\sqrt{\frac{s\ln p}{n}}\left(\frac{\bomega_t}{\sqrt{s}}+\frac{\bomega_t^2}{s}\right)\right).
\]
\end{lemma}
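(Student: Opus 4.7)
The plan is to exploit the variational characterization of $\beta_{t+1}$ as the $\ell_2$-projection of $\theta$ onto the feasible set $\mathcal{F}_{t+1}:=\{\beta:\|\beta\|_2=1,\ \|\beta\|_1\leq L_{t+1}\}$. For any comparator $\tilde\beta\in\mathcal{F}_{t+1}$, the optimality of $\beta_{t+1}$ gives $\|\theta-\beta_{t+1}\|_2^2\leq \|\theta-\tilde\beta\|_2^2$; expanding and using $\|\tilde\beta\|_2=\|\beta_t\|_2=1$ yields the clean lower bound $\Delta_{t,t+1}\geq \theta^\top(\tilde\beta-\beta_t)$. The task thus reduces to constructing a feasible $\tilde\beta$ and evaluating this inner product.

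I would choose $\tilde\beta$ on the great circle from $\beta_t$ toward $\xi_1$, namely $\tilde\beta=\cos(\phi)\beta_t+\sin(\phi)\,u$ with $u:=(\xi_1-\alpha_{1t}\beta_t)/\sqrt{1-\alpha_{1t}^2}$, for a tunable angle $\phi\in[0,\arccos\alpha_{1t}]$. Then $\|\tilde\beta\|_2=1$ and $\beta_t^\top u=0$ by construction. Since $\beta_t^\top(\hat\Sigma-\hat r_t\hat\Lambda)\beta_t=0$ by definition of $\hat r_t$, one has $\theta^\top\beta_t=1$ and $\theta^\top u=\frac{\eta}{\hat r_t\sqrt{1-\alpha_{1t}^2}}\beta_t^\top(\hat\Sigma-\hat r_t\hat\Lambda)\xi_1$, giving
\[
\theta^\top(\tilde\beta-\beta_t)=(\cos\phi-1)+\frac{\eta\sin\phi}{\hat r_t\sqrt{1-\alpha_{1t}^2}}\,\beta_t^\top(\hat\Sigma-\hat r_t\hat\Lambda)\xi_1.
\]
I would then replace $\beta_t^\top(\hat\Sigma-\hat r_t\hat\Lambda)\xi_1$ by its deterministic counterpart $\alpha_{1t}(\rho_1-\hat r_t)\mu_1$ modulo the noise from Lemma \ref{lem:basic_bound2}, lower bound $\rho_1-\hat r_t$ via inequality (\ref{eq:coef_main2}) of Lemma \ref{lem:coef_bound} as $\rho_1-\hat r_t\gtrsim \rho_1(k-1)\gamma m_{1t}\delta_t/(k\,\beta_t^\top\Lambda\beta_t)$, and use Proposition \ref{prop:prop3} (applicable because $\delta_t\leq \delta_t^{upper}(c,k)\leq c^2/(8M^2)$) to bound the ratios $\mu_1/\beta_t^\top\Lambda\beta_t$ and $\rho_1/\hat r_t$ up to $(1-c)$-type constants. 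Setting $A:=\eta\alpha_{1t}(\rho_1-\hat r_t)\mu_1/(\hat r_t\sqrt{1-\alpha_{1t}^2})$ and maximizing $(\cos\phi-1)+A\sin\phi$ over $\phi$ produces $\sqrt{1+A^2}-1\geq A^2/2-O(A^4)$ at $\tan\phi=A$; combining with $1-\alpha_{1t}^2\leq 2\delta_t$ and the above lower bound on $A$ yields the stated main term $\tfrac{\eta^2\gamma^2 m_{1t}^2}{4}(\tfrac{k-1}{k})^4(1-c)^6\delta_t$, with residual terms absorbed into $\eta^2\mO(ks\ln p/n+\sqrt{s\ln p/n}(\bomega_t/\sqrt{s}+\bomega_t^2/s))$ via Lemmas \ref{lem:basic_bound2}-\ref{lem:basic_bound3}.

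The main obstacle is verifying $\ell_1$-feasibility $\|\tilde\beta\|_1\leq L_{t+1}$. Writing $\tilde\beta$ as a nonnegative combination of $\beta_t$ and $\xi_1$ and applying the triangle inequality yields
\[
\|\tilde\beta\|_1\leq \cos\phi\,L_t+\frac{\sin\phi\,(L_t\delta_t-B_t)}{\sqrt{1-\alpha_{1t}^2}},
\]
and for this to sit below $L_{t+1}$ one must balance the need for $\phi$ to be small (so that the main term stays close to $A^2/2$) against the need for $\phi$ to be large enough that the negative second term (when $L_t\delta_t<B_t$) offsets the gap $L_t-L_{t+1}=c_{B_2}\eta\sqrt{s\ln p/n}(L_t-L_\infty)$. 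The three clauses in the definition of $\delta_t^{upper}(c,k)$ in (\ref{eq:delta_bound1}) are designed precisely to reconcile these tensions: the clause $\delta_t\leq cB_t/\|\xi_1\|_1$ ensures $B_t-L_t\delta_t\geq (1-2c)B_t$ so that a feasible $\phi$ close to the unconstrained optimum exists; the clause $\delta_t\leq c\gamma\eta(k-1)^2(1-c)^2/(k^2 M)$ ensures the optimal $\phi^*$ is itself small enough for small-angle approximations to hold; and the clause $\delta_t\leq c^2/(8M^2)$ activates Proposition \ref{prop:prop3}. The multiplicative factors $(\tfrac{k-1}{k})^4(1-c)^6$ in the final bound reflect the combined loss from substituting a slightly sub-optimal yet feasible $\phi$ and from the constants accumulated through Lemma \ref{lem:coef_bound} and Proposition \ref{prop:prop3}.
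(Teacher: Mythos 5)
Your argument follows the same overall route as the paper's proof: construct a feasible comparator on the great circle from $\beta_t$ toward $\xi_1$, lower bound $\Delta_{t,t+1}$ by $\theta^\top(\tilde\beta-\beta_t)$ using the projection optimality of $\beta_{t+1}$, expand the linear term around $\xi_1^\top(\hat\Sigma-\hat r_t\hat\Lambda)\beta_t$ (using $\beta_t^\top(\hat\Sigma-\hat r_t\hat\Lambda)\beta_t=0$), lower bound that inner product via Lemma \ref{lem:coef_bound}, and then verify $\ell_1$-feasibility from the three clauses of $\delta_t^{\rm upper}(c,k)$. The one genuine difference is the parameterization of the comparator: you write $\tilde\beta=\cos\phi\,\beta_t+\sin\phi\,u$ with $u\perp\beta_t$, $\|u\|_2=1$, which keeps the unit-norm constraint automatic and makes the unconstrained optimum $\tan\phi^\star=A$, with gain $\sqrt{1+A^2}-1\geq A^2/2$, completely transparent; the paper instead writes $\tilde\beta=(1-\epsilon)\beta_t+\epsilon_1\xi_1$, proves Proposition \ref{prop:prop5} separately to control $\epsilon_1(\epsilon)$, and then fixes $\epsilon=R_1/2$ without optimizing. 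Your version is genuinely tidier in that it removes the need for Proposition \ref{prop:prop5}; what the paper's version buys is that $\epsilon$ is a deterministic constant, which makes the feasibility arithmetic in \eqref{eq:proximal_eq13}--\eqref{eq:proximal_eq14} mechanical, whereas with $\tan\phi^\star=A$ depending on data you must still argue that a nearby deterministic angle is both feasible and near-optimal, which reintroduces roughly the same constants. Two details you should state explicitly rather than gesture at: (i) the exact identity gives $\xi_1^\top(\hat\Sigma-\hat r_t\hat\Lambda)\beta_t=\alpha_{1t}(\hat\rho_1-\hat r_t)\hat\mu_1+\xi_1^\top(\hat\Sigma-\hat r_t\hat\Lambda)(\beta_t-\alpha_{1t}\xi_1)$, so replacing $\hat\rho_1,\hat\mu_1$ by $\rho_1,\mu_1$ costs an additional $\mO(\sqrt{\ln p/n})$ term that must be accounted for via Lemma \ref{lem:basic_bound3} and event $\mA_2$; and (ii) the triangle-inequality bound on $\|\tilde\beta\|_1$ presupposes that the $\beta_t$-coefficient $\cos\phi-\alpha_{1t}\sin\phi/\sqrt{1-\alpha_{1t}^2}$ is nonnegative, i.e.\ $\tan\phi\lesssim\sqrt{2\delta_t}$, which does hold at the chosen $\phi$ because $\eta\gamma m_{1t}\leq \gamma/(M+3)<2$, but needs to be checked. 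The feasibility bookkeeping is left as intent in your sketch, but it mirrors the paper's and there is no missing idea.
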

\begin{lemma}
\label{lem:obj_lowerbound}
Suppose that  $\mA(1)\cap \mA(k')$  and  inequalities (\ref{eq:coef_main1}) - (\ref{eq:coef_main4}) hold  for some $k' \geq 1$. There exists a sufficiently small constant $\iota$ such that when  $\sqrt{\frac{k's\ln p}{n}}\leq \iota$, for all $\beta_t$ with $\beta_t\leq \frac{c^2}{8M}$, we have
\begin{equation}
\label{eq:obj_main1}
\hat r_{t+1} - \hat r_{t}\geq \frac{1}{\beta_t^\top\hat\Lambda\beta_t}\left[2\frac{\hat r_t}{\eta}\Delta_{t,t+1}-\mO\left(\hat r_t
\sqrt{\frac{s\ln p}{k' n}}(\delta_t+\delta_{t+1}+\frac{\bomega_t^2}{k's})\right)\right].
\end{equation}
\end{lemma}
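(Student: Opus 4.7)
The plan is to write $(\hat r_{t+1}-\hat r_t)\,\beta_{t+1}^\top\hat\Lambda\beta_{t+1}$ as a closed-form expression in the increment $h\coloneqq\beta_{t+1}-\beta_t$, match the first-order piece against $\Delta_{t,t+1}$ via the definition of the proximal target, use the positivity of $\hat\Sigma$ and the step-size condition to absorb the quadratic-in-$h$ cross term, and then trade $\beta_{t+1}^\top\hat\Lambda\beta_{t+1}$ for $\beta_t^\top\hat\Lambda\beta_t$ in the denominator.

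\emph{Step 1: an exact identity.} Since $\hat r_t=\beta_t^\top\hat\Sigma\beta_t/\beta_t^\top\hat\Lambda\beta_t$, one has $\beta_t^\top(\hat\Sigma-\hat r_t\hat\Lambda)\beta_t=0$. Expanding $\beta_{t+1}=\beta_t+h$ yields
\[
(\hat r_{t+1}-\hat r_t)\,\beta_{t+1}^\top\hat\Lambda\beta_{t+1}
= 2\beta_t^\top(\hat\Sigma-\hat r_t\hat\Lambda)h+h^\top(\hat\Sigma-\hat r_t\hat\Lambda)h.
\]
Because $\|\beta_t\|_2=\|\beta_{t+1}\|_2=1$, we have $\beta_t^\top h=-\tfrac12\|h\|_2^2$, and so $\Delta_{t,t+1}=\theta^\top h=\beta_t^\top h+\tfrac{\eta}{\hat r_t}\beta_t^\top(\hat\Sigma-\hat r_t\hat\Lambda)h$. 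Solving for the first-order term gives $2\beta_t^\top(\hat\Sigma-\hat r_t\hat\Lambda)h=\tfrac{2\hat r_t}{\eta}\Delta_{t,t+1}+\tfrac{\hat r_t}{\eta}\|h\|_2^2$, hence
\[
(\hat r_{t+1}-\hat r_t)\,\beta_{t+1}^\top\hat\Lambda\beta_{t+1}
= \tfrac{2\hat r_t}{\eta}\Delta_{t,t+1}+\Big(\tfrac{\hat r_t}{\eta}\|h\|_2^2-\hat r_t\,h^\top\hat\Lambda h\Big)+h^\top\hat\Sigma h.
\]

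\emph{Step 2: controlling the quadratic-in-$h$ remainder.} The term $h^\top\hat\Sigma h\ge 0$ is dropped. Applying Lemma~\ref{lem:basic_bound1} with the given $k'$ together with $\|\Lambda\|_{op}\le M$ gives $h^\top\hat\Lambda h\le M\|h\|_2^2+\mO\!\big(\sqrt{k's\ln p/n}\,[\|h\|_2^2+\|h\|_1^2/(k's)]\big)$. The assumption $\eta\le(1-c_0)/(2M(M+3))$ forces $1/\eta-M$ to exceed a strictly positive constant, so for $\sqrt{k's\ln p/n}\le\iota$ small enough the $\|h\|_2^2$ coefficient in $\tfrac{\hat r_t}{\eta}\|h\|_2^2-\hat r_t h^\top\hat\Lambda h$ remains non-negative and only an error proportional to $\hat r_t\sqrt{\ln p/(k'sn)}\|h\|_1^2$ survives. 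Proposition~\ref{prop:prop4} bounds $\|h\|_1\le\|\beta_t-\xi_1\|_1+\|\beta_{t+1}-\xi_1\|_1\le 2\sqrt{2s\delta_t}+2\sqrt{2s\delta_{t+1}}+\bomega_t+\bomega_{t+1}$, and since $\bomega_{t+1}\le B_{t+1}\le B_t$ is controlled by $\bomega_t$ along the decaying schedule, squaring yields $\|h\|_1^2\le \mO(s\delta_t+s\delta_{t+1}+\bomega_t^2)$. Plugging in produces exactly the advertised error $\mO\!\big(\hat r_t\sqrt{s\ln p/(k'n)}(\delta_t+\delta_{t+1}+\bomega_t^2/(k's))\big)$.

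\emph{Step 3: denominator swap.} The identity above has $\beta_{t+1}^\top\hat\Lambda\beta_{t+1}$ in the denominator, but the statement requires $\beta_t^\top\hat\Lambda\beta_t$. Under $\delta_t\le c^2/(8M)$, Proposition~\ref{prop:prop3} gives $\beta_t^\top\Lambda\beta_t=\mu_1(1+O(c))$, and Lemma~\ref{lem:basic_bound3} makes $\beta_t^\top\hat\Lambda\beta_t$ comparable to $\beta_t^\top\Lambda\beta_t$ up to $O(\sqrt{s\ln p/n})$; the analogous bound for $\beta_{t+1}$ holds whenever $\delta_{t+1}$ is at most a constant (so $\hat r_{t+1}$ is still $\Theta(\rho_1)$), and the two Rayleigh denominators are therefore within a $1+o(1)$ factor. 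This allows replacement of the denominator at the price of a multiplicative factor absorbed into $\mO(\cdot)$; when $\hat r_{t+1}-\hat r_t$ is negative the inequality is applied in the opposite direction, which only adds another absorbable $o(1)$ loss.

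\emph{Main obstacle.} The delicate point is simultaneously (i) keeping the coefficient of $\|h\|_2^2$ strictly non-negative so it can be dropped, and (ii) trading $\|h\|_1^2$ for $s\delta_t+s\delta_{t+1}+\bomega_t^2$ with the right $k'$-scaling so the final error matches the target $\sqrt{s\ln p/(k'n)}(\delta_t+\delta_{t+1}+\bomega_t^2/(k's))$; both use the step-size condition together with Proposition~\ref{prop:prop4} in a tightly coordinated way. The secondary subtlety is the denominator swap, which must preserve the clean $1/\beta_t^\top\hat\Lambda\beta_t$ factor that is consumed by Lemma~\ref{lem:coef_bound}/Lemma~\ref{lem:proximal} in the downstream iteration argument.
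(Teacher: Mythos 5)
Your Steps 1 and 2 match the paper's algebraic starting point exactly: the paper also rewrites $\Delta_{t,t+1}=\theta^\top h$ and derives $2\hat r_t\Delta_{t,t+1}/\eta = 2\tilde g_t^\top h - \hat r_t\|h\|_2^2/\eta$, and the paper's final step is the same step-size absorption. Those two steps are sound. The gap is in Step 3, and it is not a secondary subtlety --- it is the crux of the lemma.

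Write $L_0=\beta_t^\top\hat\Lambda\beta_t$ and $L_1=\beta_{t+1}^\top\hat\Lambda\beta_{t+1}$. Your Step-1 identity plus the definition of the target gives, exactly,
\[
L_0(\hat r_{t+1}-\hat r_t)-\tfrac{2\hat r_t}{\eta}\Delta_{t,t+1}
= \tfrac{\hat r_t}{\eta}\|h\|_2^2 + h^\top(\hat\Sigma-\hat r_t\hat\Lambda)h + (L_0-L_1)(\hat r_{t+1}-\hat r_t),
\]
and $(L_0-L_1)(\hat r_{t+1}-\hat r_t) = -(2h^\top\hat\Lambda\beta_t + h^\top\hat\Lambda h)(\hat r_{t+1}-\hat r_t)$. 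Calling this a ``denominator swap'' hides the fact that you must actually estimate this cross term. It is precisely the quantity the paper spends most of its proof on: the paper's $I_1$ is $L_0(\hat r_{t+1}-\hat r_t)-2\tilde g_t^\top h$, and the lower bound $I_1/\hat r_t \ge -2\Lambda_{h,h}-2\tfrac{\hat r_{t+1}-\hat r_t}{\hat r_t}\Lambda_{t+1,h}$ is then controlled only by a delicate bound on $|\hat r_{t+1}-\hat r_t|/\hat r_t$ in terms of $\Lambda_{hh}^{1/2}$, $\Sigma_{hh}^{1/2}$, $\Lambda_{t,t}$, $\Lambda_{t+1,t+1}$ (two symmetric expansions via $\beta_t$ or $\beta_{t+1}$), after which the resulting $\|h\|_2^2$-scale terms are absorbed by the $\hat r_t\|h\|_2^2/\eta$ buffer and the step-size constraint. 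Your proof supplies no estimate at all for this cross term.

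The ``$1+o(1)$ factor'' justification you offer for the swap does not hold. First, the lemma's hypothesis constrains only $\delta_t$, not $\delta_{t+1}$; without $\delta_{t+1}\le c^2/(8M^2)$ the only universal control on $L_1$ is $L_1\in[1/M-o(1),\,M+o(1)]$, so $L_0/L_1$ is bounded only by fixed constants, not by $1+o(1)$. Second, even granting $\delta_{t+1}$ small, Proposition~\ref{prop:prop3} pins each of $L_0,L_1$ to $(1\pm O(c))\mu_1$ where $c$ is a fixed constant, so the ratio is $1+O(c)$, still not $1+o(1)$. A constant multiplicative loss applied to $2\hat r_t\Delta_{t,t+1}/\eta$ would change the leading coefficient in (\ref{eq:obj_main1}), which the downstream cancellation with Lemma~\ref{lem:proximal} inside the proof of Theorem~\ref{thm:thm1} cannot tolerate. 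Your case analysis for the sign of $\hat r_{t+1}-\hat r_t$ does not fix this either: when the bracket is nonnegative and $L_1>L_0$, the direction of the inequality does not let you replace $1/L_1$ by $1/L_0$ up to any absorbable error without actually controlling $(L_0-L_1)(\hat r_{t+1}-\hat r_t)$.

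In summary, Steps 1--2 are correct and are the same as the paper's opening moves, but Step 3 hand-waves past the one genuinely hard estimate. The proof as written does not establish the lemma.
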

Proofs of Lemma \ref{lem:coef_bound} - Lemma \ref{lem:obj_lowerbound} are deferred to Appendix \ref{app:proof_technical}.
\section{Proofs of main Theorems, Lemmas and Propositions}
\label{app:proof_main}
\subsection{Proof of Theorem \ref{thm:thm1}}
We can consider the case when $k' = 1$ and 
\begin{itemize}
\item   $\mA(1)$ holds and  $\sqrt{\frac{s\ln p}{n}}\leq \iota$ is sufficiently such that statements in Lemma \ref{lem:coef_bound} - Lemma \ref{lem:obj_lowerbound} hold, and  $\sqrt{\frac{s\ln p}{n}}\leq \iota$ for some small $\iota$ that we will specify later in our proof.
\item   The initial guess satisfies that
\begin{equation}
\label{eq:initial_requirement}
\delta_0 \leq \psi_1 \coloneqq \frac{(k-1)(1-c)^2\gamma }{2(k+1)M^4} \delta_0^{upper}(c,k).
\end{equation}
\end{itemize} 
We aim to show that when $\delta_t\leq \delta_t^{upper}(c,k)$  for all  $0\leq t \leq  T < T^*$, then, for all  $0\leq t \leq  T+1$, we have
\begin{align}
&\delta_{t}\leq \delta_{t}^{upper}(c,k),\label{eq:thm1eq0}\\
&(\hat\rho_1 - \hat r_{t})\leq \frac{(k+1)M}{k(1-c)\mu_1}(1-\nu\eta)^{t}\delta_0\rho_1+\rho_1\mO\left(\frac{ks\ln p}{n}+\sqrt{\frac{s\ln p}{n}}\left(\frac{B_t}{\sqrt{s}}+\frac{B_t^2}{s}\right)\right).\label{eq:thm1eq1}\\
&\delta_{t}\leq \frac{2(k+1)M^2}{(k-1)(1-c)\gamma}\frac{\beta_{t}^\top\Lambda\beta_{t}}{\mu_1}(1-\nu\eta)^{t}\delta_0+\mO\left(\frac{ks\ln p}{n}+\sqrt{\frac{s\ln p}{n}}\left(\frac{B_t}{\sqrt{s}}+\frac{B_t^2}{s}\right)\right).\label{eq:thm1eq2}
\end{align}
If so,  by induction, (\ref{eq:thm1eq0}) - (\ref{eq:thm1eq2}) hold for all $0\leq t\leq T^*$. From Proposition \ref{prop:prop3}, we have
\begin{equation}
\label{eq:thm1eq_prop3a}
(1-\frac{c}{2})^2\mu_1\leq \beta_t^\top\Lambda\beta_t\leq (1+\frac{c}{2})^2\mu_1.
\end{equation}
Hence, for all $1\leq t\leq T^*$ ,  since $k$ is also a constant depending on $(\eta, \gamma, M)$, we have
\begin{align}
&\delta_{t}\leq \frac{2(k+1)M^2}{(k-1)(1-c)^2\gamma}\delta_0(1-\nu\eta)^{t}+\mO\left(\frac{s\ln p}{n}+\sqrt{\frac{s\ln p}{n}}\left(\frac{B_t}{\sqrt{s}}+\frac{B_t^2}{s}\right)\right).\label{eq:thm1eq2_new}
\end{align}
To summarize,  we shall prove  that (\ref{eq:thm1eq0}), (\ref{eq:thm1eq1}) and (\ref{eq:thm1eq2}) hold for all $0\leq t\leq T+1$ when (1) $\delta_t\leq \delta^{upper}_{t}(c,k)$ for all $0\leq t\leq T < T^*$, (2)  $\mA(1)$ and (\ref{eq:initial_requirement}) hold, and (3)  $\sqrt{\frac{s\ln p}{n}}\leq \iota$ for a sufficiently small constant that does not depend on $T$.
 \subsubsection*{Proof of (\ref{eq:thm1eq1}) and (\ref{eq:thm1eq2})}
When $\delta_t\leq \delta_t^{upper}(c,k)$, by Lemma \ref{lem:proximal}, Lemma \ref{lem:obj_lowerbound} and $\bomega_t\leq B_t$, we have
{\small
\begin{align}
\label{eq:thm1eq3}
& \hat r_{t+1} - \hat r_t\geq \frac{\hat r_t}{\beta_t^\top\hat\Lambda\beta_t}\left[\frac{\eta\gamma^2m_{1t}^2}{2}(\frac{k-1}{k})^4(1-c)^6\delta_t-\mO\left(\frac{ks\ln p}{n}+\sqrt{\frac{s\ln p}{n}}(\frac{B_t}{\sqrt{s}}+\frac{B_t^2}{s})+\sqrt{\frac{s\ln p}{n}}(\delta_t+\delta_{t+1}+\frac{B_t^2}{s})\right)\right]\notag\\
&\xLeftrightarrow{\quad\quad}\notag\\
&(\hat \rho_1 - \hat r_{t+1}) - I_1\leq  (\hat \rho_1 - \hat r_{t})-I_2+I_3
\end{align}
}
where  we have used the fact that $\hat r_{t+1}-\hat r_t = -(\hat\rho_1 -\hat r_{t+1}) +(\hat\rho_1 -\hat r_{t})$ and set $I_1$, $I_2$, $I_3$ as
\begin{align*}
&I_1 =\frac{\hat r_t}{\beta_t^\top\hat\Lambda\beta_t}\mO(\sqrt{\frac{s\ln p}{n}})\delta_{t+1},\\
&I_2 = \frac{\hat r_t}{\beta_t^\top\hat\Lambda\beta_t}\times \frac{\eta\gamma^2m_{1t}^2}{2}(\frac{k-1}{k})^4(1-c)^6\left(1-\mO(\sqrt{\frac{s\ln p}{n}})\right)\delta_t,\\
&I_3 = \frac{\hat r_t}{\beta_t^\top\hat\Lambda\beta_t}\mO\left( \frac{ks\ln p}{n}+\sqrt{\frac{s\ln p}{n}}\left(\frac{B_t}{\sqrt{s}}+\frac{B_t^2}{s}\right)\right).
\end{align*}
By Lemma \ref{lem:basic_bound3} (\ref{basic3main2}), for all unit vector $\beta$ with $\|\beta\|_1\leq (1+c_{B_1})\sqrt{s}$:
\begin{align}
&\hat f(\beta) \leq \frac{\rho_1\beta^\top\Lambda\beta\left(1+\mO\left((1+c_{B_1})^2\sqrt{\frac{s\ln p}{n}}\right)\right)}{\beta^\top \Lambda \beta\left(1 - \mO\left((1+c_{B_1})^2\sqrt{\frac{s\ln p}{n}}\right)\right)} \overset{(a_1)}{\leq} \rho_1\frac{1+\frac{c}{16}}{(1-\frac{c}{16})}\leq \rho_1(1+\frac{c}{4}),\label{size_bound1}\\
&\frac{1}{\beta^\top\hat\Lambda\beta}\leq \frac{1}{\beta^\top\Lambda\beta\left(1 - \mO\left((1+c_{B_1})^2\sqrt{\frac{s\ln p}{n}}\right)\right)} \overset{(a_2)}{\leq}\frac{1}{\beta^\top\Lambda\beta(1-\frac{c}{16})}\leq \frac{1+\frac{c}{8}}{\beta^\top\Lambda\beta},\label{size_bound2}\\
&\frac{1}{\beta^\top\hat\Lambda\beta}\geq \frac{1}{\beta^\top\Lambda\beta\left(1 +\mO\left((1+c_{B_1})^2\sqrt{\frac{s\ln p}{n}}\right)\right)} \overset{(a_3)}{\leq}\frac{1}{\beta^\top\Lambda\beta(1+\frac{c}{16})}\leq \frac{1-\frac{c}{8}}{\beta^\top\Lambda\beta},\label{size_bound3}
\end{align}
where $(a_1)$, $(a_2)$ and $(a_3)$ hold when $\iota$ is small such that  $\mO((1+c_{B_1})^2\sqrt{\frac{s\ln p}{n}})\leq\frac{c}{16}$.

Combine (\ref{size_bound1}) and (\ref{size_bound2}) with the expression for $I_1$, when $\mO(\sqrt{\frac{s\ln p}{n}})\leq \frac{\gamma c'(k-1)}{2kM^3}$ with $c'$ a small positive constant that we will specify later, we have
\begin{equation}
\label{eq:thm1I1}
I_1\leq \frac{\gamma (k-1)c'}{k M^2}\rho_1\delta_{t+1}.
\end{equation}
Combine (\ref{size_bound1}) and (\ref{size_bound3}) with the expression for $I_2$, when  $\mO(\sqrt{\frac{s\ln p}{n}})\leq \frac{c}{4}$, we have
\begin{equation}
\label{eq:thm1I2}
I_2\geq \frac{\hat r_t}{\beta_t^\top\Lambda\beta_t}\times  \frac{\eta\gamma^2m_{1t}^2}{2}(\frac{k-1}{k})^4(1-c)^6\left(1-\frac{c}{2}\right)\delta_t.
\end{equation}
Combine (\ref{size_bound1}) and (\ref{size_bound2}) with the expression for $I_3$, we obtain
\begin{equation}
\label{eq:thm1I3}
I_3\leq \rho_1\mO(\frac{ks\ln p}{n}+\sqrt{\frac{s\ln p}{n}}(\frac{B_t}{\sqrt{s}}+\frac{B_t^2}{s})).
\end{equation}
By Proposition \ref{prop:prop2}:
\begin{equation}
\label{eq:thm1eq_prop2a}
\frac{1}{M}\leq \frac{2}{M}(1-\delta_t)\leq m_{1t}\leq 2M.
\end{equation}
We can upper bound $\delta_{t+1}$ in $I_1$ with Lemma \ref{lem:coef_bound} (\ref{eq:coef_main2}) and (\ref{eq:thm1eq_prop2a}), and that $\bomega_{t+1}\leq B_t, \bomega^2_{t+1}\leq B^2_t$, which results in
\begin{equation}
\label{eq:thm1I1b}
I_1\leq c'\left(\hat\rho_1 - \hat r_{t+1}\right) +\rho_1\mO\left(\frac{ks\ln p}{n}+\sqrt{\frac{s\ln p}{n}}\left(\frac{B_t}{\sqrt{s}}+\frac{B_{t}^2}{s}\right)\right).
\end{equation}
Similarly, we can lower bound $\delta_t$ in $I_2$ with Lemma \ref{lem:coef_bound} (\ref{eq:coef_main3})  and (\ref{eq:thm1eq_prop2a}), (\ref{size_bound1}), and that $\bomega_{t}\leq B_t, \bomega^2_{t}\leq B^2_t$, which results in
{\footnotesize
\begin{equation}
\label{eq:thm1I2b}
I_2\geq  \frac{\eta\gamma^2m_{1t}m_{2t}}{2\beta_t^\top\Lambda\beta_t}(\frac{k-1}{k})^5(1-c)^6\left(1-\frac{c}{2}\right)[\hat\rho_1 - \hat r_t]_+- \rho_1\mO\left(\frac{ks\ln p}{n}+\sqrt{\frac{s\ln p}{n}}\left(\frac{B_t}{\sqrt{s}}+\frac{B_{t}^2}{s}\right)\right).
\end{equation}
}
We define 

\begin{align}
\label{eq:thm1eq4}
R_{1t} &= \frac{\eta\gamma^2m_{1t}m_{2t}}{2\beta_t^\top\Lambda\beta_t}(\frac{k-1}{k})^5(1-c)^6\left(1-\frac{c}{2}\right)\notag\\
&\geq \frac{\eta \gamma^2(1-\frac{\delta_t}{2})(1-\frac{c^2}{2})}{M} (\frac{k-1}{k})^5(1-c)^6\left(1-\frac{c}{2}\right)\geq  \frac{\eta \gamma^2}{M} (\frac{k-1}{k})^5(1-c)^7.
\end{align}
At the last display, we have used (\ref{eq:thm1eq_prop2a}) to lower bound $m_{1t}$, $\frac{c^2}{8M^2}$ to upper bound $\delta_t$ and Proposition \ref{prop:prop3} to lower bound $\frac{m_{2t}}{\beta_t^\top\Lambda\beta_t}$, which says
\begin{equation}
\label{eq:thm1eq_prop3b}
(1-\frac{c^2}{2})\leq \frac{m_{2t}}{\beta_t^\top\Lambda\beta_t}\leq 1.
\end{equation}
Plug in the bounds of $I_1$, $I_2$, $I_3$ in (\ref{eq:thm1I1b}),  (\ref{eq:thm1I2b}), (\ref{eq:thm1I3}) and our definition of $R_{1t}$ into (\ref{eq:thm1eq3}), we obtain that
\begin{equation}
\label{eq:thm1eq5}
\hat\rho_1 - \hat r_{t+1}\leq \frac{1-R_{1t}}{1-c'}[\hat\rho_1 - \hat r_t]_+ +  \rho_1\mO\left(\frac{ks\ln p}{n}+\sqrt{\frac{s\ln p}{n}}\left(\frac{B_t}{\sqrt{s}}+\frac{B_{t}^2}{s}\right)\right).
\end{equation}
We now give the definition of $c'$. We define $c' = cR_{1t}\geq \frac{c\eta \gamma^2}{M} (\frac{k-1}{k})^5(1-c)^7$, which is lower bounded by a positive constant. We  lower and upper bound $\frac{1-R_{1t}}{1-c'}$. On the one hand, by (\ref{eq:thm1eq4}) and $\eta < \frac{1}{2M(M+3)}$, we know
\begin{equation}
\label{eq:thm1eq6}
R_{1t}\leq  \frac{1}{8}\Rightarrow \frac{1-R_{1t}}{1-c'}> 0.
\end{equation}
On the other hand, we have
\begin{equation}
\label{eq:thm1eq7}
\frac{1-R_{1t}}{1-c'} = \frac{1-R_{1t}}{1-cR_{1t}}\leq 1-(1-c)R_{1t}\leq 1- \frac{\eta \gamma^2}{M}(\frac{k-1}{k})^5(1-c)^8\leq 1-\nu\eta,
\end{equation}
where the last step in (\ref{eq:thm1eq7}) uses condition (\ref{condI}). Hence, we have
\begin{equation}
\label{eq:thm1eq8}
(\hat\rho_1 - \hat r_{t+1})\leq (1-\nu\eta)[\hat\rho_1 - \hat r_{t}]_{+}+\rho_1\mO\left(\frac{ks\ln p}{n}+\sqrt{\frac{s\ln p}{n}}\left(\frac{B_t}{\sqrt{s}}+\frac{B_{t}^2}{s}\right)\right).
\end{equation}
By induction on (\ref{eq:thm1eq8}), we obtain that
\begin{equation}
\label{eq:thm1eq9}
(\hat\rho_1 - \hat r_{t+1})\leq (1-\nu\eta)^{t+1}[\hat\rho_1 - \hat r_{0}]_{+}+I_4+I_5+I_6,
\end{equation}
where 
\begin{align*}
&I_4 = \rho_1 \mO\left(\frac{ks\ln p}{n}\sum_{0\leq \ell\leq t}(1-\nu\eta)^{\ell}\right)=\rho_1 \mO\left(\frac{ks\ln p}{n\nu\eta}\right) = \rho_1 \mO\left(\frac{ks\ln p}{n}\right),\\
&I_5 = \rho_1\mO\left(\sqrt{\frac{s\ln p}{n}}\sum_{0\leq \ell\leq t}\frac{B_\ell}{\sqrt{s}}(1-\nu\eta)^{t-\ell}\right),\\
& I_6 = \rho_1\mO\left(\sqrt{\frac{s\ln p}{n}}\sum_{0\leq \ell\leq t}\frac{B^2_\ell}{s}(1-\nu\eta)^{t-\ell}\right).
\end{align*}
We next upper bound $I_5$ and $I_6$.  Recall that  $B_{\ell+1}\geq \frac{4c_{B_2}(1+c_{B_1})}{\nu}\sqrt{\frac{s^2\ln p}{n}}$ and $B_{\ell}-B_{\ell+1}\leq c_{B_2}\eta (1+c_{B_1})\sqrt{\frac{s^2\ln p}{n}}$. Consequently, for all $\ell \leq t$:
\begin{align}
&\frac{B_{\ell+1}}{B_{\ell}}\geq \frac{\frac{4c_{B_2}(1+c_{B_1})}{\nu}}{\frac{4c_{B_2}(1+c_{B_1})}{\nu}+c_{B_2}\eta(1+c_{B_1})}\geq 1-\frac{\eta\nu}{4},\label{eq:thm1eq10}\\
&\frac{B^2_{\ell+1}}{B^2_{\ell}}\geq \frac{\left(\frac{4c_{B_2}(1+c_{B_1})}{\nu}\right)^2}{\left(\frac{4c_{B_2}(1+c_{B_1})}{\nu}+c_{B_2}\eta(1+c_{B_1})\right)^2}=1-\frac{(\frac{\nu\eta}{4})^2+\frac{\eta\nu}{2}}{(1+\frac{\nu\eta}{4})^2}\geq 1-\frac{\nu\eta}{2}.\label{eq:thm1eq11}
\end{align}
Consequently, we can upper bound $\frac{B_{\ell}}{\sqrt{s}}(1-\nu\eta)^{t-\ell}$ by $\frac{B_{t+1}}{(1-\frac{\eta\nu}{4})\sqrt{s}}(1-\frac{3\nu\eta}{4})^{t-\ell}$, and upper bound $\frac{B_{\ell}^2}{s}(1-\nu\eta)^{t-\ell}$ by $\frac{B^2_{t+1}}{(1-\frac{\eta\nu}{2})\sqrt{s}}(1-\frac{\nu\eta}{2})^{t-\ell}$. In turn, we can bound $I_5$ and $I_6$ as below
\begin{align}
&I_5 = \rho_1\mO(\sqrt{\frac{s\ln p}{n}}\frac{B_{t+1}}{(1-\frac{\nu\eta}{4})\sqrt{s}}\sum_{0\leq \ell\leq t}(1-\frac{3\nu\eta}{4})^\ell)=  \rho_1\mO(\sqrt{\frac{s\ln p}{n}}\frac{B_{t+1}}{\sqrt{s}}),\label{eq:thm1eq12}\\
& I_6 = \rho_1\mO(\sqrt{\frac{s\ln p}{n}}\frac{B^2_{t+1}}{(1-\frac{\nu\eta}{2})\sqrt{s}}\sum_{0\leq \ell\leq t}(1-\frac{\nu\eta}{2})^\ell)=  \rho_1\mO(\sqrt{\frac{s\ln p}{n}}\frac{B_{t+1}^2}{s}).\label{eq:thm1eq13}
\end{align}
Combine (\ref{eq:thm1eq9}) with the bounds on $I_4$, $I_5$ and $I_6$, we obtain
\begin{equation}
\label{eq:thm1eq14}
(\hat\rho_1 - \hat r_{t+1})\leq (1-\nu\eta)^{t+1}[\hat\rho_1 - \hat r_0]_++\rho_1\mO\left(\frac{ks\ln p}{n}+\sqrt{\frac{s\ln p}{n}}(\frac{B_{t+1}}{\sqrt{s}}+\frac{B_{t+1}^2}{s})\right).
\end{equation}
By Lemma \ref{lem:coef_bound} (\ref{eq:coef_main1}), and (\ref{eq:thm1eq_prop3a}) and (\ref{eq:thm1eq_prop2a}), we can upper bound $[\hat\rho_1 - \hat r_0]_+$:
\begin{equation}
\label{eq:thm1eq15}
[\hat\rho_1 - \hat r_0]_+\leq \frac{2(k+1)}{k}\frac{M}{\mu_1(1-\frac{c}{2})^2}\delta_0+\rho_1\mO(\frac{ks\ln p}{n}+\sqrt{\frac{s\ln p}{n}}(\frac{B_0}{\sqrt{s}}+\frac{B_0^2}{s})).
\end{equation}
Combine (\ref{eq:thm1eq10}), (\ref{eq:thm1eq11}) and (\ref{eq:thm1eq15}),  and plug them into (\ref{eq:thm1eq14}), we obtain the bound (\ref{eq:thm1eq1}):
\[
(\hat\rho_1 - \hat r_{t+1})\leq (1-\nu\eta)^{t+1}\frac{2(k+1)M}{k(1-c)\mu_1}\rho_1\delta_0+\rho_1\mO\left(\frac{ks\ln p}{n}+\sqrt{\frac{s\ln p}{n}}(\frac{B_{t+1}}{\sqrt{s}}+\frac{B_{t+1}^2}{s})\right).
\]
Combine the last display with Lemma \ref{lem:coef_bound} (\ref{eq:coef_main2}) and lower bound $m_{1t}$ with (\ref{eq:thm1eq_prop2a}), we obtain the bound in (\ref{eq:thm1eq2}):
\[
\delta_{t+1}\leq  \frac{\beta_t^\top\Lambda\beta_t}{\mu_1}\frac{2(k+1)M^2}{(k-1)(1-c)\gamma}\delta_0(1-\nu\eta)^{t+1}+\mO\left(\frac{ks\ln p}{n}+\sqrt{\frac{s\ln p}{n}}(\frac{B_{t+1}}{\sqrt{s}}+\frac{B_{t+1}^2}{s})\right).
\]
\subsubsection*{Proof of $\delta_{T+1}\leq \delta_{T+1}^{upper}(c,k)$.}
Since $\delta_0 \leq \psi_1\delta_0^{upper}(c,k) = \frac{\gamma(1-c)^2(k-1)}{2M^4(k+1)}\delta_0^{upper}(c,k)$, by (\ref{eq:thm1eq2}), we have
\begin{equation}
\label{eq:thm1eq16}
\delta_{T+1}\leq (1-c)\delta_0^{upper}(c,k)(1-\nu\eta)^{T+1}+\mO\left(\frac{ks\ln p}{n}+\sqrt{\frac{s\ln p}{n}}\left(\frac{B_t}{\sqrt{s}}+\frac{B_t^2}{s}\right)\right).
\end{equation}
Since $\delta^{upper}_{t}(c, k)=  \min\{\frac{c^2}{8M^2}, \frac{cB_{t}}{\|\xi_1\|_1}, \frac{c\gamma\eta}{M}(\frac{k-1}{k})^2(1-c)^2\}$, and by (\ref{eq:thm1eq10}), (\ref{eq:thm1eq11}), we know  that
\[
\frac{cB_{t}}{\|\xi_1\|_1}<\frac{cB_{0}(1-\nu\eta)^{t}}{\|\xi_1\|_1},\; \forall 1\leq t \leq T^*.
\]
Consequently, we have $(1-c)\delta_0^{upper}(c,k)(1-\nu\eta)^{T+1}\leq (1-c)\delta^{upper}_{T+1}(c, k)$. In other words, if we can show that
\[
\mO\left(\frac{ks\ln p}{n}+\sqrt{\frac{s\ln p}{n}}\left(\frac{B_{T+1}}{\sqrt{s}}+\frac{B_{T+1}^2}{s}\right)\right)\leq c\delta^{upper}_{T+1}(c, k),
\]
we can conclude that $\delta_{T+1}\leq \delta_{T+1}^{upper}(c,k)$. Since $\frac{c^2}{8M^2}$, $ \frac{c\gamma\eta}{M}(\frac{k-1}{k})^2(1-c)^2$ are constants, we can always let $\iota$ be a sufficiently small constant, such that  for all $T < T^*$:
\begin{equation}
\label{eq:thm1eq17}
\mO\left(\frac{ks\ln p}{n}+\sqrt{\frac{s\ln p}{n}}\left(\frac{B_{T+1}}{\sqrt{s}}+\frac{B_{T+1}^2}{s}\right)\right)\leq c\min\{\frac{c^2}{8M^2},  \frac{c\gamma\eta}{M}(\frac{k-1}{k})^2(1-c)^2\}.
\end{equation}
Hence, it remains to show that for a sufficiently small positive constant $\iota$:
\[
\mO\left(\frac{ks\ln p}{n}+\sqrt{\frac{s\ln p}{n}}\left(\frac{B_{t+1}}{\sqrt{s}}+\frac{B_{t+1}^2}{s}\right)\right)\leq \frac{c^2B_{t+1}}{\|\xi_1\|_1}.
\]
Because $B_{T+1}\leq B_0\leq (1+c_{B_1})\sqrt{s}$, $B_{T+1}\geq \frac{4c_{B_2}(1+c_{B_1})}{\nu}\sqrt{\frac{s^2\ln p}{n}}$ and $\|\xi_1\|_1\leq \sqrt{s}$, we have
\begin{align*}
&\frac{ks\ln p}{n}+\sqrt{\frac{s\ln p}{n}}\left(\frac{B_{t+1}}{\sqrt{s}}+\frac{B_{t+1}^2}{s}\right)\\
\leq & \left(\frac{k\nu}{4c_{B_2}(1+c_{B_1})}+1+(1+c_{B_1})\right)\sqrt{\frac{s\ln p}{n}}\frac{B_{t+1}}{\|\xi_1\|_1}.
\end{align*}
As a result, for all $T < T^*$, we have
{\small
\begin{equation}
\label{eq:thm1eq18}
\mO\left(\frac{ks\ln p}{n}+\sqrt{\frac{s\ln p}{n}}\left(\frac{B_{T+1}}{\sqrt{s}}+\frac{B_{T+1}^2}{s}\right)\right)\leq  \frac{B_{t+1}}{\|\xi_1\|_1}\mO((1+c_{B_1}+\frac{1}{c_{B_2}})\sqrt{\frac{s\ln p}{n}})\leq c^2\frac{B_{T+1}}{\|\xi_1\|_1},
\end{equation}
}
when $\iota$ is a sufficiently small positive constant such that  $\mO((1+c_{B_1}+\frac{1}{c_{B_2}})\sqrt{\frac{s\ln p}{n}})\leq c^2$. Combine (\ref{eq:thm1eq17}) and (\ref{eq:thm1eq18}), we have $\delta_{T+1}\leq \delta_{T+1}^{upper}(c,k)$ for a sufficiently small positive  constant $\iota$.

\subsection{Proof of Lemma \ref{lem:mainLemma}}
We know that $t\geq T^*$ if  
\begin{align*}
&(L_0-L_{\infty})(1-c_{B_2}\eta\sqrt{\frac{s\ln p}{n}})^{t}+L_{\infty}\geq \|\xi_1\|_1+\frac{4c_{B_2}(1+c_{B_1})}{\nu}\sqrt{\frac{s^2\ln p}{n}}.
\end{align*}
Since $\|\xi_1\|_1\leq \sqrt{s}$ and $L_0 = (1+c_{B_1})\sqrt{s}$,  $L_{\infty}>0$, the above requirement is satisfied if
\begin{align}
\label{eq:mainLemma1}
&c_{B_1}\sqrt{s}(1-c_{B_2}\eta\sqrt{\frac{s\ln p}{n}})^{t}\geq \sqrt{s}(1-(1-c_{B_2}\eta\sqrt{\frac{s\ln p}{n}})^t)+\frac{4c_{B_2}(1+c_{B_1})}{\nu}\sqrt{\frac{s^2\ln p}{n}}\notag\\
\Leftarrow& c_{B_1}(1-tc_{B_2}\eta\sqrt{\frac{s\ln p}{n}})\geq tc_{B_2}\eta\sqrt{\frac{s^2\ln p}{n}}+\frac{c_{B_2}(1+c_{B_1})}{\nu}\sqrt{\frac{s^2\ln p}{n}}\notag\\
\Leftrightarrow& t\leq \frac{c_{B_1}}{2\eta c_{B_2}}\sqrt{\frac{n}{s\ln p}} - \frac{2(1+c_{B_1})}{\eta \nu}.
\end{align}
Hence, we must have $T^* \geq \lfloor \frac{c_{B_1}}{2\eta c_{B_2}}\sqrt{\frac{n}{s\ln p}} - \frac{2(1+c_{B_1})}{\eta \nu} \rfloor$. As a result, when $\frac{s\ln p}{n}\rightarrow \infty$ as $n\rightarrow\infty$, we will have
\begin{equation}
\label{eq:mainLemma2}
\frac{\delta_0(1-\nu\eta)^{T^*}}{\frac{s\ln p}{n}}\rightarrow 0.
\end{equation}
Also, because $B_{T^*}  - B_{T^*+1}\leq L_0\times c_{B_2}\eta\sqrt{\frac{s\ln p}{n}}=(1+c_{B_1})c_{B_2}\eta \sqrt{\frac{s^2\ln p}{n}}$ and $ B_{T^*+1} < \frac{4c_{B_2}(1+c_{B_1})}{\nu}\sqrt{\frac{s^2\ln p}{n}}$, we must have 
\begin{equation}
\label{eq:mainLemma3}
B_{T^*} \leq \left(\frac{4c_{B_2}(1+c_{B_1})}{\nu}+(1+c_{B_1})c_{B_2}\right)\sqrt{\frac{s^2\ln p}{n}}.
\end{equation}
Combine (\ref{eq:mainLemma2}) and (\ref{eq:mainLemma3})  with Theorem \ref{thm:thm1}, we obtain that
\[
\lim_{n\rightarrow\infty} \bP(\delta_{T^*}\leq \psi_2\frac{s\ln p}{n})=1.
\]
for some sufficiently large constant $\psi_2$.
\subsection{Proof of Theorem \ref{thm:thm2}}

\begin{proof}
Combine (\ref{eq:mainLemma2}) and (\ref{eq:mainLemma3}) in Lemma \ref{lem:mainLemma} with (\ref{eq:thm1eq1}) to lower bound $\hat r_{T^*}$ and with (\ref{eq:thm1eq2}) to upper  bound $\delta_{T^*}$ , with probability approaching 1,
\begin{align}
\hat r_{T^*}\geq \hat\rho_1 - \rho_1\mO(\frac{s\ln p}{n}),\; \delta_{T^*}\leq \mO(\frac{s\ln p}{n})\label{eq:thm2eq1},
\end{align}
Hence, for any $t$ with $\underline{f}_{\tau}(\beta_t)\geq \underline{f}_{\tau}(\beta_{T^*})$, we have:
\begin{align}
\label{eq:thm2eq2}
\hat\rho_1 - \hat r_t&\leq \rho_1\mO(\frac{s\ln p}{n})-\tau\sqrt{\frac{\ln p}{n}} \rho_1 (\|\beta_t\|_1+ \frac{c_2 \|\beta_t\|_1^2}{L_0}-\|\beta_{T^*}\|_1-\frac{c_2 \|\beta_{T^*}\|_1^2}{L_0})\notag\\
&=\rho_1\mO(\frac{s\ln p}{n})-\tau\sqrt{\frac{\ln p}{n}} \rho_1 \left[(1+\frac{2c_2\|\xi_1\|_1}{L_0})(\omega_t - \omega_{T^*})+ \frac{c_2(\omega_t^2-\omega_{T^*}^2)}{L_0}\right]
\end{align}
For any $t\geq 0$, write it as $\beta_t= \xi_1+h_t$.  By construction:
\begin{align}
\label{eq:thm2eq3}
&\omega_{t}\leq B_{t}.
\end{align}
By Proposition \ref{prop:prop4}, we also have
\begin{align}
\label{eq:thm2eq4}
&\omega_{t}\geq \|h_{t}\|_1-2\sqrt{2s\delta_{t}}.
\end{align}
Combine (\ref{eq:thm2eq3}), (\ref{eq:thm2eq4}),  we have 
\begin{align}
\label{eq:thm2eq5}
-2\sqrt{2s\delta_{T^*}}\leq \omega_{T^*}\leq B_{T^*}.
\end{align}
Combine (\ref{eq:thm2eq5}) with (\ref{eq:mainLemma3}) and (\ref{eq:thm2eq1}), we can upper bound  $\omega_{T^*}$ and $\omega_{T^*}^2$ as
\[
\omega_{T^*}\leq B_{T^*}, \; \omega_{T^*}^2\leq \max\{B_{T^*}^2, 8s\delta_t\}.
\]
Combine the last display with (\ref{eq:thm2eq1}) and (\ref{eq:mainLemma3}), we can bound $\hat\rho_1 - \hat r_t$ as (in probability),
\begin{align}
\label{eq:thm2eq6}
\hat\rho_1 - \hat r_t&\leq \rho_1\mO(\frac{s\ln p}{n})-\tau\sqrt{\frac{\ln p}{n}} \rho_1 \left[(1+\frac{2c_2\|\xi_1\|_1}{L_0})\omega_t + \frac{c_2\omega_t^2}{L_0}\right],
\end{align}
Combine (\ref{eq:thm2eq6}) with Lemma \ref{lem:coef_bound} (\ref{eq:coef_main2}):
{\small
\begin{align}
\frac{k-1}{k}\frac{\gamma m_{1t}}{\beta_t^\top\Lambda\beta_t}\delta_t\leq \mO(\frac{s\ln p}{n})+\mO\left(\sqrt{\frac{\ln p}{n}}\left(\bomega_t+\frac{\bomega_t^2}{\sqrt{k's}}\right)\right)-\tau\sqrt{\frac{\ln p}{n}}\left[(1+\frac{2c_2\|\xi_1\|_1}{L_0})\omega_t + \frac{c_2\omega_t^2}{L_0}\right],\notag
\end{align}
}
where we have absorbed $k$ into $\mO(.)$ since $(c, k)$ are constants depending only on $(M, \gamma, \eta, c_0)$. Notice that $L_0  = (1+c_{B_1})\sqrt{s}$, and  we can always take $k'$ to be large enough such that 
\[
\mO\left(\sqrt{\frac{\ln p}{n}}\frac{\bomega_t^2}{\sqrt{k's}}\right)\leq   \tau\sqrt{\frac{\ln p}{n}}\frac{c_2\bomega_t^2}{L_0}\leq \tau\sqrt{\frac{\ln p}{n}}\frac{c_2\omega_t^2}{L_0}.
\]
Hence, taking $k'$ to be a sufficiently large constant, we can bound $\delta_t$ as
\begin{align}
\label{eq:thm2eq7}
\frac{k-1}{k}\frac{\gamma m_{1t}}{\beta_t^\top\Lambda\beta_t}\delta_t\leq \mO(\frac{s\ln p}{n})+\tau(1+\frac{2c_2\|\xi_1\|_1}{L_0})\sqrt{\frac{s\ln p}{n}\delta_t}.
\end{align}
We take $\tau$ to be a sufficiently large constant such that
\begin{align}
\label{eq:thm2eq8}
\mO\left(\sqrt{\frac{\ln p}{n}}\bomega_t\right)\leq  \tau\sqrt{\frac{\ln p}{n}}  (1+\frac{2c_2\|\xi_1\|_1}{L_0})\bomega_t.
\end{align}
Now we consider two cases for (\ref{eq:thm2eq7}):
\begin{itemize}
\item When $\omega_t\leq 0$: we have $\omega_t\geq -2\sqrt{2s\delta_t}$ by (\ref{eq:thm2eq4}), $\bomega_t=0$, and $\frac{\|\xi_1\|_1}{L_0}\leq \frac{1}{1+c_{B_1}}$, hence,
\begin{align*}
\delta_t&\leq  \mO(\frac{s\ln p}{n}) +\tau(1+\frac{2c_2}{c_{B_1}+1})\sqrt{\frac{\ln p}{n}\times 8s\delta_t}\Rightarrow \delta_t\leq \mO(\frac{s\ln p}{n}).
\end{align*}
\item When $\omega_t\geq 0$, $\omega_t = \bomega_t\geq 0$. Combine (\ref{eq:thm2eq7}) and (\ref{eq:thm2eq8}), we immediately have $\delta_t\leq  \mO(\frac{s\ln p}{n})$.
\end{itemize}
Combine them together, when $\tau$ is a sufficiently large constant,  we have
$\lim_{n\rightarrow\infty}\bP(\delta_{t^*}\leq \psi_2\frac{s\ln p}{n}) = 1$ 
for a sufficiently large constant $\psi_2$.
\end{proof}
\subsection{Proof of Lemma \ref{lem:init}}
We prove Lemma \ref{lem:init} using arguments for proving Theorem 4.3 in \cite{gao2021sparse} and Lemma 12 from \cite{yuan2013truncated}. To use arguments for  \cite{gao2021sparse} Theorem 4.3, we define several additional notations. We define $V$ as the scaled $\xi$ with $V_j = \frac{\xi_j}{\sqrt{\xi_j^\top\Lambda\xi_j}}$ for $j=1,\ldots, p$. For any given rank $k$, let $V_{[k]}  = (V_1,\ldots, V_{k})$ and $V_{\bar{[k]}} = (V_{k+1},\ldots, V_p)$; let $\Gamma_{[k]} = \diag\{\rho_1,\ldots, \rho_k\}$. Define $\tilde V_{[k]} = V_{[k]}(V_{[k]}^\top\hat \Lambda V_{[k]})^{-\frac{1}{2}}$ and $\tilde \Gamma_{[k]} = (V_{[k]}^\top\hat \Lambda V_{[k]})^{\frac{1}{2}}\Gamma_{[k]}(V_{[k]}^\top\hat \Lambda V_{[k]})^{\frac{1}{2}}$. Define $\|H\|_{\infty,\infty}=\max_{j,\ell}|H_{j,\ell}|$ and  events
\begin{align*}
\mA_4 = &\left\{\|\hat\Sigma - \Sigma\|_{\infty,\infty}+\|\Lambda V_{[k]}\Gamma_{[k]} V_{[k]}^\top \Lambda -\hat\Lambda V_{[k]}\Gamma_{[k]} V_{[k]}^\top\hat\Lambda  \|_{\infty,\infty}\right.\\
&\left. \rho_{k+1}\|\hat\Lambda - \Lambda\|_{\infty,\infty}+\rho_{k+1}\|\Lambda V_{[k]} V_{[k]}^\top \Lambda -\hat\Lambda V_{[k]} V_{[k]}^\top\hat\Lambda  \|_{\infty,\infty}\leq CM\rho_1\sqrt{\frac{\ln p}{n}}\right\},\\
\mA_5 = &\left\{\|\tilde\Gamma_{[k]} - \Gamma_{[k]}\|_F+\rho_{k+1}\|V_{[k]}^\top\hat\Lambda V_{[k]} - \Id\|_F+\|\Lambda^{\frac{1}{2}}(\tilde V_{[k]} - V_{[k]})\|_{F}\leq C\rho_1\sqrt{\frac{k(s+\ln p)}{n}}\right\}.
\end{align*}
According to Proposition \ref{prop:prop6}, $\mA_4\cap \mA_5$ happens with high probability. Proofs to Proposition \ref{prop:prop6} are deferred to Appendix \ref{app:proof_supportings}.
\begin{proposition}
\label{prop:prop6}
Under Assumptions \ref{ass:regularity} and \ref{ass:sparsity}, $\bP(\mA_4\cap \mA_5)\rightarrow 1$ as $n\rightarrow \infty$ for a sufficiently large universal constant $C$.
\end{proposition}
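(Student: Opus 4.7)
The plan is to establish $\mathcal{A}_4$ and $\mathcal{A}_5$ separately via entry-wise Gaussian concentration combined with matrix-perturbation arguments, then union-bound. Most of the work reduces to showing that the random $k \times k$ Gram matrix $M_0 \coloneqq V_{[k]}^\top \hat\Lambda V_{[k]}$ is close to $\Id_k$. Throughout I will use that $V_\ell^\top \Lambda V_\ell = 1$ together with $\lambda^{\min}_\Lambda \geq 1/M$ to conclude $\|V_\ell\|_2 \leq \sqrt{M}$ and $\|\Lambda V_\ell\|_\infty = O(1)$, and that $(\Lambda^{1/2} V_{[k]})^\top (\Lambda^{1/2} V_{[k]}) = \Id_k$, so $\Lambda^{1/2} V_{[k]}$ has orthonormal columns.

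For $\mathcal{A}_4$, I would first apply Bernstein's inequality to each entry of $\hat\Sigma - \Sigma$ (a centered sub-exponential sample average under Assumption \ref{ass:regularity}) and union-bound over $p^2$ entries to obtain $\|\hat\Sigma - \Sigma\|_{\infty,\infty}, \|\hat\Lambda - \Lambda\|_{\infty,\infty} \leq CM\rho_1 \sqrt{\ln p / n}$ with probability at least $1 - p^{-c}$; this already handles the first and third summands since $\rho_{k+1} \leq \rho_1$. For the cross terms, I would decompose
\[
\hat\Lambda V_\ell V_\ell^\top \hat\Lambda - \Lambda V_\ell V_\ell^\top \Lambda = (\hat\Lambda - \Lambda) V_\ell V_\ell^\top \Lambda + \Lambda V_\ell V_\ell^\top (\hat\Lambda - \Lambda) + (\hat\Lambda - \Lambda) V_\ell V_\ell^\top (\hat\Lambda - \Lambda),
\]
and bound each resulting entry: the deterministic factor $[\Lambda V_\ell]_i$ is $O(1)$, while each coordinate $[(\hat\Lambda - \Lambda) V_\ell]_i$ is a centered sub-exponential sample average with parameter $O(\|V_\ell\|_2)$, concentrating uniformly at rate $O(\sqrt{\ln p / n})$ by Bernstein and a union bound over $p$ coordinates. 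Summing over $\ell \leq k$ with weights $\rho_\ell \leq \rho_1$ yields the stated $O(M\rho_1 \sqrt{\ln p/n})$ entrywise bound, and the parallel computation handles the weighted $\Lambda V_{[k]} V_{[k]}^\top \Lambda$ term.

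For $\mathcal{A}_5$, Hanson--Wright applied to each $(j, j')$ entry $V_j^\top (\hat\Lambda - \Lambda) V_{j'}$ gives $O(M\sqrt{\ln p/n})$ concentration, so union-bounding over the $k^2$ entries yields $\|M_0 - \Id_k\|_F \leq CMk \sqrt{\ln p/n}$ with high probability, which directly controls the middle summand $\rho_{k+1}\|M_0 - \Id_k\|_F$. Since $\Lambda^{1/2} V_{[k]}$ has orthonormal columns and $\tilde V_{[k]} - V_{[k]} = V_{[k]}(M_0^{-1/2} - \Id_k)$, I get the identity
\[
\|\Lambda^{1/2}(\tilde V_{[k]} - V_{[k]})\|_F = \|M_0^{-1/2} - \Id_k\|_F.
\]
Standard matrix-perturbation for the maps $H \mapsto H^{\pm 1/2}$ at $\Id_k$ (Taylor expansion valid when $\|M_0 - \Id_k\|_{op}$ lies below some absolute constant) gives $\|M_0^{\pm 1/2} - \Id_k\|_F \leq C\|M_0 - \Id_k\|_F$. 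Finally, for $\|\tilde\Gamma_{[k]} - \Gamma_{[k]}\|_F$, expand
\[
\tilde\Gamma_{[k]} - \Gamma_{[k]} = (M_0^{1/2} - \Id_k)\Gamma_{[k]} + \Gamma_{[k]}(M_0^{1/2} - \Id_k) + (M_0^{1/2} - \Id_k)\Gamma_{[k]}(M_0^{1/2} - \Id_k),
\]
and apply the same perturbation bound together with $\|\Gamma_{[k]}\|_{op} \leq \rho_1$, yielding $\|\tilde\Gamma_{[k]} - \Gamma_{[k]}\|_F = O(\rho_1 k\sqrt{\ln p/n})$. This is dominated by the claimed $C\rho_1\sqrt{k(s + \ln p)/n}$.

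The main obstacle is verifying that $\|M_0 - \Id_k\|_F$ is small enough to justify the matrix-perturbation bounds (in particular so that $M_0^{-1/2}$ is well defined with spectrum near $1$). This is ensured asymptotically by Assumption \ref{ass:sparsity} via $\sqrt{s \ln p / n} \to 0$, which comfortably forces $k\sqrt{\ln p / n} \to 0$ since $k$ is constant. A final union bound over $\mathcal{A}_4$ and $\mathcal{A}_5$ completes the proof.
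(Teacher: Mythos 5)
Your proof is correct in outline and handles $\mA_4$ in essentially the same way as the paper (decompose the cross-term $\hat\Lambda V_{[k]}GV_{[k]}^\top\hat\Lambda - \Lambda V_{[k]}GV_{[k]}^\top\Lambda$ into a linear-in-$(\hat\Lambda-\Lambda)$ part plus a quadratic remainder, bound the random factor $[(\hat\Lambda-\Lambda)V_\ell]_i$ by Bernstein with a union bound over coordinates, and bound the deterministic factor $[\Lambda V_\ell]_i$ by $\sqrt{\Lambda_{ii}}\,\|V_\ell\|_\Lambda\leq 1$). You should spell out that the per-sample summand $X_{ti}\,(X_{t[d]}^\top V_{\ell[d]})$ is a product of two Gaussians whose variances satisfy $\Sigma_{ii}=1$ and $V_{\ell[d]}^\top\Lambda_{[d]}V_{\ell[d]}\leq 1$, so the sub-exponential parameter is $O(1)$ and not something depending on $\lambda_\Sigma^{\max}\asymp M\rho_1$, but that is exactly the normalization the paper uses as well.

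For $\mA_5$ you take a genuinely different route. The paper passes from $V_{[k]}^\top\hat\Lambda V_{[k]}-\Id$ to the operator norm of the $s\times s$ whitened block $\Lambda_{SS}^{-1/2}\hat\Lambda_{SS}\Lambda_{SS}^{-1/2}-\Id$, invokes Proposition D.1 from Ma (2013) block by block to get the rate $\sqrt{(s+\ln p)/n}$, and then multiplies by $\sqrt{k}$ to pass from operator to Frobenius norm, following the chain of inequalities in Gao et al.'s Lemma B.4 for the $\tilde V_{[k]}$ and $\tilde\Gamma_{[k]}$ terms. You instead bound the $k\times k$ matrix $M_0-\Id$ directly entry-by-entry: for each $(j,j')$, the quantity $V_j^\top(\hat\Lambda-\Lambda)V_{j'}=\frac1n\sum_i\sum_d (V_{j[d]}^\top X_{i[d]})(V_{j'[d]}^\top X_{i[d]})-\delta_{jj'}$ is a centered average of quadratic forms whose per-sample $\psi_1$-norm is $O(\sum_d\sigma_{dj}\sigma_{dj'})\leq O(1)$ by Cauchy--Schwarz and $\sum_d\sigma_{dj}^2=1$, so Bernstein/Hanson--Wright gives the entrywise rate $O(\sqrt{\ln p/n})$ and hence $\|M_0-\Id\|_F=O(k\sqrt{\ln p/n})$, which is dominated by the target $\sqrt{k(s+\ln p)/n}$ since $k$ is a constant. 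You also use the cleaner Frobenius identity $\|\Lambda^{1/2}(\tilde V_{[k]}-V_{[k]})\|_F=\|M_0^{-1/2}-\Id\|_F$, which follows from $\Lambda^{1/2}V_{[k]}$ having orthonormal columns, rather than the paper's three-factor operator-norm chain. Both approaches are valid; yours is more elementary (it avoids importing the $s\times s$ spectral bound from Ma 2013) and in fact gives a tighter $s$-free rate for $\|M_0-\Id\|_F$, though this extra sharpness is not needed for the stated conclusion.
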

On the event $\mA_4\cap \mA_5$ and when  $\frac{s^2\ln p}{n}\rightarrow 0$ and $\tau =CM\rho_1\sqrt{\frac{\ln p}{n}}$ for a sufficiently large universal constant $C$, following exactly the same arguments for Theorem 4.3 in \cite{gao2021sparse}, we have
\[
\|V_{[k]}V_{[k]}^\top - \hat P \|_F\leq  \|\tilde V_{[k]} \tilde V_{[k]}^\top - \hat P \|_F+ \|\tilde V_{[k]} \tilde V_{[k]}^\top - V_{[k]}V_{[k]}^\top  \|_F\leq \mO\left(\frac{ \rho_1}{\rho_{k}-\rho_{k+1}}s\sqrt{\frac{\ln p}{n}}+\sqrt{\frac{k(s+\ln p)}{n}}\right),
\]
We have left out the proofs here because the arguments are identical except for replacing the tail bound events $\mathcal{B}_3$ and $\mathcal{B}_4$ in \cite{gao2021sparse} with $\mA_4$ and $\mA_5$ in this paper, to account for potentially infinite $\rho_1$. 

When $k = 1$ and $\rho_1 - \rho_2 \geq \gamma \rho_1$, we obtain that,
\begin{equation}
\label{eq:initlem_eq1}
\|V_{1}V_{1}^\top - \hat P \|_F\leq \mO\left(s\sqrt{\frac{\ln p}{n}}\right).
\end{equation}
Apply the the Davis-Kahan Theorem as described in \cite{vu2013minimax}:
\[
\frac{1}{2\sqrt{2}}\|\xi_1 - \hat\beta\|_2\leq \frac{\|V_1V_1^\top - \hat P\|_F}{V_1^\top V_1}=\xi_1^\top\Lambda\xi_1 \|V_1V_1^\top - \hat P\|_F\leq \mO\left(s\sqrt{\frac{\ln p}{n}}\right).
\]
Hence, we have
\begin{equation}
\label{eq:initlem_eq2}
\delta(\hat\beta) = 1-|\hat\beta^\top\xi_1|=\frac{1}{2}\|\xi_1-\hat\beta\|_2^2 \leq \mO(\frac{s^2\ln p}{n}).
\end{equation}
As a last step, we turn to the truncated version of $\hat\beta$ and evaluate its quality using Lemma 12 from \cite{yuan2013truncated}.
\begin{proposition}[Lemma 12 from \cite{yuan2013truncated}]
\label{prop:other1}
For any unit vector $\beta$, let $F$ be the set of indices $j$ with the $ks$ largest  $|\beta_j|$ and let $\tilde\beta$ be the truncated version of $\beta$ with entries outside of  $F$ zeroed out. Then,
\[
|\tilde\beta^\top\xi_1|\geq |\beta^\top\xi_1| -(\frac{1}{k})^{\frac{1}{2}}\min\{\sqrt{1-(\beta^\top\xi_1)^2}, (1+(\frac{1}{k})^{\frac{1}{2}})(1-(\beta^\top\xi_1)^2)\}.
\]
\end{proposition}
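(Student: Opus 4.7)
The plan is to exploit the $s$-sparsity of $\xi_1$ to localize the truncation loss to a small index set, then to deploy two Cauchy--Schwarz arguments of different flavors that produce the two terms inside the $\min$. Let $S=\supp(\xi_1)$ with $|S|\leq s$, and without loss of generality assume $a\coloneqq \beta^\top\xi_1\geq 0$ (flipping the sign of $\tilde\beta$ otherwise). Since $\tilde\beta$ vanishes off $F$ and $\xi_1$ vanishes off $S$, the truncation gap collapses to
\[
\beta^\top\xi_1-\tilde\beta^\top\xi_1 \;=\; \sum_{j\in F^c\cap S}\beta_j\,(\xi_1)_j,
\]
a sum over at most $s$ indices; the $1/\sqrt{k}$ factor will come from the fact that $F$ was chosen $k$ times larger than $|S|$.

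For the first ($\sqrt{1-a^2}$-type) bound, apply Cauchy--Schwarz and use $\|(\xi_1)_{F^c\cap S}\|_2\leq 1$ to reduce the problem to bounding $\|\beta_{F^c\cap S}\|_2$. Set $f\coloneqq \min_{j\in F}|\beta_j|$, so every entry of $\beta_{F^c}$ has magnitude $\leq f$ and $\|\beta_{F^c\cap S}\|_2^2\leq s f^2$. By pigeonhole, $|F\cap S^c|\geq |F|-|S|=(k-1)s$, and all these entries sit off $\supp(\xi_1)$. Writing $\beta=a\xi_1+\sqrt{1-a^2}\gamma$ with $\gamma\perp\xi_1$ and $\|\gamma\|_2=1$, we get $\|\beta_{S^c}\|_2^2=(1-a^2)\|\gamma_{S^c}\|_2^2\leq 1-a^2$, whence $(k-1)s\,f^2\leq 1-a^2$ and $\|\beta_{F^c\cap S}\|_2^2\leq (1-a^2)/(k-1)$, yielding the first bound up to the exact $k$ vs.\ $k-1$ constant.

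For the second, sharper $(1-a^2)$-type bound---relevant when $a$ is close to $1$---I substitute the orthogonal decomposition of $\beta$ directly into the gap:
\[
\beta^\top\xi_1-\tilde\beta^\top\xi_1 \;=\; a\,\|(\xi_1)_{F^c\cap S}\|_2^2 \;+\; \sqrt{1-a^2}\,\langle \gamma_{F^c\cap S},(\xi_1)_{F^c\cap S}\rangle.
\]
Cauchy--Schwarz on the second summand plus $\|\gamma_{F^c\cap S}\|_2\leq 1$ leaves everything hinging on an upper bound for $\|(\xi_1)_{F^c\cap S}\|_2$. For each $j\in F^c\cap S$, I combine $|\beta_j|\leq f$ with the reverse triangle $|\beta_j|\geq a|(\xi_1)_j|-\sqrt{1-a^2}|\gamma_j|$ to obtain $a|(\xi_1)_j|\leq f+\sqrt{1-a^2}|\gamma_j|$; squaring, summing over $F^c\cap S$, and reusing $sf^2\leq (1-a^2)/(k-1)$ gives $a^2\|(\xi_1)_{F^c\cap S}\|_2^2=O((1-a^2)/k)$, which on re-assembly delivers the advertised $(1+1/\sqrt{k})(1-a^2)$ bound.

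The main obstacle is this last propagation step: converting smallness of $|\beta_j|$ on $F^c\cap S$ into smallness of $|(\xi_1)_j|$ there genuinely requires $a$ near $1$ so that the $\gamma$-perturbation can be absorbed---this is precisely why the second term in the $\min$ is meaningful only in the high-correlation regime. The first bound is essentially a pigeonhole/Cauchy--Schwarz sandwich, and the second is the same argument iterated once to control $\xi_1$ rather than $\beta$ on the missed support. Since the exact statement is Lemma~12 of \cite{yuan2013truncated}, a formal write-up would simply invoke that reference, but the sketch above is the shortest self-contained route.
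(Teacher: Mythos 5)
The paper does not actually prove this proposition: it is quoted verbatim, with a bracketed citation to Lemma~12 of \cite{yuan2013truncated}, and used as a black box, so there is no in-paper argument to compare against. Your first-bound derivation (Cauchy--Schwarz on the truncated coordinates plus pigeonhole through the threshold $f=\min_{j\in F}|\beta_j|$) is the right argument. The $1/\sqrt{k-1}$ slack you flag is avoidable: writing $m=|F\cap S|$ one has $|F^c\cap S|\leq s-m$ and $|F\cap S^c|\geq ks-m$, so $\|\beta_{F^c\cap S}\|_2^2\leq |F^c\cap S|f^2\leq \frac{(s-m)(1-a^2)}{ks-m}\leq\frac{1-a^2}{k}$ because $\frac{s-m}{ks-m}$ is decreasing in $m$; Cauchy--Schwarz with $\|(\xi_1)_{F^c\cap S}\|_2\leq 1$ then gives exactly $\sqrt{(1-a^2)/k}$.

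Your second-bound sketch has a genuine gap. Squaring $a|(\xi_1)_j|\leq f+\sqrt{1-a^2}|\gamma_j|$ and summing over $j\in F^c\cap S$ yields only $a^2\|(\xi_1)_{F^c\cap S}\|_2^2\leq\bigl(\sqrt{|F^c\cap S|}f+\sqrt{1-a^2}\|\gamma_{F^c\cap S}\|_2\bigr)^2$, whose dominant piece is $(1-a^2)\|\gamma_{F^c\cap S}\|_2^2$. Since $\|\gamma_{F^c\cap S}\|_2$ can be as large as one, this is of order $1-a^2$, not the $O((1-a^2)/k)$ you assert. Even setting that aside, re-assembling $a\|(\xi_1)_{F^c\cap S}\|_2^2+\sqrt{1-a^2}\langle\gamma_{F^c\cap S},(\xi_1)_{F^c\cap S}\rangle$ from a bound on $a^2\|(\xi_1)_{F^c\cap S}\|_2^2$ requires dividing by $a$, so the chain of estimates in your sketch produces a bound proportional to $(1-a^2)/a$ --- a spurious factor of $1/a$ that is absent from the proposition and unbounded as $a\to 0$. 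You remark that the second term of the $\min$ is ``meaningful only in the high-correlation regime,'' but the proposition asserts both bounds for every $a$, and for moderate or large $k$ the second term is the binding one already at moderately small $a$. At minimum one must exploit that $\gamma_{F^c\cap S}$ and $\gamma_{S^c}$ are disjoint pieces of the unit vector $\gamma$, so $\|\gamma_{F^c\cap S}\|_2^2+\|\gamma_{S^c}\|_2^2\leq 1$, together with the refined pigeonhole bound $\|\beta_{F^c\cap S}\|_2^2\leq (1-a^2)\|\gamma_{S^c}\|_2^2/k$, whose right-hand side shrinks precisely when $\|\gamma_{F^c\cap S}\|_2$ grows; your sketch never uses this coupling, and without it the $(1-a^2)$-type bound is not established.
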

Since $1-(\hat\beta^\top\xi_1)^2 = 1-(1-\delta(\hat\beta))^2 \leq 2\delta(\hat\beta)$, and $|\beta_0^\top\xi_1|\geq |\tilde\beta^\top\xi_1|$, by Proposition \ref{prop:other1}, on the event that $\delta(\hat\beta)=\mO(\frac{s^2\ln p}{n})$, we have
\[
\delta_0 \leq 1-|\tilde\beta^\top\xi_1| \leq \delta(\hat\beta)+4(\frac{1}{k})^{\frac{1}{2}}\delta(\hat\beta)=\mO(\frac{s^2\ln p}{n}).
\]
On the other hand, we have 
\[
\|\beta_0\|_1\leq \|\xi_1\|_1+\|\beta_0-\xi_1\|_1\leq \|\xi_1\|_1+\sqrt{(k+1)s}\|\beta_0-\xi_1\|_2=\|\xi_1\|_1+\sqrt{2(k+1)s\delta_0}.
\] 
This indicates that
\[
\frac{\|\beta_0\|_1-\sqrt{s}}{\sqrt{s}}\leq\frac{\|\beta_0\|_1-\|\xi_1\|_1}{\sqrt{s}}\leq \sqrt{2(k+1)\delta_0}=\mO(\sqrt{\frac{s^2\ln p}{n}})\rightarrow 0.
\]
Consequently, when $n$ is sufficiently large, with probability approaching 1, we have
\[
\delta_0 \leq \psi_1,\;\|\beta_0\|_1\leq (1+c_{B_1})\sqrt{s},
\]
for any positive constants $\psi_1$ and $c_{B_1}$.

\subsection{Proof of Proposition \ref{prop:proximal_l1_update0}}
Let $j_* = \lceil L_{t+1}^2\rceil$.   We consider the two cases separately.
\subsubsection*{When $|\theta|_{(1)} > |\theta|_{(j_*)}$:} We are equivalently consideirng the min max  problem of its Lagrangian:
\[
 \min_{\beta}\max_{\nu_1\in \real, \nu_2\geq 0}\mathcal{L}(\beta, \nu_1, \nu_2) = \min_{\beta}\max_{\nu_1\in \real, \nu_2\geq 0}\left(\|\beta-\theta\|_2^2+\nu_1(\|\beta\|_2^2-1)+\nu_2(\|\beta\|_1-L_{t+1})\right).
\]
As a result, if we can find $\nu_1^*\in \real$ and $\nu_2^*\geq 0$, such that its  associated minimizer $\beta^*$ for $\mathcal{L}(\beta, \nu_1^*, \nu_2^*)$   satisfies $\|\beta^*\|_2^2 = 1$, $\nu_{2}^*(\|\beta^*\|_1  - L_{t+1}) = 0$, we must have
\[
\|\beta^*-\theta\|_2^2=\min_{\beta}\mathcal{L}(\beta, \nu_1^*, \nu_2^*)\leq \min_{\beta}\max_{\nu_1\in \real, \nu_2\geq 0}L(\beta, \nu_1, \nu_2)= \min_{\|\beta\|_2=1, \|\beta\|_1\leq L_{t+1}}\|\beta-\theta\|_2^2.
\]
As a result, $\beta^*$ is a minimizer to the original problem.  Set $\zeta(c) = \frac{\|[|\theta|-c]_+\|_1}{\|[|\theta|-c]_+\|_2}$. When $|\theta|_{(1)} > |\theta|_{(j_*)}$, we have $\zeta(|\theta|_{(j_*)})\leq \sqrt{\lceil L_{t+1}^2\rceil-1}\leq L_{t+1}$. 
\begin{enumerate}
\item  If $\zeta(0) \leq  L_{t+1}$, we can simply let $\nu_2^* = 0$ and $\nu_1^* = \|\theta\|_2-1$, and $\beta^* = \frac{\theta}{\|\theta\|_2}$.
\item If $\zeta(0) >  L_{t+1}$,  let $c$ be the smallest positive value such that $\zeta(c)\leq L_{t+1}$.  Then,  $\zeta(c) = L_{t+1}$ by the continuity of $\zeta(c)$ from Proposition \ref{prop:proximal_l1_update}. Let $\nu_2^* = 2c$, $\nu_1^* = \|[|\theta| - c]_+\|_2-1$, we have $\beta^* = \frac{\sign(\theta)\cdot [|\theta| - c]_+}{\|[|\theta| - c]_+\|_2}$ and the optimal conditions are satisfied.
\end{enumerate}
We hence proved part (1) of Proposition \ref{prop:proximal_l1_update0}.
\subsubsection*{When $|\theta|_{(1)} = |\theta|_{(j_*)}$:} 
Since the proximal problem can be equivalently expressed as $\max_{\|\beta\|_2^2\leq 1, \|\beta\|_1\leq L_{t+1}} \beta^\top\theta$, and 
\[
\max_{\|\beta\|_2^2\leq 1, \|\beta\|_1\leq L_{t+1}} \beta^\top\theta \leq |\theta|_{(1)}\max_{\|\beta\|_2^2\leq 1, \|\beta\|_1\leq L_{t+1}}\|\beta\|_1 \leq L_{t+1}c_0.
 \]
 Let $F = \{j: |\theta_j|\geq |\theta|_{(1)}\}$ be the index subset that contains all entries in $|\theta|$ with the largest magnitude, and let $\bar{F}$ be the complement of $F$. Then, $\beta_{t+1} = \sign(\theta)\cdot \tilde\beta$  achieves this optimal objective, with any $\tilde\beta$ such that $\tilde\beta\geq 0$, $\tilde\beta_{\bar{F}} = 0$ and $\|\tilde\beta_{F}\|_2^2=1$, $\|\tilde\beta_{F}\|_1 = L_{t+1}$. 

\subsection{Proof of Proposition \ref{prop:proximal_l1_update}}
We set $\zeta(c) = \frac{\|[|\theta|-c]_+\|_1}{\|[|\theta|-c]_+\|_2}$. When $\max_j |\theta_j| > \min_j |\theta_j|$ and $0\leq c < \max_j|\theta_j|$, the denominator of $\zeta(c)$ is positive. Without loss of generality, $u_{j} = |\theta_j|$ is arranged from large to small, e.g., $u_j = |\theta|_{(j)}$.  We can write out the sub-gradient of $\zeta(c)$ with respect to $c$. For any $u_j\geq 0$, we know that
\begin{equation}
\label{prox2_eq1}
\frac{\partial [u_j - c]_+}{\partial c} = \left\{\begin{array}{lll} -1 &\mbox{if }& u_j > c,\\
0&\mbox{if }& u_j < c,\\
q_j\in [-1, 0]&\mbox{if }& u_j =c.
\end{array}\right.
\end{equation}
Let $n_c^+$ be number of $u_j$ with $u_j > c$. Based on (\ref{prox2_eq1}), we can calculate the sub-gradient for $\zeta(c)$,
\begin{align}
\label{prox2_eq2}
\frac{\partial \zeta(c)}{\partial c} &= \frac{\|[u-c]_+\|_2\frac{\partial \sum_{j=1}^p [u_j-c]_+}{\partial c}-\frac{\sum_{j=1}^p [u_j-c]_+\frac{\partial [u_j-c]_+}{\partial c}}{\|[u-c]_+\|_2}\|[u-c]_+\|_1}{\|[u-c]_+\|_2^2}\notag\\
&= \frac{-\|[u-c]_+\|_2^2 n_{c}^++\|[u-c]_+\|_2^2 \sum_{j:u_j = c}q_j+\|[u-c]_+\|_1\sum_{j:u_j > c}[u_j-c]}{\|[u-c]_+\|_2^3}.
\end{align}
Hence, when $ c < u_1$, $\frac{\partial \delta(c)}{\partial c}$ exists and is bounded, hence, $\zeta(c)$ is continuous in $c$. On the other hand, since $q_j\leq 0$, we have
\begin{align}
\label{prox2_eq3}
\frac{\partial \zeta(c)}{\partial c} &\leq  \frac{-\|[u-c]_+\|_2^2 n_{c}^++(\sum_{j:u_j > c}[u_j-c])^2}{\|[u-c]_+\|_2^3}.
\end{align}
By the Cauchy–Schwarz inequality, we know
\begin{align}
\|[u-c]_+\|_2^2 n_{c}^+&=(\sum_{j:u_j > c} (u_j - c)^2)(\sum_{j: u_j > c} 1)\geq  (\sum_{j:u_j > c} (u_j - c))^2. \label{prox2_eq4}
\end{align}
Combine (\ref{prox2_eq3}) and (\ref{prox2_eq4}), we obtain that $\frac{\partial \zeta(c)}{\partial c} \leq  0$ and $\zeta(c)$ is non-increasing in $c$.
\subsection{Proof of Proposition \ref{prop:deflate1}}
\label{app:proof_deflate1}
We prove the statement by induction. Let $\hat\beta_1,\ldots,\hat\beta_p$ be solutions to (\ref{eq:deflate1}) for all $p$ directions. Set $\tilde \Sigma_{k} = \frac{\tilde X_{k}^\top\tilde X_{k}}{n}$. Suppose that for all $k\leq K$:

\noindent (a)  (\ref{eq:deflate1}) and (\ref{eq:deflate2}) are equivalent.

\noindent (b) $\tilde\Sigma_k \hat\beta_j= 0$ for all $j < k$.

\noindent (c) $\tilde\Sigma_k\hat\beta_j = \hat\Sigma \hat\beta_j$ for all $j \geq  k$.

Then,  we show that  relationships (a) - (c) hold  for  $k = K+1$. From Proposition \ref{prop:deflate2}, we have
\begin{equation}
\label{eq:prop_deflate1_eq1}
\tilde\Sigma_{K+1}\hat\beta_j = \tilde\Sigma_{K}\hat\beta_j -\tilde\Sigma_{K}\hat\beta_{K} \frac{\hat\beta_{K}^\top\tilde\Sigma_{K}\hat\beta_j}{\hat\beta_K^\top\tilde\Sigma_{K}\hat\beta_K}, \; \mbox{for all } j = 1,\ldots, p.
\end{equation}
\begin{itemize}
\item Combine (\ref{eq:prop_deflate1_eq1}) with relationship (b), we obtain $\hat\tilde\Sigma_{K+1}\beta_j = 0$ for all $j< K$. From (\ref{eq:prop_deflate1_eq1}), we also have
\[
\tilde\Sigma_{K+1} \hat\beta_K = \tilde\Sigma_{K}\hat\beta_K - \tilde\Sigma_{K}\hat\beta_{K} \frac{\hat\beta_{K}^\top\tilde\Sigma_{K}\hat\beta_K}{\hat\beta_K^\top\tilde\Sigma_{K}\hat\beta_K} = 0.
\]
 Hence, relationship (b) holds for $k = K+1$.
\item By relationship  (c) and the fact that $\hat\beta_{j_1}^\top \hat\Sigma\hat\beta_{j_2} = \hat\rho_{j_1}\1_{j_1=j_2}$ for all $1\leq j_1, j_2\leq p$,  we obtain that for all $j \geq K+1$:
\begin{align}
&\tilde\Sigma_K\hat\beta_j = \hat\Sigma\hat\beta_j,\;\tilde\Sigma_{K}\hat\beta_{K} \frac{\hat\beta_{K}^\top\tilde\Sigma_{K}\hat\beta_j}{\hat\beta_K^\top\tilde\Sigma_{K}\hat\beta_K}=\tilde\Sigma_{K}\hat\beta_{K} \frac{\hat\beta_{K}^\top\hat\Sigma\hat\beta_j}{\hat\beta_K^\top\tilde\Sigma_{K}\hat\beta_K} = 0 . \label{eq:prop_deflate1_eq2}
\end{align}
Thus, we have $\tilde\Sigma_{K+1}\hat\beta_j = \hat\Sigma\hat\beta_j$ for all $j\geq K+1$ and relationship (c) holds for $k= K+1$.
\item  From (\ref{eq:prop_deflate1_eq2}), we immediately obtain that $ \hat\beta_{j_1}^\top\tilde\Sigma_{K+1} \hat\beta_{j_2}= \hat\beta_{j_1}^\top\hat\Sigma \hat\beta_{j_2}=\hat\rho_{j_1}\1_{j_1=j_2}$ for all $j_1, j_2\geq K+1$. Hence, the generalized eigenvalue and eigenvector pairs for $\tilde\Sigma_{K+1}$ is $(\tilde\rho_j, \tilde\beta_j)$ (unordered) where $\tilde\beta_j = \hat\beta_j$ and $\tilde\rho_j = 0$ for $j\leq K$ and $\tilde \rho_j = \hat\rho_j$ for $j\geq K+1$. The leading eigenvector pair is $(\hat\rho_{K+1}, \hat\beta_{K+1})$ and the relationship (a) holds. 
\end{itemize}
By induction, relationships (a)-(c) hold for all $k\leq p$, and we have proved our statement.

\subsection{Proof of Proposition \ref{prop:deflate2}}
\label{app:proof_deflate2}
We prove the statement by induction. Suppose that for all $k\leq K$, we have $\frac{\tilde X_k^\top\tilde X_k}{n} = \tilde\Sigma_k$ (holds obviously for $K = 1$). Then, we show that  $\frac{\tilde X_{K+1}^\top\tilde X_{K+1}}{n} = \tilde\Sigma_{K+1}$:
\begin{align*}
\frac{\tilde X_{K+1}^\top\tilde X_{K+1}}{n} &= \frac{\tilde X_{K}^\top(\Id - \frac{\tilde Z_k \tilde Z_k^\top}{\|\tilde Z_j\|_2^2})(\Id - \frac{\tilde Z_k \tilde Z_k^\top}{\|\tilde Z_j\|_2^2})\tilde X_{K}}{n} \\
& = \tilde \Sigma_K - 2\frac{\tilde X_K^\top \tilde X_K\hat\beta_K\hat\beta_K^\top \tilde X_K^\top\tilde X_K}{n\hat\beta_K^\top\tilde X_K^\top \tilde X_K\hat\beta_K}+\frac{\tilde X_K^\top \tilde X_K\hat\beta_K\hat\beta_K^\top \tilde X_K^\top  \tilde X_K\hat\beta_K\hat\beta_K^\top \tilde X_K^\top \tilde X_K}{n(\hat\beta_K^\top\tilde X_K^\top \tilde X_K\hat\beta_K)^2}\\
& = \tilde \Sigma_K - \frac{\tilde\Sigma_K\hat\beta_K\hat\beta_K^\top \tilde\Sigma_K}{\hat\beta_K^\top\tilde\Sigma_K\hat\beta_K} = \tilde\Sigma_{K+1}.
\end{align*}
Hence, we have $\frac{\tilde X_k^\top\tilde X_k}{n}  =\tilde\Sigma_k$ for all $1\leq k\leq p$.
\section{Proofs of Technical Lemmas}
\label{app:proof_technical}
\subsection{Proof of Lemma \ref{lem:coef_bound}}
\begin{proof}[Proof of Lemma \ref{lem:coef_bound}]
In this proof, we drop the subscript $t$ in the decomposition $\beta_t = \sum_{j=1}^p\alpha_{jt}\xi_j$ and denote $\alpha_{jt}$ as $\alpha_j$ for convenience. Define    $O_H= \frac{\sum_{j\geq 2}\alpha_j^2\mu_j(\rho_1 - \rho_j)}{\beta_t^\top  H \beta_t}$  for $H\in \{\Lambda, \Sigma\}$.  By Lemma \ref{lem:basic_bound2} (\ref{basic2main2}):
{\small
\begin{align}
T_{\Sigma}\coloneqq&|\frac{\hat\rho_1 - \hat r_t}{\hat r_t}-O_{\Sigma}|= |\frac{\hat\rho_1\beta_t^\top\hat\Lambda\beta_t - \beta_t^\top\hat\Sigma\beta_t}{\beta_t^\top\hat\Sigma\beta_t}-O_{\Sigma}|=|\frac{\hat\rho_1 F_{\Lambda} - F_{\Sigma}+\hat\rho_1 W_{\Lambda}-W_{\Sigma}}{\beta_t^\top\hat\Sigma\beta_t}-O_{\Sigma}|.\label{eq:coef_bound1}\\
T_{\Lambda}\coloneqq&|(\hat\rho_1 - \hat r_t)-O_{\Lambda}|= |\frac{\hat\rho_1\beta_t^\top\hat\Lambda\beta_t - \beta_t^\top\hat\Sigma\beta_t}{\beta_t^\top\hat\Lambda\beta_t}-O_{\Lambda}|=|\frac{\hat\rho_1 F_{\Lambda} - F_{\Sigma}+\hat\rho_1 W_{\Lambda}-W_{\Sigma}}{\beta_t^\top\hat\Lambda\beta_t}-O_{\Lambda}|.\label{eq:coef_bound2}
\end{align}
}
where $|W_{\Lambda}|\leq CMW_0$, $|W_{\Sigma}|\leq CM\rho_1W_0$ and $W_0 = \sqrt{\frac{s\ln p}{n}}(\sqrt{\delta_t}+\sqrt{k'}\delta_t+\frac{\bomega_t}{\sqrt{s}}+\frac{\bomega_t^2}{\sqrt{k'}s})$. For $H\in \{\Lambda,\Sigma\}$, we can upper bound $T_{H}$:
\begin{align}
\label{eq:coef_bound3}
T_H& =|\frac{\sum_{j\geq 2}\alpha_j^2\mu_j(\hat\rho_1 - \rho_j)+\hat\rho_1 W_{\Lambda}-W_{\Sigma}}{\beta_t^\top\hat H\beta_t} - O_{H}|\leq W_{1}+W_{2}+W_{3},
\end{align}
where $W_{1}=\frac{(\hat\rho_1+\rho_1)CMW_0}{\beta_t^\top\hat H\beta_t}$, $W_{2} = \frac{\sum_{j\geq 2}\alpha_j^2\mu_j|\hat\rho_1 - \rho_1|}{\beta_t^\top\hat H\beta_t}$ and $W_{3} = |\frac{1}{\beta_t^\top\hat H\beta_t} -\frac{1}{\beta_t^\top H\beta_t} |\sum_{j\geq 2}\alpha_j^2\mu_j(\rho_1-\rho_j)$.  By Proposition \ref{prop:prop3}, we know that
\begin{equation}
\label{eq:coef_extra1}
\beta_t^\top \Sigma\beta_t=\rho_1 m_{2t}\geq \rho_1 \beta_t^\top\Lambda\beta_t(1-\frac{c^2}{2}),\;\mbox{when } \delta_t\leq \frac{c^2}{8M^2}
\end{equation}
From (\ref{eq:coef_extra1}), we have
\begin{align}
\beta_t^\top H\beta_t - C\lambda_{H}^{\max}(1+c_{B_1})^2\sqrt{\frac{s\ln p}{n}}&\leq \beta_t^\top H\beta_t\left[1-C\frac{\lambda_{H}^{\max}}{\beta_t^\top H\beta_t}(1+c_{B_1})^2\sqrt{\frac{s\ln p}{n}}\right]\notag\\
\leq& \left\{\begin{array}{ll} \beta_t^\top H\beta_t(1-CM^2(1+c_{B_1})^2\sqrt{\frac{s\ln p}{n}}),&\mbox{if} \;H = \Lambda,\notag\\
\beta_t^\top H\beta_t(1-\frac{CM^2}{1-\frac{c^2}{2}}(1+c_{B_1})^2\sqrt{\frac{s\ln p}{n}}),&\mbox{if} \;H = \Sigma, \;\delta_t\leq \frac{c^2}{8M^2}
\end{array}\right.\notag\\
\overset{(a_1)}{\leq}&\frac{1}{2}\beta_t^\top H\beta_t,\; \mbox{if} \;H = \Lambda\;\mbox{or}\; H = \Sigma, \;\delta_t\leq \frac{c^2}{8M^2}.\label{eq:coef_bound3_1}
\end{align}
Step $(a_1)$ holds when $\sqrt{\frac{s\ln p}{n}}\leq \frac{(1-\frac{c^2}{2})}{2CM^2}$.  Combine the last display with  Lemma \ref{lem:basic_bound3} (\ref{basic3main1}) -(\ref{basic3main2}), we obtain that
{\small
\begin{align}
W_{1}&\leq \frac{\rho_1(2+\frac{C}{M}\sqrt{\frac{\ln p}{n}})CMW_0}{\beta_t^\top H \beta_t - C\lambda_{H}^{\max}(1+c_{B_1})^2\sqrt{\frac{s\ln p}{n}}}=\frac{\rho_1}{\beta_t^\top H\beta_t}\mO(\sqrt{\frac{s\ln p}{n}}(\sqrt{\delta_t}+\sqrt{k'}\delta_t+\frac{\bomega_t}{\sqrt{s}}+\frac{\bomega_t^2}{\sqrt{k'}s})),\label{eq:coef_bound4}\\
W_{2}&\leq  \frac{\frac{C}{M}\rho_1\sqrt{\frac{\ln p}{n}}}{\beta_t^\top H\beta_t-C\lambda_{H}^{\max}(1+c_{B_1})^2\sqrt{\frac{s\ln p}{n}}}m_{1t}\delta_t\overset{(a_2)}{<} \frac{\gamma\rho_1m_{1t}}{6k(\beta_t^\top H\beta_t)}\delta_t,\label{eq:coef_bound5}\\
W_{3}&\leq\frac{ C\lambda_{H}^{\max}(1+c_{B_1})^2\sqrt{\frac{s\ln p}{n}}}{\beta_t^\top H\beta_t(\beta_t^\top H\beta_t- C\lambda_{H}^{\max}(1+c_{B_1})^2\sqrt{\frac{s\ln p}{n}})} \rho_1 m_{1t}\delta_t\overset{(a_3)}{<} \frac{\gamma\rho_1m_{1t}}{6k(\beta_t^\top H\beta_t)} \delta_t.\label{eq:coef_bound6}
\end{align}
}
We have required $\sqrt{\frac{s\ln p}{n}} \leq \frac{\gamma }{15kCM^2(1+c_{B_1})^2}$ at steps $(a_2), (a_3)$, and used (\ref{eq:coef_bound3_1}) and the facts that $m_{1t}\geq \frac{1}{M}$ from Proposition \ref{prop:prop2}.

Since $\sqrt{\frac{s\ln p}{n}\delta_t}\leq \frac{1}{4b}\delta_t+b\frac{s\ln p}{n}$ for any positive value $b$,  we can take $b = \mO(1)$ to be a sufficiently large constant   such that
\[
\mO(\sqrt{\frac{s\ln p}{n}}\delta_t)\leq \frac{\gamma m_{1t}}{3k}\delta_t + \mO(\frac{ks\ln p}{n}).
\]
Hence, we can upper bound $W_1$ as 
{\small
\begin{align}
\label{eq:coef_bound7}
W_1 &\leq \frac{\rho_1}{\beta_t^\top H\beta_t}\left[\left(\mO(\sqrt{\frac{k's\ln p}{n}})+\frac{\gamma m_{1t}\rho_1}{3k}\right)\delta_t+ \mO(\frac{ks\ln p}{n}+\sqrt{\frac{s\ln p}{n}}\left(\frac{\bomega_t}{\sqrt{s}}+\frac{\bomega_t^2}{\sqrt{k'}s}\right)\right].
\end{align}
}
Plug the upper bounds of $W_1$, $W_2$, $W_3$ from (\ref{eq:coef_bound7}),  (\ref{eq:coef_bound5}), (\ref{eq:coef_bound6}) into   (\ref{eq:coef_bound3}), we obtain that
\begin{align}
\label{eq:coef_bound8}
T_H &\leq \frac{\rho_1}{\beta_t^\top H\beta_t}\left[\left( \frac{2\gamma m_{1t}}{3k}+\mO\left(\sqrt{\frac{k's\ln p}{n}}\right)\right)\delta_t+\mO\left(\frac{ks\ln p}{n}+\sqrt{\frac{s\ln p}{n}}\left(\frac{\bomega_t}{\sqrt{s}}+\frac{\bomega_t^2}{\sqrt{k'}s}\right)\right)\right]\notag\\
&\overset{(a_4)}{\leq}   \frac{\rho_1}{\beta_t^\top H\beta_t}\left[ \frac{\gamma m_{1t}}{k}\delta_t+\mO\left(\frac{ks\ln p}{n}+\sqrt{\frac{s\ln p}{n}}\left(\frac{\bomega_t}{\sqrt{s}}+\frac{\bomega_t^2}{\sqrt{k'}s}\right)\right)\right].
\end{align}
Step $(a_4)$ holds when $\iota$ is sufficiently small such that  $\mO\left(\sqrt{\frac{k's\ln p}{n}}\right)\leq \frac{\gamma }{3kM}\leq \frac{\gamma m_{1t}}{3k}$ for all $t$. (recall that $m_{1t}\geq \frac{1}{M}$.)

We  bound $O_H$ using  $\gamma\rho_1\leq \rho_1 - \rho_j\leq \rho_1$ for $j\geq 2$:
\begin{equation}
\label{eq:coef_bound9}
\frac{\gamma\rho_1 m_{1t}\delta_t}{\beta_t^\top H\beta_t}\leq O_{H}\leq \frac{\rho_1 m_{1t}\delta_t}{\beta_t^\top H\beta_t}.
\end{equation}
Combine (\ref{eq:coef_bound8}) and (\ref{eq:coef_bound9}), we will reach the desired bounds: 
\begin{itemize}
\item  When $H = \Lambda$, we get an upper bound on $\frac{[\hat\rho_1 -\hat r_t]_+}{\rho_1}$ combining (\ref{eq:coef_bound8}) with the upper bound in  (\ref{eq:coef_bound8}) and  the non-negativity of $\delta_t$;   we get a lower bound on $\frac{\hat\rho_1 -\hat r_t}{\rho_1}$ using   (\ref{eq:coef_bound8}) and the lower bound in (\ref{eq:coef_bound9}):
\begin{align*}
&\frac{[\hat\rho_1 - \hat r_t]_+}{\rho_1}\leq  \frac{k+1}{k}\frac{ m_{1t}\delta_t}{\beta_t^\top\Lambda\beta_t}+\mO\left(\frac{ks\ln p}{n}+\sqrt{\frac{s\ln p}{n}}\left(\frac{\bomega_t}{\sqrt{s}}+\frac{\bomega_t^2}{\sqrt{k'}s}\right)\right),\\
&\frac{\hat\rho_1 - \hat r_t}{\rho_1}\geq  \frac{k-1}{k}\frac{\gamma m_{1t}\delta_t}{\beta_t^\top\Lambda\beta_t}-\mO\left(\frac{ks\ln p}{n}+\sqrt{\frac{s\ln p}{n}}\left(\frac{\bomega_t}{\sqrt{s}}+\frac{\bomega_t^2}{\sqrt{k'}s}\right)\right).
\end{align*}
We have proved (\ref{eq:coef_main1}) and (\ref{eq:coef_main2}).
\item When $H = \Sigma$ and $\delta_t\leq \frac{c^2}{8M^2}$, we have $\beta_t^\top\Sigma\beta_t = m_{2t}\rho_1$ and $m_{2t}\geq \frac{(1-\frac{c}{2})^2}{\mu_1}\geq \frac{1-c}{M}$ by Proposition \ref{prop:prop3}. We get an upper bound on $\frac{[\hat\rho_1 -\hat r_t]_+}{\hat r_t}$ combining (\ref{eq:coef_bound8}) with the upper bound in  (\ref{eq:coef_bound8}) and  the non-negativity of $\delta_t$;   we get a lower bound on $\frac{\hat\rho_1 -\hat r_t}{\hat r_t}$ using   (\ref{eq:coef_bound8}) and the lower bound in (\ref{eq:coef_bound9}):
\begin{align*}
&\frac{[\hat\rho_1 - \hat r_t]_+}{\hat r_t}\leq  \frac{k+1}{k}\frac{ m_{1t}\delta_t}{m_{2t}}+\mO\left(\frac{ks\ln p}{n}+\sqrt{\frac{s\ln p}{n}}\left(\frac{\bomega_t}{\sqrt{s}}+\frac{\bomega_t^2}{\sqrt{k'}s}\right)\right),\\
&\frac{\hat\rho_1 - \hat r_t}{\hat r_t}\geq  \frac{k-1}{k}\frac{\gamma m_{1t}\delta_t}{m_{2t}}-\mO\left(\frac{ks\ln p}{n}+\sqrt{\frac{s\ln p}{n}}\left(\frac{\bomega_t}{\sqrt{s}}+\frac{\bomega_t^2}{\sqrt{k'}s}\right)\right).
\end{align*}
We have proved (\ref{eq:coef_main3}) and (\ref{eq:coef_main4}). 
\end{itemize}
\end{proof}

\subsection{Proof of Lemma \ref{lem:proximal}}
\begin{proof}
We lower bound the proximal objective improvement for the optimal solution by finding a feasible solution $\tilde\beta$ that leads to sufficient improvement. We consider the following construction $\tilde\beta = (1-\epsilon)\beta_t+\epsilon_1\xi_1$. Both $\epsilon$ and $\epsilon_1$ are non-negative and $\epsilon_1$ is a function of $\epsilon$ such that $\|\tilde\beta\|_2^2=1$ for any given $\epsilon$. For a given $\epsilon$, we can upper and lower bound $\epsilon_1$ using polynomials of $\epsilon$ and $\delta_t$ according to Proposition \ref{prop:prop5}:
\begin{proposition}
\label{prop:prop5}
For any unit vector $\beta$ with $\delta(\beta) = 1-\xi_1^\top\beta$ and $\xi_1^\top\beta\geq 0$. Define $\tilde\beta = (1-\epsilon)\beta+\epsilon_1\xi_1$ for some $\epsilon, \epsilon_1\in [0,1]$ chosen such that $\|\tilde\beta\|_2^2=1$. Then,
\[
-\frac{\epsilon(1-\epsilon)}{2}\delta(\beta)^2\leq \epsilon_1 - \epsilon -\epsilon(1-\epsilon)\delta(\beta)\leq \epsilon\delta(\beta)^2+3\epsilon^2\delta(\beta)^2+\frac{1}{2}\delta(\beta)^3.
\]
\end{proposition}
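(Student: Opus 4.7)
The plan is to convert the unit-norm constraint into an explicit polynomial equation for the quantity of interest and then read off both bounds directly. Expanding $\|\tilde\beta\|_2^2 = 1$ using $\|\beta\|_2 = \|\xi_1\|_2 = 1$ and $\xi_1^\top\beta = 1 - \delta(\beta)$ gives
\[
\epsilon_1^2 + 2(1-\epsilon)(1-\delta(\beta))\epsilon_1 - \epsilon(2-\epsilon) = 0.
\]
I would then make two successive substitutions: first $r = \epsilon_1 - \epsilon$, and then $s = r - \epsilon(1-\epsilon)\delta(\beta)$, which is the quantity the proposition bounds. Using the identity $\epsilon + (1-\epsilon)(1-\delta(\beta)) = 1 - (1-\epsilon)\delta(\beta)$ at the first stage and a parallel cancellation at the second, these substitutions telescope the constraint to the clean form
\[
s^2 + 2Bs = C, \qquad B := 1 - (1-\epsilon)^2\delta(\beta), \qquad C := \epsilon(2-\epsilon)(1-\epsilon)^2\delta(\beta)^2,
\]
whose unique root consistent with $\epsilon_1 \in [0,1]$ is $s = -B + \sqrt{B^2 + C}$.

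The lower bound is essentially free: since $\epsilon,\delta(\beta) \in [0,1]$ force $B \geq 0$ and $C \geq 0$, we immediately obtain $s \geq 0 \geq -\tfrac{\epsilon(1-\epsilon)}{2}\delta(\beta)^2$. For the upper bound, I would apply the elementary estimate $\sqrt{B^2+C} \leq B + C/(2B)$ (valid when $B > 0$; the degenerate $B = 0$ case forces $C = 0$ and hence $s = 0$, so the bound holds trivially), yielding
\[
s \leq \frac{C}{2B} = \frac{\epsilon(2-\epsilon)(1-\epsilon)^2 \delta(\beta)^2}{2\bigl(1 - (1-\epsilon)^2\delta(\beta)\bigr)}.
\]

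It then remains to verify $C/(2B) \leq \epsilon\delta(\beta)^2 + 3\epsilon^2\delta(\beta)^2 + \tfrac{1}{2}\delta(\beta)^3$. Clearing denominators and dividing through by $\delta(\beta)^2$ reduces this to the concave (downward-opening) quadratic inequality $P(\epsilon,\delta(\beta)) := a(\epsilon) + b(\epsilon)\delta(\beta) - (1-\epsilon)^2\delta(\beta)^2 \geq 0$ with $a(\epsilon) = \epsilon^2(11 - 4\epsilon + \epsilon^2)$ and $b(\epsilon) = 1 - 2(1-\epsilon)^2\epsilon(1+3\epsilon)$. By concavity in $\delta(\beta)$, it suffices to check non-negativity at the two endpoints $\delta(\beta) \in \{0,1\}$: at $\delta(\beta) = 0$ we get $a(\epsilon) \geq 0$, which is immediate, and at $\delta(\beta) = 1$ the value simplifies after expansion to $\epsilon^2(8 + 6\epsilon - 5\epsilon^2) \geq 0$, which holds since $8 + 6\epsilon - 5\epsilon^2 \geq 3$ throughout $[0,1]$. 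The main obstacle is precisely this last piece of polynomial bookkeeping, but it is entirely mechanical once the two telescoping substitutions above have been made.
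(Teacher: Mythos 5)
Your proof is correct, and it takes a genuinely different route from the paper's. The paper decomposes $\beta = (1-\delta)\xi_1 + \sqrt{1-(1-\delta)^2}\,v$ with $v\perp\xi_1$, extracts the explicit square-root formula $\epsilon_1 = \sqrt{1-[1-(1-\delta)^2](1-\epsilon)^2} - (1-\delta)(1-\epsilon)$, proves the lower bound via a concavity-in-$\epsilon$ argument on $\zeta(\epsilon)=\epsilon_1-\epsilon-\epsilon(1-\epsilon)\delta+\tfrac{\epsilon(1-\epsilon)}{2}\delta^2$ (with $\zeta(0)=\zeta(1)=0$), and proves the upper bound by the Taylor-type estimate $\sqrt{1-z}\leq 1-\tfrac{z}{2}-\tfrac{z^2}{8}$ followed by algebraic bookkeeping. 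You instead expand $\|\tilde\beta\|_2^2=1$ directly into a quadratic in $\epsilon_1$ and carry out two shift substitutions so that the quantity of interest $s=\epsilon_1-\epsilon-\epsilon(1-\epsilon)\delta$ satisfies $s^2+2Bs=C$ with $B=1-(1-\epsilon)^2\delta\geq 0$ and $C=\epsilon(2-\epsilon)(1-\epsilon)^2\delta^2\geq 0$; I verified both shift computations and they are correct. This buys you a sharper lower bound essentially for free: you get $s=-B+\sqrt{B^2+C}\geq 0$, which is stronger than the stated $-\tfrac{\epsilon(1-\epsilon)}{2}\delta^2$, without invoking a concavity-in-$\epsilon$ argument at all. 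Your upper bound uses the elementary concavity estimate $\sqrt{B^2+C}\leq B+C/(2B)$ and reduces to a polynomial inequality $P(\epsilon,\delta)\geq 0$ that you dispatch via concavity in $\delta$ and endpoint evaluation; I checked that $P(\epsilon,0)=\epsilon^2(11-4\epsilon+\epsilon^2)$ and $P(\epsilon,1)=\epsilon^2(8+6\epsilon-5\epsilon^2)$ both simplify as you claim and are manifestly nonnegative. One small omission worth making explicit when you write this up: you should observe that the root $s=-B-\sqrt{B^2+C}$ is ruled out because it forces $\epsilon_1\leq 0$, with equality only in the degenerate case $\epsilon=0$; and that the positive root always gives $\epsilon_1\leq 1$ (squaring $\sqrt{1-(1-\epsilon)^2[1-(1-\delta)^2]}\leq 1+(1-\epsilon)(1-\delta)$ reduces to $-(1-\epsilon)^2\leq 2(1-\epsilon)(1-\delta)$, which is trivially true). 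With that added, the argument is complete and, in my view, cleaner than the paper's.
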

The corresponding proximal improvement is measured by $\Delta_t(\tilde\beta) = \tilde\beta^\top \theta- \beta_t^\top\theta$ with $\theta$ defined in (\ref{eq:theta}). Notice that $\beta_t^\top(\hat\Sigma - \hat r_t\hat\Lambda)\beta_t = 0$ by definition, thus,
\begin{equation}
\label{eq:proximal_eq1}
\Delta_t(\tilde\beta) = -\epsilon+\epsilon_1(1-\delta_t)+\epsilon_1\frac{\eta}{\hat r_t}\underbrace{\xi_1^\top(\hat\Sigma-\hat r_t\hat\Lambda)\beta_t}_{I_1}.
\end{equation}
Drop the subscript $t$ in the decomposition of $\beta_t$ and let $\beta_t = \sum_{j}\alpha_j \xi_j$ for convenience. Then,
\begin{equation}
\label{eq:proximal_eq2}
I_1 = \alpha_1\xi_1^\top(\hat\Sigma - \hat r_t\hat\Lambda)\xi_1+\underbrace{\xi_1^\top(\hat\Sigma - \hat r_t \hat\Lambda)(\beta_t - \alpha_1\xi_1)}_{I_2}=\underbrace{\alpha_1(\hat\rho_1 - \hat r_t)\hat\mu_1}_{I_3}+I_2.
\end{equation}
By  Proposition \ref{prop:prop2}, we know 
\begin{equation}
\label{eq:propDelta_extra1}
\frac{2}{M}(1-\frac{c^2}{16M^2})\leq m_{1t}\leq 2M,\;\mbox{when } \delta_t\leq \frac{c^2}{8M}.
\end{equation}
Combine (\ref{eq:propDelta_extra1}) with Lemma \ref{lem:coef_bound} (\ref{eq:coef_main1}) and Lemma \ref{lem:basic_bound3},  we can lower bound $\hat r_t$ as 
{\small
\begin{align}
\label{eq:proximal_eq3}
\hat r_t&\geq \rho_1\left(1-\mO(\sqrt{\frac{\ln p}{n}})\right)-\rho_1\frac{k+1}{k}\frac{m_{1t}}{\beta_t^\top\Lambda\beta_t}\delta_t-\rho_1\mO\left(\frac{ks\ln p}{n}+\sqrt{\frac{s\ln p}{n}}\left(\frac{\bomega_t}{\sqrt{s}}+\frac{\bomega_t^2}{s}\right)\right)\notag\\
&\overset{(a_1)}{\geq} \rho_1\left(1-\frac{k+1}{k}\frac{m_{1t}}{\beta_t^\top\Lambda\beta_t}\delta_t-\mO\left(\sqrt{\frac{s\ln p}{n}}(1+c_{B_1})^2+\frac{ks\ln p}{n}\right)\right)\overset{(a_2)}{\geq} \frac{1}{2}\rho_1\Rightarrow \frac{\rho_1}{\hat r_t}\leq 2.
\end{align}
}
At step $(a_1)$, we have upper bounded $\bomega_t$ as $\bomega_t\leq B_0 \leq (1+c_{B_1})\sqrt{s}$. At step $(a_2)$, we have upper bounded $\delta_t$ by $\frac{c^2}{8M^2}$ and $m_{1t}$ by $2M$ from (\ref{eq:propDelta_extra1}) and $\frac{k+1}{k}$ by $\frac{3}{2}$. Then, $(a_2)$ holds when 
\[
\mO\left(\sqrt{\frac{s\ln p}{n}}(1+c_{B_1})^2+\frac{ks\ln p}{n}\right)\leq \frac{3}{8}.
\]
We next upper bound $|I_2|$.  Combine (\ref{eq:proximal_eq3}) with Lemma \ref{lem:basic_bound2} (\ref{basic2main1}), we have
\begin{equation}
\label{eq:proximal_eq4}
|I_2|\leq \hat r_t\mO\left(\sqrt{\frac{s\ln p}{n}}(\sqrt{\delta_t}+\frac{\bomega_t}{\sqrt{s}})\right).
\end{equation}
By Lemma \ref{lem:coef_bound} (\ref{eq:coef_main4}), we can lower bound $I_3$ as 
\begin{equation}
\label{eq:proximal_eq5}
I_3\geq \alpha_1\frac{k-1}{k}\frac{\gamma m_{1t}}{m_{2t}}\hat\mu_1\hat r_t\delta_t-\alpha_1\hat r_t\hat\mu_1\mO\left(\frac{ks\ln p}{n}+\sqrt{\frac{s\ln p}{n}}\left(\frac{\bomega_t}{\sqrt{s}}+\frac{\bomega_t^2}{s}\right)\right).
\end{equation}
Plug in the bounds on $I_2$, $I_3$ from (\ref{eq:proximal_eq4}) and (\ref{eq:proximal_eq5}) into the expression of $I_1$ in (\ref{eq:proximal_eq2}):
\begin{align}
\label{eq:proximal_eq6}
I_1&\geq \hat r_t\left[\alpha_1\hat\mu_1\frac{k-1}{k}\frac{\gamma m_{1t}}{m_{2t}}\delta_t-\mO(\sqrt{\frac{s\ln p}{n}\delta_t})-\mO\left((\alpha_1\hat\mu_1+1)\left(\frac{ks\ln p}{n}+\sqrt{\frac{s\ln p}{n}}\left(\frac{\bomega_t}{\sqrt{s}}+\frac{\bomega_t^2}{s}\right)\right)\right)\right]\notag\\
& \overset{(a_3)}{=} \hat r_t\left[(\frac{k-1}{k})^2(1-c)^2m_{1t}\gamma \delta_t-\mO\left(\frac{ks\ln p}{n}+\sqrt{\frac{s\ln p}{n}}\left(\frac{\bomega_t}{\sqrt{s}}+\frac{\bomega_t^2}{s}\right)\right)\right].
\end{align}
At step $(a_3)$, we have used the following results:
\begin{itemize}
\item By Proposition \ref{prop:prop2} (\ref{eq:prop2eq3}):
\begin{equation}
\label{eq:proximal_prop2a}
|\alpha_1-1|\leq \sqrt{2M^2\delta_t}\leq \frac{c}{2}.
\end{equation}
\item By Proposition \ref{prop:prop3}, we know $m_{2t}\geq \mu_1(1-\frac{c}{2})^2$; On $\mA_2$, we have  $|\hat\mu_1 - \mu_1|=\mO(\sqrt{\frac{\ln p}{n}})$. Thus,
\[
\hat\mu_1 = \mO(1), \; \frac{\hat\mu_1}{m_{2t}}\geq \frac{\mu_1-\mO(\sqrt{\frac{\ln p}{n}})}{\mu_1(1+\frac{c}{2})^2}\overset{(a_4)}{\geq} (1-\frac{c}{2})^2.
\]
Step $(a_4)$ holds when $\mO(\sqrt{\frac{\ln p}{n}})\leq \mu_1(1-(1-\frac{c}{2})^2(1+\frac{c}{2})^2)$.
\item Combine the previous two statements, we have
\[
\alpha_1\hat\mu_1\frac{k-1}{k}\frac{\gamma m_{1t}}{m_{2t}}\delta_t\geq \frac{k-1}{k}(1-\frac{c}{2})^3\geq \frac{k-1}{k}(1-c)^2\gamma m_{1t}\delta_t, \;(\alpha_1\hat\mu_1+1) = \mO(1).
\]
\item At the last step, we bound $\mO(\sqrt{\frac{s\ln p}{n}\delta_t})$. Since $\mO(\sqrt{\frac{s\ln p}{n}\delta_t})\leq \mO(b\frac{s\ln p}{n}+\frac{1}{4b}\delta_t)$ for any positive value $b$, we can take $b = \mO(\frac{k}{\gamma m_{1t}})$ to be a sufficiently large positive constant  such that $\mO(\sqrt{\frac{s\ln p}{n}\delta_t})\leq \mO(\frac{ks\ln p}{n})+\frac{1}{k}(\frac{k-1}{k})(1-c)^2\gamma m_{1t}\delta_t$. Hence, at step $(a_3)$ follows and  we have
\[
I_1 \geq \hat r_t\left[(\frac{k-1}{k})^2(1-c)^2m_{1t}\gamma \delta_t-\mO\left(\frac{ks\ln p}{n}+\sqrt{\frac{s\ln p}{n}}\left(\frac{\bomega_t}{\sqrt{s}}+\frac{\bomega_t^2}{s}\right)\right)\right].
\]
\end{itemize}
Combine (\ref{eq:proximal_eq6}) and (\ref{eq:proximal_eq1}), we obtain that
{\footnotesize
\begin{align}
\label{eq:proximal_eq7}
\Delta_t(\tilde\beta)\geq \underbrace{-\epsilon+\epsilon_1(1-\delta_t+\eta(\frac{k-1}{k})^2(1-c)^2\gamma m_{1t}\delta_t)}_{I_4}-\underbrace{\epsilon_1\eta \mO\left(\frac{ks\ln p}{n}+\sqrt{\frac{s\ln p}{n}}\left(\frac{\bomega_t}{\sqrt{s}}+\frac{\bomega_t^2}{s}\right)\right)}_{I_5}.
\end{align}
}
We set $R_1 = \eta(\frac{k-1}{k})^2(1-c)^2\gamma m_{1t}$.  Replace $\epsilon_1$ in $I_4$ by its lower bound in  Proposition \ref{prop:prop5}:
\begin{align}
\label{eq:proximal_eq8}
I_4&\geq -\epsilon+(\epsilon+\epsilon(1-\epsilon)\delta_t-\frac{\epsilon(1-\epsilon)}{2}\delta_t^2)(1-\delta_t+R_1\delta_t)\geq -\epsilon^2\delta_t+R_1\epsilon\delta_t-\frac{3\epsilon}{2}\delta_t^2.
\end{align}
Replace $\epsilon_1$ in $I_5$ by its upper bound in  Proposition \ref{prop:prop5}:
\begin{align}
\label{eq:proximal_eq9}
I_5&\leq \epsilon\eta(1+\frac{\delta_t^3}{\epsilon}) \mO\left(\frac{ks\ln p}{n}+\sqrt{\frac{s\ln p}{n}}\left(\frac{\bomega_t}{\sqrt{s}}+\frac{\bomega_t^2}{s}\right)\right).
\end{align}
We set $\epsilon = \frac{R_1}{2} = \frac{\eta\gamma (\frac{k-1}{k})^2(1-c)^2m_{1t}}{2}$, which is guaranteed to be less than $\frac{1}{8}$ as $\eta < \frac{1}{2M(M+3)}$ and $m_{1t}\leq 2M$ by (\ref{eq:propDelta_extra1}). Apply (\ref{eq:propDelta_extra1})  again:
\[
\delta_t\leq \delta_t^{upper}(c,k)\leq \frac{c\gamma\eta}{M}(\frac{k-1}{k})^2(1-c)^2\leq \frac{c\gamma\eta m_{1t}}{2(1-\frac{c^2}{16M^2})}(\frac{k-1}{k})^2(1-c)^2\leq \frac{c}{1-\frac{c^2}{16}}\epsilon\leq \frac{64c}{63}\epsilon.
\] 
Plug the upper bound of $\delta_t$ into $I_4$ and $I_5$ and plug in the expression for $\epsilon$, we obtain the following relationships:
{\small
\begin{align}
&I_4\geq  \epsilon^2\delta_t - \frac{3\epsilon}{2}\times \frac{64c\epsilon}{63}\delta_t\geq (1-\frac{96}{63}c)\epsilon^2\delta_t\geq (1-c)^2\epsilon^2\delta_t=\frac{\eta^2\gamma^2m_{1t}^2}{4}(\frac{k-1}{k})^4(1-c)^6\delta_t.\label{eq:proximal_eq10}\\
&I_5\leq \eta^2\mO\left(\frac{ks\ln p}{n}+\sqrt{\frac{s\ln p}{n}}\left(\frac{\bomega_t}{\sqrt{s}}+\frac{\bomega_t^2}{s}\right)\right)\label{eq:proximal_eq11}.
\end{align}
}
Combine (\ref{eq:proximal_eq10}) and (\ref{eq:proximal_eq11}) with (\ref{eq:proximal_eq7}), we acquire 
\begin{equation}
\label{eq:proximal_eq12}
\Delta_t(\tilde\beta)\geq \frac{\eta^2\gamma^2m_{1t}^2}{4}(\frac{k-1}{k})^4(1-c)^6\delta_t-\eta^2\mO\left(\frac{ks\ln p}{n}+\sqrt{\frac{s\ln p}{n}}\left(\frac{\bomega_t}{\sqrt{s}}+\frac{\bomega_t^2}{s}\right)\right).
\end{equation}
If we can show that $\tilde\beta$ is indeed a feasible solution, we can use $\Delta_t(\tilde\beta)$ as a lower bound of $\Delta_{t,t+1}$ and conclude Lemma \ref{lem:proximal}. We now show the feasibility of $\tilde\beta$.

$\tilde\beta$ is a feasible when $\|\tilde\beta\|_1\leq \|\xi_1\|_1+B_{t+1}$.  This is guaranteed if the inequality holds replacing $\epsilon_1$ and  $\|\beta_{t}\|_1\leq \|\xi_1\|_1+B_{t}$ by their upper bounds:
\begin{align}
&(1-\epsilon)(\|\xi_1\|_1+B_{t})+\epsilon_1\|\xi_1\|_1\leq \|\xi_1\|_1+B_{t+1}\notag\\
\Leftarrow&(1-\epsilon)(\|\xi_1\|_1+B_{t})+(\epsilon+\epsilon(1-\epsilon)\delta_t +\epsilon\delta_t^2+3\epsilon^2\delta_t^2+\frac{1}{2}\delta_t^3)\|\xi_1\|_1\leq \|\xi_1\|_1+B_{t+1}\notag\\
\Leftarrow& \epsilon\delta_t \|\xi\|_1+4\epsilon\delta_t^2\|\xi_1\|_1+\frac{1}{2}\delta_t^3\|\xi\|_1+(B_t - B_{t+1})\leq \epsilon B_{t}.\notag
\end{align}
Since $B_{t}\geq \frac{4c_{B_2}(1+c_{B_1})}{\nu}\sqrt{\frac{s^2\ln p}{n}}$ when $t\leq T^*$ and $B_t - B_{t+1}\leq c_{B_2}\eta(1+c_{B_1})\sqrt{\frac{s^2\ln p}{n}}$ by construction, we must have $B_t - B_{t+1}\leq \frac{\nu\eta}{4}B_{t}$. Combine it with the last display, we obtain the $\tilde\beta$ is a feasible solution when
\begin{align}
\label{eq:proximal_eq13}
\epsilon\delta_t \|\xi\|_1+4\epsilon\delta_t^2\|\xi_1\|_1+\frac{1}{2}\delta_t^3\|\xi\|_1\leq (\epsilon-\frac{\nu\eta}{4}) B_{t}.
\end{align}
Notice that $\delta_t\leq\delta_t^{upper}(c,k)$. As a result, we have $\delta_t\leq c\frac{B_t}{\|\xi_1\|_1}$, $\delta_t^2\leq c^3\frac{B_t}{8\|\xi_1\|_1}$ and $\delta_t^3\leq c^4\frac{\gamma B_t\eta}{8M^3\|\xi_1\|_1}$.  Consequently,
\begin{align*}
\epsilon\delta_t\|\xi_1\|_1\leq c \epsilon B_t, \; 4\epsilon\delta_t^2\|\xi_1\|_1\leq \frac{c^3\epsilon}{2},\; \frac{1}{2}\delta_t^3\|\xi\|_1\leq \frac{c^4\gamma\eta}{16M^3}B_t.
\end{align*}
Combine the last display with (\ref{eq:proximal_eq13}), $\tilde\beta$ is a feasible solution as long as 
\begin{align}
\label{eq:proximal_eq14}
\frac{c^4\gamma\eta}{16M^3}+\frac{\nu\eta}{4}\leq (1-c-\frac{c^3}{2})\epsilon.
\end{align}
Recall that $\epsilon  =  \frac{\eta\gamma (\frac{k-1}{k})^2(1-c)^2m_{1t}}{2}\geq \frac{\eta\gamma (\frac{k-1}{k})^2(1-c)^2}{M}(1-\frac{c^2}{8M^2})$ by (\ref{eq:propDelta_extra1}) and $\nu\leq \frac{\gamma^2 (1-c_0)}{M}$, plug then into (\ref{eq:proximal_eq14}), $\tilde\beta$ is a feasible solution as long as 
\begin{align*}
&\frac{c^4\gamma\eta}{16M^3}+\frac{\gamma^2\eta(1-c_0)}{4M}\leq (1-c-\frac{c^3}{2})\frac{\eta\gamma (\frac{k-1}{k})^2(1-c)^2}{M}(1-\frac{c^2}{8M^2})\\
\Leftrightarrow& \frac{c^4}{16M^3}+\frac{\gamma(1-c_0)}{4M}\leq (1-c-\frac{c^3}{2})\frac{(\frac{k-1}{k})^2(1-c)^2}{M}(1-\frac{c^2}{8M^2})\\
\Leftarrow& (\frac{k-1}{k})^2(1-c)^4\geq \frac{1}{3},
\end{align*}
and the last step is guaranteed in (\ref{condI}). Hence, $\tilde\beta$ is a feasible solution. This concludes our proof.
\end{proof}
\subsection{Proof of Lemma \ref{lem:obj_lowerbound}}
\begin{proof}
We aim to show that for a sufficiently large $C$ and large $n$:
\[
\beta_t^\top\hat\Lambda\beta_t\left(\hat r_{t+1} - \hat r_t - \frac{2\hat r_t}{\eta \beta_t^\top\hat\Lambda\beta_t}\Delta_{t,t+1}\right)\geq -\mO\left(\hat r_t\sqrt{\frac{k's\ln p}{n}}(\delta_t+\delta_{t+1}+\frac{\bomega_t^2}{k's})\right)
\] 
Set $h \coloneqq \beta_{t+1} - \beta_t$ and $\tilde g_t\coloneqq (\hat\Sigma - \hat r_t\hat\Lambda)\beta_t$ be the scaled gradient at $\beta_t$, then,
\begin{align}
\beta_t^\top\hat\Lambda\beta_t\left(\hat r_{t+1} - \hat r_t - \frac{2\hat r_t}{\eta \beta_t^\top\hat\Lambda\beta_t}\Delta_{t,t+1}\right)& \overset{(a_1)}{=} \left(\underbrace{\beta_t^\top\hat\Lambda\beta_t(\hat r_{t+1} - \hat r_t )-2\tilde g_t^\top h}_{I_1}+\frac{\hat r_t}{\eta}\|h\|_2^2\right),\label{eq:obj_boundeq1}
\end{align}
At step $(a_1)$, we have re-expressed $\Delta_{t,t+1}$:
{\small
\begin{align*}
2\frac{\hat r_t}{\eta}\Delta_{t,t+1} = 2\frac{\hat r_t}{\eta}\theta^\top (\beta_{t+1}-\beta_t) = 2\frac{\hat r_t}{\eta}(\beta_t+\frac{\eta}{\hat r_t}\tilde g_t)^\top(\beta_{t+1}-\beta_t)=2\tilde g_t^\top h - 2\frac{\hat r_t}{\eta}(1-\beta_{t+1}^\top\beta_t )=2\tilde g_t^\top h -\frac{\hat r_t}{\eta}\|h\|_2^2.
\end{align*}
}
 Notice that
\begin{equation}
\label{eq:obj_boundeq2}
I_1= \frac{1}{\beta_{t+1}^\top\hat\Lambda\beta_{t+1}}\left(\beta_{t+1}^\top\hat\Sigma \beta_{t+1}\times \beta_{t}^\top\hat\Lambda\beta_t - \beta_{t}^\top\hat\Sigma \beta_{t}\times \beta_{t+1}^\top\hat\Lambda\beta_{t+1} -2\beta_{t+1}^\top\hat\Lambda\beta_{t+1}\times \tilde g_{t}^\top h\right).
\end{equation}
For the convenience of notation,  we define $\Lambda_{t,t}\coloneqq \beta_t^\top\hat\Lambda\beta_t$, $\Lambda_{t+1,t+1}\coloneqq \beta_{t+1}^\top\hat\Lambda\beta_{t+1}$,  $\Lambda_{h,t}\coloneqq h^\top\hat\Lambda\beta_{t}$, $\Lambda_{h,t+1}\coloneqq h^\top\hat\Lambda\beta_{t+1}$ and $\Lambda_{h,h}\coloneqq h^\top\hat\Lambda\beta_{h}$. Similarly, we define $\Sigma_{t,t}$, $\Sigma_{t+1,t+1}$, $\Sigma_{h,t}$,$\Sigma_{h,t+1}$ and $\Sigma_{h,h}$ replacing $\hat\Lambda$ by $\hat\Sigma$ in the previous definitions, e.g.,  $\Sigma_{t,t}\coloneqq \beta_t^\top\hat\Sigma\beta_t$. These set of definitions are only used in this proof. Of course, these are not independent definitions and can be re-expressed using each others:
\begin{align*}
&\Lambda_{t,t} = \beta_{t+1}^\top\hat\Lambda\beta_{t+1}+h^\top\hat\Lambda h-2h^\top\hat\Lambda \beta_{t+1} =  \Lambda_{t+1,t+1}+\Lambda_{h,h}-2\Lambda_{h,t+1},\\
&\Sigma_{t,t}=  \beta_{t+1}^\top\hat\Sigma\beta_{t+1}+h^\top\hat\Sigma h-2h^\top\hat\Sigma \beta_{t+1} =  \Sigma_{t+1,t+1}+\Sigma_{h,h}-2\Lambda_{h,t+1}.
\end{align*}
Similarly,
\begin{align*}
&\Lambda_{t+1,t+1}=  \Lambda_{t,t}+\Lambda_{h,h}+2\Lambda_{h,t},\;\Sigma_{t+1,t+1}  =    \Sigma_{t,t}+\Sigma_{h,h}+2\Lambda_{h,t},\\ 
& \Lambda_{t+1,h} = \Lambda_{t,h}+\Lambda_{h,h},\; \Sigma_{t+1,h} = \Sigma_{t,h}+\Sigma_{h,h}. 
\end{align*}
As a result, we have
\begin{align}
\Lambda_{t+1,t+1}I_1 =&\Sigma_{t+1,t+1}\left(\cancel{\Lambda_{t+1,t+1}}+\Lambda_{h,h}-2\Lambda_{t+1,h}\right) - \Lambda_{t+1,t+1}\left(\cancel{\Sigma_{t+1,t+1}}+\Sigma_{h,h}-\cancel{2\Sigma_{t+1,h}}\right)\notag\\
& -2\Lambda_{t+1,t+1}(\cancel{\Sigma_{t+1,h}}-\Sigma_{h,h}-\hat r_t(\Lambda_{t+1,h}-\Lambda_{h,h}))\notag\\
\overset{(a_2)}{=}&\Lambda_{t+1,t+1}\left[\hat r_{t+1}\Lambda_{h,h}-2\hat r_{t+1}\Lambda_{t+1,h}-\cancel{\Sigma_{h,h}}+\cancel{2}\Sigma_{h,h}+2\hat r_t\Lambda_{t+1,h}-2\hat r_t\Lambda_{h,h}\right]\notag\\
\overset{(a_3)}{\geq}&\Lambda_{t+1,t+1}\left[\hat r_{t+1}\Lambda_{h,h}-2\hat r_{t+1}\Lambda_{t+1,h}+2\hat r_t\Lambda_{t+1,h}-2\hat r_t\Lambda_{h,h}\right]\notag,
\end{align}
where we have replaced $\Sigma_{t+1,t+1}$ by $\hat r_{t+1}\Lambda_{t+1,t+1}$ at step $(a_2)$ and drop a term $\Lambda_{t+1, t+1}\Sigma_{h, h}\geq 0$ at step $(a_3)$. Rearrange the remaining terms, we obtain that 
\begin{align}
 \label{eq:obj_boundeq7}
\frac{I_1}{\hat r_t} \geq &-2\Lambda_{h,h}-2\frac{\hat r_{t+1}-\hat r_t}{\hat r_t}\Lambda_{t+1,h}.
\end{align}
By definition:
\begin{align}
|\frac{\hat r_t - \hat r_{t+1}}{\hat r_t}|=&|\frac{\Sigma_{t,t}\Lambda_{t+1,t+1}-\Sigma_{t+1, t+1}\Lambda_{t,t}}{\hat r_t\Lambda_{t+1,t+1}\Lambda_{t,t}}|\notag\\
=&|\frac{\Lambda_{t+1,t+1}(\cancel{\Sigma_{t+1,t+1}}+\Sigma_{h,h}-2\Sigma_{h,t+1})-\Sigma_{t+1, t+1}(\cancel{\Lambda_{t+1,t+1}}+\Lambda_{h,h}-2\Lambda_{h,t+1})}{\hat r_t\Lambda_{t+1,t+1}\Lambda_{t,t}}|\notag\\
\overset{(a_4)}{=}&|\frac{(\Sigma_{h,h}-\hat r_{t+1}\Lambda_{h,h})-2(\Sigma_{h,t+1}-\hat r_{t+1}\Lambda_{h,t+1})}{\hat r_t\Lambda_{t,t}}|\notag\\
\overset{(a_5)}{\leq}&\frac{\max\{\hat r_{t+1}\Lambda_{hh},\Sigma_{hh}\}}{\hat r_t \Lambda_{tt}}+2(\frac{\hat r_{t+1}}{\hat r_t})^{\frac{1}{2}}\frac{\Sigma_{hh}^{\frac{1}{2}}\Lambda_{t+1,t+1}^{\frac{1}{2}}}{\hat r_t^{\frac{1}{2}}\Lambda_{tt}}+2(\frac{\hat r_{t+1}}{\hat r_t})\frac{\Lambda_{hh}^{\frac{1}{2}}\Lambda_{t+1,t+1}^{\frac{1}{2}}}{\Lambda_{tt}},\label{eq:obj_bound8}
\end{align}
where we have replaced $\Sigma_{t+1,t+1}$ by $\hat r_{t+1}\Lambda_{t+1, t+1}$ at step $(a_4)$, and upper bound $|\Lambda_{h,t+1}|$ by $\Lambda_{h,h}^{\frac{1}{2}}\Lambda_{t+1,t+1}^{\frac{1}{2}}$ and $|\Sigma_{h,t+1}|$ by $(\Sigma_{h,h}\hat r_{t+1}\Lambda_{t+1,t+1})^{\frac{1}{2}}$ at step $(a_5)$. Similarly,  we have
\begin{align}
|\frac{\hat r_t - \hat r_{t+1}}{\hat r_t}|=&|\frac{\Sigma_{t,t}(\cancel{\Lambda_{t,t}}+\Lambda_{h,h}+2\Lambda_{h,t})-\Lambda_{t,t}(\cancel{\Sigma_{t,t}}+\Sigma_{h,h}+2\Sigma_{h,t})}{\hat r_t\Lambda_{t+1,t+1}\Lambda_{t,t}}|\notag\\
\leq & \frac{\max\{\hat r_{t}\Lambda_{hh},\Sigma_{hh}\}}{\hat r_t \Lambda_{t+1,t+1}}+2\frac{\Lambda_{h,h}^{\frac{1}{2}}\Lambda_{t,t}^{\frac{1}{2}}}{\Lambda_{t+1,t+1}}+2\frac{\Sigma_{h,h}^{\frac{1}{2}}\Lambda_{t,t}^{\frac{1}{2}}}{\hat r_t^{\frac{1}{2}}\Lambda_{t+1,t+1}}. \label{eq:obj_bound9}
\end{align}
By (\ref{eq:obj_bound8}), (\ref{eq:obj_bound9}), $|\Lambda_{t+1,h}|\leq \Lambda_{t+1,t+1}^{\frac{1}{2}}\Lambda_{h,h}^{\frac{1}{2}}$ , and set $R_1 = \max\{\frac{\rho_1}{\hat r_t}, \frac{\hat r_{t+1}}{\hat r_{t}},1\}$, we obtain
{\small
\begin{align}
&|\frac{r_t - r_{t+1}}{r_t}\Lambda_{t+1,h}|\leq\left\{\begin{array}{l}  R_1\left[2\frac{\Lambda_{t+1,t+1}}{\Lambda_{tt}}(\Lambda_{hh}+(\frac{\Sigma_{hh}}{\rho_1})^{\frac{1}{2}}\Lambda_{hh}^{\frac{1}{2}})+\frac{\max\{\Lambda_{hh}^{\frac{3}{2}},\frac{\Sigma_{hh}}{\rho_1}\Lambda_{hh}^{\frac{1}{2}}\}}{\Lambda_{tt}^{\frac{1}{2}}}(\frac{\Lambda_{t+1,t+1}}{\Lambda_{tt}})^{\frac{1}{2}}\right],\\
R_1\left[2(\frac{\Lambda_{t,t}}{\Lambda_{t+1,t+1}})^{\frac{1}{2}}(\Lambda_{hh}+(\frac{\Sigma_{hh}}{\rho_1})^{\frac{1}{2}}\Lambda_{hh}^{\frac{1}{2}})+\frac{\max\{\Lambda_{hh}^{\frac{3}{2}},\frac{\Sigma_{hh}}{\rho_1}\Lambda_{hh}^{\frac{1}{2}}\}}{\Lambda_{t+1,t+1}^{\frac{1}{2}}}\right].
\end{array}\right. \label{eq:obj_bound10}
\end{align}
}
When $\Lambda_{t+1,t+1}\geq \Lambda_{t,t}$, we use the second bound from (\ref{eq:obj_bound10}):
\[
|\frac{r_t - r_{t+1}}{r_t}\Lambda_{t+1,h}|\leq R_1\left[2(\Lambda_{hh}+(\frac{\Sigma_{hh}}{\rho_1})^{\frac{1}{2}}\Lambda_{hh}^{\frac{1}{2}})+\frac{\max\{\Lambda_{hh}^{\frac{3}{2}},\frac{\Sigma_{hh}}{\rho_1}\Lambda_{hh}^{\frac{1}{2}}\}}{\Lambda_{t,t}^{\frac{1}{2}}}\right].
\]
When $\Lambda_{t+1,t+1}< \Lambda_{t,t}$, we use the the first bound from (\ref{eq:obj_bound10}):
\[
|\frac{r_t - r_{t+1}}{r_t}\Lambda_{t+1,h}|\leq R_1\left[2(\Lambda_{hh}+(\frac{\Sigma_{hh}}{\rho_1})^{\frac{1}{2}}\Lambda_{hh}^{\frac{1}{2}})+\frac{\max\{\Lambda_{hh}^{\frac{3}{2}},\frac{\Sigma_{hh}}{\rho_1}\Lambda_{hh}^{\frac{1}{2}}\}}{\Lambda_{t,t}^{\frac{1}{2}}}\right].
\]
Consequently, we always have
\begin{align}
|\frac{r_t - r_{t+1}}{r_t}\Lambda_{t+1,h}|\leq R_1\left[2(\Lambda_{hh}+(\frac{\Sigma_{hh}}{\rho_1})^{\frac{1}{2}}\Lambda_{hh}^{\frac{1}{2}})+\frac{\max\{\Lambda_{hh}^{\frac{3}{2}},\frac{\Sigma_{hh}}{\rho_1}\Lambda_{hh}^{\frac{1}{2}}\}}{\Lambda_{t,t}^{\frac{1}{2}}}\right].
 \label{eq:obj_bound11}
\end{align}
Combine (\ref{eq:obj_bound11}) into ( \ref{eq:obj_boundeq7}):
{\small
\begin{align}
\frac{I_1}{\hat r_{t}}&\geq -2\Lambda_{h,h}-R_1\left[2(\Lambda_{hh}+(\frac{\Sigma_{hh}}{\rho_1})^{\frac{1}{2}}\Lambda_{hh}^{\frac{1}{2}})+\frac{\max\{\Lambda_{hh}^{\frac{3}{2}},\frac{\Sigma_{hh}}{\rho_1}\Lambda_{hh}^{\frac{1}{2}}\}}{\Lambda_{t,t}^{\frac{1}{2}}}\right].
 \label{eq:obj_bound12}
\end{align}
}
Let $c'$ be a small positive constant that we will specify later. By Lemma \ref{lem:basic_bound1} (\ref{basic1main1}):
\begin{align}
\Lambda_{h,h} &\leq  M \|h\|_2^2+CM\sqrt{\frac{k's\ln p}{n}}(\|h\|_2^2+\frac{\|h\|_1^2}{k's})\notag\\
&\overset{(a_6)}{\leq} M \|h\|_2^2+CM\sqrt{\frac{k's\ln p}{n}}\|h\|_2^2+I_2\notag\\
&\overset{(a_7)}{\leq} M(1+c')\|h\|_2^2+I_2
\label{eq:obj_bound13},
\end{align}
for some $I_2\leq \mO\left(\sqrt{\frac{s\ln p}{k'n}}\left(\delta_t+\delta_{t+1}+\frac{\bomega_t^2}{s}\right)\right)$. At step $(a_6)$,  we upper bound $\|h\|_1^2$ by Proposition \ref{prop:prop4}:
\begin{align*}
\|h\|_1\leq \|\beta_t-\xi_1\|_1+\|\beta_{t+1}-\xi_1\|_1\leq 2\sqrt{2s\delta_t}+2\sqrt{2s\delta_{t+1}}+2\omega_t\leq 2\sqrt{2s\delta_t}+2\sqrt{2s\delta_{t+1}}+2\bomega_t
\end{align*}
and also $(a+b+c)^2\leq 3(a^2+b^2+c^2)$. Step  holds $(a_7)$ when $\sqrt{\frac{k's\ln p}{n}}\leq \frac{c'}{CM}$. Similarly, we have
\begin{align}
\Sigma_{h,h} &\leq  M\rho_1 \|h\|_2^2+CM\rho_1\sqrt{\frac{k's\ln p}{n}}(\|h\|_2^2+\frac{\|h\|_1^2}{k's})\leq  M\rho_1(1+c')\|h\|_2^2+\rho_1 I_2. \label{eq:obj_bound14}
\end{align}
We now use  (\ref{eq:obj_bound13}) and (\ref{eq:obj_bound14}) to bound $\Lambda_{hh}^{\frac{3}{2}}$, $\Sigma_{hh}\Lambda_{hh}^{\frac{1}{2}}$ and $\Lambda_{hh}^{\frac{1}{2}}\Sigma_{hh}^{\frac{1}{2}}$ , which also appeared  in (\ref{eq:obj_bound12}) to upper bound $I_1$. Consider two cases: (1) when $(1+c')M\|h\|_2^2\geq \frac{1}{c'} I_2$, we have $\Lambda_{hh}\leq M(1+c')^2\|h\|_2^2$ and $\Sigma_{hh}\leq M\rho_1(1+c')^2\|h\|_2^2$; (2) when $(1+c')M\|h\|_2^2< \frac{1}{c'} I_2$, we have $\Lambda_{hh}\leq (1+\frac{1}{c'})I_2$ and $\Sigma_{hh}\leq (1+\frac{1}{c'})\rho_1 I_2$. Combine these two cases together, we have
\begin{align}
&\Lambda_{hh}^{\frac{3}{2}}\leq M^{\frac{3}{2}}(1+c')^3\|h\|_2^3+(1+\frac{1}{c'})^{\frac{3}{2}}I_2^{\frac{3}{2}},\label{eq:obj_bound15}\\
& \Sigma_{hh}\Lambda_{hh}^{\frac{1}{2}}\leq \rho_1M^{\frac{3}{2}}(1+c')^{3}\|h\|_2^3+(1+\frac{1}{c'})^{\frac{3}{2}}\rho_1I_2^{\frac{3}{2}}\label{eq:obj_bound16},\\
&\Sigma_{hh}^{\frac{1}{2}}\Lambda_{hh}^{\frac{1}{2}}\leq \rho_1^{\frac{1}{2}}M(1+c')\|h\|_2^2+\rho_1^{\frac{1}{2}}I_2\label{eq:obj_bound17}.
\end{align}
By Lemma \ref{lem:basic_bound3} (\ref{basic3main2}), we have 
\begin{equation}
\label{eq:obj_bound15}
\Lambda_{tt}\geq \frac{1}{M}(1-CM^2(1+c_{B_1})^2\sqrt{\frac{s\ln p}{n}})\geq \frac{1-c'}{M}.
\end{equation}
The last step holds when $\sqrt{\frac{s\ln p}{n}}\leq \frac{c'}{CM^2(1+c_{B_1})^2}$. Plug (\ref{eq:obj_bound13}) and (\ref{eq:obj_bound15}) - (\ref{eq:obj_bound18}) back to   (\ref{eq:obj_bound12}), we obtain
{\small
\begin{align}
\label{eq:obj_bound16}
\frac{I_1}{\hat r_t}&\geq -2M(1+c')\|h\|_2^2-2I_2 - R_1\left[4M(1+c')\|h\|_2^2+4I_2+\left(M^{\frac{3}{2}}(1+c')^3\|h\|_2^3+(1+\frac{1}{c'})^{\frac{3}{2}}I_2^{\frac{3}{2}}\right)(\frac{M}{(1-c')})^{\frac{1}{2}}\right]\notag\\
& \geq -6R_1(1+c')M\|h\|_2^2-R_1 M^2\frac{(1+c')^{3}}{\sqrt{1-c'}}\|h\|_2^3- I_2R_1(6(1+c')+I_2^{\frac{1}{2}}(\frac{(1+\frac{1}{c'})^3M}{(1-c')})^{\frac{1}{2}})
\end{align}
}
We next upper bound $R_1$.  From Lemma \ref{lem:coef_bound} (\ref{eq:coef_main1}) and Lemma \ref{lem:basic_bound3} (\ref{basic3main1}),  when $\delta_t\leq \frac{c^2}{8M^2}$,  we have
\begin{align}
\label{eq:obj_bound17}
\hat r_{t}&\geq \rho_1(1-\frac{m_{1t}}{\beta_{t}^\top\Lambda\beta_t}\frac{c^2}{8M^2})- \rho_1\mO(\frac{ks\ln p}{n}+\sqrt{\frac{s\ln p}{n}}\left(\frac{\bomega_t}{\sqrt{s}}+\frac{\bomega_t^2}{\sqrt{k'}s}\right))\notag\\
&\geq  \rho_1(1-\frac{c^2}{4})-\rho_1 \mO(\frac{ks\ln p}{n}+\sqrt{\frac{s\ln p}{n}}\left(\frac{\bomega_t}{\sqrt{s}}+\frac{\bomega_t^2}{\sqrt{k'}s}\right)).
\end{align}
The last step of (\ref{eq:obj_bound17}) uses Proposition \ref{prop:prop2}:
\begin{align}
\label{eq:obj_prop2a}
\frac{2}{M}(1-\frac{\delta_t}{2})\leq m_{1t}\leq 2M.
\end{align}
By Lemma \ref{lem:coef_bound} (\ref{eq:coef_main2}),  Lemma \ref{lem:basic_bound3} (\ref{basic3main1}) and (\ref{eq:obj_prop2a}), we have
\begin{align}
\label{eq:obj_bound18}
\hat r_{t+1}&\leq \rho_1+\rho_1 \mO(\frac{ks\ln p}{n}+\sqrt{\frac{s\ln p}{n}}\left(\frac{\bomega_{t+1}}{\sqrt{s}}+\frac{\bomega_{t+1}^2}{\sqrt{k'}s}\right))
\end{align}
Hence, we have
\begin{align}
\label{eq:obj_bound17}
R_1& \leq \frac{\rho_1+\rho_1 \mO\left(\frac{ks\ln p}{n}+\sqrt{\frac{s\ln p}{n}}\left(\frac{\bomega_t}{\sqrt{s}}+\frac{\bomega_t^2}{\sqrt{k'}s}\right)\right)}{\rho_1(1-\frac{c^2}{4})-\rho_1 \mO\left(\frac{ks\ln p}{n}+\sqrt{\frac{s\ln p}{n}}\left(\frac{\bomega_t}{\sqrt{s}}+\frac{\bomega_t^2}{\sqrt{k'}s}\right)\right)}\notag\\
& \leq \frac{1}{1-\frac{c^2}{4}}  +\mO\left(\frac{ks\ln p}{n}+\sqrt{\frac{s\ln p}{n}}\left(\frac{\bomega_t}{\sqrt{s}}+\frac{\bomega_t^2}{\sqrt{k'}s}\right)\right)\notag\\
&\leq   \frac{1}{1-\frac{c^2}{4}}  +\mO\left(\frac{ks\ln p}{n}+\sqrt{\frac{s\ln p}{n}}(1+c_{B_1})^2\right),
\end{align}
where the last step has used the facts that  $\bomega_t\leq (1+c_{B_1})\sqrt{s}$ and $\bomega_t^2\leq (1+c_{B_1})^2s$. Hence, when $\iota$ is sufficiently small such that
\[
\mO\left(\frac{ks\ln p}{n}+\sqrt{\frac{s\ln p}{n}}(1+c_{B_1})^2\right)\leq 1+c - \frac{1}{1-\frac{c^2}{4}},
\]
we have
\begin{align}
\label{eq:obj_bound18}
R_1& \leq 1+c.
\end{align}
We now take $c'$ such that $\frac{(1+c')^4}{\sqrt{1-c'}} = 1+c$. We can numerically check that $\frac{c}{6}<c' < c$. Combine (\ref{eq:obj_bound18}), $\|h\|_2\leq \|\beta_t\|_2+\|\beta_{t+1}\|_2\leq 2$,  and our choice of $c'$ into (\ref{eq:obj_bound18}), we obtain that
\begin{align}
\label{eq:obj_bound19}
\frac{I_1}{\hat r_t}&\geq -2M(M+3) (1+c)^2\|h\|_2^2-\left(\mO\left(\sqrt{\frac{1}{c^3}I_2}\right)+1\right)I_2.
\end{align}
Since $I_2 \leq \mO\left(\sqrt{\frac{s\ln p}{k'n}}\left(\delta_t+\delta_{t+1}+\frac{\bomega_t^2}{s}\right)\right)$, and $\delta_t\leq 1, \delta_{t+1}\leq 1$, $\frac{\bomega_t^2}{s}\leq (1+c_{B_1})^2$, we have
\[
\mO\left(\sqrt{\frac{1}{c^3}I_2}\right) \leq  \mO\left(\sqrt{\frac{1}{c^3}\sqrt{\frac{s\ln p}{n}}\left(1+c_{B_1}\right)^2}\right) \leq \mO(1).
\]
Combine the last display with (\ref{eq:obj_bound19}) and plug in the expression for $I_2$, we have
\begin{align}
\label{eq:obj_bound19}
\frac{I_1}{\hat r_t}&\geq -2M(M+3) (1+c)^2\|h\|_2^2-\mO\left(\sqrt{\frac{s\ln p}{k'n}}\left(\delta_t+\delta_{t+1}+\frac{\bomega_t^2}{s}\right)\right).
\end{align}
Plug the bound on $I_1$ in (\ref{eq:obj_bound19}) back into (\ref{eq:obj_boundeq1}):
{\footnotesize
\begin{align*}
\beta_t^\top\hat\Lambda\beta_t\left(\hat r_{t+1} - \hat r_t - \frac{2\hat r_t}{\eta \beta_t^\top\hat\Lambda\beta_t}\Delta_{t,t+1}\right)&\geq -2M(M+3) (1+c)^2\|h\|_2^2\hat r_t+\frac{\hat r_t}{\eta}\|h\|_2^2-\hat r_t\mO\left(\sqrt{\frac{s\ln p}{k'n}}\left(\delta_t+\delta_{t+1}+\frac{\bomega_t^2}{s}\right)\right).
\end{align*}
}
Hence, when $\eta\leq \frac{1}{(1+c)^22M(M+3)}$, we obtain the desired bound
\begin{align*}
\beta_t^\top\hat\Lambda\beta_t\left(\hat r_{t+1} - \hat r_t - \frac{2\hat r_t}{\eta \beta_t^\top\hat\Lambda\beta_t}\Delta_{t,t+1}\right)&\geq -\hat r_t\mO\left(\sqrt{\frac{s\ln p}{k'n}}\left(\delta_t+\delta_{t+1}+\frac{\bomega_t^2}{s}\right)\right).
\end{align*}
\end{proof}
\section{Proofs of supporting Propositions and Lemmas}
\label{app:proof_supportings}
In this section, we provide proofs to supporting propositions and Lemmas appearing in Section \ref{app:supporting} as well as Proposition \ref{prop:prop5} and \ref{prop:prop6} in the proofs of Lemmas. We first present proofs of different Propositions, with those for Proposition \ref{prop:prop1} and Proposition \ref{prop:prop6} combined since they are both about tail bounds. We then give proofs to supporting Lemmas.
\subsection{Proof of Proposition \ref{prop:prop1} and \ref{prop:prop6}}
\subsubsection*{Event $\mA_1(k')$}
Let $J_0 = J_1\cap J_2$, $J_{11} = J_1\setminus J_0$ and $J_{21} = J_2\setminus J_0$. For $H = \Sigma$, we have
\begin{align*}
\|\hat\Sigma_{J_1 J_2} - \Sigma_{J_1 J_2}\|_{}op& = \|\frac{1}{n}\sum_{i=1}^n (X_{iJ_1}X_{iJ_2}^{\top}-\bE[X_{iJ_1} X_{iJ_2}^\top])\|_{op}
\end{align*}
We can decompose $X_{iJ_2}$ as $X_{iJ_2} = W_{i J_2}+\Sigma_{J_2 J_1}\Sigma_{J_1 J_1}^{-1}X_{iJ_1}$ where $W_{iJ_2}\sim N(0,  \Sigma_{J_2 J_2} - \Sigma_{J_2J_1}\Sigma_{J_1J_1}^{-1}\Sigma_{J_1 J_2})$ is independent of $X_{iJ_1}$. Hence,
\begin{align*}
\|\hat\Sigma_{J_1 J_2} - \Sigma_{J_1 J_2}\|_{}op& \leq I_1 + I_2,
\end{align*}
where $I_1 = \|\frac{1}{n}\sum_{i=1}^n (X_{iJ_1}X_{iJ_1}^\top \Sigma_{J_1 J_1}^{-1}\Sigma_{J_1 J_2}-\bE[X_{iJ_1}X_{iJ_1}^\top \Sigma_{J_1 J_1}^{-1}\Sigma_{J_1 J_2}])\|_{op}$ and $I_2 = \|\frac{1}{n}\sum_{i=1}^n (X_{iJ_1}W_{iJ_2}^\top -\bE[X_{iJ_1}W_{iJ_2}^\top ])\|_{op}$. Set $Z_{iJ_1} = \Sigma_{J_1 J_1}^{-\frac{1}{2}}X_{iJ_1}$, $Z_{i J_2} = ( \Sigma_{J_2 J_2} - \Sigma_{J_2J_1}\Sigma_{J_1J_1}^{-1}\Sigma_{J_1 J_2})^{-\frac{1}{2}}W_{i J_2}$. Then, $Z_{iJ_1}\sim N(0, \Id)$, $Z_{iJ_2}\sim N(0, \Id)$ and
\begin{align}
I_1&\leq \|\Sigma_{J_1 J_1}^{\frac{1}{2}}\frac{1}{n}\sum_{i=1}^n\left(Z_{iJ_1}Z_{iJ_1}^\top - \bE[Z_{i J_1} Z_{iJ_1}]\right)\Sigma_{J_1 J_1}^{-\frac{1}{2}}\Sigma_{J_1 J_2}\|_{op}\notag\\
&\overset{(a_1)}{\leq}  \|\Sigma_{J_1 J_1}^{\frac{1}{2}}\frac{1}{n}\sum_{i=1}^n\left(Z_{iJ_1}Z_{iJ_1}^\top - \bE[Z_{i J_1} Z_{iJ_1}]\right)\Sigma_{J_2 J_2}^{\frac{1}{2}}\|_{op}\notag\\
& \leq   \|\frac{1}{n}\sum_{i=1}^n\left(Z_{iJ_1}Z_{iJ_1}^\top - \bE[Z_{i J_1} Z_{iJ_1}]\right)\|_{op}\|\Sigma_{J_2 J_2}^{\frac{1}{2}}\|_{op}\|\Sigma_{J_1 J_1}^{\frac{1}{2}}\|_{op}\notag\\
&\leq \|\Sigma\|_{op}\|\frac{1}{n}\sum_{i=1}^n\left(Z_{iJ_1}Z_{iJ_1}^\top - \bE[Z_{i J_1} Z_{iJ_1}]\right)\|_{op} \label{eq:prop1eq1}
\end{align}
At step $(a_1)$, we have used the relationship that $\Sigma_{J_2 J_2} - \Sigma_{J_2 J_1}\Sigma_{J_1 J_1}^{-1}\Sigma_{J_1 J_2}$ is positive semi-definite, for all unit vector $v\in \real^{|J_2|}$, we have 
\[
 \|\Sigma^{\frac{1}{2}}_{J_2 J_2}\|^2_{op}= \|\Sigma_{J_2 J_2}\|_{op}\geq v^\top \Sigma_{J_2 J_2} v\geq  \|\Sigma_{J_1 J_1}^{-\frac{1}{2}}\Sigma_{J_1 J_2} v\|_2^2.
 \]
 Apply Proposition (D.1) from  \cite{ma2013sparse} to (\ref{eq:prop1eq1}), we obtain that that $\bP(I_1\leq C\lambda_{\Sigma}^{\max}\sqrt{\frac{k's\ln p}{n}})\rightarrow 1$ for a sufficiently large universal constant $C$.  We can also bound $I_2$ similarly:
 \begin{align*}
  I_2&\leq \|\Sigma_{J_1 J_1}^{\frac{1}{2}}\frac{1}{n}\sum_{i=1}^n\left(Z_{iJ_1}Z_{iJ_2}^\top - \bE[Z_{i J_1} Z_{iJ_2}]\right) ( \Sigma_{J_2 J_2} - \Sigma_{J_2J_1}\Sigma_{J_1J_1}^{-1}\Sigma_{J_1 J_2})^{\frac{1}{2}}\|_{op}\\
  & \leq   \|\Sigma_{J_1 J_1}^{\frac{1}{2}}\frac{1}{n}\sum_{i=1}^n\left(Z_{iJ_1}Z_{iJ_2}^\top - \bE[Z_{i J_1} Z_{iJ_2}]\right) \Sigma_{J_2 J_2}^{\frac{1}{2}}\|_{op}\leq  \|\Sigma\|_{op}\|\frac{1}{n}\sum_{i=1}^n\left(Z_{iJ_1}Z_{iJ_2}^\top - \bE[Z_{i J_1} Z_{iJ_2}]\right)\|_{op}.
 \end{align*}
Apply Proposition (D.2) from  \cite{ma2013sparse} to (\ref{eq:prop1eq1}), we obtain that that $\bP(I_2\leq C\lambda_{\Sigma}^{\max}\sqrt{\frac{k's\ln p}{n}})\rightarrow 1$ for a sufficiently large universal constant $C$.   Combine the probabilistic bounds for $I_1$ and $I_2$, we acquire the desired bound for $\mA_1(k')$ at $H = \Sigma$.

When $H = \Lambda$,  since $\Lambda$ is block diagonal, we have
\[
\|(\hat\Lambda - \Lambda)_{J_1, J_2}\|_{op} = \max_{d=1}^D \|(\hat\Lambda - \Lambda)_{J_1\cap [d], J_2\cap [d]}\|_{op}
\]
Apply the previous arguments for each $\|(\hat\Lambda - \Lambda)_{J_1\cap [d], J_2\cap [d]}\|_{op}$, $d = 1,\ldots, D$ and take a union bound across different $d$, we get the desired probabilistic bound at $H = \Lambda$. Consequently, $\mA_1(k')$ holds with probability approaching 1 for a sufficiently large universal constant $C$ for any given positive constant $k'$.
\subsubsection*{Event $\mA_2$}
When $H = \Sigma$, let $Z_i = (\xi_1^\top\Sigma\xi_1)^{-\frac{1}{2}}\xi_1^\top X_{i}$, we have $Z_i\sim N(0,1)$ and
\begin{align}
|\xi_1^\top(\hat\Sigma - \Sigma)|& = |\frac{1}{n}\sum_{i=1}^n \left((\xi_1^\top X_{i})^2-\bE[(\xi_1^\top X_{i})^2]\right)|\notag\\
&  =(\xi_1^\top\Sigma\xi_1)|\frac{1}{n}\sum_{i=1}^n \left(Z_i^2-\bE[Z_i^2]\right)|\leq \lambda_{\Sigma}^{\max}|\frac{1}{n}\sum_{i=1}^n \left(Z_i^2-\bE[Z_i^2]\right)|\notag
\end{align}
By Bernstein's inequality, we obtain the desired probabilistic bound at $H = \Sigma$. When $H = \Lambda$, let $Z_{id} = (\xi_{1[d]}^\top\Lambda_{[d]}\xi_{1[d]})^{-\frac{1}{2}}\xi_{1[d]}^\top X_{i[d]}$, we have $Z_{id}\sim N(0,1)$ and $Z_{id}$ can be correlated with $Z_{id'}$ for $d, d' = 1,\ldots, D$. Then,
\begin{align*}
&|\xi_1^\top(\hat\Lambda - \Lambda)|   =|\frac{1}{n}\sum_{i=1}^n \sum_{d=1}^D(\xi_{1[d]}^\top\Lambda_{[d]}\xi_{1[d]})\left(Z_{id}^2-\bE[Z_{id}^2]\right)|\\
& \bP\left(|\xi_1^\top(\hat\Lambda - \Lambda)|>t\right)\leq  \bP\left((\xi_{1}^\top\Lambda \xi_{1})|\frac{1}{n}\sum_{i=1}^n \left(Z_{i}^2-\bE[Z_{i}^2]\right)|>t\right)\leq  \bP\left(\lambda_{\Lambda}^{\max}|\frac{1}{n}\sum_{i=1}^n \left(Z_{i}^2-\bE[Z_{i}^2]\right)|>t\right)
\end{align*}
By Bernstein's inequality, we obtain the desired probabilistic bound at $H = \Lambda$. Consequently, $\mA_2$ holds with probability approaching 1 for a sufficiently large universal constant $C$.
\subsubsection*{Event $\mA_3$}
When $H = \Sigma$, set $w_{ij} = x_i^\top\xi_1\sim N(0, \xi_1^\top\Sigma \xi_1)$ and $\bE[w_{ij}x_{ij}] = \Sigma_{j}\xi_1$. We can write $w_{ij}=\frac{\Sigma_{j,}\xi_1}{\Sigma_{jj}}x_{ij}+(\xi_1^\top\Sigma \xi_1-\frac{(\Sigma_j \xi_1)^2}{\Sigma_{jj}})^{\frac{1}{2}}z_{ij}$ with $z_{ij}=(\xi_1^\top\Sigma \xi_1-\frac{(\Sigma_j \xi_1)^2}{\Sigma_{jj}})^{-\frac{1}{2}}(w_{ij}-(\xi_1^\top\Sigma \xi_1-\frac{(\Sigma_j \xi_1)^2}{\Sigma_{jj}})z_{ij})\sim N(0,1)$ and independent of $x_{ij}$.  Set $t_i = \Sigma_{jj}^{-\frac{1}{2}}x_{ij}\sim N(0,1)$. Then,
\begin{align*}
\|\xi_1^\top(\hat\Sigma - \Sigma)\|_{\infty}& = \max_{j=1}^p |\frac{1}{n}\sum_{i=1}^n\left(x_{ij}x_i^\top\xi_1-\bE[x_{ij}x_i^\top\xi_1]\right)|\\
& =  \max_{j=1}^p |\frac{1}{n}\sum_{i=1}^n\left(w_{ij}x_{ij}-\bE[w_{ij}x_{ij}]\right)| \\
& =  \max_{j=1}^p |\frac{1}{n}\sum_{i=1}^n\left((\Sigma_j\xi_1)(t_{ij}^2-\bE[t_{ij}^2])+(\xi_1^\top\Sigma \xi_1\Sigma_{jj}-(\Sigma_j \xi_1)^2)^{\frac{1}{2}} (z_{ij}t_{ij}-\bE[z_{ij}t_{ij}]\right)|\\
& \leq \lambda_{\Sigma}^{\max} \max_{j=1}^p |\frac{1}{n}\sum_{i=1}^n(t_{ij}^2-\bE[t_{ij}^2])|+\sqrt{\lambda_{\Sigma}^{\max}} \max_{j=1}^p |\frac{1}{n}\sum_{i=1}^n(t_{ij}z_{ij}-\bE[t_{ij}z_{ij}])|
\end{align*}
Apply Bernstein's inequality to each of the average terms in the last display and use a union bound, we obtain the desired probabilistic  bound at $H = \Sigma$. When $H = \Lambda$, we have
\begin{align*}
\|\xi_1^\top(\hat H- H)\|_{\infty}& =\max_{d=1}^D\|\xi_{1[d]}^\top(\hat\Lambda_{[d]}-\Lambda_{[d]})\|_{\infty}.
\end{align*}
Apply the arguments used for proving $H = \Sigma$ to each $\|\xi_{1[d]}^\top(\hat\Lambda_{[d]}-\Lambda_{[d]})\|_{\infty}$ and use a union bound,  we obtain the desired probabilistic  bound at $H = \Lambda$.

\subsubsection*{Event $\mA_4$}
Same as before, we can bound the terms $\|\hat\Sigma - \Sigma\|_{\infty, \infty}$ and  $\rho_{k+1}\|\hat\Lambda - \Lambda\|_{\infty, \infty}$  using  Bernstein's inequality: for a sufficiently large universal constant $C$, with probability approaching 1:
\begin{align*}
&\|\hat\Sigma - \Sigma\|_{\infty, \infty}\leq C\lambda_{\Sigma}^{\max}\sqrt{\frac{\ln p}{n}}\leq  CM\rho_1\sqrt{\frac{\ln p}{n}},\\
&\rho_{k+1}\|\hat\Lambda - \Lambda\|_{\infty, \infty}\leq C\rho_{k+1}\lambda_{\Lambda}^{\max}\sqrt{\frac{\ln p}{n}}\leq (1-\gamma)CM\rho_1\sqrt{\frac{\ln p}{n}}.
\end{align*}
Next, we turn to the expression $\|\Lambda V_{[k]} G V_{[k]}^\top \Lambda-\hat\Lambda V_{[k]} G V_{[k]}^\top \hat\Lambda\|_{\infty,\infty}$ for any positive semi-definite diagonal matrix $G\in \real^{k\times k}$.
\begin{align*}
\zeta(G)=&\|\Lambda V_{[k]} G V_{[k]}^\top \Lambda-\hat\Lambda V_{[k]} GV_{[k]}^\top \hat\Lambda\|_{\infty,\infty}\\
\leq &\underbrace{\|(\Lambda -\hat\Lambda)V_{[k]}G V_{[k]}^\top \Lambda\|_{\infty,\infty}}_{I_{11}}+\underbrace{\|\Lambda V_{[k]}G V_{[k]}^\top (\Lambda -\hat\Lambda)^\top\|_{\infty,\infty}}_{I_{12}}+\underbrace{\|(\Lambda -\hat\Lambda)V_{[k]}G V_{[k]}^\top(\Lambda -\hat\Lambda)\|_{\infty,\infty}}_{I_2}.
\end{align*}  
The first term $I_{11}$
\begin{align*}
I_{11}=&\max_{d_1, d_2}\|(\Lambda_{[d_1]} -\hat\Lambda_{[d_1]})V_{[d_1][k]}G V_{[d_2][k]}^\top \Lambda_{[d_2]}\|_{\infty,\infty}\\
=&\max_{j\in [d_1], l\in [d_2]}|\frac{1}{n}\sum_{i}\left(x_{ij}(\bx_{i[d_1]}^\top V_{[d_1][k]}G V_{[d_2][k]}^\top\Lambda_{[d_2]})_{\ell}-E\left[x_{ij}(\bx_{i[d_1]}^\top V_{[d_1][k]}GV_{[d_2][k]}^\top\Lambda_{[d_2]})_{\ell}\right]\right)
\end{align*}
We set $u_{\ell} = (V_{[d_1][k]}GV_{[d_2][k]}^\top\Lambda_{[d_2]})_{\ell}$. Set $w_{ij\ell} = x_{i[d_1]}^\top u_{\ell}\sim N(0, u_{\ell}^\top\Lambda_{[d_1]}u_{\ell})$ and $\bE[w_{ij\ell}x_{ij}] = \Lambda_{j[d_1]}u_{\ell}$.  We write $x_{ij} = \Lambda_{jj}^{\frac{1}{2}}t_{ij}$, $w_{ij\ell}$ as $w_{ij\ell} = \frac{\Lambda_{j[d_1]}u_{\ell}}{\Lambda_{jj}}x_{ij}+\left(u_{\ell}^\top\Lambda_{[d_1]}u_{\ell}-\frac{(\Lambda_{j[d_1]}u_{\ell})^2}{\Lambda_{jj}}\right)^{\frac{1}{2}}z_{ij}$ with  $z_{ij\ell} = \left(u_{\ell}^\top\Lambda_{[d_1]}u_{\ell}-\frac{(\Lambda_{j[d_1]}u_{\ell})^2}{\Lambda_{jj}}\right)^{-\frac{1}{2}}(w_{ij\ell}- \frac{\Lambda_{j[d_1]}u_{\ell}}{\Lambda_{jj}}x_{ij})\sim N(0,1)$. Then, following the same arguments for analyzing $\mA_3$, with $\xi_1$ replaced by $u$, we have
\begin{align*}
I_{11} \leq  &\max_{d_1, d_2}\max_{j\in [d_1]}\max_{\ell\in [d_2]} \left(|\Lambda_{j[d_1]}u_{\ell}|\frac{1}{n}\sum_{i=1}^n(t_{ij}^2-\bE[t_{ij}^2])|\right)\\
&+\max_{d_1, d_2}\max_{j\in [d_1]}\max_{\ell\in [d_2]} \left(\left(u_{\ell}^\top\Lambda_{[d_1][d_1]}u_{\ell}\right)^{\frac{1}{2}}|\frac{1}{n}\sum_{i=1}^n(t_{ij}z_{ij\ell}-\bE[t_{ij}z_{ij\ell}])|\right)
\end{align*}
Let $e_j\in \real^p$ be a vector with $1$ at location $j$ and $0$ at other entries. Notice that
\begin{align*}
|\Lambda_{j[d_1]}u_{\ell}| &\leq \lambda_{G}^{\max}|\Lambda_{j[d_1]}V_{[d_1][k]}V_{[d_2][k]}^\top \Lambda_{[d_2]\ell}|\\
&\leq \lambda_{G}^{\max}\|\Lambda_{j[d_1]}V_{[d_1][k]}\|_2\| \Lambda_{\ell[d_2]} V_{[d_2][k]}\|_2\\
& \leq   \lambda_{G}^{\max} \max_j\|e_j\Lambda V_{[k]}\|_2^2\\
& = \lambda_{G}^{\max}Tr(V_{[k]}^\top \Lambda e_j e_j^\top\Lambda V_{[k]} )\\
& \leq  \lambda_{G}^{\max}(e_j^\top \Lambda e_j) \|V_{[k]}V_{[k]}^{\top}\Lambda\|_{op}=\lambda_{G}^{\max}\|V_{[k]}V_{[k]}^{\top}\Lambda\|_{op}
\end{align*}
For any vector $u = \sum_{j=1}^p \alpha_j v_j$, we have
\[
\|u\|_2^2\geq M  \|u\|_\Lambda^2 \geq \|\sum_{j\leq k}\alpha_j v_j\|_{\Lambda}^2 = \|V_{[k]}V_{[k]}^{\top}\Lambda u\|_{M}^2\geq \frac{1}{M} \|V_{[k]}V_{[k]}^{\top}\Lambda u\|^2.
\]
Hence, we have $|\Lambda_{j[d_1]}u_{\ell}|\leq \lambda_G^{\max} M$. Similarly, we have
\begin{align*}
(u_{\ell}^\top\Lambda_{[d_1][d_1]}u_{\ell})^{\frac{1}{2}}&\leq \lambda_{G}^{\max}|\Lambda_{\ell [d_2]}V_{[d_2][k]}V_{[d_1][k]}^\top\Lambda_{[d_1][d_1]}V_{[d_1][k]}V_{[d_2][k]}^\top \Lambda_{[d_2]\ell}|^{\frac{1}{2}}\\
&\leq   \lambda_{G}^{\max}\sqrt{\|V_{[d_1][k]}^\top\Lambda_{[d_1][d_1]}V_{[d_1][k]}\|_{op}\|V_{[d_2][k]}^\top\Lambda_{[d_2]\ell} \Lambda_{\ell [d_2]}V_{[d_2][k]}\|_{op}}\\
& \leq   \lambda_{G}^{\max}\max_d\|V_{[d][k]}^\top\Lambda_{[d]\ell} \Lambda_{\ell [d]}V_{[d][k]}\|_{op}\\
& \leq  \lambda_{G}^{\max}\max_{d}\|\Lambda_{\ell [d]}V_{[d][k]}\|_{2}^2 \leq \lambda_{G}^{\max}\|\Lambda_{\ell}V_{[k]}\|_{2}^2\leq \lambda_{G}^{\max}M.
\end{align*}
We thus obtain the desired probabilistic bound for $I_{11}$ by applying the Bernstein's inequality to bound the terms on $t_{ij}^2$ and $z_{ij}t_{ij}$ and using a union bound: for a large universal constant $C$, with probability approaching 1, we have $I_{11}\leq C\lambda_G^{\max} M\sqrt{\frac{\ln p}{n}}$.  Following exactly the same arguments, we have $I_{12}\leq  C\lambda_G^{\max} M\sqrt{\frac{\ln p}{n}}$.

For the last term $I_2$, let $\lambda_{G,r}$ be the $r^{th}$ largest diagonal element of $G$. Then,
{\footnotesize
\begin{align*}
I_2 \leq \sum_{r=1}^k\max_{d_1, d_2}\max_{j\in [d_1],\ell\in [d_2]} \lambda_{G,r}\|\underbrace{\left(\frac{1}{n}\sum_{i=1}^n\left(x_{ij}x_{i[d_1]}^\top V_{[d_1]r}-\bE[x_{ij}x_{i[d_1]}^\top V_{[d_1]r}]\right) \right)}_{I_3}\underbrace{\left(\frac{1}{n}\sum_{i=1}^n\left(x_{i\ell}x_{i[d_2]}^\top V_{[d_2]r}-\bE[x_{i\ell}x_{i[d_2]}^\top V_{[d_2]r}]\right) \right)}_{I_4}|
\end{align*}
}
Following the same arguments for bounding $I_{11}, I_{12}$, we obtain that for a sufficiently large universal constant $C$,  with probability approaching 1, we have:
\[
|I_3|\vee |I_4|\leq CM\sqrt{\frac{\ln p}{n}}
\]
Since $k$ is a constant and $\lambda_{G,r}\sqrt{\frac{\ln p}{n}}\rightarrow 0$ when $G = \Id$ or $G = \Gamma_{[k]}$. We acquire that $\zeta(G)\leq C\lambda_{G}^{\max}M\sqrt{\frac{\ln p}{n}}$ for a sufficiently large universal constant $C$ with probability approaching 1. Set  $G = \Id$ or $G = \Gamma_{[k]}$, we can bound $\|\Lambda V_{[k]}\Gamma_{[k]} V_{[k]}^\top \Lambda -\hat\Lambda V_{[k]}\Gamma_{[k]} V_{[k]}^\top\hat\Lambda  \|_{\infty,\infty}$ and  $\rho_{k+1}\|\Lambda V_{[k]} V_{[k]}^\top \Lambda -\hat\Lambda V_{[k]} V_{[k]}^\top\hat\Lambda  \|_{\infty,\infty}$ with the desired rate. We have now successfully bounded each of the for term in $\mA_4$, thus, $\mA_4$ happens with probability approaching 1.

\subsubsection*{Event $\mA_5$} 
Let $S = \{j:\|\xi_{j[k]}\|_2\neq 0\}$. According to the proofs of Lemma B.4 in \cite{gao2021sparse}:
\begin{align}
&\|\Lambda^{\frac{1}{2}}(\tilde V_{[k]}-V_{[k]})\|_{op}\leq \|\Lambda^{\frac{1}{2}}V_{[k]}\|_{op}\|(V^\top_{[k]}\hat\Lambda V_{[k]})^{\frac{1}{2}}-\Id\|_{op}\|(V^\top_{[k]}\hat\Lambda V_{[k]})^{-\frac{1}{2}}\|_{op}\label{eq:A5e1}\\
&\|\tilde \Gamma_{[k]} - \Gamma_{[k]}\|_{op}\leq \|(V^\top_{[k]}\hat\Lambda V_{[k]})^{\frac{1}{2}}-\Id\|_{op}\|\Gamma_{[k]}(V^\top_{[k]}\hat\Lambda V_{[k]})^{\frac{1}{2}}\|_{op}+\|\Lambda_{[k]}\|_{op} \|(V^\top_{[k]}\hat\Lambda V_{[k]})^{\frac{1}{2}}-\Id\|_{op}\label{eq:A5e2}\\
&\|V_{[k]}^\top\hat\Lambda V_{[k]}-\Id\|_{op}\leq \|\Lambda_{SS}^{-\frac{1}{2}}\hat\Lambda_{SS}\Lambda^{-\frac{1}{2}}_{SS}-\Id\|_{op}\label{eq:A5e3}.
\end{align}
We now continue with bounding (\ref{eq:A5e3}). Set $S_d = S\cap [d]$, then, by Proposition (D.1) from \cite{ma2013sparse}, for a sufficiently large  positive constant $C$, with probability approaching 1, we have
\begin{align}
\|V_{[k]}^\top\hat\Lambda V_{[k]}-\Id\|_{op}\leq \|\Lambda_{SS}^{-\frac{1}{2}}\hat\Lambda_{SS}\Lambda^{-\frac{1}{2}}_{SS}-\Id\|_{op}=\max_{d=1}^D \|\Lambda_{S_dS_d}^{-\frac{1}{2}}\hat\Lambda_{S_dS_d}\Lambda^{-\frac{1}{2}}_{S_dS_d}-\Id\|_{op}\leq C\sqrt{\frac{s+\ln p}{n}}.
\end{align}
Notice that for large $n$, with probability approaching 1, we have
\begin{align*}
\|(V_{[k]}^\top\hat\Lambda V_{[k]})^{\frac{1}{2}}-\Id\|_{op}&\leq \frac{\|(V_{[k]}^\top\hat\Lambda V_{[k]})-\Id\|_{op}}{\sqrt{\lambda^{\min}_{(V_{[k]}^\top\hat\Lambda V_{[k]})^{\frac{1}{2}}+\Id}}}\\
&\leq \frac{\|(V_{[k]}^\top\hat\Lambda V_{[k]})-\Id\|_{op}}{\sqrt{1+\sqrt{1-\|(V_{[k]}^\top\hat\Lambda V_{[k]})-\Id\|_{op}}}}\leq C\sqrt{\frac{s+\ln p}{n}}.
\end{align*}
Combine the bound on $\|(V_{[k]}^\top\hat\Lambda V_{[k]})-\Id\|_{op}$ and $\|(V_{[k]}^\top\hat\Lambda V_{[k]})^{\frac{1}{2}}-\Id\|_{op}$ with (\ref{eq:A5e1}) and (\ref{eq:A5e2}), we obtain the desired probabilistic bounds: for a sufficiently large universal constant $C$, when $n\rightarrow\infty$, with probability approaching 1, we have
\begin{align*}
&\|\Lambda^{\frac{1}{2}}(\tilde V_{[k]}-V_{[k]})\|_{F}\leq \sqrt{k}\|\Lambda^{\frac{1}{2}}(\tilde V_{[k]}-V_{[k]})\|_{op}\leq C\sqrt{\frac{\sqrt{k}(s+\ln p)}{n}}\\
&\|\tilde \Gamma_{[k]} - \Gamma_{[k]}\|_{F}\leq \sqrt{k}\|\tilde \Gamma_{[k]} - \Gamma_{[k]}\|_{op}\leq \rho_1C\sqrt{\frac{k(s+\ln p)}{n}}\\
&\rho_{k+1}\|V_{[k]}^\top\hat\Lambda V_{[k]}-\Id\|_{op}\leq   (1-\gamma)\rho_1C\sqrt{\frac{k(s+\ln p)}{n}}.
\end{align*}
Hence, we $\mA_5$ happens with probability approaching 1 for large $C$.

\subsection{Proof of Proposition \ref{prop:prop2}}
\begin{proof}
We denote $\delta(\beta)$ as $\delta$ for convenience. Define $\|\beta\|_{\Lambda}^2 = \beta^\top\Lambda\beta$. By definition,  $\sum_{j=1}^p\alpha_j^2=\|\beta\|_{\Lambda}^2\geq \frac{1}{M}$. Further,
\begin{align*}
&M\|\beta-\alpha_1\xi_1\|_{\Lambda}^2\geq \|\beta - \alpha_1\xi_1\|_2^2\geq \|\beta-(1-\delta)\xi_1\|_2^2=2\delta-\delta^2\overset{(a)}{\Rightarrow} \frac{\sum_{j\geq 2}\alpha_j^2\mu_j}{\delta} \geq \frac{2}{M}(1-\frac{\delta}{2}).\\
&2M\delta\geq \|\beta-\xi_1\|_{\Lambda}^2=\|(\alpha_1-1)\xi_1+\sum_{j\geq 2}\alpha_j\xi_j\|_{\Lambda}^2\overset{(b)}{\Rightarrow} (1-\alpha_1)^2\mu_1+\sum_{j\geq 2}\alpha_j^2\mu_j\leq 2M\delta.
\end{align*}
At steps (a) and (b), we have used the fact that $\xi_i^\top\Lambda \xi_j = 0$ for all $i\neq j$. As a result:
\begin{align*}
&\|\beta-\alpha_{1}\xi_1\|_{\Lambda}^2=\sum_{j\geq 2}\alpha_{j}^2\mu_j,\\
&\|(\alpha_1-1)\xi_1+\sum_{j\geq 2}\alpha_j\xi_i\|_{\Lambda}^2=(1-\alpha_1)^2\mu_1+\sum_{j\geq 2}\alpha_j^2\mu_j.
\end{align*}
\end{proof}
\subsection{Proof of Proposition \ref{prop:prop3}}
\begin{proof}
We drop the dependence on $\beta$ and denote $m_1(\beta)$, $\delta(\beta)$ and $m_2(\beta)$ as  $m_1$, $\delta$ and $m_2$ for convenience.  Since $\rho_1$ is the largest mCCA correlation, by definition,
\[
\frac{\beta^\top\Sigma\beta}{\beta^\top\Lambda\beta}\leq \rho_1\Rightarrow m_2\leq \beta^\top\Lambda\beta.
\]
Now, we consider the case when $\delta\leq \frac{c^2}{8M^2}$.
Decompose $\beta$ as $\beta = \sum_{j=1}^p\alpha_j \xi_j$, we have $m_{2} = \alpha_{1}^2\mu_1+\frac{\sum_{j\geq 2}\rho_j\alpha_{j}^2\mu_j}{\rho_1}$.  Then,
\[
\frac{m_{2}}{\beta^\top\Lambda\beta}=1-\frac{\sum_{j\geq 2}(1-\frac{\rho_j}{\rho_1})\alpha_j^2\mu_j}{\alpha_1^2\mu_1+\sum_{j\geq 2}\alpha_j^2\mu_j}\geq 1-\frac{\sum_{j\geq 2}\alpha_j^2\mu_j}{\alpha_1^2\mu_1+\sum_{j\geq 2}\alpha_j^2\mu_j}.
\]
Set $z = \sum_{j\geq 2}\alpha_j^2\mu_j\geq 0$. By Proposition \ref{prop:prop2} (\ref{eq:prop2eq3}), we know
\begin{equation}
\label{eq:prop3e1}
(1-\alpha_1)^2\mu_1 + z \leq 2M\delta\Rightarrow \alpha_1\in [1-\sqrt{\frac{2M\delta-z}{\mu_1}},1+\sqrt{\frac{2M\delta-z}{\mu_1}}], \;z\in [0, 2M\delta].
\end{equation}
Set $\zeta(z) = \frac{\sum_{j\geq 2}\alpha_j^2\mu_j}{\alpha_1^2\mu_1+\sum_{j\geq 2}\alpha_j^2\mu_j}=\frac{z}{\alpha_1^2\mu_1+z}$. Recall that $c\leq \frac{1}{2}$. By (\ref{eq:prop3e1}), we know
\begin{align}
\label{eq:prop3e2}
\zeta(z)&\leq \frac{z}{(1+\sqrt{\frac{2M\delta-z}{\mu_1}})^2\mu_1+z}\notag\\
&\leq \frac{z}{\mu_1+2M\delta+2\sqrt{\mu_1}\sqrt{2M\delta}}\notag\\
&\leq \frac{z}{\mu_1(1+\sqrt{\frac{2M\delta}{\mu_1}})^2}\overset{(a_1)}{\leq} \frac{\frac{c^2}{4}}{(1+\frac{c}{2})^2}\leq \frac{c^2}{2}.
\end{align}
We have used $\mu_1\geq \frac{1}{M}$ at step $(a_1)$. Hence, we obtain $m_2\geq (1-\frac{c^2}{2})\beta^\top\Lambda\beta$. Also, by (\ref{eq:prop3e1}), we  have
\begin{align*}
\min(\frac{m_2}{\mu_1},\frac{\beta^\top\Lambda\beta}{\mu_1})\geq \alpha_1^2 \geq (1-\frac{c}{2})^2,
\end{align*}
as well as
\begin{align*}
\frac{\beta^\top\Lambda\beta}{\mu_1}&=\alpha_1^2+\frac{\sum_{j\geq 2}\alpha_j^2\mu_j}{\mu_1}\\
 &\leq (1+\sqrt{\frac{2M\delta-z}{\mu_1}})^2+\frac{z}{\mu_1}\\
& \leq 1+\frac{2M\delta}{\mu_1}+2\sqrt{\frac{2M\delta}{\mu_1}}=(\sqrt{\frac{2M\delta}{\mu_1}}+1)^2\leq (1+\frac{c}{2})^2.
\end{align*}
Hence, when $\delta\leq \frac{c^2}{8M^2}$, we obtain
\[
(1-\frac{c^2}{2})\beta^\top\Lambda\beta\leq m_2\leq \beta^\top\Lambda\beta\leq (1+\frac{c}{2})^2\mu_1,\; \min(\beta^\top\Lambda\beta, m_2)\geq (1-\frac{c}{2})^2\mu_1.
\]
\end{proof}
\subsection{Proof of Proposition \ref{prop:prop4}}
\begin{proof}
We are interested in upper bound $\|h(\beta)\|_1$. By construction:
\begin{equation}
\label{eq:prop4eq1}
\|\xi_1\|_1+\omega(\beta)=\|\beta\|_1 = \|\xi_1+h(\beta)\|_1= \|(\xi_1+h(\beta))_{[s]}\|_1 + \|(\xi_1+h(\beta))_{\bar{[s]}}\|_1.
\end{equation}
By Assumption \ref{ass:sparsity}:
\begin{equation}
\label{eq:prop4eq2}
\|(\xi_1+h(\beta))_{[s]}\|_1 \geq \|\xi_1\|_1 - \|h_{[s]}(\beta)\|_1, \;\|(\xi_1+h(\beta))_{\bar{[s]}}\|_1 = \|h(\beta)_{\bar{[s]}}\|_1
\end{equation}
Also, notice that 
\begin{equation}
\label{eq:prop4eq3}
\|h(\beta)\|_2^2=2 - 2\xi_1^\top\beta_t = 2\delta(\beta),\; \|h(\beta)_{[s]}\|_1\leq \sqrt{s}\|h(\beta)\|_2 = \sqrt{2s\delta(\beta)}.
\end{equation}
Combine (\ref{eq:prop4eq1}) -(\ref{eq:prop4eq3}), we have
\[
\|h(\beta)\|_1\leq 2\|h(\beta)_{[s]}\|_1+\omega(\beta) \leq 2\sqrt{2s\delta(\beta)}+\omega(\beta).
\]
\end{proof}
\subsection{Proof of Proposition \ref{prop:prop5}}
\begin{proof}
For convenience, we leave out the $\beta$ argument in $\delta(\beta)$ and denote $\delta(\beta)$ as $\delta$. Since $\beta = (1-\delta)\xi_1+\sqrt{1- (1-\delta)^2}v$ for some  $v^\top \xi_1 = 0$ and $\|v\|_2=1$, we have $1=\|\tilde\beta\|_2^2=[(1-\delta)(1-\epsilon)+\epsilon_1]^2+[1- (1-\delta)^2](1-\epsilon)^2$, and consequently,
\begin{equation}
\label{eq:prop3_eq1}
\epsilon_1 = \sqrt{1-[1- (1-\delta)^2](1-\epsilon)^2}-(1-\delta)(1-\epsilon).
\end{equation}
Set $\zeta(\epsilon) = \epsilon_1 - \epsilon-\epsilon(1-\epsilon)\delta +\frac{\epsilon(1-\epsilon)}{2}\delta^2$, then $\zeta(\epsilon)$ is a concave function with non-positive second derivative:
 \begin{align*}
\zeta'(\epsilon)&=\frac{(1-\epsilon)[1- (1-\delta)^2]}{ \sqrt{1-[1- (1-\delta)^2](1-\epsilon)^2}}+(1-\delta)-1-(1-2\epsilon)\delta+(1-2\epsilon)\frac{\delta^2}{2},\\
\zeta''(\epsilon)&=2\delta - \delta^2-\frac{[1- (1-\delta)^2]}{ \sqrt{1-[1- (1-\delta)^2](1-\epsilon)^2}}-\frac{[1- (1-\delta)^2]^2(1-\epsilon)^2}{ (1-[1- (1-\delta)^2](1-\epsilon)^2)^{\frac{3}{2}}}.
\end{align*}
Because $\sqrt{1-[1- (1-\delta)^2](1-\epsilon)^2}\leq 1$ and $[1- (1-\delta)^2]=2\delta-\delta^2$, we must have $\zeta''(\epsilon)\leq 0$ and $\zeta(\epsilon)$ is concave. Since $\zeta(0) = \zeta(1) = 0$, this produce the desired lower bound on $\epsilon_1$ with $\epsilon_1\geq \epsilon+\epsilon(1-\epsilon)\delta -\frac{\epsilon(1-\epsilon)}{2}\delta^2$, as stated in Proposition \ref{prop:prop5}. For the upper bound, set $z = [1-(1-\delta)^2](1-\epsilon)^2$, then,
\[
 \sqrt{1-z}\leq \sqrt{(1-\frac{z}{2}-\frac{z^2}{8})^2} = \sqrt{1+\frac{z^2}{4}+\frac{z^4}{64}-z-\frac{z^2}{4}+\frac{z^3}{8}}=\sqrt{1-z+\frac{z^3}{8}+\frac{z^4}{64}}.
\]
Thus,
\begin{align*}
\epsilon_1 &\leq (1-\frac{z}{2}-\frac{z^2}{8}) - (1-\delta)(1-\epsilon)\\
& = 1-\frac{2\delta-\delta^2}{2}(1-\epsilon)^2-\frac{(2\delta-\delta^2)^2(1-\epsilon)^4}{8} - (1-\delta)(1-\epsilon)\\
& = \epsilon+\frac{1}{2}\epsilon(1-\epsilon)(2\delta-\delta^2)+\frac{1}{2}\delta^2(1-\epsilon)(1-(1-\epsilon)^3)+\frac{1}{2}(1-\epsilon)^4\delta^3-\frac{1}{8}(1-\epsilon)^4\delta^4\\
&\leq \epsilon+\frac{1}{2}\epsilon(1-\epsilon)(2\delta-\delta^2)+\frac{1}{2}\delta^2\epsilon(3+3\epsilon+\epsilon^2)+\frac{1}{2}\delta^3.
\end{align*}
This produces the upper bound for $\epsilon_1$.
\end{proof}
\subsection{Proof of Lemma \ref{lem:basic_bound1}}
\begin{proof}
We set $W = \hat H - H$. On the event $\mA(1)\cap \mA(k')$, we obtain
\begin{equation}
\label{eq:basic1eq1}
|\xi_1^\top W\zeta |\leq \|W\xi_1\|_{\infty}\|\zeta\|_{1}\leq C\lambda_H^{\max}\sqrt{\frac{\ln p }{n}}\|\zeta\|_1.
\end{equation}
We now turn to $|\zeta^\top W \zeta|$. We divide the index set $\{1,\ldots, p\}$ into $\cup_{\ell=1}^{\lceil\frac{p}{s}\rceil} J_\ell$ where $J_1$ contains the indexes $j$ for $(k's)$ entries with the largest $|\zeta_j|$, and $J_{\ell}$ contains $(k's)$ the largest-entry indexes excluding sets before it with the last index set $J_{\lceil\frac{p}{k's}\rceil}$ possibility having less than $(k's)$ elements. Then, on the event $\mA(1)\cap\mA(k')$, we have
\begin{align}
|\zeta^\top W\zeta|=&|\sum_{t_1=1}^{\lceil\frac{p}{k's}\rceil}\sum_{t_2}^{\lceil\frac{p}{k's}\rceil} \|\zeta_{J_{t_1}}\|_2\|\zeta_{J_{t_2}}\|_2\|W_{J_{t_1}J_{t_2}}\|_{op}|\notag\\
\leq&(\max_{t_1, t_2}\|W_{J_{t_1}, J_{t_2}}\|_{op})\left(\sum_{t_1=1}^{\lceil\frac{p}{k's}\rceil} \|\zeta_{J_{t_1}}\|_2\right)\left(\sum_{t_1=1}^{\lceil\frac{p}{k's}\rceil} \|\zeta_{J_{t_2}}\|_2\right)\notag\\
\leq& (\max_{t_1, t_2}\|W_{J_{t_1}, J_{t_2}}\|_{op})\left(\|\zeta_{J_{1}}\|_2+\sum_{t=2}^{\lceil\frac{p}{k's}\rceil} (\frac{\|\zeta_{J_{t_1-1}}\|_1}{\sqrt{k's}})\right)^2\notag\\
\leq&  (\max_{t_1, t_2}\|W_{J_{t_1}, J_{t_2}}\|_{op})\left(\|\zeta_{J_{1}}\|_2+\frac{\|\zeta\|_1}{\sqrt{k's}}\right)^2\leq C\lambda_H^{\max}\sqrt{\frac{k's\ln p}{n}}(\|\zeta\|_2^2+\frac{\|\zeta\|_1^2}{k's})\label{eq:basic1eq2}.
\end{align}
When $\zeta = h(\beta)$, by Proposition \ref{prop:prop4}, we have $\|h(\beta)\|_1\leq 2\sqrt{2s\delta(\beta)}+\omega(\beta) \leq 2\sqrt{2}(\sqrt{s\delta(\beta)}+\bomega(\beta))$ and $\|h(\beta)\|_2^2 = 2\delta(\beta)$. Combine them with (\ref{eq:basic1eq1}) and (\ref{eq:basic1eq2}), we obtain:
\begin{align*}
|\xi_1^\top W h |&\leq 2\sqrt{2}C\lambda^{\max}_H\sqrt{\frac{s\ln p}{n}}(\sqrt{\delta(\beta)}+\frac{\bomega(\beta)}{\sqrt{s}}),\\
|h^\top W h|&\leq C\lambda_H^{\max}\sqrt{\frac{k's\ln p}{n}}(2\delta(\beta)+\frac{\left(2\sqrt{2}(\sqrt{s\delta(\beta)}+\bomega(\beta)\right)^2}{k's})\\
&\leq C'\lambda_H^{\max}\sqrt{\frac{k's\ln p}{n}}(\delta(\beta)+\frac{\bomega(\beta)^2}{k's}).
\end{align*}
where $C'$ is a sufficiently large universal constant. Hence, (\ref{basic1main1}) - (\ref{basic1main4}) hold.

\subsection{Proof of Lemma \ref{lem:basic_bound2}}
Notice that
\begin{align*}
|\xi_1^\top \hat H(\beta - \alpha_1\xi_1)|\overset{(a_1)}{=}&|\xi_1^\top (\hat H-H)(\beta - \alpha_1\xi_1)| +\cancel{ |\xi_1^\top H(\beta - \alpha_1\xi_1)|}\\
\leq &  |\xi_1^\top (\hat H-H)h|+|1-\alpha_1||\xi_1^\top (\hat H-H)\xi_1|,
\end{align*}
where we have used the fact that $\xi_1^\top H(\beta-\alpha_1\xi_1) = \sum_{j\geq 2}\alpha_j \xi_1^\top H\xi_j =0$ for $H\in \{\Lambda,\Sigma\}$ at step $(a_1)$. We can bound the second term in the above equation using Proposition \ref{prop:prop2} (\ref{eq:prop2eq3}) and on the event $\mA(k')$, we have:
\begin{equation}
\label{eq:basic1eq02}
|1-\alpha_1||\xi_1^\top (\hat H-H)\xi_1|\leq \sqrt{2M\delta(\beta)}\frac{\lambda_H^{\max}}{M^3}\sqrt{\frac{\ln p}{n}}.
\end{equation}
Combine it with (\ref{basic1main4}), we have $|\xi_1^\top \hat H(\beta - \alpha_1\xi_1)|\leq (2\sqrt{2}C+\frac{1}{M^2})\sqrt{\frac{\ln p}{n}}\lambda_{H}^{\max}(\sqrt{\delta(\beta)}+\frac{\bomega(\beta)}{\sqrt{s}})$ and proved (\ref{basic2main1}). We now bound $|\beta^\top\hat H\beta - F_H|$. We look at $\beta^\top\hat H\beta$ for $H\in \{\Lambda, \Sigma\}$. Since $\beta= \sum_{j\geq 1}\alpha_j \xi_j$,  we have
\begin{align}
\label{eq:basic1eq2}
&\beta^\top\hat H\beta\notag\\
 =&\alpha_1^2\xi_1^\top\hat H\xi_1+2\alpha_1\xi_1^\top\hat H(\beta-\alpha_1\xi_1)+(\beta - \alpha_1\xi_1)^\top\hat H(\beta-\alpha_1\xi_1)\notag\\
 \overset{(a_2)}{=}&\alpha_1^2\xi_1^\top\hat H\xi_1+\sum_{j\geq 2}\alpha_j^2\xi_j^\top H\xi_j+W_H =  F_H +W_H,
\end{align}
where $W_H = 2\alpha_1\xi_1^\top \hat H(\beta-\alpha_1\xi_1)+(\beta - \alpha_1\xi_1)^\top(\hat H- H)(\beta-\alpha_1\xi_1)$, and at step $(a_2)$, we have used the relationship that for $H\in \{\Lambda, \Sigma\}$:
\[
(\beta - \alpha_1\xi_1)^\top\hat H(\beta-\alpha_1\xi_1)=\sum_{j\geq 2}\alpha_j^2\xi_j^\top H\xi_j+(\beta - \alpha_1\xi_1)^\top(\hat H-H)(\beta-\alpha_1\xi_1).
\]
Rearrange terms in $W_H$, we have
\begin{equation}
\label{eq:basic1eq3}
W_H = (1-\alpha_1^2)\xi_1^\top(\hat H - H)\xi_1+2\xi_1^\top(\hat H - H)h+h^\top(\hat H - H)h.
\end{equation}
By Proposition \ref{prop:prop2} (\ref{eq:prop2eq3}), we have 
\begin{equation}
\label{eq:basic1eq4}
|1-\alpha_1^2|\leq (1+\sqrt{2M\delta(\beta)})^2-1<5M\sqrt{\delta(\beta)}.
\end{equation}
Combine (\ref{eq:basic1eq3}) with (\ref{eq:basic1eq4}) and our bounds Lemma \ref{lem:basic_bound1} (\ref{basic1main3}) - (\ref{basic1main4}):
\begin{align}
\label{eq:basic1eq5}
|W_H|& \leq  5M\sqrt{\delta(\beta)}\frac{\lambda_H^{\max}}{M^3}\sqrt{\frac{\ln p}{n}}+C\lambda_H^{\max}\sqrt{\frac{s\ln p}{n}}(2\sqrt{\delta(\beta)}+\frac{2\bomega(\beta)}{\sqrt{s}}+\sqrt{k'}\delta(\beta)+\frac{\bomega(\beta)^2}{\sqrt{k'}s})
\end{align}
Combine (\ref{eq:basic1eq2}) with (\ref{eq:basic1eq5}), there is a sufficiently large universal constant $C$ to make (\ref{basic2main2}) hold.
\end{proof}

\subsection{Proof of Lemma \ref{lem:basic_bound3}}
On the event $\mA_2$, we have $|\xi_1^\top(\hat H -H)\xi_1|\leq \frac{\lambda_H^{\max}}{M^3}\sqrt{\frac{\ln p}{n}}, \; H\in \{\Lambda, \Sigma\}$, with $\lambda_{\Lambda}^{\max}\leq M$ and $\lambda_{\Sigma}^H \leq \rho_1 M$. Hence, when $\frac{\ln p}{n}\leq c^2$, we have
\begin{align}
|\hat\rho_1 - \rho_1| &= |\frac{\xi_1^\top\Sigma\xi_1\hat\mu_1 -\xi_1^\top\hat\Sigma\xi_1\mu_1 }{\mu_1\hat\mu_1}|\notag\\
& \leq\frac{\xi_1^\top\Sigma\xi_1|\hat\mu_1-\mu_1|}{\mu_1\hat\mu_1}+ \frac{|\xi_1^\top\hat\Sigma\xi_1-\xi_1^\top\Sigma\xi_1|\mu_1 }{\mu_1\hat\mu_1}\notag\\
&\leq \frac{\rho_1\sqrt{\frac{\ln p}{n}}}{M^2(\mu_1-\frac{1}{M^2}\sqrt{\frac{\ln p}{n}})}+  \frac{C\rho_1\sqrt{\frac{\ln p}{n}}}{M^2(\mu_1-\frac{1}{M^2}\sqrt{\frac{\ln p}{n}})}\notag\\
&\leq \frac{\rho_1\sqrt{\frac{\ln p}{n}}}{M(1-c)}+  \frac{\rho_1\sqrt{\frac{\ln p}{n}}}{M(1-c)}\leq \frac{4}{M}\rho_1\sqrt{\frac{\ln p}{n}}
.\label{eq:basic_bound3eq2}
\end{align}
From Lemma \ref{lem:basic_bound1} (\ref{basic1main1}) and $\frac{\|\beta\|_1}{\sqrt{s}}\leq (1+c_{B_1})$, we obtain
\begin{align*}
&|\beta^\top(\hat H - H)\beta|\leq C\lambda_{H}^{\max}\sqrt{\frac{s\ln p}{n}}(\|\beta\|_2^2+
\frac{\|\beta\|_1^2}{s})\leq 2\lambda_{H}^{\max}C\sqrt{\frac{s\ln p}{n}}(1+c_{B_1})^2.
\end{align*}
Thus, there exists sufficiently large constant $C$ such that (\ref{basic3main1}) and (\ref{basic3main2}) hold. 

\section{Initialization used in empirical studies}
\label{app:algorithm}
Initialization methods based on convex relaxation as described in Section \ref{subsec:init} have better theoretical guarantees, they are computationally expensive when $p$ is large. In our empirical studies, we will use the following  initialization approach that scale well with the data dimensions:
\begin{itemize}
\item We soft-threshold the empirical covariance matrix $\hat\Sigma$ and keep only $m^2$ non-zero entries in $\hat\Sigma$ with $m = \lceil\frac{n}{\ln p}\rceil$. Let $\tilde\Sigma$ be the resulting matrix.
\item For each block $d$, $d=1,\ldots,D$, we keep $\lceil \frac{n}{KD}\rceil$ non-zero entries with largest  $\|\tilde\Sigma_{j,\bar{[d]}}\|_2$ for $j\in [d]$, with $\bar{[d]}$ being the complement of $[d]$ and $K = 4$ by default.
\item Let $S$ to the selected feature index from each block, and our initialization is $\beta_{j} = 0$ for $j\in S^c$ and for features in $S$, we let $\beta_{S} = \tilde\beta$ where $\tilde\beta$ is the estimated mCCA direction using $(\hat\Sigma_{SS}, \hat\Lambda_{SS})$  and regularized estimation:
\[
\max_{\beta}\frac{\beta^\top\hat\Sigma_{SS}\beta}{\beta^\top\tilde\Lambda_{SS}\beta},\; \tilde\Lambda_{SS} = (1-\tau)\hat\Lambda_{SS}+\tau \diag\{\hat\sigma_j^2, j\in S\}
\]
and for some  optimal $\tau$ estimated according to \cite{schafer2005shrinkage}:
\[
\tau = \frac{\sum_{j,\ell\in S,j\neq \ell}\widehat{Var}(\hat\Sigma_{j\ell})}{\sum_{j\neq \ell}\hat\Sigma_{j\ell}^2},
\]
where $\widehat{Var}(\hat\Sigma_{j\ell})$ is the empirical estimation of the estimation variance in $\hat\Sigma_{j\ell}$ as described in \cite{schafer2005shrinkage}.
\end{itemize}
\end{document}